 \newtheorem{theorem}{Theorem}
  \newtheorem{proposition}{Proposition}
  \newtheorem{lemma}{Lemma}
  \newtheorem{conj}{Conjecture}
\numberwithin{equation}{section}
\def\lfig#1#2#3#4#5{
\begin{figure}[t]
 \centerline{\includegraphics[width=#3]{#2}}
 \vspace{#5}
  \caption{#1 \label{#4}}
 \end{figure}
}
\def\det{\,{\rm det}\, }
\def\sign{{\rm sgn}}
\def\Ch{{\rm Ch}}
\def\Sym{\,{\rm Sym}\, }
\def\Im{\,{\rm Im}\,}
\def\Re{\,{\rm Re}\,}
\def\({\left(}
\def\){\right)}
\def\[{\left[}
\def\]{\right]}
\def\<{\left\langle}
\def\>{\right\rangle}
\def\hf{{1\over 2}}
\def\haf{\textstyle{1\over 2}}
\newcommand{\de}{\mathrm{d}}
\newcommand{\I}{\mathrm{i}}
\newcommand{\cD}{\mathcal{D}}
\newcommand{\p}{\partial}
\newcommand{\half}{\frac{1}{2}}
\newcommand{\cF}{\mathcal{F}}
\newcommand{\cV}{\mathcal{V}}
\newcommand{\cG}{\mathcal{G}}
\newcommand{\cK}{\mathcal{K}}
\newcommand{\cM}{\mathcal{M}}
\newcommand{\cN}{\mathcal{N}}
\newcommand{\cE}{\mathcal{E}}
\newcommand{\cX}{\mathcal{X}}
\newcommand{\cP}{\mathcal{P}}
\newcommand{\cT}{\mathcal{T}}
\newcommand{\cJ}{\mathcal{J}}
\newcommand{\ub}{\bar{u}}
\DeclareSymbolFont{AMSa}{U}{msa}{m}{n}
\DeclareSymbolFont{AMSb}{U}{msb}{m}{n}
\DeclareMathSymbol{\fieldR}{\mathalpha}{AMSb}{"52}
\newcommand{\kahler}{{K\"ahler}\xspace}
\newcommand{\cI}{\mathcal{I}}
\newcommand{\cA}{\mathcal{A}}
\newcommand{\pa}{\partial}
\newcommand{\nn}{\nonumber}
\newcommand{\eps}{\epsilon}
\newcommand{\IT}{\mathds{T}}
\newcommand{\IR}{\mathds{R}}
\newcommand{\IC}{\mathds{C}}
\newcommand{\IZ}{\mathds{Z}}
\newcommand{\IQ}{\mathds{Q}}
\newcommand{\IP}{\mathds{P}}
\newcommand{\sgn}{\mbox{sgn}}
\newcommand{\tzeta}{\tilde\zeta}
\newcommand{\tc}{\tilde c}
\newcommand{\txi}{\tilde\xi}
\newcommand{\tPhi}{\tilde\Phi}
\newcommand{\CP}{\IC P^1}
\def\bea{\begin{eqnarray}}
\def\eea{\end{eqnarray}}
\def\be{\begin{equation}}
\def\ee{\end{equation}}
\def\ba{\begin{align}}
\def\ea{\end{align}}
\def\bse{\begin{subequations}}
\def\ese{\end{subequations}}
\def\ba{\bar a}
\def\bZ{\bar Z}
\def\bF{\bar F}
\def\hK{\hat K}
\def\hD{\hat D}
\def\hT{\hat T}
\def\Hcl{H^{\rm cl}}
\newcommand{\cB}{\mathcal{B}}
\def\cij#1{c}
\def\ci#1{c}
\def\XXint#1#2#3{{\setbox0=\hbox{$#1{#2#3}{\int}$}
\vcenter{\hbox{$#2#3$}}\kern-.5\wd0}}
\def\Lv#1{L(#1)}
\def\Rv#1{R(#1)}
\def\Fwl{F^{(0)}}
\def\Fref{F^{\rm (ref)}}
\def\Fcl{F^{\rm cl}}
\newcommand{\expe}[1]{{\bf e}\!\left( #1\right)}
\def\gamD#1{\tilde\gamma}
\def\CY{\mathfrak{Y}}
\def\CYm{{\widehat{\mathfrak{Y}}}}
\DeclareMathOperator{\Erf}{Erf}
\DeclareMathOperator{\Erfc}{Erfc}
\def\cl0{\tilde c_0}
\newcommand{\bfu}{{\boldsymbol u}}
\newcommand{\bfb}{{\boldsymbol b}}
\newcommand{\bfc}{{\boldsymbol c}}
\newcommand{\bfv}{{\boldsymbol v}}
\newcommand{\bfk}{{\boldsymbol k}}
\newcommand{\bfp}{{\boldsymbol p}}
\newcommand{\bfq}{{\boldsymbol q}}
\newcommand{\bft}{{\boldsymbol t}}
\newcommand{\bfx}{{\boldsymbol x}}
\newcommand{\bfy}{{\boldsymbol y}}
\newcommand{\bfmu}{{\boldsymbol \mu}}
\newcommand{\gammap}{\gamma}
\newcommand{\gammam}{-\gamma}
\def\trans{g}
\def\Erfc{\text{Erfc}}
\def\bOm{\bar\Omega}
\def\wh{\mathfrak{h}}
\def\bwh{\bar\wh}
\def\cXt{\cX^{(\theta)}}
\def\cXsf{\cX^{\rm sf}}
\def\cXcl{\cX^{\rm cl}}
\def\Hcl{H^{\rm cl}}
\def\tcV{\tilde\cV}
\def\hcT{\hat\cT}
\def\gref{g^{\rm (ref)}}
\def\gs{g^{\star}}
\def\gf{g^{(0)}}
\def\cEf{\cE^{(0)}}
\def\cEs{\cE^{\star}}
\def\whh{\widehat h}
\def\whg{\widehat g}
\def\whPhi{\widehat\Phi}
\def\tPhi{\widetilde\Phi}
\def\tPhif{\widetilde\Phi^{(0)}}
\def\intPhi{\Phi^{\scriptscriptstyle\,\int}}
\def\gPhi{\Phi^{\,g}}
\def\whgPhi{\Phi^{\,\whg}}
\def\gfPhi{\Phi^{\,\gf}}
\def\cEPhi{\Phi^{\,\cE}}
\def\tcEPhi{\widetilde\Phi^{\,\cE}}
\def\cEp{\cE^{(+)}}
\def\whgp{\whg^{(+)}}
\def\Delat#1{\Delta^{#1}}
\def\OmMSW{\Omega^{\rm MSW}}
\def\bOmMSW{{\bar \Omega}^{\rm MSW}}
\def\Zv{\mathscr{Z}}
\def\Zv{\mathscr{Z}}
\def\Lat{\mathbf{\Lambda}}
\def\ptt{\mathfrak{p}}
\def\ver{\mathfrak{v}}
\newcommand{\gtr}{g_{{\rm tr},n}}
\def\gtri#1{g_{{\rm tr},#1}}
\newcommand{\Ftr}[1]{F_{{\rm tr},#1}}
\def\cs{S}
\def\cl{c^{(\ell)}}
\def\vz{\mathbb{z}}
\def\vu{\mathbb{u}}
\def\vv{\mathbb{v}}
\def\tbfv{\tilde\bfv}
\def\tbfu{\tilde\bfu}
\def\cVt{\cV^{\rm tot}}
\def\gama{\check\gamma}
\title{Black holes and higher depth mock modular forms}
\preprint{L2C:18-112\\ arXiv:1808.08479v3}
\author{Sergei Alexandrov$^{1,2}$ and Boris Pioline$^{3}$
\\
$^1$ {\it
Laboratoire Charles Coulomb (L2C), Universit\'e de Montpellier,
CNRS, F-34095, Montpellier, France}\\
$^2$ {\it
Department of High Energy and Elementary Particle Physics,
Saint Petersburg State University,
7/9 Universitetskaya nab., St. Petersburg 199034, Russia}

$^3$ {\it Laboratoire de Physique Th\'eorique et Hautes
Energies (LPTHE), UMR 7589 CNRS-Sorbonne Universit\'e,
Campus Pierre et Marie Curie,
4 place Jussieu, F-75005 Paris, France} \\

\vspace*{2mm} {\tt e-mail:
\email{sergey.alexandrov@umontpellier.fr},
\email{pioline@lpthe.jussieu.fr}
}

\vspace*{-3mm}

}
\abstract{By enforcing invariance under S-duality in type IIB string theory compactified on a Calabi-Yau threefold,
we derive modular properties of the generating function
of BPS degeneracies of D4-D2-D0 black holes in type IIA string theory compactified on the same space.
Mathematically, these BPS degeneracies are the generalized
Donaldson-Thomas invariants counting coherent sheaves with support on a divisor $\cD$,
at the large volume attractor point.
For $\cD$ irreducible, this function is closely related to the elliptic genus of the superconformal field theory
obtained by wrapping M5-brane on $\cD$
and is therefore known to be modular.
Instead, when $\cD$ is the sum of $n$ irreducible divisors $\cD_i$,
we show that the generating function acquires a modular anomaly.
We characterize this anomaly for arbitrary $n$ by providing an explicit expression
for a non-holomorphic modular completion in terms of generalized error functions.
As a result, the generating function turns out to be a (mixed) mock modular form of depth $n-1$.
}
\begin{document}

\section{Introduction and summary}
\label{sec-intro}

The degeneracies of BPS black holes in string vacua with extended supersymmetry
possess remarkable modular properties, which have been instrumental in recent
progress on explaining the statistical origin of the Bekenstein-Hawking
entropy in \cite{Strominger:1996sh} and many subsequent works.
Namely,  the indices $\Omega(\gamma)$ counting --- with sign --- microstates
of BPS black holes with electromagnetic charge $\gamma$ may often be collected
into a suitable generating function which exhibits modular invariance,
providing powerful constraints on its Fourier
coefficients and enabling direct access to their asymptotic growth.
When the black holes can be realized as black strings wrapped on a circle,
a natural candidate for such a generating function is the elliptic genus of
the superconformal field theory supported by the black string, which is modular invariant by
construction \cite{Maldacena:1997de,Gaiotto:2006wm,deBoer:2006vg}. Equivalently,
one may consider the partition function of the effective three-dimensional
gravity living on the near-horizon geometry of the black string \cite{Dijkgraaf:2000fq}.

In most cases however, the BPS indices depend not only on the charge
$\gamma$ but also on the moduli $z^a$ at spatial infinity, due to the wall-crossing phenomenon:
some of the BPS bound states with total charge $\gamma$ only exist in a certain chamber
in moduli space, and decay as the moduli are varied across `walls of marginal stability'
which delimit this chamber.  At strong coupling where the black hole description is accurate,
this phenomenon has a transparent interpretation in terms of the (dis)appearance of multi-centered black hole configurations,
which can be used to derive a universal wall-crossing formula
\cite{Denef:2007vg,Diaconescu:2007bf,Manschot:2010qz,Andriyash:2010qv}.

In the case of four-dimensional string vacua with $\cN=4$ supersymmetry, where the BPS
index is sensitive only to single-centered 1/4-BPS black holes and to
bound states of two 1/2-BPS black holes, the resulting moduli dependence is
reflected in poles in the generating function, requiring a suitable choice of contour for extracting the
Fourier coefficients in a given chamber \cite{Dijkgraaf:1996it,Cheng:2007ch,Banerjee:2008yu}.
Upon subtracting contributions of two-centered
bound states, the generating function of single-centered indices is no longer modular in
the usual sense but it transforms as a {\it mock} Jacobi form with specific `shadow' --- a property
which is almost as constraining as standard modular invariance  \cite{Dabholkar:2012nd}.

In four-dimensional string vacua with $\cN=2$ supersymmetry, the situation is much more complicated,
firstly due to the fact that the moduli space of scalars receives quantum corrections,
and secondly due to BPS bound states potentially involving  an arbitrary number of constituents,
resulting in an  extremely intricate pattern of walls of marginal stability. Thus, it does not
seem plausible that a single generating function may capture the BPS indices $\Omega(\gamma,z^a)$ ---
which are known in this context as generalized Donaldson-Thomas (DT) invariants ---
in all chambers. Nevertheless, modular invariance is still expected to constrain them.
In particular, D4-D2-D0 black holes in type IIA string theory compactified on a generic compact Calabi-Yau (CY) threefold
$\CY$ can be lifted to an M5-brane wrapped  on a  divisor $\cD\subset \CY$ \cite{Maldacena:1997de}. If the divisor $\cD$
labelled by the D4-brane charge $p^a$ is {\it irreducible},
the indices $\Omega(\gamma,z^a)$
are independent of the moduli of $\CY$, at least in the limit where the volume of $\CY$ is scaled to infinity,
and their generating function  is known to be a holomorphic (vector valued) modular form of weight
$-\hf\, b_2(\CY)-1$ \cite{Gaiotto:2006wm,deBoer:2006vg,Denef:2007vg,Alexandrov:2016tnf}.
This includes the case
of vertical, rank 1 D4-D2-D0 branes in K3-fibered Calabi-Yau threefolds \cite{Gholampour:2013hfa,Diaconescu:2015koa,Bouchard:2016lfg}.
But if the divisor $\cD$
is a sum of $n$ effective divisors $\cD_i$, the indices $\Omega(\gamma,z^a)$ do depend on the
\kahler moduli $z^a$, even in the large volume limit. In general however, the black string SCFT
is supposed to capture the states associated to a single $AdS_3$ throat, while for generic values
of the moduli multiple $AdS_3$ throats can contribute \cite{deBoer:2008fk,Andriyash:2008it}.
It is thus natural to consider the modular properties of the DT invariants
$\Omega(\gamma,z^a)$ at the large volume attractor point\footnote{Here $p^a$ and $q_a$ are D4 and D2-brane charges,
respectively, and the index of $q_a$ is raised with help of the inverse of the metric
$\kappa_{ab}=\kappa_{abc}p^a$ with $\kappa_{abc}$ being the triple intersection numbers on
$H_4(\CY,\IZ)$.}
\be
\label{lvap}
z^a_\infty(\gamma)= \lim_{\lambda\to +\infty}\(-q^a+\I\lambda  p^a\),
\ee
where only a single $AdS_3$ throat is allowed \cite{Manschot:2009ia}. Following
\cite{Alexandrov:2012au} we denote these invariants by $\OmMSW(\gamma)=\Omega(\gamma,z^a_\infty(\gamma))$ and call them
Maldacena-Strominger-Witten (MSW) invariants. As we discuss in Section \ref{sec-DT},
the DT invariants $\Omega(\gamma,z^a)$ can be recovered from the MSW invariants
$\OmMSW(\gamma)$ by using a version of the split attractor flow conjecture  \cite{Denef:2001xn,Denef:2007vg}
developed in \cite{Alexandrov:2018iao}, which we call the flow tree formula.

The case where $\cD$ is the sum of two irreducible  divisors was first
considered in \cite{Manschot:2009ia,Alim:2010cf,Klemm:2012sx},
and studied more recently in \cite{Alexandrov:2016tnf,Alexandrov:2017qhn}.
In that case, the generating function of MSW invariants turns out to be a mock modular
form, with a specific non-holomorphic completion obtained by smoothing out the sign functions entering in
the bound state contributions, recovering the prescription of \cite{Zwegers-thesis,Manschot:2009ia}.
The goal of this paper is to extend this result to the general case where $\cD$ is the sum of
$n$ irreducible divisors $\cD_i$, where $n$ can be arbitrarily large.
In such generic situation, we characterize the modular properties of the generating function $h_{p,\mu}$ of MSW invariants
and find an explicit expression for its non-holomorphic completion $\widehat h_{p,\mu}$
in terms of the generating functions $h_{p_i,\mu_i}$ associated to the $n$ constituents,
multiplied by certain iterated integrals introduced in \cite{Alexandrov:2016enp,Nazaroglu:2016lmr},
which generalize the usual error function appearing when $n=2$.
This result implies that in this case $h_{p,\mu}$ is a (mixed, vector valued) mock modular
form of depth $n-1$, in the sense that the antiholomorphic derivative of its modular completion
is itself a linear combination of modular completions of mock modular forms of lower depth, times
antiholomorphic modular forms
(with the depth 1 case reducing to the standard mock modular forms introduced in
\cite{Zwegers-thesis,MR2605321}, and the depth 0 case to usual weakly holomorphic modular forms;
see \cite[(3.16)]{Gupta:2018krl} for a more precise definition).

In order to establish this result, we follow the same strategy as in our earlier works \cite{Alexandrov:2016tnf,Alexandrov:2017qhn}
and  analyze D3-D1-D(-1) instanton corrections to the metric on
the hypermultiplet moduli space $\cM_H$ in type IIB string theory compactified on $\CY$,
at arbitrary order in the instanton expansion. After reducing on a circle and T-dualizing,
this moduli space is identical to the vector multiplet moduli space in type IIA string theory
compactified on $\CY\times S_1$, where it receives instanton corrections from
D4-D2-D0 black holes winding around the circle. In either case, each instanton contribution
is weighted by the same generalized DT invariant $\Omega(\gamma)$ which
counts the number of BPS black hole microstates with electromagnetic charge $\gamma$.
The modular properties of the generalized DT invariants are fixed by requiring that the quaternion-K\"ahler
(QK) metric on $\cM_H$ admits an isometric action of $SL(2,\IZ)$, which comes from S-duality in type IIB,
or equivalently from large diffeomorphisms of the torus appearing when viewing type IIA/$\CY\times S_1$
as M-theory on $\CY\times T^2$ \cite{Alexandrov:2008gh}. This QK metric is most efficiently
encoded in the complex contact structure  on the associated twistor space,
a $\CP$-bundle over $\cM_H$ \cite{MR1327157,Alexandrov:2008nk}.
The latter is specified by a set of gluing conditions determined by the DT invariants \cite{Alexandrov:2008gh,Alexandrov:2009zh},
which can in turn be expressed in terms of the MSW invariants using the tree flow formula.

An important quantity appearing in this twistorial formulation\footnote{A familiarity with
the twistorial formulation is not required for this work.
Here we use only two equations relevant for the twistorial description of D-instantons:  the integral
equation \eqref{eqTBA} for certain Darboux coordinates, which appears also in the study of four-dimensional $N=2$
gauge theory on a circle \cite{Gaiotto:2008cd}, and the expression for the contact potential \eqref{phiinstmany}
in terms of these Darboux coordinates.
These two equations lead to \eqref{expcX} and \eqref{Phitwo-tcF}, respectively,
which can be taken as the starting point of our analysis.
Note that in the context of gauge theories the contact potential can be interpreted
as a supersymmetric index \cite{Alexandrov:2014wca}.}
is the so called {\it contact potential} $e^\phi$, a real function on $\cM_H$
related to the \kahler potential on the twistor space,
and afforded by the existence of a continuous isometry unbroken by D-instantons.
On the type IIB side, $e^\phi$ can be identified
with the four-dimensional dilaton $1/g_4^2$.  When expressed in terms of the ten-dimensional
axio-dilaton $\tau=c_0+\I/g_s$ (or in terms of the modulus of the 2-torus on the M-theory side), it becomes
a complicated function having a classical contribution, a one-loop correction, and a series of instanton corrections
expressed as contour integrals on the $\CP$ fiber. The importance of the contact potential
stems from the fact that it must be a non-holomorphic modular form of weight $(-\half,-\half)$ in the variable $\tau$
in order for $\cM_H$ to admit an isometric action of $SL(2,\IZ)$~\cite{Alexandrov:2008gh}.
This requirement imposes very non-trivial constraints on the instanton contributions to $e^\phi$,
which can be used to deduce the modular properties of generating functions of DT invariants,
at each order in the instanton expansion.
This strategy was used in \cite{Alexandrov:2016tnf} at two-instanton order
to characterize the modular behavior of the generating function of MSW invariants
in the case of a divisor equal to the sum of $n=2$ irreducible divisors.
In this paper we generalize this
result to all $n$, by analyzing the instanton expansion to all orders.
Below we summarize the main steps in our analysis and our main results.

\begin{enumerate}

\item
First, we show that the contact potential $e^\phi$ in the large volume limit, where the K\"ahler parameters of the CY are sent to infinity,
can be expressed (see \eqref{Phitwo-tcF}) through another (complex valued) function $\cG$ \eqref{defcF2}
on the moduli space, which we call the instanton generating function.
The expansion of $\cG$ in powers of DT invariants
is governed by a sum over unrooted trees decorated by charges $\gamma_i$ (see \eqref{treeF} and \eqref{expl-tcA}).
The modularity of the contact potential requires $\cG$ to transform as a modular form of weight $(-\tfrac32,\tfrac12)$.

\item
After expressing the DT invariants $\Omega(\gamma,z^a)$ through the moduli independent MSW invariants
$\OmMSW(\gamma)$ using the tree flow formula of \cite{Alexandrov:2018iao},
and expanding $\cG$  in powers of $\OmMSW(\gamma)$,
each order in this expansion
can be decomposed into a sum of products of certain indefinite theta series
and of holomorphic generating functions $h_{p_i,\mu_i}$ of the invariants
$\OmMSW(\gamma)$ (see \eqref{treeFh-fl}),
similarly to the usual decomposition of standard Jacobi forms.
Thus, the modular properties of the indefinite theta series $\vartheta_{\bfp,\bfmu}\bigl(\Phi^{{\rm tot}}_{n},n-2\bigr)$
are tied with  the modular properties of the generating functions $h_{p_i,\mu_i}$,
in order for $\cG$ to be modular.

\item
In order for an indefinite theta series $\vartheta_{\bfp,\bfmu}\bigl(\Phi,\lambda)$ to be modular,
its kernel $\Phi$ must satisfy a certain differential equation \eqref{Vigdif},
which we call Vign\'eras' equation \cite{Vigneras:1977}.
By construction, the kernels $\Phi^{\rm tot}_n$ appearing in our problem are given by
iterated contour integrals along the $\CP$ fiber of the twistor space,
multiplied by so-called `tree indices' $\gtr$ coming from the expression of $\Omega(\gamma,z^a)$ in terms of $\OmMSW(\gamma)$.
We evaluate the twistorial integrals in terms of the generalized error functions introduced in
\cite{Alexandrov:2016enp,Nazaroglu:2016lmr}, and show that the resulting kernels satisfy Vign\'eras' equation away
from certain loci where they have discontinuities.
Furthermore, we prove that the discontinuities corresponding to walls of marginal
stability cancel between the integrals and the tree indices.
But there are additional discontinuities coming from certain moduli independent contributions to the tree index.
They spoil Vign\'eras' equation at the multi-instanton level so that the theta series
$\vartheta_{\bfp,\bfmu}\bigl(\Phi^{{\rm tot}}_{n},n-2\bigr)$  are {\it not} modular.
In turn, this implies that the holomorphic generating functions $h_{p_i,\mu_i}$ are not modular either.

\item
However, we show that one can complete $h_{p,\mu}$ into a non-holomorphic modular
form $\whh_{p,\mu}(\tau)$, by adding to it  a series of corrections proportional
to products of $h_{p_i,\mu_i}$ with the same total D4-brane charge $p=\sum_i p_i$ (see \eqref{exp-whh}).
The non-holomorphic functions $R_n$ entering the completion
are determined by the condition that the expansion of $\cG$ rewritten in terms of
$\whh_{p,\mu}$ gives rise to a non-anomalous, modular theta series.
Equivalently, one can work with functions $\whg_n$ appearing in the expansion \eqref{multihd-full}
of the holomorphic generating function of DT invariants $h^{\rm DT}_{p,q}$ in powers of
the non-holomorphic functions~$\whh_{p_i,\mu_i}$.

\item
Imposing the conditions for modularity, we find that $\whg_n$ can be represented in an iterative form \eqref{iterDn},
or more explicitly as a sum \eqref{soliterg}
over rooted trees with valency $\geq 3$ at each vertex  (known as Schr\"oder trees),
where $\gf_n$ are certain locally polynomial functions defined in \eqref{defDf-gen},
while $\cE_n$ are smooth solutions of Vign\'eras' equation constructed in terms
of generalized error functions.
The functions $R_n$ are similarly given by a sum over Schr\"oder trees \eqref{solRn} in terms of
exponentially decreasing and non-decreasing parts of $\cE_n$.
These equations represent the main technical result of this work.

\item
For $n\le 5$ our general formulas can be drastically simplified. In particular,
the main building blocks, the functions $\gf_n$ and $\cE_n$, can be written as a sum over a suitable subset of flow trees,
as in \eqref{gfn-upto5}. In addition, we show that $\gf_n$ has a natural extension including the refinement
parameter conjugate to the spin $J_3$ of the black hole.

\end{enumerate}

\lfig{Various types of trees arising in this work.
$\IT_{n,m}$ denotes the set of unrooted trees with $n$ vertices and $m$ marks distributed between the vertices.
$\IT_n\!\equiv \IT_{n,0}$ comprises the trees without marks.
$\IT_n^{\rm r}$ is the set of rooted trees with $n$ vertices.
$\IT_n^{\rm af}$ is the set of attractor flow trees with $n$ leaves (of which we only draw the different topologies).
$\IT_n^{\rm S}$ denotes the set of Schr\"oder trees with $n$ leaves.
In addition, an important r\^ole is played by the sets  $\IT_{n,m}^\ell$ and $\IT_n^\ell=\IT_{n,0}^\ell$
of unrooted, labelled, marked trees which are obtained from $\IT_{n,m}$ and $\IT_{n}$ by assigning different labels and markings to the vertices
in such way that a vertex with $m_\ver$ marks is decorated by $2m_\ver+1$ labels.}
{table}{16cm}{fig-table}{-1.5cm}

At the end of this lengthy analysis, we thus find a modular completion of
the generating functions $h_{p,\mu}$ of MSW invariants for an arbitrary divisor,
i.e. decomposable into a sum of {\it any} number of irreducible divisors.
The result is expressed through sums of products of generalized error functions labelled by trees of various types.
For the reader's convenience, in Fig. \ref{fig-table} we display
the various trees which appear in our construction, up to $n=4$.
Since generalized error functions are known to be related to iterated Eichler integrals \cite{Alexandrov:2016tnf,2017arXiv170406891B},
which occur in the non-holomorphic completion of mock modular forms, we loosely refer to the
generating functions of MSW invariants as higher depth mock modular forms, although we do not
spell out the precise meaning of this notion.

An unexpected byproduct of our analysis is an interesting combinatorial identity, relating rooted trees to the binomial coefficients,
which plays a r\^ole in our derivation of the modular completion.
Since we are not aware of such an identity in the mathematical literature,\footnote{We were informed
by Karen Yeats that the special case $m=n-1$ appears in \cite[(124)]{kreimer1999chen}.
The denominator in \eqref{combident} is sometimes known as the tree factorial $T'!$, see \eqref{defcT}.}
we state it here as a theorem whose proof can be found in appendix \ref{ap-theorem}.
\begin{theorem}\label{theorem}
Let $V_T$ be the set of vertices of a rooted ordered tree $T$
and $n_v(T)$ is the number of descendants of vertex $v$ inside $T$ plus 1
(alternatively, the number of vertices of the subtree in $T$ with the root
being the vertex $v$ and the leaves being the leaves of $T$). Then for a rooted tree $T$ with $n$ vertices and $m<n$ one has
\be
\sum_{T'\subset T}\prod_{v\in V_{T'}}\frac{n_v(T)}{n_v(T')}=\frac{n!}{m!(n-m)!}\, ,
\label{combident}
\ee
where the sum goes over all subtrees with $m$ vertices, having the same root as $T$ (see Fig. \ref{fig-subtrees}).
\end{theorem}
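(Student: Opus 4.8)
The plan is to prove \eqref{combident} by induction on the number of vertices $n$, exploiting the recursive structure of rooted trees at the root. For any rooted tree $S$ and any $0\le m\le |V_S|$, it is convenient to set
\[
F(S,m)=\sum_{\substack{S'\subset S\\ |V_{S'}|=m}}\ \prod_{v\in V_{S'}}\frac{n_v(S)}{n_v(S')},
\]
where the sum runs over subtrees $S'$ sharing the root of $S$, with the convention that the empty subtree contributes $F(S,0)=1$. The goal is to show $F(S,m)=\binom{|V_S|}{m}$ for all $m$, which specializes to \eqref{combident} when $S=T$ and $0<m<n$.

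First I would set up the decomposition at the root. Let $r$ be the root of $T$, with children $c_1,\dots,c_k$ carrying the subtrees $T_1,\dots,T_k$ of sizes $n_i=|V_{T_i}|$, so that $n=1+\sum_i n_i$. Any rooted subtree $T'\subset T$ of size $m$ consists of $r$ together with a (possibly empty) rooted subtree $T'_i\subset T_i$ of size $m_i$, subject to $m=1+\sum_i m_i$. The key point is that the product over $V_{T'}$ factorizes cleanly: the root contributes the universal factor $n_r(T)/n_r(T')=n/m$, while for every $v\in V_{T_i}$ one has $n_v(T)=n_v(T_i)$ and $n_v(T')=n_v(T'_i)$, since the subtree hanging below $v$ is unaffected by the ambient tree. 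Hence each child subtree contributes exactly the same factor as in the smaller problem for $T_i$, yielding the recursion
\[
F(T,m)=\frac{n}{m}\sum_{m_1+\cdots+m_k=m-1}\ \prod_{i=1}^k F(T_i,m_i).
\]

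Next I would invoke the induction hypothesis $F(T_i,m_i)=\binom{n_i}{m_i}$, valid also for $m_i=0$ by the empty-subtree convention, and recognize the resulting sum as the Vandermonde convolution
\[
\sum_{m_1+\cdots+m_k=m-1}\ \prod_{i=1}^k\binom{n_i}{m_i}=\binom{n_1+\cdots+n_k}{m-1}=\binom{n-1}{m-1}.
\]
Combining the two displays gives $F(T,m)=\frac{n}{m}\binom{n-1}{m-1}=\binom{n}{m}=\frac{n!}{m!(n-m)!}$, which closes the induction; the base case $n=1$ is vacuous, as no $m$ with $0<m<n$ then exists.

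The main point requiring care is the factorization step, namely verifying that restricting the subtree-size statistic $n_v$ from the ambient trees $T$ and $T'$ to a child subtree $T_i$ and its restriction $T'_i$ leaves it unchanged for every $v\in V_{T_i}$, so that the inner factors are genuinely those of the $F(T_i,\cdot)$ problem and the root factor $n/m$ pulls out uniformly. Once this is established the identity collapses onto the classical Vandermonde convolution and no further combinatorial input is needed. Finally, since $F(T,m)$ depends only on the multiset of subtree sizes $\{n_i\}$, the ordering of the children is immaterial, so the statement for ordered trees contains the unordered case as well.
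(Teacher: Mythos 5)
Your proof is correct, but it takes a genuinely different route from the paper's. You prove the stronger claim $F(S,m)=\binom{|V_S|}{m}$ for all $0\le m\le |V_S|$ by induction on the tree size, decomposing a rooted subtree at the root: the factorization step is sound because $n_v$ is intrinsic to the branch hanging below $v$, so $n_v(T)=n_v(T_i)$ and $n_v(T')=n_v(T'_i)$ for every $v$ in the $i$-th branch, and the recursion then closes via the Vandermonde convolution $\sum_{m_1+\cdots+m_k=m-1}\prod_i\binom{n_i}{m_i}=\binom{n-1}{m-1}$ together with $\tfrac{n}{m}\binom{n-1}{m-1}=\binom{n}{m}$. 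The paper instead goes through Lemma \ref{lemma-ntrees}: it interprets $\prod_{v}1/n_v(T)$ via the number $N_T=n_T!\prod_{v\in V_T}1/n_v(T)$ of increasing labelings of $T$, rewrites the left-hand side of \eqref{combident} as $\frac{n!}{N_T}\sum_{T'}\frac{N_{T'}}{m!}\prod_{T_r\subset T\setminus T'}\frac{N_{T_r}}{n_{T_r}!}$, and then proves the identity by a bijective count of increasing labelings, splitting a labeling of $T$ into the rooted subtree $T'$ occupied by the labels $\{1,\dots,m\}$, an increasing labeling of $T'$, and a distribution of the remaining $n-m$ labels among the complementary subtrees $T_r$ with increasing labelings of each. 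Your argument is more elementary and self-contained, needing no labeling interpretation; the paper's route has the advantage of producing the tree-factorial formula \eqref{numtrees} as a standalone lemma, which is reused elsewhere (notably in the proof of Proposition \ref{prop-limitG}, where the weight $1/T!$ in \eqref{defPver-tree} arises exactly from counting increasing labelings). One cosmetic remark: your base case is not literally vacuous, since the inductive hypothesis is invoked for all $0\le m_i\le n_i$ including the endpoints, so for $n=1$ one should note $F(\bullet,0)=F(\bullet,1)=1$; both are trivially true, so nothing substantive is missing.
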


The organization of the paper is as follows.
In section \ref{sec-DT} we review known results about DT invariants, their expression in terms of MSW invariants,
and specialize them to the case of D4-D2-D0 black holes in type IIA string theory on a Calabi-Yau threefold.
In section \ref{sec-twist} we present the twistorial description of the D-instanton corrected hypermultiplet
moduli space in the dual type IIB string theory, evaluate the contact potential in the large volume approximation
by expressing it through a function $\cG$, and obtain the instanton expansion for this function via unrooted labelled trees.
In section \ref{sec-modul} we obtain a theta series decomposition for each order of the expansion of $\cG$
in MSW invariants and analyze the modular anomaly of the resulting theta series, implying a modular anomaly for the generating functions $h_{p,\mu}$.
In section \ref{sec-compl} we construct the non-holomorphic completion $\whh_{p,\mu}$, for which the anomaly is cancelled,
and determine its explicit form. Section \ref{sec-disc} is devoted to discussion of the obtained results.
Finally, several appendices contain details of our calculations and proofs of various propositions.
In addition, in appendix \ref{ap-explicit} we present explicit results up to order $n=4$,
and in appendix \ref{sec_index}, as an aid to the reader, we provide an index of notations.

\lfig{An example illustrating the statement of the Theorem for a tree with $n=7$ vertices and subtrees with $m=3$ vertices.
The subtrees are distinguished by red color.
Near each vertex of the subtrees we indicated the pair of numbers $(n_v(T),n_v(T'))$.}
{subtrees}{16.5cm}{fig-subtrees}{-1.9cm}

\section{BPS indices and wall-crossing}
\label{sec-DT}

In this section, we review some aspects of BPS indices in theories with $\cN=2$ supersymmetry,
including the tree flow formula relating the moduli-dependent index $\Omega(\gamma,z^a)$
to the attractor index $\Omega_\star(\gamma)$. We then apply this formalism in the context of Calabi-Yau
string vacua, and express the generalized DT invariants $\Omega(\gamma,z^a)$ in terms of their
counterparts evaluated at the large volume attractor point \eqref{lvap}, known as MSW
invariants.

\subsection{Wall-crossing and attractor flows}
\label{subsec-aftrees}

The BPS index $\Omega(\gamma,z^a)$ counts (with sign) micro-states of BPS black holes
with total electro-magnetic charge $\gamma=(p^\Lambda,q_\Lambda)$, for a given value
$z^a$ of the moduli at spatial infinity. While $\Omega(\gamma,z^a)$ is a locally constant
function over the moduli space, it can jump across real codimension one
loci where certain  bound states, represented by multi-centered black hole solutions of $N=2$ supergravity, become unstable.
The positions of these loci,  known as {\it walls of marginal stability},
are determined by the central charge $Z_\gamma(z^a)$, a complex-valued linear function of $\gamma$
whose modulus gives the mass of a BPS state of charge $\gamma$,
while the phase determines the supersymmetry subalgebra preserved by the state.
Since a bound state can only decay when its mass becomes equal to the sum of masses of its constituents,
it is apparent that the walls correspond to hypersurfaces where the phases of two central charges,
say $Z_{\gamma_L}(z^a)$ and $Z_{\gamma_R}(z^a)$,  become aligned. The bound states which may become
unstable are then those whose constituents have charges in the positive cone spanned by
$\gamma_L$ and $\gamma_R$. We shall assume that the charges $\gamma_L,\gamma_R$
have non-zero Dirac-Schwinger-Zwanziger (DSZ) pairing
$\langle\gamma_L,\gamma_R\rangle\ne 0$, since otherwise marginal bound states may form,
whose stability is hard to control.

The general relation between the values of $\Omega(\gamma,z^a)$ on the two sides of a wall has been found
in the mathematics literature by Kontsevich--Soibelman \cite{ks} and Joyce--Song \cite{Joyce:2008pc,Joyce:2009xv},
and justified physically in a series of works \cite{Denef:2007vg,Diaconescu:2007bf,Manschot:2010qz,Andriyash:2010qv}.
However, in this work we require a somewhat different result: an expression of $\Omega(\gamma,z^a)$ in terms of moduli-independent indices.
One such representation, known as the Coulomb branch formula, was developed in a series of papers
\cite{Manschot:2011xc,Manschot:2012rx,Manschot:2013sya}
(see \cite{Manschot:2014fua} for a review) where the moduli-independent index is the so-called `single-centered invariant'
counting single-centered, spherically symmetric  BPS black holes.
Unfortunately, this representation (and its inverse) is quite involved, as it requires
disentangling genuine single-centered solutions from so-called scaling solutions,
i.e. multi-centered solutions with $n\geq 3$ constituents which can become arbitrarily close to each other
\cite{Denef:2007vg,Bena:2012hf}.

A simpler alternative is to consider the {\it attractor index}, i.e. the value of the BPS index in the attractor chamber
$\Omega_*(\gamma)\equiv \Omega(\gamma,z^a_\star(\gamma))$, where $z^a_\star(\gamma)$ is fixed
in terms of the charge $\gamma$ via the attractor mechanism \cite{Ferrara:1995ih}
(recall that for a spherically symmetric BPS black hole with charge $\gamma$,
the scalars in the vector multiplets
have fixed value $z^a_\star(\gamma)$ at the horizon independently of their value $z^a$ at spatial infinity.).
By definition, the attractor indices are of course moduli independent.
The problem of expressing $\Omega(\gamma,z^a)$ in terms of attractor indices was addressed
recently in \cite{Alexandrov:2018iao}, extending earlier work in \cite{Manschot:2009ia,Manschot:2010xp}.
Relying on the split attractor flow conjecture \cite{Denef:2001xn,Denef:2007vg}, it was argued that
the rational BPS index
\be
\label{defntilde}
\bar\Omega(\gamma,z^a) = \sum_{d|\gamma}  \frac{1}{d^2}\,
\Omega(\gamma/d,z^a)
\ee
can be expanded in powers of $\bOm_*(\gamma_i)$,
\be
\label{Omsumtree}
\bOm(\gamma,z^a) =
\sum_{\sum_{i=1}^n \gamma_i=\gamma}
\gtr(\{\gamma_i\},z^a)\,
\prod_{i=1}^n \bOm_*(\gamma_i),
\ee
where the sum runs over {\it ordered}\footnote{
In \cite{Alexandrov:2018iao} a similar formula was written as a sum over {\it unordered} decompositions,
weighted by the symmetry factor $1/|{\rm Aut}\{\gamma_i\}|$. Since
$\gtr(\{\gamma_i\},z^a)$ is symmetric under
permutations of $\{\gamma_i\}$, we can sum over all ordered decompositions with unit weight,
at the expense
of inserting the factor $1/n!$ in its definition \eqref{defgtree}.
In the sequel, all similar sums are always assumed to run over ordered decompositions.
}
decompositions of $\gamma$ into sums of vectors $\gamma_i\in\Gamma_+$, with $\Gamma_+$ being the set of all vectors $\gamma$ whose
central charge $Z_\gamma(z^a_\infty)$ lies in a fixed half-space defining the splitting between
BPS particles ($\gamma\in\Gamma_+$) and anti-BPS particles ($\gamma\in - \Gamma_+$).
Such decompositions correspond to contributions of multi-centered black hole solutions with constituents carrying charges $\gamma_i$.

The coefficient $\gtr$, called the {\it tree index}, is defined as
\be
\label{defgtree}
\gtr(\{\gamma_i\},z^a)=
\frac{1}{n!}
\sum_{T\in \IT_n^{\rm af}}\Delta(T)\, \kappa(T) ,
\ee
where the sum goes over the set $\IT_n^{\rm af}$ of {\it attractor flow trees}
with $n$ leaves.
These are unordered
rooted binary trees\footnote{The number of such trees is
$|\IT_n^{\rm af}|=(2n-3)!!=(2n-3)!/[2^{n-2}(n-2)!]=\{1,1,3,15,105,945,...\}$ for $n\geq 1$.}
$T$ with vertices decorated by electromagnetic charges $\gamma_v$,
such that the leaves of the tree carry the constituent charges $\gamma_i$, and
the charges propagate along the tree according to $\gamma_v= \gamma_{\Lv{v}}+\gamma_{\Rv{v}}$
at each vertex, where $\Lv{v}$, $\Rv{v}$ are the two children\footnote{This assignment
requires an ordering of the children at each vertex, which can be chosen arbitrarily for
each tree. With such an ordering, and assuming that all the charges $\gamma_i$ are distinct,
the flow tree can be labelled by a $2$-bracketing of
a permutation of the set $\{1,\dots, n\}$, as shown in Fig. \ref{fig-AFtree}.\label{foobracket}} of
the vertex $v$ (see Fig. \ref{fig-AFtree}). The
charge carried by the root of the tree is then the total charge $\gamma=\sum\gamma_i$.
The idea of the split attractor flow conjecture is that each tree represents
a nested sequence of two-centered bound states describing
a multi-centered solution built out of constituents with charges $\gamma_i$.
With this interpretation, the edges of the graph  represent the evolution of the moduli under attractor flow,
so that one starts from the moduli at spatial infinity $z^a_\infty\equiv z^a$ and assigns to the root $v_0$
the point in the moduli space $z^a_{v_0}$ where the attractor flow with charge $\gamma$
crosses the wall of marginal stability
where $\Im \bigl[Z_{\gamma_{\Lv{v_0}}} \bZ_{\gamma_{\Rv{v_0}}}(z^a_{v_0})\bigr]=0$.
Then one repeats this procedure for every edge, obtaining a set of charges
and moduli $(\gamma_v,z^a_v)$  assigned to each vertex, with the bound state constituents
and their attractor moduli
$(\gamma_i,z^a_{\gamma_i})$ assigned to the leaves.

\lfig{An example of attractor flow tree corresponding to the bracketing $((13)(2(45)))$.}{AFtree}{8.5cm}{fig-AFtree}{-1.1cm}

Given these data, the factor $\Delta(T)$ in \eqref{defgtree} is given by
\be
\label{kappaT}
\Delta(T)=\prod_{v\in V_T}\Delta_{\gamma_{\Lv{v}}\gamma_{\Rv{v}}}^{z_{p(v)}},
\qquad
\Delta_{\gamma_L\gamma_R}^z=
\hf\,\Bigl[\sign \,\Im\bigl[ Z_{\gamma_L}\bZ_{\gamma_R}(z^a)\bigr]+ \sign (\gamma_{LR}) \Bigr] ,
\ee
where $V_T$ denotes the set of vertices of $T$ excluding the leaves, $p(v)$ is the parent of vertex $v$,
and $\gamma_{LR}=\langle \gamma_L,\gamma_R \rangle$.
This factor vanishes unless the stability condition\footnote{In fact, the admissibility also requires
$\Re\bigl[ Z_{\gamma_{\Lv{v}}}\bZ_{\gamma_{\Rv{v}}}(z^a_{v})\bigr]>0$ at each vertex. This condition will hold automatically
for the case of our interest, namely, D4-D2-D0 black holes in the large volume limit, so we do not impose it explicitly.}
\be
\label{condtree}
\gamma_{\Lv{v}\Rv{v}}
\, \Im\bigl[ Z_{\gamma_{\Lv{v}}}\bZ_{\gamma_{\Rv{v}}}(z^a_{p(v)})\bigr]>0
\ee
is satisfied for all $v\in V_T$, which ensures admissibility of the flow tree $T$, i.e. the existence
of the corresponding nested bound state. Importantly, the sign of
$\Im\bigl[ Z_{\gamma_{\Lv{v}}}\bZ_{\gamma_{\Rv{v}}}(z^a_{p(v)})\bigr]$ entering  \eqref{kappaT} can
be computed recursively in terms of asymptotic data, without evaluating the attractor flow along the edges \cite{Alexandrov:2018iao}.
More precisely, the signs depend only on  the stability parameters (also known as Fayet-Iliopoulos parameters)
\be
\label{fiparam}
c_i(z^a) =  \Im\bigl[ Z_{\gamma_i}\bZ_\gamma(z^a)\bigr]\, .
\ee
Note that due to $\gamma=\sum_{i=1}^n \gamma_i$, these parameters satisfy
$\sum_{i=1}^n c_i=0$. Accordingly, we shall often denote $\gtr(\{\gamma_i\},z^a)$
by $\gtr(\{\gamma_i,c_i\})$, and always assume that $\{c_i\}$ are generic real
parameters subject to the condition $\sum_{i=1}^n c_i=0$,
such that no proper subset of them sums up to zero.

The second factor in \eqref{defgtree} is independent of the moduli, and given by
\be
\label{kappaT0}
\kappa(T) \equiv (-1)^{n-1} \prod_{v\in V_T} \kappa( \gamma_{\Lv{v}\Rv{v}}),
\qquad
\kappa(x)=(-1)^x\,  x.
\ee
This is simply the product of the BPS indices of the nested two-centered solutions associated with the tree $T$.
Note that the signs of $\Delta(T)$ and $\kappa(T)$ separately depend on the choice of ordering
$(\gamma_{\Lv{v}},\gamma_{\Rv{v}})$ at each vertex, but  their product
is independent of that choice.

Sometimes, it is useful to consider the refined BPS index $\Omega(\gamma,z^a,y)$ which carries additional
dependence on the fugacity $y$ conjugate to the spin $J_3$. All the above equations remain valid in this case as well
(except for the definition of the rational invariant \eqref{defntilde}, which must be slightly modified,
see e.g. \cite[(1.3)]{Manschot:2010qz}), but now the function $\kappa(x)$
appearing in \eqref{kappaT0} becomes a symmetric Laurent polynomial in $y$
\be
\kappa(x)=(-1)^x \, \frac{ y^x-y^{-x}}{y-y^{-1}}\, ,
\label{kappadef}
\ee
reducing to $(-1)^x x$ in the unrefined  limit $y\to 1$.

It is easy to see that the `flow tree formula' \eqref{Omsumtree} is consistent with the primitive wall-crossing formula
\cite{Denef:2007vg,Diaconescu:2007bf}
\be
\label{primwc}
\Delta\bOm(\gamma_L+\gamma_R) =  -\sign(\gamma_{LR})\, \kappa(\gamma_{LR})\, \bOm(\gamma_L,z^a)\, \bOm(\gamma_R,z^a) ,
\ee
which gives the jump of the BPS index due to the decay of bound states after crossing the wall defined by
a pair of primitive\footnote{Here, by primitive we mean that all charges with non-zero index in the two-dimensional lattice
spanned by $\gamma_L$ and $\gamma_R$ are linear combinations $N_L \gamma_L + N_R\gamma_R$ with coefficients $N_L, N_R$ of the same sign.}
charges $\gamma_L$ and $\gamma_R$. To this end, it suffices to consider all flow trees which
start with the splitting $\gamma\to \gamma_L+\gamma_R$ at the root of the tree.
It is also consistent with the general wall-crossing formula of \cite{ks}, provided the tree index is computed
for a small generic perturbation of the DSZ matrix $\gamma_{ij}$ \cite{Alexandrov:2018iao}. Finally,
it is useful to note that, assuming that all charges $\gamma_i$ are distinct, the sum over splittings and flow trees in \eqref{Omsumtree}
can be generated by iterating the quadratic equation \cite{Alexandrov:2018iao}
\bea
\label{itereq}
\bOm(\gamma,z^a) &= &\bOm_*(\gamma)
-\hf\sum_{\substack{\gamma=\gamma_L+\gamma_R
\\
\langle \gamma_L,\gamma_R \rangle \neq 0}}
\Delta_{\gamma_L\gamma_R}^z \, \kappa(\gamma_{LR})\,
\bOm(\gamma_L,z^a_{LR})\, \bOm(\gamma_R,z^a_{LR}),
\eea
where $z^a_{LR}$ is the point where the attractor flow of charge $\gamma$  crosses the wall of marginal stability
$\Im\bigl[Z_{\gamma_L}\bar Z_{\gamma_R}(z^a_{LR})\bigr]=0$.

\subsection{Partial tree index}
\label{subsec-partialindex}

While the representation of the BPS index based on attractor flows
is useful for many purposes,  it produces a sum of products of  sign functions
depending on non-linear combinations of
DSZ products $\gamma_{ij}$, which are very difficult to work with.
A solution to overcome this problem was found  in \cite{Alexandrov:2018iao}.
The key idea is to introduce a refined index
with a fugacity $y$ conjugate to angular momentum, and represent it as
\be
\gtr(\{\gamma_i,c_i\},y)
= \frac{(-1)^{n-1+\sum_{i<j} \gamma_{ij} }}{(y-y^{-1})^{n-1}}
\,\Sym\Bigl\{
\Ftr{n}(\{\gamma_i,c_i\})\,y^{\sum_{i<j} \gamma_{ij}}\Bigr\},
\label{gtF}
\ee
where $\Sym$ denotes symmetrization (with weight $1/n!$) with respect to the charges $\gamma_i$,
and $\Ftr{n}$ is the `partial tree index' defined by\footnote{Unlike the tree index $\gtr$,
the partial tree index $\Ftr{n}$ is {\it not} a symmetric function of charges $\gamma_i$ and stability parameters $c_i$,
however we abuse notation and still denote it by $\Ftr{n} (\{\gamma_{i},c_i\})$.}
\be
\label{defFpl}
\Ftr{n} (\{\gamma_{i},c_i\}) =
\sum_{T\in \IT_n^{\mbox{\tiny af-pl}}} \Delta(T) .
\ee
Here the sum runs over the set $\IT_n^{\mbox{\scriptsize af-pl}}$ of {\it planar} flow trees with $n$
leaves\footnote{The number of such trees is  the $n-1$-th Catalan number
$|\IT_n^{\mbox{\scriptsize af-pl}}|=\frac{(2n-2)!}
{n[(n-1)!]^2}=\{1,1,2,5,14,42,132,\dots\}$ for $n\geq 1$.}
carrying ordered charges $\gamma_1,\dots, \gamma_n$.  Although this is not manifest,
the refined tree index \eqref{gtF} is regular at $y=1$, and its value
(computed e.g. using l'H\^opital rule) reduces to the tree index \eqref{defgtree}.
The advantage of the representation \eqref{gtF} is that  the partial tree index $\Ftr{n}$
does not involve the $\kappa$-factors \eqref{kappaT0} and is independent of the refinement parameter.

The partial index $\Ftr{n}$ satisfies two important recursive relations.
To formulate them, let us introduce some convenient notations:
\be
\cs_k=\sum_{i=1}^k c_i,
\quad\
\beta_{k\ell}=\sum_{i=1}^k \gamma_{i\ell},
\quad\
\Gamma_{k\ell}=\sum_{i=1}^k\sum_{j=1}^\ell \gamma_{ij}.
\label{notaion-c}
\ee
In terms of these notations, the partial index satisfies the iterative equation \cite[(2.59)]{Alexandrov:2018iao},
\be
\Ftr{n}(\{\gamma_i,c_i\})=\hf\sum_{\ell=1}^{n-1} \bigl( \sign(\cs_\ell)-\sign (\Gamma_{n\ell})\bigr)\,
\Ftr{\ell}(\{\gamma_i,c_i^{(\ell)}\}_{i=1}^\ell)\,
\Ftr{n-\ell}(\{\gamma_i,c_i^{(\ell)}\}_{i=\ell+1}^n),
\label{inductFn}
\ee
where $c_i^{(\ell)}$ is the value of the stability parameters at the point
where the attractor flow crosses
the wall for the decay $\gamma\to(\gamma_1+\cdots +\gamma_\ell,\gamma_{\ell+1}+\cdots +\gamma_n)$, given by
\be
c_i^{(\ell)}=c_i -\frac{\beta_{ni}}{\Gamma_{n\ell}}\, \cs_{\ell}\, .
\label{flowc}
\ee
Importantly, $c_i^{(\ell)}$ satisfies $\sum_{i=1}^\ell c_i^{(\ell)}= \sum_{i=\ell+1}^n c_i^{(\ell)}=0$
so that the two factors on the r.h.s. of \eqref{inductFn} are well-defined.
Note that the iterative equation \eqref{inductFn} is in the spirit of the quadratic equation \eqref{itereq}.

According to \cite[Prop. 2]{Alexandrov:2018iao},
the partial tree index satisfies another recursion
\be
\begin{split}
\Ftr{n}(\{\gamma_i,c_i\})
=&\,\Fwl_n(\{c_i\})- \sum_{n_1+\cdots +n_m= n\atop n_k\ge 1, \ m<n}
\Ftr{m}(\{\gamma'_k,c'_k\})
\prod_{k=1}^m \Fwl_{n_k}(\beta_{n,j_{k-1}+1},\dots,\beta_{nj_{k}}),
\end{split}
\label{F-ansatz}
\ee
where the sum runs over ordered partitions of $n$, $m$ is the number of parts,
and for $k=1,\dots,m$ we defined
\be
\begin{split}
&\qquad \qquad
j_0=0,
\qquad
j_k=n_1+\cdots + n_k,
\\
&\gamma'_k=\gamma_{j_{k-1}+1}+\cdots +\gamma_{j_{k}},
\qquad
c'_k=c_{j_{k-1}+1}+\cdots +c_{j_{k}}.
\label{groupindex}
\end{split}
\ee
The function appearing in \eqref{F-ansatz} is simply a product of signs,
\be
\Fwl_n(\{c_i\})=\frac{1}{2^{n-1}}\prod_{i=1}^{n-1}\sgn(\cs_i) .
\label{Fwlstar-a}
\ee
This new recursive relation allows to express the partial index in a way which does not involve sign functions
depending on non-linear combinations of parameters, in contrast to the previous relation \eqref{inductFn} where
such sign functions arise due to the discrete attractor flow relation \eqref{flowc}.

\subsection{D4-D2-D0 black holes and BPS indices}
\label{subsec-MSW}

The results presented above are applicable in any theory with $\cN=2$ supersymmetry.
Let us now specialize to the BPS black holes obtained as bound states of D4-D2-D0-branes in type IIA string theory compactified on
a CY threefold $\CY$. In this case the moduli $z^a=b^a+\I t^a$ ($a=1,\dots, b_2(\CY)$)
are the complexified K\"ahler moduli with respect to a basis of $H^2(\CY,\IZ)$,
parametrizing the K\"ahler moduli space $\cM_\cK(\CY)$.
The charge vectors $\gamma\in H_{\rm even}(\CY,\IQ)$ have the form
$\gamma=(0,p^a,q_a,q_0)$ where the first entry corresponds to the D6-brane charge,
which is taken to vanish, whereas the other components, corresponding to the D4,
D2 and D0-brane charges, satisfy
the following quantization conditions \cite{Alexandrov:2010ca}:
\be
\label{fractionalshiftsD5}
p^a\in\IZ ,
\qquad
q_a \in \IZ  + \frac12 \,(p^2)_a ,
\qquad
q_0\in \IZ-\frac{1}{24}\, p^a c_{2,a},
\ee
where $c_{2,a}$ are components of the second Chern class of $\CY$.
In the second relation we used the notations $(kp)_a=\kappa_{abc}k^b p^c$
and $(lkp)=\kappa_{abc}l^a k^b p^c$ (recall that $\kappa_{abc}$ are the intersection numbers
on $H_4(\CY,\IZ)$)
which will be extensively used below.
The lattice of charges $\gamma$ satisfying \eqref{fractionalshiftsD5} will be denoted by $\Gamma$.
The cone $\Gamma_+\subset \Gamma$ is obtained by imposing the further restriction
that the D4-brane charge $p^a$ corresponds to an effective divisor in $\CY$ and
belongs to the K\"ahler cone, i.e.
\be
\label{khcone}
p^3> 0,
\qquad
(r p^2)> 0,
\qquad
k_a p^a > 0,
\ee
for all effective divisors $r^a \gamma_a \in H_4^+(\CY,\IZ)$ and
effective curves $k_a \gamma^a \in H_2^+(\CY,\IZ)$, where $\gamma_a$ denotes irreducible divisors giving
an integer basis of $ \Lambda=H_4(\CY,\IZ)$, dual to the basis $\gamma^a$ of $\Lambda^*=H_2(\CY,\IZ)$.
The charge $p^a$ induces a quadratic form $\kappa_{ab}=\kappa_{abc} p^c$ on $\Lambda\otimes \IR\simeq \IR^{b_2}$
of signature $(1,b_2-1)$. This quadratic form allows to embed $\Lambda$ into $\Lambda^*$, but the map $\epsilon^a \mapsto \kappa_{ab} \epsilon^b$
is in general not surjective, the quotient $\Lambda^*/\Lambda$ being a finite group of order $|\det\kappa_{ab}|$.

The holomorphic central charge, governing the mass of BPS states, is given by
\be
Z_\gamma(z^a)=q_\Lambda X^\Lambda(z^a) -p^\Lambda F_\Lambda(z^a),
\label{defZg}
\ee
where $X^\Lambda(z^a)=(1,z^a)$ are the special coordinates
and $F_\Lambda=\p_{X^\Lambda} F(X)$ is the derivative
of the holomorphic prepotential $F(X)$ on $\cM_\cK$.
In the large volume limit $t^a\to\infty$, the prepotential reduces to the classical cubic contribution
\be
F(X)\approx \Fcl(X)=-\kappa_{abc}\, \frac{ X^a X^b X^c}{6X^0}\, ,
\label{Fcl}
\ee
and the central charge can be approximated as
\be
Z_\gamma\approx -\hf\, (pt^2)+\I\( q_a t^a+(pbt)\)+q_0+q_a b^a+\hf\,(pb^2).
\ee
Note, in particular, that it always has a large negative real part.
Another useful observation is that both quantities appearing in the definition of $\Delta_{\gamma_L\gamma_R}^z$ \eqref{kappaT}
are independent of the last component $q_0$ of the charge vector. Indeed,
\be
\begin{split}
\langle\gamma,\gamma'\rangle=&\, q_a p'^a-q'_ap^a,
\\
\Im\bigl[ Z_{\gamma}\bZ_{\gamma'}\bigr]=&\, -\hf\,\Bigl((p' t^2)(q_{a} +(pb)_a)t^a - (p t^2) (q'_{a}+(p'b)_a) t^a\Bigr).
\end{split}
\label{gamZZ}
\ee

The BPS index $\bOm(\gamma,z^a)$ counting D4-D2-D0 black holes is given mathematically
by the generalized Donaldson-Thomas invariant, which counts\footnote{More precisely,
the generalized DT invariant computes the weighted Euler characteristic of the moduli space of
semi-stable coherent sheaves~\cite{Joyce:2008pc}; in this context, the DSZ product $\langle\gamma,\gamma'\rangle$
coincides with the antisymmetrized Euler form.}
semi-stable coherent sheaves supported
on a divisor $\cD$ in the homology class $p^a\gamma_a$, with first and second Chern numbers
determined by $(q_a,q_0)$.
An important property of these invariants is that
they are unchanged under a combined integer
shift of the Kalb-Ramond field, $b^a\mapsto b^a +\epsilon^a$,
and a spectral flow transformation acting on the D2 and D0 charges
\be
\label{flow}
q_a \mapsto q_a - \kappa_{abc}p^b\epsilon^c,
\qquad
q_0 \mapsto q_0 - \epsilon^a q_a + \frac12\, (p\epsilon \epsilon).
\ee
The shift of $b^a$ is important since the DT invariants are only
piecewise constant as functions of the complexified K\"ahler moduli $z^a=b^a+\I t^a$
due to wall-crossing.

In contrast, the MSW invariants $\bOmMSW(\gamma)$, defined as the generalized
DT invariants $\bOm(\gamma,z^a)$ evaluated at their respective large volume attractor point
\eqref{lvap}, are by construction independent of the moduli, and therefore
invariant under the spectral flow \eqref{flow}.
As a result, they only depend on $p^a, \mu_a$ and $\hat q_{0}$, where we traded the electric charges $(q_a,q_0)$
for $(\epsilon^a, \mu_a,\hat q_0)$. The latter comprise the spectral flow parameter $\epsilon^a$, the residue class
$\mu_a\in \Lambda^*/\Lambda$ defined by the decomposition
\be
\label{defmu}
q_a = \mu_a + \frac12\, \kappa_{abc} p^b p^c + \kappa_{abc} p^b \epsilon^c,
\qquad
\eps^a\in\Lambda\, ,
\ee
and the invariant charge ($\kappa^{ab}$ is the inverse of $\kappa_{ab}$)
\be
\label{defqhat}
\hat q_0 \equiv
q_0 -\frac12\, \kappa^{ab} q_a q_b\, ,
\ee
which is invariant under \eqref{flow}.
This allows to write $\bOmMSW(\gamma)=\bOm_{p,\mu}( \hat q_0)$.

An important fact is that the invariant charge $\hat q_0$ is
bounded from above by $\hat q_0^{\rm max}=\tfrac{1}{24}((p^3)+c_{2,a}p^a)$.
This allows to define two generating functions
\bea
\label{defchimu}
h^{\rm DT}_{p,q}(\tau,z^a) &=& \sum_{\hat q_0 \leq \hat q_0^{\rm max}}
\bOm(\gamma,z^a)\,\expe{-\hat q_0 \tau },
\\
h_{p,\mu}(\tau) &=& \sum_{\hat q_0 \leq \hat q_0^{\rm max}}
\bOm_{p,\mu}(\hat q_0)\,\expe{-\hat q_0 \tau },
\label{defhDT}
\eea
where we used notation $\expe{x}=e^{2\pi\I x}$.
Whereas the generating function  of DT invariants $h^{\rm DT}_{p,q}$ depends on the full electric charge $q_a$
and depends  on the moduli $z^a$ in a  piecewise constant fashion,
the generating function of MSW invariants $h_{p,\mu}(\tau)$, due to the spectral flow symmetry,
depends only on the residue class $\mu_a$. This generating function will be the central object of interest in this paper,
and our main goal will be to understand its behavior under modular transformations of $\tau$.

In general, the MSW invariants $\bOmMSW(\gamma)$ are distinct from the attractor
moduli $\bOm_\star(\gamma)$, since the latter coincide with the generalized DT invariants
$\bOm(\gamma,z^a)$
evaluated at the true attractor point $z^a_*(\gamma)$ for the charge $\gamma$,
while the former are the generalized DT invariants evaluated at the large volume attractor
point $z^a_\infty(\gamma)$ defined in \eqref{lvap}. Nevertheless, we claim that in the large volume limit $t^a\to\infty$,
the tree flow formula reviewed in the previous subsections still allows to express $\bOm(\gamma,z^a)$
in terms of the MSW invariants, namely
\be
\label{Omsumtreelv}
\bOm(\gamma,z^a) =
\sum_{\sum_{i=1}^n \gamma_i=\gamma}
\gtr(\{\gamma_i\}, z^a)\,
\prod_{i=1}^n \bOmMSW(\gamma_i) \qquad (\Im z^a\to\infty)
\ee
The point is that the only walls of marginal stability which extend to infinite volume are
those where the constituents carry no D6-brane charge, and that non-trivial bound states
involving constituents with D4-brane charge are ruled out at the large volume attractor point, similarly to the usual
attractor chamber. Since the r.h.s. of  \eqref{Omsumtreelv} is consistent with wall-crossing
in the infinite volume limit and agrees with the left-hand side at $z^a=z^a_\infty(\gamma)$,
it must therefore hold everywhere at large volume. Of course, some of the states contributing to
$\bOmMSW(\gamma_i)$ may have some substructure, e.g. be realized as D6-$\overline{\rm D6}$ bound states,
but this structure cannot be probed in the large volume limit.  Importantly, since the
quantities  \eqref{gamZZ} entering in the definition of the tree index are independent of
the D0-brane charge $q_0$, the flow tree formula \eqref{Omsumtreelv} may be rewritten as a relation between
the generating functions,
\be
\label{multih}
h^{\rm DT}_{p,q}(\tau,z^a) =
\sum_{\sum_{i=1}^n \gama_i=\gama}
\gtr(\{\gama_i\},z^a)\,e^{\pi\I \tau Q_n(\{\gama_i\})}
\prod_{i=1}^n h_{p_i,\mu_i}(\tau),
\ee
where $\gama=(p^a,q_a)$ denotes the projection of the charge vector $\gamma$ on $H_4\oplus H_2$, and the phase proportional to
\be
Q_n(\{\gama_i\})= \kappa^{ab}q_a q_b-\sum_{i=1}^n\kappa_i^{ab}q_{i,a} q_{i,b}
\label{defQlr}
\ee
appears due to the quadratic term in the definition \eqref{defqhat} of  the invariant charge $\hat q_0$.

\section{D3-instantons and contact potential}
\label{sec-twist}

In this section,  we switch to the dual setup\footnote{Our preference for the type IIB set-up
is merely for consistency with our earlier works on hypermultiplet moduli spaces in $d=4$ Calabi-Yau vacua.
The same considerations apply verbatim, with minor changes of wording, to the vector multiplet moduli space in type IIA string theory
compactified on $\CY\times S^1$, which is more directly related to the counting of D4-D2-D0
black holes in four dimensions.}
of type IIB string theory compactified on the same CY manifold $\CY$.
The DT invariants, describing the BPS degeneracies of D4-D2-D0 black holes in type IIA, now appear as coefficients in front of
the D3-D1-D(-1) instanton effects affecting the metric on the hypermultiplet moduli space $\cM_H$.
The main idea of our approach is that these instanton effects are strongly constrained by
demanding that $\cM_H$ admits an isometric action of the type IIB S-duality $SL(2,\IZ)$.
This constraint uniquely fixes the modular behavior of the generating functions introduced in the previous section.
Here we recall the twistorial construction of D-instanton corrections to the hypermultiplet metric,
describe the action of S-duality, and analyze the instanton expansion of a particular function on $\cM_H$ known as contact potential.

\subsection{$\cM_H$ and twistorial description of instantons \label{sec-twistinst}}

The moduli space of four-dimensional $N=2$ supergravity is a direct product
of vector and hypermultiplet moduli spaces, $\cM_V\times \cM_H$.
The former is a (projective) special K\"ahler manifold, whereas the latter is a quaternion-K\"ahler (QK) manifold.
In type IIB  string theory compactified on a CY threefold $\CY$,
$\cM_H$ is a space of real dimension $4b_2(\CY)+4$, which is fibered over the
complexified K\"ahler moduli space $\cM_\cK(\CY)$ of dimension $2b_2(\CY)$.
In addition to the K\"ahler moduli $z^a=b^a+\I t^a$,
it describes the dynamics of the ten-dimensional axio-dilaton $\tau=c^0+\I/g_s$,
the  Ramond-Ramond (RR) scalars $c^a,\tc_a,\tc_0$, corresponding to periods of the RR
2-form, 4-form and 6-form on a basis of $H^{\rm even}(\CY,\IZ)$, and finally,
the NS-axion $\psi$, dual to the Kalb-Ramond two-form $B$ in four dimensions.

At tree-level, the QK metric on $\cM_H$ is obtained from the \kahler moduli space $\cM_{\cK}$ via the
$c$-map construction \cite{Cecotti:1988qn,Ferrara:1989ik} and thus is completely
determined by the holomorphic prepotential $F(X)$.
But this metric receives $g_s$-corrections, both perturbative and non-perturbative.
The latter can be of two types: either from Euclidean D-branes wrapping even dimensional
cycles on $\CY$, or from NS5-branes wrapped around the whole $\CY$.
In this paper we shall be interested only
in the effects of D3-D1-D(-1) instantons, and ignore the effects of NS5 and  D5-instantons,
which are subleading in the large volume limit. Since NS5-instantons only mix with D5-instantons
under S-duality, this truncation does not spoil modular invariance~\cite{Alexandrov:2012au}.

The most concise way to describe the D-instanton corrections is to
consider type IIA string theory compactified on the mirror CY threefold $\CYm$ and use
the twistor formalism for quaternionic geometries \cite{MR1327157,Alexandrov:2008nk}.
In this approach the metric is encoded in the complex contact structure on
the twistor space, a $\CP$-bundle over $\cM_H$. The D-instanton corrected contact structure
has been constructed to all orders in the instanton expansion in \cite{Alexandrov:2008gh,Alexandrov:2009zh},
and an explicit expression for the metric has been derived recently in \cite{Alexandrov:2014sya,Alexandrov:2017mgi}.
Here we will present only those elements of the construction which are relevant for the subsequent analysis, and refer
to reviews \cite{Alexandrov:2011va,Alexandrov:2013yva} for more details.

The crucial point is that, locally, the contact structure is determined by a set of holomorphic Darboux coordinates
$(\xi^\Lambda,\txi_\Lambda,\alpha)$ on the twistor space,
considered as functions of coordinates on $\cM_H$ and of the stereographic coordinate $t$ on the $\CP$ fiber,
so that the contact one-form takes the canonical form $\de\alpha+\txi_\Lambda\de\xi^\Lambda$.
Although all Darboux coordinates are important for recovering the metric, for the purposes of this paper
the coordinate $\alpha$ is irrelevant. Therefore, we consider only $\xi^\Lambda$ and $\txi_\Lambda$
which can be conveniently packaged into holomorphic Fourier modes
$\cX_\gamma = \expe{p^\Lambda \txi_\Lambda-q_\Lambda \xi^\Lambda}$
labelled by a charge vector $\gamma=(p^\Lambda,q_\Lambda)$.

At tree level, the Darboux coordinates (multiplied by $t$) are known to be simple quadratic polynomials in $t$
so that $\cX_\gamma$ take the form\footnote{The superscript `sf' stands for `semi-flat',
which refers to the flatness of the classical geometry in the directions along the torus fibers parametrized by
$\zeta^\Lambda,\tzeta_\Lambda$.}
\be
\cXsf_\gamma(t)=\expe{\frac{\tau_2}{2}\(\bZ_\gamma(\ub^a)\,t-\frac{Z_\gamma(u^a)}{t}\)
+p^\Lambda \tzeta_\Lambda- q_\Lambda \zeta^\Lambda},
\label{defXsf}
\ee
where $Z_\gamma(u^a)$ is the central charge \eqref{defZg}, now expressed in terms of the complex structure moduli $u^a$
of the CY threefold $\CYm$ mirror to $\CY$,
$\zeta^\Lambda$ and $\tzeta_\Lambda$ are periods of the RR 3-form in the type IIA formulation, and $\tau_2=g_s^{-1}$ is the inverse ten-dimensional
string coupling.
At the non-perturbative level, this expression gets modified and the Darboux coordinates are determined by the integral equation
\be
\cX_\gamma(t) = \cXsf_\gamma(t)\, \expe{
\frac{1}{8\pi^2}
\sum_{\gamma'} \sigma_{\gamma'} \, \bOm(\gamma')\, \langle \gamma ,\gamma'\rangle
\int_{\ell_{\gamma'} }\frac{\de t'}{t'}\, \frac{t+t'}{t-t'}\,
\cX_{\gamma'}(t')},
\label{eqTBA}
\ee
where the sum goes over all charges labelling cycles wrapped by D-branes,
$\bOm(\gamma')=\bOm(\gamma',z^a)$ is the corresponding rational Donaldson-Thomas
invariant,
\be
\ell_\gamma = \{ t\in \CP\ :\ \ Z_\gamma/t\in \I \IR^-\}
\label{defBPSay}
\ee
is the so called BPS ray, a contour on $\CP$ extending from $t=0$ to $t=\infty$
along the direction fixed by the central charge, and $\sigma_\gamma$
is a quadratic refinement of the DSZ product on the charge lattice $\Gamma$, i.e. a sign factor satisfying the defining relation
\be
\sigma_{\gamma_1}\sigma_{\gamma_2}=(-1)^{\langle\gamma_1,\gamma_2\rangle}\sigma_{\gamma_1+\gamma_2},
\qquad
\forall \gamma_1,\gamma_2\in\Gamma.
\label{defqf}
\ee
The system of integral equations \eqref{eqTBA} can be solved iteratively by first substituting $\cX_{\gamma'}(t')$
on the r.h.s.  with its zero-th order value $\cXsf_{\gamma'}(t')$ in the weak coupling limit
$\tau_2\to\infty$,
computing the leading correction from the integral and iterating this process.
This produces an asymptotic series at weak coupling, in powers of
the DT invariants $\bOm(\gamma)$. Using the saddle point method, it is easy to check that
the coefficient of each monomial $\prod_i \bOm(\gamma_i)$ is suppressed by a factor
$e^{-\pi\tau_2 \sum_i |Z_{\gamma_i}|}$, corresponding to an $n$-instanton effect \cite{Gaiotto:2008cd,stoppa2014}.
Note that multi-instanton effects become of the same order as one-instanton
effects on walls of marginal stability where the phases of $Z_{\gamma_i}$ become aligned,
and that the wall-crossing formula ensures that the QK metric on $\cM_H$
is smooth across the walls.
 \cite{Gaiotto:2008cd,Alexandrov:2009zh}.

\subsection{D3-instantons in the large volume limit}

The above construction of D-instantons is adapted to the type IIA formulation because the equation
\eqref{eqTBA} defines the Darboux coordinates in terms of the type IIA fields appearing explicitly in the tree level expression \eqref{defXsf}.
To pass to the mirror dual type IIB formulation, one should apply the mirror map, a coordinate transformation from
the type IIA to the type IIB physical fields. This transformation was determined in the classical
limit in  \cite{Bohm:1999uk}, but it also receives instanton corrections. In order to fix the
form of these corrections, we require that the metric on $\cM_H$ carries an isometric action of S-duality group $SL(2,\IZ)$ of type IIB
string theory, which acts on the type IIB fields by an element $\trans={\scriptsize \begin{pmatrix} a & b \\ c & d \end{pmatrix}}$
in the following way
\be\label{SL2Z}
\begin{split}
&\tau \mapsto \frac{a \tau +b}{c \tau + d} \, ,
\qquad
t^a \mapsto |c\tau+d| \,t^a,
\qquad
\begin{pmatrix} c^a \\ b^a \end{pmatrix} \mapsto
\begin{pmatrix} a & b \\ c & d  \end{pmatrix}
\begin{pmatrix} c^a \\ b^a \end{pmatrix} ,
\\
&\qquad
\tc_a\mapsto \tc_a - c_{2,a} \varepsilon(g) ,
\qquad
\begin{pmatrix} \tc_0 \\ \psi \end{pmatrix} \mapsto
\begin{pmatrix} d & -c \\ -b & a  \end{pmatrix}
\begin{pmatrix} \tc_0 \\ \psi \end{pmatrix},
\end{split}
\ee
where $\varepsilon(\trans)$ is the logarithm of the multiplier system of the Dedekind eta function \cite{Alexandrov:2010ca}.

For this purpose, one uses the fact that any isometric action on a quaternion-K\"ahler manifold
(preserving the quaternionic structure) can be lifted to a holomorphic contact transformation on twistor space.
In the present case, $SL(2,\IZ)$ acts on the fiber coordinate $t$ by a fractional-linear
transformation with $\tau$-dependent coefficients.
This transformation takes a much simpler form when formulated in terms of another coordinate $z$ on $\CP$
(not to be confused with the \kahler moduli $z^a$), which is related to $t$
by a Cayley transformation,
\be
\label{Cayley}
z =\frac{t+\I}{t-\I}\, .
\ee
Then the action of $SL(2,\IZ)$ on the fiber is given by a simple phase rotation\footnote{Actually, this is true only
when five-brane instanton corrections are ignored. Otherwise, the lift
also gets a non-trivial deformation \cite{Alexandrov:2013mha}.}
\be
\label{ztrans}
z\mapsto \frac{c\bar\tau+d}{|c\tau+d|}\, z\,  .
\ee

Using the holomorphy constraint for the $SL(2,\IZ)$ action on the twistor space,
quantum corrections to the classical mirror map were computed in
\cite{Alexandrov:2009qq,Alexandrov:2012bu,Alexandrov:2012au,Alexandrov:2017qhn},
in the large volume limit where the K\"ahler moduli
are taken to be large, $t^a\to \infty$. In this limit, one finds
\be
\begin{split}
u^a=&\,b^a+\I t^a-\frac{\I}{2\tau_2}{\sum_{\gamma\in\Gamma_+}}p^a\[
\int_{\ell_{\gammap}}\de z\,(1-z)\,H_{\gammap}
+\int_{\ell_{\gammam}}\frac{\de z}{z^3} (1-z)\,H_{\gammam}\]
\\
\zeta^0=&\,\tau_1,
\qquad
\zeta^a=-(c^a-\tau_1 b^a) -3\sum_{\gamma\in\Gamma_+} p^a\Re\int_{\ell_{\gammap}} \de z\, z\,H_{\gammap},
\\
\tzeta_a =&\,\tc_a+ \frac{1}{2}\, \kappa_{abc} \,b^b (c^c - \tau_1 b^c)+
\kappa_{abc}t^b \sum_{\gamma\in\Gamma_+} p^c\Im\int_{\ell_{\gammap}} \de z\,H_{\gammap},
\end{split}
\label{inst-mp}
\ee
where we introduced the convenient notation\footnote{The functions $H_\gamma$
have a simple geometric meaning \cite{Alexandrov:2008gh,Alexandrov:2014mfa}: they
generate contact transformations (i.e. preserving the contact structure) relating the Darboux coordinates
living on patches separated by BPS rays. In fact, these functions together with the contours $\ell_\gamma$ are
the fundamental data fixing the contact structure on the twistor space.}
\be
H_\gamma(t)= \frac{\bOm(\gamma)}{(2\pi)^2}\,\sigma_\gamma\cX_\gamma(t)\, .
\label{prepHnew}
\ee
Similar results are known for $\tzeta_0$ and the NS-axion dual to the $B$-field, but will not be needed in this paper.

Note that the integral contributions to the mirror map are written in terms of the coordinate $z$ \eqref{Cayley}.
The reason for using this variable is that,
in the large volume limit, the integrals along BPS rays $\ell_\gamma$ in \eqref{eqTBA}
are dominated by the saddle point \cite{Alexandrov:2012au}
\be
z'_\gamma\approx
-\I\,\frac{(q_a+(pb)_a)\,t^a}{(pt^2)}\, ,
\label{saddle}
\ee
for $(pt^2)>0$, and $z'_{-\gamma}=1/z'_\gamma$ in the opposite case. This shows that all integrands can be expanded
in Fourier series either around $z=0$ or $z=\infty$, keeping constant $t^a z$ or $t^a/z$, respectively.
This allows to extract the leading order in the large volume limit in a simple way.

Let us therefore evaluate the combined limit $t^a\to\infty$, $z\to 0$ of the
system of integral equations \eqref{eqTBA}, assuming that only D3-D1-D(-1) instantons contribute.
As a first step, we rewrite the tree level expression \eqref{defXsf} in terms of the type IIB fields.
To this end, we restrict the charge $\gamma$ to lie in the cone $\Gamma_+$,
take the central charge as in \eqref{defZg} with the cubic\footnote{The other contributions to the prepotential, representing
perturbative  $\alpha'$-corrections and worldsheet instantons,
combine with D(-1) and D1-instantons, but are irrelevant for our discussion of D3-instantons
in the large volume limit.}
prepotential \eqref{Fcl}, and substitute the mirror map \eqref{inst-mp}.
Furthermore, we change the coordinate $t$ to $z$ and take the combined limit.
In this way one finds
\be
\cXsf_\gamma(z)=\cXcl_\gamma(z)
\exp\[2\pi\sum_{\gamma'\in\Gamma_+} (tpp')\int_{\ell_{\gamma'}} \de z'\,H_{\gamma'}\],
\label{defXsfIIB}
\ee
where
\be
\cXcl_\gamma(z) =\expe{ - \hat q_0\tau}\cXt_{p,q}(z)
\label{clactinst}
\ee
is the classical part of the Darboux coordinates which we represented as a product of two factors:
exponential of the invariant charge \eqref{defqhat} and the remaining $q_0$-independent exponential
\be
\cXt_{p,q}(z) = e^{-S^{\rm cl}_p}\,
\expe{- \frac{\tau}{2}\,(q+b)^2+c^a (q_a +\haf (pb)_a)+\I \tau_2 (pt^2)(z^2-2zz_\gamma)},
\label{Xtheta}
\ee
with $S^{\rm cl}_p$ being the leading part of the Euclidean D3-brane action in the large volume limit
given by $S^{\rm cl}_p= \pi\tau_2(pt^2) - 2\pi \I  p^a \tc_a$.
Next, we can approximate
\be
\frac{\de t'}{t'}\, \frac{t+t'}{t-t'}=\frac{2\de z'}{1-z'^2}\, \frac{1-zz'}{z-z'}\approx
\left\{\begin{array}{cc}
\frac{2\de z'}{z-z'}\, , \qquad & \gamma'\in \Gamma_+,
\\
\frac{2(1-zz')\de z'}{z'^3}\, , \qquad & \gamma'\in -\Gamma_+.
\end{array} \right.
\ee
This shows that the contribution of $\gamma'\in -\Gamma_+$ is suppressed comparing to $\gamma'\in \Gamma_+$ and therefore can be neglected.
As a result, the system of integral equations \eqref{eqTBA} in the large volume limit
where only D3-D1-D(-1) instantons contribute reduces to the following system
of integral equations for $H_\gamma$,
\be
H_\gamma(z)=\Hcl_\gamma(z) \, \exp\[\sum_{\gamma'\in\Gamma_+}\int_{\ell_{\gamma'}}\de z'\, K_{\gamma\gamma'}(z,z')\,H_{\gamma'}(z')\].
\label{expcX}
\ee
Here $\Hcl_\gamma$ is the classical limit of $H_\gamma$, i.e. the function \eqref{prepHnew} with $\cX_\gamma$ replaced by $\cXcl_\gamma$,
the integration kernel is now
\be
K_{\gamma_1\gamma_2}(z_1,z_2)
=2\pi\((tp_1p_2)+\frac{\I\langle\gamma_1,\gamma_2\rangle}{z_1-z_2}\),
\label{defkerK}
\ee
and the BPS ray $\ell_{\gamma}$ effectively extends from $z'=-\infty$ to $z'=+\infty$, going through
the saddle point \eqref{saddle} \cite{Alexandrov:2012au}.

Below we shall need a perturbative solution of the integral equation \eqref{expcX}.
Applying the iterative procedure outlined below \eqref{defqf}, or equivalently using the Lagrange inversion theorem,
such solution can be written as a sum over
{\it rooted trees}  \cite[\S C]{Gaiotto:2008cd},
\be
\label{Hexpand}
H_{\gamma_1}(z_1) = \Hcl_{\gamma_1}(z_1) \, \sum_{n=1}^{\infty} \left(
\prod_{i=2}^n  \sum_{\gamma_i\in\Gamma_+}\,
\int_{\ell_{\gamma_i}} \de z_i\,  \Hcl_{\gamma_i}(z_i) \right) \sum_{\cT\in \IT_n^{\rm r}}\,
\frac{\cA(\cT)}{|{\rm Aut}(\cT)|}\, ,
\ee
where $\IT_n^{\rm r}$ is the set of rooted trees with $n$ vertices and
\be
\cA(\cT) = \prod_{e\in E_{\cT}}
 K_{\gamma_{s(e)},\gamma_{t(e)}}(z_{s(e)},z_{t(e)}).
 \label{amplit}
\ee
A rooted tree\footnote{We will use calligraphic letters $\cT$
for trees where charges $\gamma_i$ are assigned to vertices to distinguish them from
rooted trees $T$ where the charges are assigned to leaves (hence $T$ has always more than $n$ vertices).
Similarly, we will use notations $\ver$ and $v$ for vertices of these two types of trees, respectively.
Note also that whereas $V_\cT$ denotes the set of all vertices, $V_T$ does not includes the leaves.
An example of trees of the latter type are attractor flow trees.\label{foot-trees}}
$\cT$ consists of $n$ vertices joined by directed edges so that the root vertex
has only outgoing edges, whereas all other vertices have one incoming edge and an arbitrary number of outgoing ones.
We label the vertices of $\cT$ by $\ver=1,\dots, n$ in an arbitrary fashion, except
for the root which is labelled by $\ver=1$. The symmetry factor $|{\rm Aut}(\cT)|$ is the order
of the symmetry group which permutes the labels $2,\dots, n$ without changing the topology of the
tree. Each vertex is decorated by a charge vector $\gamma_\ver$ and a complex variable $z_\ver\in\CP$.
We denote  the set of edges by  $E_{\cT}$, the set of vertices by $V_{\cT}$, and
the source and target vertex of an edge $e$ by $s(e)$ and $t(e)$, respectively.
Unpacking these notations, we get, at the few leading orders,
\bea
\label{Hexpandshort}
H_{\gamma_1} &= &  \Hcl_{\gamma_1} + \sum_{\gamma_2} K_{12} \Hcl_{\gamma_1} \,
\Hcl_{\gamma_2}
+  \sum_{\gamma_2,\gamma_3} \left( \tfrac12\, K_{12} K_{13} + K_{12} K_{23} \right)
 \Hcl_{\gamma_1}\,  \Hcl_{\gamma_2}\,  \Hcl_{\gamma_3}
 \\
& + &  \sum_{\gamma_2,\gamma_3,\gamma_4} \!\!\!\left( \tfrac16\, K_{12} K_{13} K_{14}
+ \tfrac12\, K_{12} K_{23} K_{24} + K_{12} K_{13} K_{24} + K_{12} K_{23} K_{34} \right)
 \Hcl_{\gamma_1} \, \Hcl_{\gamma_2}\,  \Hcl_{\gamma_3}\,  \Hcl_{\gamma_4}+ \dots
\nn
\eea
where we omitted the integrals and denoted $K_{ij}=K_{\gamma_i,\gamma_j}(z_i,z_j)$.
The expansion \eqref{Hexpand} is effectively a multi-instanton expansion in powers of
the DT invariants $\bOm(\gamma_i)$, which is asymptotic to the exact solution
to \eqref{expcX} in the  weak coupling limit $\tau_2\to\infty$.

\subsection{From the contact potential to the instanton generating function}

Recall that our goal is to derive constraints imposed by S-duality on the DT invariants $\bOm(\gamma)$
appearing as coefficients in the multi-instanton expansion. To achieve this goal,
rather than  studying the full metric on $\cM_H$, it suffices to consider a suitable function
on this moduli space which has a non-trivial dependence on $\bOm(\gamma)$
and specified transformations under S-duality.
There is a natural candidate with the above properties:
the so-called contact potential $e^{\phi}$, a real function which is well-defined on any
quaternion-K\"ahler manifold with a continuous isometry \cite{Alexandrov:2008nk}.
Furthermore, there is a general expression for the contact potential in terms of
Penrose-type integrals on the $\CP$ fiber.
In the present case, the required isometry is the shift of the NS-axion, which survives all
quantum corrections as long as NS5-instantons are switched off. The contact
potential is  then given  by the exact formula \cite{Alexandrov:2008gh}
\be
e^{\phi} = \frac{\I\tau_2^2}{16}\(\ub^\Lambda F_\Lambda- u^\Lambda \bF_\Lambda\)-\frac{\chi_{\CYm}}{192\pi}
+\frac{\I\tau_2}{16}\,\sum_\gamma
\int_{\ell_\gamma} \frac{\text{d}t}{t} \( t^{-1} Z_\gamma(u^a)-t\bZ_\gamma(\ub^a)\) H_\gamma,
\label{phiinstmany}
\ee
where $\chi_{\CYm}$ is the Euler characteristic of $\CYm$. This formula indeed captures
contribution from D-instantons due to the last term proportional to $H_\gamma$.

On the other hand, in the classical, large volume limit one finds $e^\phi=\frac{\tau_2^2}{12}(t^3)$, which shows that the contact potential
can be identified with the four-dimensional dilaton and in this approximation behaves as a modular form of weight $(-\tfrac12, -\tfrac12)$
under S-duality transformations \eqref{SL2Z}.
In fact, one can show \cite{Alexandrov:2008gh} that $SL(2,\IZ)$ preserves the contact structure, i.e. it is an isometry of $\cM_H$,
only if the {\it full non-perturbative} contact potential transforms in this way,
\be
\label{SL2phi}
e^\phi \mapsto \frac{e^\phi}{|c\tau+d|}\, .
\ee
Furthermore, since S-duality acts by rescaling the K\"ahler moduli $t^a$ and by a phase rotation of the fiber coordinate $z$ (see \eqref{ztrans}),
it preserves each order in the expansion around the large volume limit.
This implies that the large volume limit of the D3-instanton contribution to $e^\phi$, which we denote by $(e^\phi )_{\rm D3}$,
must itself transform as \eqref{SL2phi}. It is this condition that we shall exploit
to derive modularity  constraints on the DT invariants.

To make this condition more explicit, let us extract the D3-instanton contribution to the function \eqref{phiinstmany}.
The procedure is the same as the one used to get \eqref{expcX}, and we relegate the details of the calculation to appendix \ref{ap-contact}.
The result can be written in a concise way using the complex function defined by
\bea
\cG&=& \sum_{\gamma\in\Gamma_+}\int_{\ell_{\gammap}} \de z\, H_{\gamma}(z)-\hf\,\sum_{\gamma_1,\gamma_2\in\Gamma_+}
\int_{\ell_{\gamma_1}}\de z_1\,\int_{\ell_{\gamma_2}} \de z_2
\, K_{\gamma_1\gamma_2}(z_1,z_2)\,H_{\gammap_1}(z_1)H_{\gammap_2}(z_2)
\label{defcF2}
\eea
and the Maass raising operator
\be
\cD_{\wh} = \frac{1}{2\pi \I}\(\partial_\tau+\frac{\wh}{2\I\tau_2}+ \frac{\I t^a}{4\tau_2}\, \partial_{t^a}\),
\label{modcovD}
\ee
which maps modular functions of weight $(\wh,\bwh)$ to modular functions of weight $(\wh+2,\bwh)$.
Then one has (generalizing \cite[(4.5)]{Alexandrov:2016tnf} to all orders in the instanton expansion)
\be
(e^\phi )_{\rm D3}
=\frac{\tau_2}{2}\Re\left(\cD_{-\frac{3}{2}}\cG \right)
+\frac{1}{32\pi^2}\,\kappa_{abc}t^c\p_{\tc_a}\cG\p_{\tc_b}\overline{\cG}.
\label{Phitwo-tcF}
\ee
It is immediate to see that $(e^\phi )_{\rm D3}$ transforms under S-duality as \eqref{SL2phi} provided
the function $\cG$ transforms as a modular form of weight $(-\tfrac32,\tfrac12)$. In order to derive
the implications of this fact, we need to  express $\cG$ in terms of the generalized DT invariants.

For this purpose, we substitute the multi-instanton expansion \eqref{Hexpand} into \eqref{defcF2}. We claim that the result takes the simple form
\be
\cG=\sum_{n=1}^\infty\[\prod_{i=1}^{n} \sum_{\gamma_i\in \Gamma_+}\int_{\ell_{\gamma_i}}\de z_i\, \Hcl_{\gamma_i}(z_i) \]
\cG_n(\{\gamma_i,z_i\}),
\label{treeF}
\ee
where $\cG_n(\{\gamma_i,z_i\})$ is now a sum over {\it unrooted} trees with $n$ vertices,
\be
\cG_n(\{\gamma_i,z_i\})=\sum_{\cT\in\, \IT_n} \frac{\cA(\cT)}{|{\rm Aut}(\cT)|}=\frac{1}{n!}\sum_{\cT\in\, \IT_n^\ell} \cA(\cT),
\label{expl-tcA}
\ee
and in the second equality we rewrote the result as a sum over {\it unrooted labelled}
trees.\footnote{The number of such trees is $|\IT_n^\ell|=n^{n-2}=\{1,1,3,16,125,1296,\dots\}$ for $n\ge 1$.
Such trees also appear in the Joyce-Song wall-crossing
formula \cite{Joyce:2008pc,Joyce:2009xv} and are conveniently labelled by their Pr\"ufer code.}

To see why this is the case, observe that under the action of the Euler operator $\hD=\Hcl\p_{\Hcl}$
rescaling all functions $\Hcl_\gamma$, the function $\cG$  maps to the first term
in \eqref{defcF2}, which we denote by $\cF$. Namely,
\be
\hD\cdot \cG=\cF,
\qquad
\cF\equiv\sum_{\gamma\in\Gamma_+}\int_{\ell_{\gammap}} \de z\, H_{\gamma},
\label{actS}
\ee
as can be verified with the help of the integral equation \eqref{expcX}. The multi-instanton
expansion of $\cF$ follows immediately from \eqref{Hexpand},
\be
\label{Fexpand}
\cF = \sum_{n=1}^{\infty} \left(
\prod_{i=1}^n  \sum_{\gamma_i\in\Gamma_+}\,
\int_{\ell_{\gamma_i}} \de z_i\,  \Hcl_{\gamma_i}(z_i) \right) \sum_{\cT\in \IT_n^{\rm r}}\,
\frac{\cA(\cT)}{|{\rm Aut}(\cT)|}\, .
\ee
Integrating back the action of the derivative operator $\hD$, we see that the sum over rooted trees in \eqref{Fexpand} turns
into the sum over unrooted trees in \eqref{expl-tcA}.
At the first few  orders we get, using
the same shorthand notation as in \eqref{Hexpandshort},
\be
\begin{split}
\cG= &\sum_{\gamma} \Hcl_\gamma
+ \frac12 \sum_{\gamma_1,\gamma_2}K_{12}\Hcl_{\gamma_1} \,\Hcl_{\gamma_2}
+ \frac12 \sum_{\gamma_1,\gamma_2,\gamma_3}
K_{12}K_{23}\Hcl_{\gamma_1}\, \Hcl_{\gamma_2}\, \Hcl_{\gamma_3}
\\
&+\sum_{\gamma_1,\gamma_2,\gamma_3,\gamma_4}
\(\frac16 \, K_{12} K_{13} K_{14}  + \frac12\,K_{12} K_{23} K_{34} \)
\Hcl_{\gamma_1}\, \Hcl_{\gamma_2}\, \Hcl_{\gamma_3}\, \Hcl_{\gamma_4}
+\dots
\end{split}
\ee

The  simplicity of the expansion  \eqref{expl-tcA}, and the relation \eqref{Phitwo-tcF}
to the contact potential, show that the function $\cG$ is very natural and,
in some sense, more fundamental\footnote{In \cite{Alexandrov:2016tnf}, it was noticed that
the function $\cG$, denoted by $\tilde\cF$ in that reference and computed at second order
in the multi-instanton expansion, could be obtained from the seemingly simpler function $\cF$
by {\it halving} the coefficient of its second order contribution.
Now we see that this {\it ad hoc} prescription is the consequence
of going from rooted to unrooted trees,
as a result of adding the second term in \eqref{defcF2}.}
than the naive instanton sum  $\cF$. We shall henceforth refer to $\cG$ as the
`instanton generating function'.
In the following we shall postulate that $\cG$ transforms as a modular form of weight $(-\frac32,\frac12)$,
and analyze the consequences of this requirement for the DT invariants.

\section{Theta series decomposition and modularity}
\label{sec-modul}

In this section, we use the spectral flow symmetry to decompose the
instanton generating function $\cG$ into a sum of indefinite
theta series multiplied by holomorphic generating functions of MSW invariants. We then
study the modular properties of these  indefinite
theta series, and identify the origin of the modular anomaly.

\subsection{Factorisation}
\label{subsec-naive}

To derive modularity constraints on the DT invariants, we need to perform a theta series decomposition of
the generating function $\cG$ defined in \eqref{defcF2}.
To this end, let us make use of the fact noticed in \eqref{gamZZ} that the DSZ products
$\langle\gamma,\gamma'\rangle$ and hence the kernels \eqref{defkerK} do not depend on the $q_0$ charge.
Choosing the quadratic refinement $\sigma_\gamma$ as in \eqref{qf}, which is also $q_0$-independent,
and using the factorization \eqref{clactinst} of $\cXcl_\gamma$, one can rewrite the expansion \eqref{treeF}
as follows
\be
\cG=\sum_{n=1}^\infty\frac{1}{(2\pi)^{2n}}\[\prod_{i=1}^{n}
\sum_{p_i,q_i}\sigma_{p_i,q_i}h^{\rm DT}_{p_i,q_i}\int_{\ell_{\gamma_i}}\de z_i\, \cXt_{p_i,q_i}(z_i) \]
\cG_n(\{\gamma_i,z_i\}),
\label{treeFh}
\ee
where the sum over the invariant charges $\hat q_{i,0}$ gave rise to the generating functions of DT invariants defined in \eqref{defchimu}.
This is not yet the desired form because these generating functions depend non-trivially on the remaining electric charges $q_{i,a}$.
If it were not for this dependence, the sum over $q_{i,a}$ would produce certain non-Gaussian theta series, and at each order
we would have a product of this theta series and $n$ generating functions. Then the modular properties of the theta series
would dictate the modular properties of the generating function.

Such a theta series decomposition can be achieved by expressing the DT invariants in terms of
the MSW invariants, for which the dependence on electric charges $q_{i,a}$ reduces to the dependence on the residue classes $\mu_{i,a}$
due to the spectral flow symmetry.
Substituting the expansion \eqref{multih} of $h^{\rm DT}_{p,q}$ in terms of $h_{p,\mu}$,
the expansion \eqref{treeFh} of the function $\cG$ can be brought to the following factorized form
\be
\cG=\sum_{n=1}^\infty\frac{2^{-\frac{n}{2}}}{\pi\sqrt{2\tau_2}}\[\prod_{i=1}^{n}
\sum_{p_i,\mu_i}\sigma_{p_i}h_{p_i,\mu_i}\]
e^{-S^{\rm cl}_p}\vartheta_{\bfp,\bfmu}\bigl(\Phi^{{\rm tot}}_{n},n-2\bigr),
\label{treeFh-fl}
\ee
where $\vartheta_{\bfp,\bfmu}$ is a theta series \eqref{Vignerasth} with parameter $\lambda=n-2$,
whose kernel has the following structure
\be
\Phi^{{\rm tot}}_n(\bfx)=\Sym\left\{\sum_{n_1+\cdots n_m=n \atop n_k\ge 1}  \intPhi_m(\bfx')
\prod_{k=1}^m \gPhi_{n_k}(x_{j_{k-1}+1},\dots,x_{j_k})\right\}.
\label{totker}
\ee
Here the sum runs over ordered partitions of $n$, whose number of parts is denoted by $m$,
and we adopted notations from \eqref{groupindex} for indices $j_k$.
The argument $\bfx$ of the kernel encodes the electric components of the charges
(shifted by the B-field and rescaled by $\sqrt{2\tau_2}$) and lives in
a vector space $\(\oplus_{i=1}^n \Lambda_i\) \otimes \IR$ of dimension $d=nb_2$,
given by $b_2$ copies of the lattice $\Lambda$, where the $i$-th copy
$\Lambda_i$ carries the bilinear form $\kappa_{i,ab}=\kappa_{abc}p_i^c$ of signature $(1,b_2-1)$.
Therefore, $\vartheta_{\bfp,\bfmu}$ is an indefinite theta series
associated to the bilinear form given explicitly in \eqref{biform}, which has
signature $(n,n(b_2-1))$.

Finally, the kernel \eqref{totker} is constructed from two other functions.
The first, $\intPhi_n$, is the iterated integral of the  coefficient $\cG_n$ in
the expansion \eqref{treeF},
\be
\intPhi_n(\bfx) =
\(\frac{\sqrt{2\tau_2}}{2\pi}\)^{n-1}
\[\prod_{i=1}^n\int_{\ell_{\gamma_i}}\frac{\de z_i}{2\pi} \, W_{p_i}(x_i,z_i)\]\cG_n(\{\gamma_i,z_i\}),
\label{kerPhi}
\ee
weighted with the Gaussian measure factor
\be
W_{p}(x,z)=e^{-2\pi\tau_2 z^2(pt^2)-2\pi\I\sqrt{2\tau_2}\, z\, (pxt)}
\ee
coming from the $z$-dependent part of \eqref{Xtheta}.
Although this function is written in terms of $\cG_n$ depending on full electromagnetic charge
vectors $\gamma_i$,
it is actually independent of their $q_0$ components.
Indeed, using the result \eqref{expl-tcA}, it can be rewritten as
\be
\intPhi_n(\bfx) = \frac{1}{n!}
\[\prod_{i=1}^n\int_{\ell_{\gamma_i}}\frac{\de z_i}{2\pi} \, W_{p_i}(x_i,z_i)\]\sum_{\cT\in\, \IT_n^\ell}\prod_{e\in E_\cT} \hK_{s(e) t(e)},
\label{kerPhin}
\ee
where we introduced a rescaled version of the kernel \eqref{defkerK}
\be
\label{defkerKr}
\hK_{ij}(z_i,z_j)=\(\(\sqrt{2\tau_2}\,t+\I\,\frac{x_i-x_j}{z_i-z_j}\)p_ip_j\).
\ee
Note that $\tau_2$ and $t^a$ appear only in the modular invariant combination $\sqrt{2\tau_2}\, t^a$.
In \eqref{totker} this function appears with the argument $\bfx'$ and carries a dependence on $\bfp'$ (not indicated explicitly)
which are both $mb_2$-dimensional vectors with components (cf. \eqref{groupindex})
\be
p'^a_k=\sum_{i=j_{k-1}+1}^{j_k}p^a_i,
\qquad
x'^a_k=\kappa'^{ab}_k\sum_{i=j_{k-1}+1}^{j_k} \kappa_{i,bc} x^c_i,
\label{defprimevar}
\ee
where $\kappa'_{k,ab}=\kappa_{abc}p'^c_k$.

The second function, $\gPhi_{n}$, appears due to the expansion of DT invariants in terms
of the MSW invariants
and is given by a suitably rescaled tree index
\be
\gPhi_{n}(\bfx)=\frac{\sigma_{\gamma}(\sqrt{2\tau_2})^{n-1}}{\prod_{i=1}^n\sigma_{\gamma_i}}\,\gtr(\{\gamma_i,c_i\}).
\label{kerPhiint}
\ee
It is also written in terms of functions depending on the full electromagnetic charge vectors $\gamma_i$ (with $\gamma=\gamma_1+\cdots\gamma_n$).
However, using \eqref{defqf}, all quadratic refinements can be expressed through $(-1)^{\langle\gamma_i,\gamma_j\rangle}$
which cancel the corresponding sign factors in the tree index (see \eqref{kappaT0}).
Furthermore, as was noticed in the end of section \ref{subsec-MSW}, the tree index is independent of
the $q_0$ components of the charge vectors. Therefore, it can be written as a function of $p_i^a$, $\mu_{i,a}$
and $x_i^a=\sqrt{2\tau_2}(\kappa_i^{ab} q_{i,b}+b^a)$.
Then, since after cancelling the sign factors, $\gtr$ becomes a homogeneous function of degree $n-1$ in the D2-brane charge $q_{i,a}$,
all factors of $\sqrt{2\tau_2}$ in \eqref{kerPhiint} cancel as well.

\subsection{Modularity and Vign\'eras' equation}

As explained in appendix \ref{subsec-Vign}, the theta series $\vartheta_{\bfp,\bfmu}\bigl(\Phi,\lambda\bigr)$
is a vector-valued modular form of weight $(\lambda+d/2,0)$ provided the kernel
$\Phi$ satisfies Vign\'eras' equation \eqref{Vigdif} --- along with certain growth conditions which
we expect to be automatically satisfied for the kernels of interest in this work.
In our case $\lambda=n-2$ and the dimension of the lattice is $d=nb_2(\CY)$ so that the expected weight
of the theta series is $(2(n-1)+nb_2/2,0)$.
It is consistent with weight $(-\tfrac32,\tfrac12)$ of $\cG$ given in \eqref{treeFh-fl} only if $h_{p,\mu}$ is a vector-valued
holomorphic modular form of weight $(-b_2/2-1,0)$.
However for this to be true, the kernel $\Phi^{{\rm tot}}_n$ ought to satisfy Vign\'eras' equation.
Let us examine whether or not this is the case.

To this end, we first consider the kernel $\intPhi_n$ \eqref{kerPhin}.
In appendix \ref{ap-twistint} we evaluate explicitly the iterated integrals defining this kernel.
To present the final result, let us introduce the following $d$-dimensional vectors
$\bfv_{ij}, \bfu_{ij}$:
\be
\begin{split}
(\bfv_{ij})_k^a=&\, \delta_{ki} p_j^a-\delta_{kj} p_i^a \qquad\qquad\quad\
\mbox{such that} \quad \bfv_{ij}\cdot\bfx=(p_ip_j(x_i-x_j)),
\\
(\bfu_{ij})_k^a=&\, \delta_{ki}(p_jt^2)t^a-\delta_{kj} (p_it^2)t^a \quad
\mbox{such that} \quad
\bfu_{ij}\cdot\bfx=(p_jt^2)(p_ix_it)-(p_it^2)(p_jx_jt),
\end{split}
\label{defvij}
\ee
where  $k$ labels the copy in $\oplus_{k=1}^n \Lambda_k$, $a=1,\dots, b_2$, and the bilinear form is given in \eqref{biform}.
The first scalar product $\bfv_{ij}\cdot\bfx$ corresponds to the DSZ product $\langle\gamma_i,\gamma_j\rangle$,
whereas the second product $\bfu_{ij}\cdot\bfx$ corresponds
 to $-2\Im[Z_{\gamma_i}\bZ_{\gamma_j}]$ \eqref{gamZZ}, both rescaled by $\sqrt{2\tau_2}$
and expressed in terms of $x_i^a$. From these vectors we can construct two sets of vectors
which are assigned to
the edges of an unrooted labelled tree $\cT$, such as the trees appearing in \eqref{expl-tcA} and \eqref{kerPhin}. Namely,
\be
\bfv_e=\sum_{i\in V_{\cT_e^s}}\sum_{j\in V_{\cT_e^t}}\bfv_{ij},
\qquad
\bfu_e=\sum_{i\in V_{\cT_e^s}}\sum_{j\in V_{\cT_e^t}}\bfu_{ij},
\label{defue}
\ee
where $\cT_e^s$, $\cT_e^t$ are the two disconnected trees obtained from
the tree $\cT$ by removing the edge $e$.
Then the kernel $\intPhi_n$ can be expressed as follows\footnote{Both vectors $\bfu_e$ and $\bfv_{s(e) t(e)}$
depend on the choice of orientation of the edge $e$, but this ambiguity is cancelled in the function $\tPhi^M_{n-1}$.}
\be
\intPhi_n(\bfx)=
\frac{\intPhi_1(x)}{2^{n-1} n!}\sum_{\cT\in\, \IT_n^\ell}
\tPhi^M_{n-1}(\{ \bfu_e\}, \{\bfv_{s(e) t(e)}\};\bfx).
\label{Phin-final}
\ee
Here the first factor is  simply a Gaussian
\be
\intPhi_1(x)=\frac{e^{-\frac{\pi(pxt)^2}{(pt^2)}}}{2\pi\sqrt{2\tau_2(pt^2)}}
\label{defPhi1-main}
\ee
which ensures the suppression along the direction of the total charge in the charge lattice.
In the second factor one sums over unrooted labelled trees $\cT$ with $n$ vertices,
with summand  given by a function $\tPhi^M_n$ defined as in \eqref{deftPhigen}, upon
replacing  $\Phi_n^E$ by $\Phi_n^M$ in that expression and setting $m=n$.
Both $\Phi_n^E$ and $\Phi_n^M$ are the so-called generalized (complementary) error functions
introduced in
\cite{Alexandrov:2016enp} and further studied in \cite{Nazaroglu:2016lmr},
whose definitions are recalled in \eqref{generr-M}, \eqref{generr-E} and \eqref{generrPhiME}.
The functions $\tPhi^M_{n-1}$ in \eqref{Phin-final} depend on
two sets of $n-1$ $d$-dimensional vectors:
the vectors in the first set are given by $\bfu_e$ defined above, whereas
the vectors $\bfv_{s(e) t(e)}$ in the second set
coincide with $\bfv_{ij}$ for $i$ and $j$ corresponding to
the source and target vertices of edge $e$ of the labelled tree.

The remarkable property of the generalized error functions $\Phi^M_{n-1}$ and their
uplifted versions $\tPhi^M_{n-1}$ is that,
away from certain loci where these functions are discontinuous,
they satisfy Vign\'eras' equation for $\lambda=0$ and $n-1$, respectively.
Given that $\intPhi_1$ is also a solution for $\lambda=-1$, and the vector $\bft=(t^a,\dots,t^a)$
(such that $\bft\cdot \bfx=(pxt)$) is orthogonal to all vectors $\bfu_e$ and $\bfv_{s(e) t(e)}$,
the kernel \eqref{Phin-final} satisfies this equation for $\lambda=n-2$.

However, as mentioned above, it fails to do so on the loci where it is discontinuous.
These discontinuities arise due to dependence of
the integration contours $\ell_\gamma$ on moduli and electric charges.
Of course, since the integrands are meromorphic functions, the integrals do not depend on deformations of the contours
provided they do not cross the poles. But this is exactly what happens when two BPS rays, say $\ell_\gamma$ and $\ell_{\gamma'}$,
exchange their positions, which in turn takes place when the phases of the corresponding central charges $Z_\gamma$ and $Z_{\gamma'}$ align,
as follows from \eqref{defBPSay}.
The loci where such alignment takes place are nothing else but the walls of marginal stability.
This point will play an important r\^ole in the next subsection since it makes it possible to recombine the discontinuities of
the generalized error functions with discontinuities of the tree indices.

We now turn to the action of Vign\'eras' operator on $\gPhi_{n}$.
To this end, it is convenient to use the representation of the tree index as a sum over attractor flow trees \eqref{defgtree}.
Let us assign a $nb_2$-dimensional vector $\tbfv_{v}$ to each vertex $v$ of a flow tree.
Denoting by $\cI_v$ the set of leaves which are descendants of vertex $v$, we set
\be
\tbfv_v=\sum_{i\in\cI_{\Lv{v}}}\sum_{j\in\cI_{\Rv{v}}}\bfv_{ij}.
\label{deftv}
\ee
With these definitions the kernel \eqref{kerPhiint} can be written as
\bea
\gPhi_n(\bfx) &=&
(-1)^{n-1}\sum_{T\in \IT_n^{\rm af}}\prod_{v\in V_T}
(\tbfv_v,\bfx)\,
\Delta_{\gamma_{\Lv{v}}\gamma_{\Rv{v}}}^{z_{p(v)}}.
\label{ker-fl}
\eea
The factors $\Delta_{\gamma_L\gamma_R}^z$ are locally constant and therefore,
away from the loci where they are discontinuous, the action of Vign\'eras operator reduces to its action on the scalar products $(\tbfv_v,\bfx)$.
For a single such factor one finds
\be
V_\lambda(\tbfv_v,\bfx)=(\tbfv_v,\bfx)V_{\lambda-1}+2\tbfv_v\cdot\p_\bfx.
\ee
The crucial observation is that all vectors $\tbfv_v$ appearing in the product \eqref{ker-fl} for a single tree are mutually orthogonal
$(\tbfv_v,\tbfv_{v'})=0$, which is clear because $\langle\gamma_{\Lv{v}},\gamma_{\Rv{v}}\rangle$ is antisymmetric in charges
$\gamma_{\Lv{v}},\gamma_{\Rv{v}}$, whereas the factors associated with vertices which are not descendants of $v$ either depend
on their sum or do not depend on them at all.
Therefore, one obtains
\be
V_\lambda\gPhi_{n_k}(\bfx)=\gPhi_{n_k}(\bfx)V_{\lambda-n_k+1}+
2(-1)^{n_k-1}
\sum_{T\in \IT_{n_k}^{\rm af}}\Delta(T)\sum_{v\in V_T}\[\prod_{v'\in V_T\setminus \{v\}}(\tbfv_{v'},\bfx)\]\tbfv_v\cdot\p_\bfx.
\label{actVgPhi}
\ee

Let us now evaluate the action of $V_{n-2}$ on the full kernel $\Phi^{{\rm tot}}_n$.
Applying the result \eqref{actVgPhi}, we observe that the second term vanishes
on the other factors in \eqref{totker} due to the same reason that they either do not depend
(in the case of $\gPhi_{n_{k'}}$, $k'\ne k$)
or depend (in the case of $\intPhi_{m}$) only on the sum of charges entering $\gPhi_{n_k}$.
Therefore, one finds that, {\it away from discontinuities of generalized error functions and $\Delta(T)$},
one has $V_{n-2}\cdot \Phi^{{\rm tot}}_n=0$.

\subsection{Discontinuities and the anomaly}

Let us now turn to the discontinuities of $\Phi^{{\rm tot}}_n$ which we ignored so far and
which spoil Vign\'eras' equation and hence modularity of the theta series.
There are three potential sources of such discontinuities:
\begin{enumerate}
\item
walls of marginal stability --- at these loci $\intPhi_{m}$ are discontinuous due to exchange of integration contours
and $\gPhi_{n_k}$ jump due to factors $\Delta_{\gamma_{\Lv{v}}\gamma_{\Rv{v}}}^{z_{p(v)}}$ assigned to
the root vertices of attractor flow trees;

\item
`fake walls' --- these are loci in the moduli space where $\Im\bigl[ Z_{\gamma_{\Lv{v}}}\bZ_{\gamma_{\Rv{v}}}(z^a_{p(v)})\bigr]=0$,
and hence the corresponding $\Delta$-factor jumps, where $v$ is not a root vertex ---
they correspond to walls of marginal stability for the intermediate bound states appearing in the attractor flow;

\item
moduli independent loci where $(\tbfv_v,\bfx)=0$ --- at these loci the factors $\Delta_{\gamma_{\Lv{v}}\gamma_{\Rv{v}}}^{z_{p(v)}}$
and hence $\gPhi_{n_k}$ are discontinuous due to the second term in \eqref{kappaT}.

\end{enumerate}

Remarkably, the two effects due to the non-trivial charge and moduli dependence of the DT invariants and the exchange of contours cancel each other
and the function $\cG$ turns out to be smooth at loci of the first type.
This is expected because the whole construction of D-instantons has been designed to make the resulting metric on the moduli space
smooth across these loci, which required the cancellation of these discontinuities \cite{Gaiotto:2008cd,Alexandrov:2009zh}.
Moreover, in \cite{Alexandrov:2014wca} it was proven that the contact potential is also smooth, which indicates that
the function $\cG$ must be smooth as well.
In appendix \ref{ap-smooth} we present an explicit proof of this fact based on the representation in terms of trees.

Furthermore, in \cite{Alexandrov:2018iao} it was shown that the discontinuities across `fake walls' cancel in the sum over flow trees as well.
In fact, this cancellation is explicit in the representation of the partial tree index given by the recursive formula \eqref{F-ansatz}
where the signs responsible for such `fake discontinuities' do not arise at all.
As a result, it remains to consider only the discontinuities of the third type corresponding to the moduli independent loci.

It is straightforward to check that already for $n=2$ these discontinuities are indeed
present and do spoil modularity of
the theta series. For small $n$ one can explicitly evaluate the anomaly in Vign\'eras' equation.
It is given by a series of terms proportional to $\delta(\tbfv_v,\bfx)$. Note that no derivatives of delta functions or products
of two delta functions arise despite the presence of the second derivative in Vign\'eras' operator.
This is because each $\sign(\tbfv_v,\bfx)$ from $\Delta(T)$ is multiplied by $(\tbfv_v,\bfx)$ from $\kappa(T)$ in \eqref{defgtree}
and one gets a non-vanishing result only if the second order derivative operator acts on both factors.
In particular, this implies that the anomaly is completely characterized by the action of $V_\lambda$ on $\gPhi_{n}$.

\section{Modular completion}
\label{sec-compl}

Since the theta series $\vartheta_{\bfp,\bfmu}(\Phi^{{\rm tot}}_{n})$ are not modular for $n\geq 2$,
the analysis of the previous section implies that the generating function $h_{p,\mu}$ of the MSW invariants
is not modular either whenever the divisor $\cD=p^a\gamma_a$ is the sum of
$n\geq 2$ irreducible divisors.
However, its modular anomaly has a definite structure. In particular, in \cite{Alexandrov:2016tnf}
it was shown that for $n=2$, $h_{p,\mu}$ must be a vector-valued {\it mixed mock} modular form,
i.e. it has a non-holomorphic completion $\whh_{p,\mu}$
constructed in a specific way from a set of holomorphic modular forms and their Eichler  integrals  \cite{MR2605321,Dabholkar:2012nd}.
In this section we generalize this result for arbitrary $n$,
i.e. for any degree of reducibility of the divisor.

\subsection{Completion of the generating function}
\label{subsec-complh}

Let us recall the notations $\gama=(p^a,q_a)$ and $Q_n$ from \eqref{defQlr},
and decompose the electric component $q_a$ using spectral flow as in \eqref{defmu}.
Then we define
\be
\whh_{p,\mu}(\tau)= h_{p,\mu}(\tau)
+\sum_{n=2}^\infty
\sum_{\sum_{i=1}^n \gama_i=\gama}
R_n(\{\gama_i\},\tau_2)
\, e^{\pi\I \tau Q_n(\{\gama_i\})}
\prod_{i=1}^n h_{p_i,\mu_i}(\tau).
\label{exp-whh}
\ee
We are looking for non-holomorphic functions $R_n$, exponentially suppressed for large $\tau_2$, such
that $\whh_{p,\mu}$ transforms as a modular form. The condition for this to be true
can be found along the same lines as before:
one needs to rewrite the expansion of the function $\cG$
as a series in $\whh_{p,\mu}$ and require that at each order the coefficient is given by a
modular covariant theta series. For such a theta series decomposition to be possible however,
it is important that $\whh_{p,\mu}$ be invariant under the spectral flow,
which implies that the functions $R_n$ be {\it independent} of the spectral flow parameter $\eps^a$
in the decomposition \eqref{defmu} of the total charge $\gama$.\footnote{As a result, this parameter can be fixed to zero
so that the sum over the D2-brane charges $q_{i,a}$ is restricted to those which satisfy the constraint
$\sum_{i=1}^n q_{i,a}=\mu_a+\frac12 \kappa_{abc}p^b p^c$. \label{foot-D2specflow}}
This condition will be an important consistency requirement on our construction.

Rather than inverting \eqref{exp-whh} and substituting the result into \eqref{treeFh-fl}, we can consider the generating function
of DT invariants $h^{\rm DT}_{p,\mu}$ and, as a first step, rewrite it as a series in $\whh_{p,\mu}$.
Denoting the coefficient of such expansion by $\whg_n(\{\gama_i\},z^a)$ (with $\whg_1\equiv1$), we get
\be
\begin{split}
h^{\rm DT}_{p,q}(\tau,z^a)=&\,
\sum_{\sum_{i=1}^n \gama_i=\gama}
\whg_n(\{\gama_i\};z^a, \tau_2) \,e^{\pi\I \tau Q_n(\{\gama_i\})}
\prod_{i=1}^n \whh_{p_i,\mu_i}(\tau).
\end{split}
\label{multihd-full}
\ee
Comparing with \eqref{multih}, we see that the coefficients $\whg_n$
are a direct analogue of the tree index $\gtr$.\footnote{In fact, they also depend on $z^a$ only through the
stability parameters \eqref{fiparam}, so we shall often denote them by $\whg_n(\{\gama_i,c_i\};\tau_2)$.}
The expansion of $\cG$ in terms of $\whh_{p,\mu}$ is then simply obtained by replacing $\gtr$ by $\whg_n$ in \eqref{kerPhiint},
which affects the kernel of the corresponding theta series.
Our first problem is to express these coefficients in terms of the functions $R_n$.

The result can be nicely formulated using so-called Schr\"oder trees, which are rooted ordered trees such that
all vertices (except the leaves) have at least 2 children (see Fig. \ref{fig-Wrtree}).
Their vertices are decorated by charges in such way that the leaves have charges $\gamma_i$,
whereas the charges at other vertices are defined by the inductive rule $\gamma_v=\sum_{v'\in \Ch(v)}\gamma_{v'}$
where $\Ch(v)$ is the set of children of vertex $v$.
Note that these trees should not be confused with flow trees, since
they are not restricted to be binary nor do they carry moduli at the vertices.
We denote the set of Schr\"oder trees\footnote{The number of Schr\"oder trees
with $n$ leaves is the $n-1$-th super-Catalan number,
$|\IT_n^{\rm S}|=\{ 1,1, 3,11,45,197,903,\dots \}$ for $n\geq 1$ (sequence A001003 on
the Online Encyclopedia of Integer Sequences).}
with $n$ leaves by $\IT_n^{\rm S}$.

\lfig{An example of Schr\"oder tree contributing to $W_8$. Near each vertex we showed the corresponding factor
using the shorthand notation $\gamma_{i+j}=\gamma_i+\gamma_j$.}
{WRtree}{10.cm}{fig-Wrtree}{-1.2cm}

Let us also introduce a convenient notation: for any set of functions $f_n(\{\gama_i\})$ depending on $n$ charges
and a given Schr\"oder tree $T$,
we set $f_{v}\equiv f_{k_v}(\{\gama_{v'}\})$ where $v'\in \Ch(v)$ and $k_v$ is their number.
Using these notations, the expression of $\whg_n$ in terms of $R_n$ can be encoded into
a recursive equation provided by the following proposition, whose proof we relegate to appendix \ref{ap-proofs}:

\begin{proposition}\label{prop-iterwhg}
The coefficients $\whg_n$ are determined recursively by the following equation
\be
\whg_n(\{\gama_i,c_i\})=-\hf\Sym\left\{\sum_{\ell=1}^{n-1}
\Delta_{\gamma_L^\ell\gamma_R^\ell}^z\,\kappa(\gamma_{LR}^\ell)\,
\whg_\ell(\{\gama_i,c_i^{(\ell)}\}_{i=1}^\ell)\,\whg_{n-\ell}(\{\gama_i,c_i^{(\ell)}\}_{i=\ell+1}^n)\right\}
+W_n(\{\gama_i\}),
\label{recurs-whd-full}
\ee
where $\gamma_L^\ell=\sum_{i=1}^\ell\gamma_i$, $\gamma_R^\ell=\sum_{i=\ell+1}^n\gamma_i$,
$\gamma_{LR}^\ell=\langle\gamma_L^\ell,\gamma_R^\ell\rangle$, $c_i^{(\ell)}$
are the stability parameters at the point where the attractor flow crosses
the wall for the decay $\gamma\to\gamma_L^\ell+\gamma_R^\ell$ (cf. \eqref{inductFn}),
and $W_n$ are functions given by the sum over Schr\"oder trees
\be
W_n(\{\gama_i\})= \Sym\left\{\sum_{T\in\IT_n^{\rm S}}(-1)^{n_T}\prod_{v\in V_T} R_{v}\right\},
\label{defSm}
\ee
with $n_T$ being the number of vertices of the tree $T$ (excluding the leaves).
The same functions $W_n$ also provide the inverse formula to \eqref{exp-whh}, namely
\be
h_{p,\mu}(\tau)= \whh_{p,\mu}(\tau)
+\sum_{n=2}^\infty
\sum_{\sum_{i=1}^n \gama_i=\gama}
W_n(\{\gama_i\})
\, e^{\pi\I \tau Q_n(\{\gama_i\})}
\prod_{i=1}^n \whh_{p_i,\mu_i}(\tau).
\label{exp-hwh}
\ee
\end{proposition}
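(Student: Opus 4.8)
The plan is to exploit the fact that the three expansions \eqref{multih}, \eqref{multihd-full} and \eqref{exp-whh} are all tree-substitution relations of the same type, so that the whole statement reduces to a problem of compositional inversion decorated by the phases $e^{\pi\I\tau Q_n}$. The one structural fact that makes the phases harmless is a cocycle identity for the quadratic form $Q_n$ of \eqref{defQlr}: if $\gama=\sum_k\gama'_k$ and each $\gama'_k=\sum_{i\in I_k}\gama_i$, then $Q(\{\gama'_k\})+\sum_k Q(\{\gama_i\}_{i\in I_k})=Q(\{\gama_i\})$, since $Q_n=\kappa^{ab}q_aq_b-\sum_i\kappa_i^{ab}q_{i,a}q_{i,b}$ is a difference of quadratic forms whose intermediate terms telescope. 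Hence in every nested substitution the phases multiply to the single phase attached to the finest charges, and the combinatorics becomes that of undecorated substitutions.

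I would then establish the inversion formula \eqref{exp-hwh}, which is logically prior. Writing \eqref{exp-whh} schematically as $\whh=R[h]$ and the sought formula as $h=W[\whh]$, the content is that $W$ is the compositional inverse of $R$ in the setting where a charge may split into any number of parts. Imposing $R[W[\whh]]=\whh$ and isolating the arity-one term yields, for $N\ge2$, the recursion $W_N=-\Sym\{\sum_{m\ge2}\sum_{n_1+\cdots+n_m=N}R_m\prod_{k}W_{n_k}\}$ with $W_1=1$. I would verify that \eqref{defSm} solves this by induction on $N$: removing the root of a Schr\"oder tree $T\in\IT_N^{\rm S}$, which has $m\ge2$ children, splits $T$ into $m$ Schr\"oder subtrees; the root supplies the factor $R_m$ evaluated on the children charges (the block sums) together with a single sign $(-1)$ for the extra internal vertex, while the subtrees reproduce $\prod_k W_{n_k}$ by the inductive hypothesis. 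This matches the recursion above, so \eqref{defSm} is the inverse and \eqref{exp-hwh} holds.

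With $W_n$ in hand I would obtain \eqref{recurs-whd-full} by composition. Since \eqref{multih} expresses $h^{\rm DT}_{p,q}$ as the substitution $\sum_n\sum_{\sum\gama_i=\gama}\gtri{n}\,e^{\pi\I\tau Q_n}\prod_i h_{p_i,\mu_i}$, while \eqref{exp-hwh} gives $h=W[\whh]$ and \eqref{multihd-full} gives $h^{\rm DT}=\whg[\whh]$, uniqueness of the $\whh$-expansion forces the coefficients to compose as $\whg_n(\{\gama_i\})=\sum_{\text{ordered blocks}}\gtri{m}(\{\text{block sums}\})\prod_{k=1}^m W_{n_k}(\{\gama\in\text{block }k\})$. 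The tree index itself obeys the binary recursion generated by \eqref{itereq} (parallel to \eqref{inductFn}), namely $\gtri{m}=-\hf\Sym\{\sum_{\ell'}\Delta_{\gamma_L\gamma_R}^z\,\kappa(\gamma_{LR})\,\gtri{\ell'}\,\gtri{m-\ell'}\}$ for $m\ge2$ with $\gtri{1}=1$. Separating the $m=1$ block of the composition (which is simply $W_n$) and inserting this binary recursion into the $m\ge2$ terms, the top split of the $m$ blocks into two groups, together with the surrounding $W$-factors, reassembles into $\whg_\ell$ and $\whg_{n-\ell}$ with $\gamma_L^\ell,\gamma_R^\ell$ the sums of the leaf charges on the two sides; this reproduces \eqref{recurs-whd-full} exactly, with $W_n$ as the source term.

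The main obstacle I anticipate is not the tree bookkeeping but the matching of the stability parameters. Because $R_n$, and hence $W_n$, are moduli-independent whereas $\gtri{m}$ carries the shifted arguments dictated by the attractor flow \eqref{flowc}, I must check that when the intermediate block charges are resolved back into the individual $\gama_i$, the block-level shift produced by the composition agrees with the individual-level shift $c_i^{(\ell)}$ appearing in \eqref{recurs-whd-full}. This follows from the linearity of \eqref{flowc} in $c_i$ together with the invariance of $\Gamma_{n\ell}$ and $\cs_\ell$ under regrouping, which gives $\sum_{i\in B}c_i^{(\ell)}=(c'_B)^{(\ell')}$ for each block $B$; it is the step demanding the most care. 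A secondary point is to confirm that $\Sym$ commutes with the nesting at every stage, so that the passage between ordered and unordered decompositions carries the correct $1/n!$ weights throughout; this is routine but must be handled consistently.
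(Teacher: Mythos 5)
Your proof is correct, and its core coincides with the paper's own argument in appendix \ref{ap-proofs}: both rest on the composition formula \eqref{relwhgg}, expressing $\whg_n$ as $\gtri{m}$ evaluated on block charges times $\prod_k W_{n_k}$, together with the fact that the Schr\"oder-tree sums \eqref{defSm} invert the substitution \eqref{exp-whh}. Where you genuinely differ is in the treatment of \eqref{exp-hwh} and in the logical ordering. You prove \eqref{exp-hwh} first and purely combinatorially, by imposing $R[W[\whh]]=\whh$ and verifying the resulting recursion $W_N=-\Sym\bigl\{\sum_{m\ge 2}\sum_{n_1+\cdots+n_m=N}R_m\prod_k W_{n_k}\bigr\}$ through root-removal induction on Schr\"oder trees; the paper instead proves the mirror identity $W[R[h]]=h$ (its eq. \eqref{hRh-full}, via the cancellation obtained by grafting $n_i$ children onto the $i$-th leaf of a tree) in order to verify the composition formula, and only obtains \eqref{exp-hwh} at the very end, by evaluating \eqref{multihd-full} at the large volume attractor point, where $h^{\rm DT}_{p,q}=h_{p,\mu}$, all $\Delta$-factors vanish, and $\whg_n$ collapses to $W_n$. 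Since for substitutions with unit linear term left and right inverses coincide, both routes are sound; yours is self-contained and does not invoke the attractor-point identification of DT and MSW invariants, while the paper's derivation of \eqref{exp-hwh} is shorter once the recursion is in hand. A second, smaller difference: for the recursion \eqref{recurs-whd-full} itself, the paper argues ``recursion $\Rightarrow$ composition formula'' and then checks the latter against \eqref{multih}, implicitly relying on uniqueness of solutions of the recursion, whereas you run the implication in the direction actually needed, inserting the binary recursion for $\gtri{m}$ into the composition and reassembling the two groups of blocks into $\whg_\ell$ and $\whg_{n-\ell}$; this is logically tighter. Finally, the two checks you flag are exactly the right ones, and both hold as you indicate: the telescoping (cocycle) property of $Q_n$, which the paper uses tacitly throughout, and the block-compatibility of the attractor flow, $\sum_{i\in B}c_i^{(\ell)}=(c'_B)^{(\ell')}$, which follows from the linearity of \eqref{flowc} together with the additivity of $\beta_{ni}$ over blocks and the invariance of $\Gamma_{n\ell}$ and $\cs_\ell$ under regrouping.
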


What are the conditions on $\whg_n$ for the corresponding theta series to be modular?
Let us denote by $\whgPhi_{n}$ the kernel defined by $\whg_n$ analogously to \eqref{kerPhiint}, namely
\be
\whgPhi_{n}(\bfx)=\frac{\sigma_{\gamma}(\sqrt{2\tau_2})^{n-1}}{\prod_{i=1}^n\sigma_{\gamma_i}}\,\whg_n(\{\gamma_i,c_i\}).
\label{kerPhiinth}
\ee
Then the above analysis implies that the modularity requires $\whgPhi_{n}$
to satisfy Vign\'eras equation away from walls of marginal stability, whereas at these walls its discontinuities should follow
the same pattern as before in order to cancel the discontinuities from the contour exchange in $\intPhi_m$.
Thus, the completion should smoothen out the discontinuities from the moduli independent signs $\sign(\tbfv_v,\bfx)$,
but otherwise leave the action of Vign\'eras' operator unchanged.
Formally, this means that $\whgPhi_{n}$ must satisfy the following equation
\be
V_{n-1} \cdot \whgPhi_{n}=\Sym\sum_{\ell=1}^{n-1}
\Bigl[\bfu_\ell^2\,\Delta_{n,\ell}^{\whg} \,\delta'(\bfu_\ell\cdot\bfx)
+ 2\bfu_\ell\cdot\p_\bfx \Delta_{n,\ell}^{\whg} \,\delta(\bfu_\ell\cdot\bfx)\Bigr],
\qquad
\Delta_{n,\ell}^{\whg}=\frac12\,(\bfv_\ell,\bfx)\,\whgPhi_\ell\,\whgPhi_{n-\ell},
\label{wanted}
\ee
where we introduced two sets of vectors constructed from the vectors \eqref{defvij},
\be
\bfv_\ell=\sum_{i=1}^\ell\sum_{j=\ell+1}^n\bfv_{ij}\,,
\qquad
\bfu_\ell=\sum_{i=1}^\ell\sum_{j=\ell+1}^n\bfu_{ij}\,.
\label{deftvl}
\ee
Note that $(\bfv_\ell,\bfx)$ and $(\bfu_\ell,\bfx)$ correspond to the quantities $-\Gamma_{n\ell}$ and $-\cs_\ell$ \eqref{notaion-c}, respectively.

To solve the above constraints, let us consider the following iterative ansatz
\be
\whg_n(\{\gama_i,c_i\})=\gf_n(\{\gama_i,c_i\})
-\Sym\left\{\sum_{n_1+\cdots +n_m= n\atop n_k\ge 1, \ m<n}
\whg_m(\{\gama'_k,c'_k\})
\prod_{k=1}^m \cE_{n_k}(\gama_{j_{k-1}+1},\dots,\gama_{j_{k}})
\right\},
\label{iterDn}
\ee
where the notations for indices and primed variables are the same as in \eqref{groupindex}.
This ansatz is motivated by analogy with the iterative equation for the (partial) tree index \eqref{F-ansatz}.
It involves two functions to be determined below, $\gf_n$ and $\cE_{n}$.
The former depend on the moduli through the variables $c_i$ \eqref{fiparam}, whereas the latter are moduli independent.
We set $\gf_1=\cE_1=1$ and also assume that $\gf_n$ have discontinuities only at walls of marginal stability,
i.e. at $\sum_{i\in\cI} c_i=0$ for various subsets $\cI$ of indices.

The unknown functions $\gf_n$ and $\cE_{n}$ together with the functions $R_n$ defining the completion, or their combinations \eqref{defSm},
are fixed by the conditions \eqref{recurs-whd-full} and \eqref{wanted}.
In appendix \ref{ap-proofs} we prove the following result:

\begin{proposition}\label{prop-solvecond}
Let us split $\cE_n=\cEf_n+\cEp_n$ into
$\cEp_n$, which is the part exponentially decreasing
for large $\tau_2$, and the non-decreasing
remainder $\cEf_n$.
Then the ansatz \eqref{iterDn} satisfies the recursive equation \eqref{wanted} provided
\begin{enumerate}
\item
the functions $\gf_n$ are subject to a similar recursive relation
\be
\begin{split}
&\,
\frac14\Sym\left\{\sum_{\ell=1}^{n-1} \bigl( \sign(\cs_\ell)-\sign (\Gamma_{n\ell})\bigr)\,
\kappa(\Gamma_{n\ell})\,
\gf_{\ell}(\{\gama_i,\cl_i\}_{i=1}^\ell)\,
\gf_{n-\ell}(\{\gama_i,\cl_i\}_{i=\ell+1}^n)
\right\}
\\
&\,\qquad\qquad\qquad
=\gf_n(\{\gama_i,c_i\})-\gf_n(\{\gama_i,\beta_{ni}\}),
\end{split}
\label{iterseed}
\ee
where $\cs_k$, $\beta_{k\ell}$, $\Gamma_{k\ell}$ and $\cl_i$ were defined in \eqref{notaion-c}, \eqref{flowc};

\item
the non-decreasing part of $\cE_n$ is fixed in terms of $\gf_n(\{\gama_i,c_i\})$ as
\be
\cEf_n(\{\gama_i\})=\gf_n(\{\gama_i,\beta_{ni}\});
\label{rel-gE}
\ee

\item
its decreasing part is given by
\be
\cEp_n(\{\gama_i\})=-\Sym\left\{\sum_{n_1+\cdots +n_m= n\atop n_k\ge 1, \ m>1}
W_m(\{\gama'_k\})
\prod_{k=1}^m \cE_{n_k}(\gama_{j_{k-1}+1},\dots,\gama_{j_{k}})\right\}.
\label{Emrec}
\ee
\end{enumerate}
Furthermore, provided the functions $\cE_n$ depend on electric charges $q_{i,a}$ only through the DSZ products $\gamma_{ij}$
and their kernels $\cEPhi_n(\bfx)$ defined as in \eqref{kerPhiint} are smooth solutions of Vign\'eras' equation,
\be
V_{n-1}\cdot \cEPhi_n=0,
\ee
then the ansatz \eqref{iterDn} also satisfies the modularity constraint \eqref{wanted}.
\end{proposition}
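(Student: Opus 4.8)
The plan is to establish the two assertions by induction on $n$, handling first the algebraic claim that the ansatz \eqref{iterDn} together with the three stated conditions reproduces the recursion \eqref{recurs-whd-full}, and then the analytic claim that, once the kernels $\cEPhi_n$ solve Vign\'eras' equation, $\whgPhi_n$ obeys \eqref{wanted}. The guiding principle is the structural parallel between the ansatz \eqref{iterDn} and the factorisation formula \eqref{F-ansatz} for the partial tree index, and between the seed recursion \eqref{iterseed} and the attractor-flow recursion \eqref{inductFn}: here $\gf_n$ plays the role of $\Ftr{n}$, while $\cEf_n=\gf_n(\{\gama_i,\beta_{ni}\})$ is its value at the large-volume attractor point, where $\cs_\ell=\Gamma_{n\ell}$ so that $\sign\cs_\ell-\sign\Gamma_{n\ell}$ vanishes and the construction is self-consistent.

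For the first claim I would substitute the ansatz for $\whg_n$ into the left-hand side of \eqref{recurs-whd-full} and split $\gf_n(\{\gama_i,c_i\})=\gf_n(\{\gama_i,\beta_{ni}\})+\bigl[\gf_n(\{\gama_i,c_i\})-\gf_n(\{\gama_i,\beta_{ni}\})\bigr]$. By the second condition \eqref{rel-gE} the first term is $\cEf_n$, while the first condition \eqref{iterseed} rewrites the bracket as a bilinear sum. Using the identifications $(\bfv_\ell,\bfx)=\gamma_{LR}^\ell=-\Gamma_{n\ell}$ and $(\bfu_\ell,\bfx)=-\cs_\ell$ together with the large-volume form \eqref{gamZZ}, the factor $\tfrac14(\sign\cs_\ell-\sign\Gamma_{n\ell})\kappa(\Gamma_{n\ell})$ matches $-\hf\,\Delta_{\gamma_L^\ell\gamma_R^\ell}^z\,\kappa(\gamma_{LR}^\ell)$, but with $\gf_\ell,\gf_{n-\ell}$ in place of $\whg_\ell,\whg_{n-\ell}$. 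It then remains to show that the $\cE$-convolution of the ansatz, reorganised through the third condition \eqref{Emrec}, the Schr\"oder-tree expression \eqref{defSm} for $W_n$, and the inverse relation \eqref{exp-hwh}, simultaneously promotes $\gf_\ell,\gf_{n-\ell}$ to $\whg_\ell,\whg_{n-\ell}$ inside the bilinear term and assembles the inhomogeneous contribution $W_n$. I expect this to reduce to a bookkeeping identity over ordered partitions, of the same type that relates \eqref{inductFn} to \eqref{F-ansatz}.

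For the modularity claim I would pass to kernels and apply $V_{n-1}$ to $\whgPhi_n=\gfPhi_n-\Sym\{\sum_{m<n}\whgPhi_m\prod_k\cEPhi_{n_k}\}$. The key simplification is that the variables feeding the factors $\cEPhi_{n_k}$ lie in mutually orthogonal subspaces of $\oplus_i\Lambda_i$, orthogonal also to the collapsed directions carrying $\whgPhi_m$; this annihilates all cross terms in the second-order Leibniz expansion, so that $V_{n-1}\bigl[\whgPhi_m\prod_k\cEPhi_{n_k}\bigr]=(V_{m-1}\whgPhi_m)\prod_k\cEPhi_{n_k}$ once $V_{n_k-1}\cEPhi_{n_k}=0$ is used. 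On the remaining piece, the factor $\kappa(\Gamma_{n\ell})\propto(\bfv_\ell,\bfx)$ multiplying $\sign\cs_\ell$ vanishes on the DSZ locus, so $\gfPhi_n$ is discontinuous only on the marginal-stability loci $\bfu_\ell\cdot\bfx=0$; acting with the second-order operator on $\sign(\bfu_\ell\cdot\bfx)$ times the linear form $(\bfv_\ell,\bfx)$ along the transverse direction $\bfv_\ell$ produces precisely the $\bfu_\ell^2\,\delta'(\bfu_\ell\cdot\bfx)$ and $\bfu_\ell\cdot\p_\bfx\,\delta(\bfu_\ell\cdot\bfx)$ terms, with no higher singular structure since $\sign$ enters linearly and is differentiated at most twice. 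Feeding the inductive hypothesis \eqref{wanted} for $m<n$ into $V_{m-1}\whgPhi_m$ and re-expressing $\gf_\ell,\gf_{n-\ell}$ through $\whgPhi_\ell,\whgPhi_{n-\ell}$ by the ansatz should collapse everything into the single coefficient $\Delta_{n,\ell}^{\whg}=\tfrac12(\bfv_\ell,\bfx)\whgPhi_\ell\whgPhi_{n-\ell}$.

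The main obstacle is this final recombination: showing that the delta-supported terms generated at the full-system walls by $\gfPhi_n$ and at the subsystem walls by the convolution conspire, after use of the ansatz and the partition combinatorics, into the clean bilinear right-hand side of \eqref{wanted}. This demands careful control of the symmetrisation weights, the telescoping of the quadratic refinements $\sigma_\gamma$ in passing between $\whg_n$ and $\whgPhi_n$, and --- most delicately --- the orthogonal splitting of $\oplus_i\Lambda_i$ into collapsed and relative directions with respect to the block form $\kappa_{i,ab}$, on which the exactness of the Vign\'eras factorisation rests. Finally, the hypothesis that each $\cE_n$ depends on the electric charges only through the DSZ products $\gamma_{ij}$ is exactly what renders $\cEPhi_n$ independent of the spectral-flow parameter $\eps^a$, the consistency requirement flagged before the proposition, and thereby legitimises the theta decomposition at every step.
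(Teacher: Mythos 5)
Your strategy coincides with the paper's own proof: induction on $n$, with the $\gf_\ell\,\gf_{n-\ell}$ term handled by \eqref{iterseed} together with \eqref{rel-gE}, the $W$-terms absorbed through \eqref{Emrec}, and, for modularity, the orthogonality relations $(\bfv_{i+j,k},\bfv_{ij})=(\bfu_{i+j,k},\bfv_{ij})=0$ forcing Vign\'eras' operator to act factor-by-factor on the convolution, so that only the discontinuities of $\gfPhi_n$ across $\bfu_\ell\cdot\bfx=0$ contribute. The individual identifications you state are correct, e.g. $-\hf\,\Delta_{\gamma_L^\ell\gamma_R^\ell}^z\,\kappa(\gamma_{LR}^\ell)=\frac14\bigl(\sign(\cs_\ell)-\sign(\Gamma_{n\ell})\bigr)\kappa(\Gamma_{n\ell})$, and the vanishing of the Leibniz cross terms.

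There is nevertheless a genuine gap: the two steps you defer (``I expect this to reduce to a bookkeeping identity over ordered partitions'' and ``the main obstacle is this final recombination'') are exactly where the content of the proof lies, and your sketch does not show they go through. Three things are missing. First, a base case: the induction never starts, whereas one must exhibit $\gf_2=-\frac14\,\sign(c_1)\,\kappa(\gamma_{12})$ (forced by the requirement that $\gf_2$ jump only at walls of marginal stability), verify \eqref{recurs-whd-full} and \eqref{wanted} directly at $n=2$, and check that $\cE_2$, built from $\tPhi_{1}^E(\bfv_{12},\bfv_{12};\bfx)$, has the required split $\cEf_2+\cEp_2$. Second, the recombination identity itself: writing $\whg_k=\gf_k-\whgp_k$ with $\whgp_k$ the convolution part of \eqref{iterDn}, the cross terms $\whgp_\ell\,\whg_{n-\ell}+\whg_\ell\,\whgp_{n-\ell}+\whgp_\ell\,\whgp_{n-\ell}$ must be reassembled as a sum over ordered partitions of $n$ in which the split $\ell$ is constrained to be a partition point $j_{k_0}$; interchanging the sum over $\ell$ with the sum over partitions then exposes, in an inner bracket, precisely the bilinear side of \eqref{recurs-whd-full} at order $m<n$, which the induction hypothesis converts to $\whg_m-W_m$; finally all $W$-terms, including the inhomogeneous $W_n$, combine via \eqref{Emrec} into $-\cEp_n$, which together with $-\cEf_n$ restores $-\cE_n$ and hence the $m=1$ term of the ansatz. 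Without this interchange-of-sums argument, the ``promotion'' of $\gf_\ell,\gf_{n-\ell}$ to $\whg_\ell,\whg_{n-\ell}$ that you invoke is asserted, not proven. Third, in the modularity half you need one more fact, which you gesture at but never state: on the wall $\cs_\ell=0$ the flowed parameters satisfy $\cl_i=c_i$ by \eqref{flowc}, so the jump of $\gf_n$ dictated by \eqref{iterseed} is expressed in the unflowed variables and gives $\gfPhi_n$ exactly the anomaly equation \eqref{wanted} with $\whg$ replaced by $\gf$; only then, after running the same partition interchange backwards, do the delta-supported terms collapse onto the coefficient $\Delta_{n,\ell}^{\whg}=\frac12(\bfv_\ell,\bfx)\,\whgPhi_\ell\,\whgPhi_{n-\ell}$. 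If the flowed arguments did not reduce to $c_i$ on the wall, the delta coefficients would involve $\gf$'s at shifted stability parameters and the right-hand side of \eqref{wanted} would not emerge.
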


This proposition allows in principle to fix all unknown functions. Indeed,
the recursive relation \eqref{iterseed} determines all $\gf_n$.
Then equations \eqref{rel-gE} and \eqref{Emrec} give the two parts of $\cE_n=\cEf_n+\cEp_n$ in terms of $\gf_n$
and $W_n$. At this point the latter are still undetermined and are defined in terms of the unknown functions $R_n$.
The additional constraint that $\cE_n$ should satisfy Vign\'eras' equation links together $\cEf_n$ and $\cEp_n$
and thus establishes a relation between $W_n$
and $\gf_n$. Lastly, inverting \eqref{defSm} generates a solution for $R_n$.

\medskip

We end this discussion by making an observation which will become relevant in the
next subsection: by comparing  \eqref{multihd-full} and \eqref{multih}, it is clear that  $\whg_n$ must agree with
the tree index $\gtr$ in the limit $\tau_2\to\infty$. In particular, $\gtr$ satisfies a similar ansatz
as \eqref{iterDn}, upon replacing $\cE_n$ by its non-decaying part:
\be
\gtr(\{\gama_i,c_i\})=\gf_n(\{\gama_i,c_i\})
-\Sym\left\{\sum_{n_1+\cdots +n_m= n\atop n_k\ge 1, \ m<n}
g_{{\rm tr},m}(\{\gama'_k,c'_k\})
\prod_{k=1}^m \cEf_{n_k}
\right\}.
\label{iterDn0}
\ee
It follows that the function $\gf_n$ should be independent of $\tau_2$,
or at least that any such dependence should cancel in the recursion \eqref{iterDn0}.

\subsection{Generalized error functions and the completion}
\label{subsec-fullcompl}

From the previous discussion, the first step in the construction of the modular completion
is to provide an explicit expression for $\gf_n$.
Once such an expression is known, all other functions can be determined algebraically.
The problem, however, is that the solution of the recursive relation \eqref{iterseed} is not unique.
On the other hand, $\gf_n$ determines the non-decaying part $\cEf_n$ of $\cE_n$, which is
strongly constrained by the fact that $\cE_n$ must satisfy Vign\'eras' equation.
This restriction turns out to be strong enough to remove the redundancy in the solution of \eqref{iterseed},
but it shows that we have to solve simultaneously two problems: satisfy the recursive relation \eqref{iterseed}
and ensure that it can be promoted to a solution of Vign\'eras' equation.
We will do this by first constructing a solution of Vign\'eras' equation with the asymptotic part
possessing the properties expected from $\gf_n$, and then proving that it does satisfy the recursive relation.

A hint towards a solution of these two problems can be found by examining the
 form of the kernel $\intPhi_n$ \eqref{Phin-final}.
The point is that both $\intPhi_n$ and $\gf_n$ have discontinuities at walls of marginal stability
which, as we know, must cancel each other. Furthermore, they should recombine into a smooth solution of Vign\'eras' equation.
Thus, $\intPhi_n$ is expected to encode at least part of the completion of $\gf_n$.
In addition, as explained in appendix \ref{ap-generror}, the function $\tPhi^M_{n-1}$, from which $\intPhi_n$ is constructed,
appears as the term with fastest decay in the expansion similar to \eqref{expPhiE} of the function $\tPhi^E_{n-1}$ defined in \eqref{deftPhigen}
with the arguments $\cV=\{ \bfu_e\}$, $\tcV=\{\bfv_{s(e) t(e)}\}$, which is a smooth solution of Vign\'eras' equation.
This motivates us to consider
the following function\footnote{Note that the sign factor $(-1)^{\sum_{i<j} \gamma_{ij} }$ is equal
to the ratio of quadratic refinements appearing in \eqref{kerPhiint}.}
\be
G_n(\{\gama_i,c_i\};\tau_2)=
\frac{(-1)^{\sum_{i<j} \gamma_{ij} }}{(\sqrt{2\tau_2})^{n-1}}\, \tPhif_n(\bfx),
\label{defgf-hPhi}
\ee
where $\tPhif_n(\bfx)$ denotes the large $\bfx$ limit of the function
\be
\tPhi_n(\bfx)=
\frac{1}{2^{n-1} n!}\sum_{\cT\in\, \IT_n^\ell}
\tPhi^E_{n-1}(\{ \bfu_e\}, \{\bfv_{s(e) t(e)}\};\bfx).
\label{fullkeru}
\ee
Our first goal will be to evaluate this limit explicitly.

\lfig{An example of marked unrooted tree belonging to $\IT_{7,4}^\ell$ where near each vertex we showed in red
an integer counting marks.}{mark-trees}{10.6cm}{fig-mark-trees}{-2.cm}

To express the result, we need to introduce two new types of trees, beyond those already encountered.
First, we denote by $\IT_{n,m}^\ell$ the set of {\it marked} unrooted labelled trees with $n$ vertices and $m$ marks assigned to vertices
(see Fig. \ref{fig-mark-trees}).
Let $m_\ver\in \{0,\dots m\}$ be the number of marks carried by the vertex $\ver$, so that $\sum_\ver m_\ver=m$.
Furthermore, the vertices are decorated by charges from the set $\{\gamma_1,\dots,\gamma_{n+2m}\}$ such that a vertex $\ver$ with
$m_\ver$ marks carries $1+2m_\ver$ charges $\gamma_{\ver,s}$, $s=1,\dots,1+2m_\ver$ and we set $\gamma_\ver=\sum_{s=1}^{1+2m_\ver}\gamma_{\ver,s}$.
Second, we define $\IT_{n}^{(3)}$ to be the set of (unordered, full) rooted ternary trees with $n$ leaves decorated by charges $\gamma_i$ and
other vertices carrying charges defined by the inductive rule $\gamma_v=\gamma_{d_1(v)}+\gamma_{d_3(v)}+\gamma_{d_3(v)}$ where $d_r(v)$ are
the three children of vertex $v$ (see Fig. \ref{fig-ternarytrees}). As usual, $V_T$ denotes the set of
vertices, with cardinality $|V_T|=\hf(n-1)$ (not counting the leaves).

\lfig{An example of rooted ternary tree belonging to $\IT_{9}^{(3)}$.}{ternarytrees}{10.cm}{fig-ternarytrees}{-1.5cm}

In terms of these notations, we then have the following result proven, in appendix \ref{ap-proofs}:

\begin{proposition}\label{prop-limitG}
The function \eqref{defgf-hPhi} is given by
\be
G_n(\{\gama_i,c_i\};\tau_2)=
\scalebox{.95}{$\frac{(-1)^{n-1+\sum_{i<j} \gamma_{ij} }}{2^{n-1} n!}$}
\sum_{m=0}^{[(n-1)/2]}\frac{(-1)^m}{(4\pi\tau_2)^m}
\!\!\!\!\sum_{\cT\in\, \IT_{n-2m,m}^\ell}
\prod_{\ver\in V_\cT}\cP_{m_\ver}(\{p_{\ver,s}\})\prod_{e\in E_{\cT}}\gamma_{s(e) t(e)}\,\sign(\cs_e),
\label{asymp-compl}
\ee
where, for each edge $e$ joining the subtrees $\cT_e^s$ and $\cT_e^t$,
\be
\cs_e=\sum_{i\in V_{\cT_e^s}}c_i
=\sum_{i\in V_{\cT_e^s}}\sum_{j\in V_{\cT_e^t}}\Im\bigl[ Z_{\gamma_i}\bZ_{\gamma_j}\bigr],
\label{defce}
\ee
and $\cP_m$ is the weight corresponding to the marks, given by
\be
\cP_{m}(\{p_s\})=\sum_{T\in\, \IT_{2m+1}^{(3)}}\frac{1}{T!}\prod_{v\in V_T}(p_{d_1(v)}p_{d_2(v)}p_{d_3(v)}).
\label{defPver-tree}
\ee
Here $T!$ is the tree factorial
\be
T!= \prod_{v\in V_T}n_v(T)\, ,
\label{defcT}
\ee
where $n_v(T)$, as in Theorem \ref{theorem}, denotes the number of vertices (excluding the leaves) of the subtree of $T$ rooted at $v$.
\end{proposition}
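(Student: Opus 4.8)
The plan is to reduce the statement to the large-argument asymptotics of the generalized error functions and then perform a purely combinatorial resummation. Starting from the definition \eqref{fullkeru} of $\tPhi_n$ as a sum over unrooted labelled trees $\cT\in\IT_n^\ell$ of the uplifted functions $\tPhi^E_{n-1}(\{\bfu_e\},\{\bfv_{s(e)t(e)}\};\bfx)$, I would first invoke the asymptotic expansion established in appendix \ref{ap-generror} (the analogue, for the uplifted functions, of the expansion \eqref{expPhiE}). This expansion separates $\tPhi^E_{n-1}$ into its non-decreasing part --- a polynomial in $\bfx$ assembled from the linear factors $\bfv_{s(e)t(e)}\cdot\bfx$ times a product of sign functions $\sign(\bfu_e\cdot\bfx)$ --- and exponentially suppressed remainders controlled by the complementary functions $\Phi^M$. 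Since $G_n$ is defined through the non-decreasing limit $\tPhif_n$ in \eqref{defgf-hPhi}, only the former survive, and after the rescaling by $\sqrt{2\tau_2}$ built into \eqref{defvij} each factor $\bfu_e\cdot\bfx$ reduces to $\cs_e$ and each $\bfv_{s(e)t(e)}\cdot\bfx$ to $\gamma_{s(e)t(e)}$.

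At leading order this already reproduces the $m=0$ term of \eqref{asymp-compl}: every edge of $\cT$ supplies an orientation-independent factor $\gamma_{s(e)t(e)}\sign(\cs_e)$ (both $\gamma_{s(e)t(e)}$ and $\sign(\cs_e)$ flipping under reversal of the edge), with the residual sign $(-1)^{n-1}$ arising because $\bfu_e\cdot\bfx\propto-\cs_e$ contributes $-\sign(\cs_e)$ per edge, while the prefactor $(2^{n-1}n!)^{-1}$ and the quadratic-refinement sign $(-1)^{\sum_{i<j}\gamma_{ij}}$ are inherited from \eqref{fullkeru} and \eqref{defgf-hPhi}. The subleading content comes from the Gaussian-moment corrections in the asymptotic expansion, which I would organize as Wick-type contractions: each contraction pairs two of the linear factors against the Gaussian measure of width $\sim\tau_2^{-1}$, lowering the polynomial degree by two and producing a factor proportional to $1/(4\pi\tau_2)$ times the inner product of the two contracted vectors. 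Grouping the expansion by the total number $m$ of contractions yields the overall prefactor $(-1)^m/(4\pi\tau_2)^m$.

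The heart of the argument is to show that summing over all trees $\cT\in\IT_n^\ell$ and all contraction patterns with $m$ contractions reassembles \eqref{asymp-compl}. The decisive structural input is that the inner product of two edge vectors meeting at a common vertex is a triple intersection, $(\bfv_{ij},\bfv_{i\ell})\propto(p_i p_j p_\ell)$; thus each contraction fuses three charges into a single cluster, which is exactly the effect of adding one mark (a mark-$m_\ver$ vertex carries $1+2m_\ver$ charges, so $m$ contractions pass from $\IT_n^\ell$ to $\IT_{n-2m,m}^\ell$). The uncontracted edges are precisely the edges of the reduced tree and retain their factors $\gamma_{s(e)t(e)}\sign(\cs_e)$. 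It then remains to evaluate the local weight of each cluster, and I would do this by induction on $m_\ver$: since every contraction creates one ternary node carrying $(p_{d_1}p_{d_2}p_{d_3})$, the sum over nested contraction patterns collapsing $1+2m_\ver$ charges is naturally a sum over rooted ternary trees with $1+2m_\ver$ leaves, reproducing $\cP_{m_\ver}$ of \eqref{defPver-tree}.

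The main obstacle is the last step: matching the multiplicities generated by the iterated contractions to the tree-factorial normalization $1/T!$ of \eqref{defcT}. The point is that distinct orderings of the contractions can realize the same ternary topology, and the admissible orderings are the growth histories of the ternary tree, whose count is governed by a sum over subtrees sharing the root. Converting this sum over orderings into the clean weight $1/T!$ is precisely the combinatorial identity of Theorem \ref{theorem}, which I expect to invoke here; the accompanying bookkeeping of the signs $(-1)^m$ and of the normalization $(2^{n-1}n!)^{-1}$ across all clusters is the principal remaining technical burden.
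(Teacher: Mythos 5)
Your overall strategy mirrors the paper's proof: take the large-$\bfx$ limit of the generalized error functions (leaving products of signs), treat the mutual action of the raising operators $\cD(\bfv_{s(e)t(e)})$ as Wick-type contractions producing triple intersections, interpret each contraction as a mark, and fix the cluster weight by a ternary-tree argument. However, there is a genuine gap at the central step. A contraction of $\cD(\bfv_{e_1})$ with $\cD(\bfv_{e_2})$ does not by itself ``fuse three charges into a cluster'': within a single tree $\cT$ it produces the factor $\frac{(p_{\ver_1}p_{\ver_2}p_{\ver_3})}{2\pi}\,\sign(\bfu_{e_1},\bfx)\,\sign(\bfu_{e_2},\bfx)$ (cf. \eqref{factordersign}), i.e. the two moduli-dependent sign functions of the contracted edges remain attached, whereas the claimed formula \eqref{asymp-compl} contains no sign functions for collapsed edges and instead carries a factor $(-1)^m$. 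Eliminating these residual signs is precisely where the sum over tree topologies acts: for each triple of subtrees there are three inequivalent ways of joining them at a vertex (Fig. \ref{fig-Vign3}), the corresponding vectors $\bfu_e$ are linearly related, and the sign identity \eqref{signprop-ap} collapses the three contraction terms into the constant $-\frac{(p_{\ver_1}p_{\ver_2}p_{\ver_3})}{2\pi}$, as in \eqref{signfactorscomb}. This is the only source of the $(-1)^m$ that you assert but never derive, and it is what distinguishes this computation from the one for $\intPhi_n$, where the same three-topology sum gives exact cancellation \eqref{case3} rather than a constant. One also needs the accompanying sum over allocations of subtrees to the three joined vertices, so that all surviving factors depend only on the total cluster charge; without both ingredients the reduction from $\IT_n^\ell$ to marked trees in $\IT_{n-2m,m}^\ell$ does not go through.

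A secondary, smaller inaccuracy: the conversion of the sum over contraction orderings (the ``growth histories'' of the ternary tree) into the weight $1/T!$ of \eqref{defPver-tree} is not Theorem \ref{theorem} but Lemma \ref{lemma-ntrees}. Each ternary topology $T$ is generated $N_{\hat T}$ times, where $\hat T$ is $T$ with its leaves stripped and $N_{\hat T}$ counts its increasing labelings, while dividing by $m!$ removes the choice of order of the $m$ collapses; Lemma \ref{lemma-ntrees} then gives $N_{\hat T}/m! = 1/T!$ directly. The binomial identity of Theorem \ref{theorem}, which you flag as the main remaining burden, is in fact needed later, in the proof of Proposition \ref{prop-limitcompl}, not here.
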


To demystify the origin of these structures, note that
the sum over $m$ in \eqref{asymp-compl} arises because of the contributions coming from the mutual action of covariant derivative operators
$\cD(\bfv_{s(e) t(e)})$ \eqref{defcDif} in the definition of the function $\tPhi^E_{n-1}$ \eqref{deftPhigen}.
Such action is non-vanishing for any pair of intersecting edges $(e_1,e_2)$ and is proportional to
$(p_{\ver_1}p_{\ver_2}p_{\ver_3})$ where $\ver_1,\ver_2,\ver_3$ are the three vertices
belonging to the edges. While in appendix \ref{ap-twistint} it is shown that such contributions cancel in the sum over trees defining
$\intPhi_n$ \eqref{Phin-final}, this is not so for $\tPhif_n(\bfx)$: instead of the identity \eqref{case3}, one
has to apply the sign identity \eqref{signfactorscomb} which produces a constant term.
As a result, instead of the standard sign factors assigned to $(e_1,e_2)$, one finds the weight factor $(p_{\ver_1}p_{\ver_2}p_{\ver_3})$.
Furthermore, the sum over trees ensures that all other factors except
for this weight depend on the charges $\gamma_{\ver_1},\gamma_{\ver_2},\gamma_{\ver_3}$ only through their sum.
One can keep track of such contributions by collapsing the corresponding pairs of edges
on the tree and marking the remaining vertices with weights $m_\ver$.
Essentially, the only non-trivial point is to understand the form of the weight factor $\cP_m$ for $m>1$.
In this case more than one pair of joint edges collapse.
The representation \eqref{defPver-tree} in terms of rooted ternary trees is obtained by
collapsing $m$ pairs of edges successively,
where the coefficient $1/T!$ takes into account that such procedure leads
to an overcounting of different assignments of labels.

\lfig{Three unrooted trees constructed from the same three subtrees.}
{Vign3e}{17.5cm}{fig-Vign3}{-1.5cm}

\medskip

Unfortunately, the function \eqref{asymp-compl} cannot yet be taken as an ansatz for $\gf_n$ because it depends non-trivially on
the modulus $\tau_2$, which is in tension with the observation made at the end of the previous
subsection. Therefore, we shall
modify the function \eqref{fullkeru} into a function which is still a smooth solution of Vign\'eras' equation,
but whose large $\bfx$ limit is independent of $\tau_2$.  Taking cue from
the structure of \eqref{asymp-compl},
we define\footnote{The term $m=0$ in \eqref{fullker-mod} reduces to the original function \eqref{fullkeru},
while the terms with $m>0$ are the afore-mentioned modification.}
\be
\tcEPhi_n(\bfx)=
\frac{1}{2^{n-1} n!}\sum_{m=0}^{[(n-1)/2]}\sum_{\cT\in\, \IT_{n-2m,m}^\ell}
\[\prod_{\ver\in V_\cT}\cD_{m_\ver}(\{\gama_{\ver,s}\})\]
\tPhi^E_{n-2m-1}(\{ \bfu_e\}, \{\bfv_{s(e) t(e)}\};\bfx),
\label{fullker-mod}
\ee
where
\be
\cD_{m}(\{\gama_s\})=\sum_{\cT'\in\, \IT_{2m+1}^\ell} a_{\cT'}\prod_{e\in E_{\cT'}}\cD(\bfv_{s(e)t(v)}).
\label{defcDcT}
\ee
The coefficients $a_\cT$ are rational numbers which depend (up to a sign determined by the orientation of edges)
only on the topology of the tree.
We fix them by requiring that they satisfy the following system of equations
\be
a_{\hcT_1}+a_{\hcT_2}-a_{\hcT_3}=a_{\cT_1}a_{\cT_2}a_{\cT_3},
\label{eqcoef}
\ee
where $\cT_r$ ($r=1,2,3$) are arbitrary unrooted trees with marked vertex $\ver_r$ and $\hcT_r$ are the three trees
constructed from $\cT_r$ as shown in Fig. \ref{fig-Vign3}.\footnote{For these equations, it is important to take into account
the orientation of the edges: a change of orientation of an edge flips the sign of $a_\cT$.
The equations \eqref{eqcoef} are written assuming the orientation shown in Fig. \ref{fig-Vign3}, namely
$e_1=(\ver_2,\ver_3)$, $e_2=(\ver_1,\ver_3)$, $e_3=(\ver_1,\ver_2)$.}
This system of equations is imposed in order to ensure certain properties of the operators \eqref{defcDcT}
which are crucial for the cancellation of $\tau_2$-dependent terms in the asymptotics of $\tcEPhi_n$
(see \eqref{cond-coefaT}).
It is readily seen that the system \eqref{eqcoef} fixes all coefficients $a_\cT$
recursively starting from $a_\bullet=1$, $a_{\bullet\!\mbox{-}\!\bullet}=0$
and going to trees with $n\geq 3$ vertices. (Starting from $n=7$, the system \eqref{eqcoef} is overdetermined,
but it can be checked that it does have a unique solution,
with $a_\cT=0$ for trees with even number of vertices.)
For a generic tree, in appendix \ref{ap-proofs} we prove the following

\begin{proposition}\label{prop-coeff}
For a tree $\cT$ with $n$ vertices the coefficient $a_\cT$ is given recursively by
\be
a_\cT=\frac{1}{n}\sum_{\ver\in V_\cT} \epsilon_\ver\prod_{s=1}^{n_\ver} a_{\cT_s(\ver)},
\label{res-aT}
\ee
where $n_\ver$ is the valency of the vertex $\ver$,
$\cT_s(\ver)$ are the trees obtained from $\cT$ by removing this vertex, and $\epsilon_\ver$ is the sign determined
by the choice of orientation of edges,
$
\epsilon_\ver=(-1)^{n_\ver^+}
$
with $n_\ver^+$ being the number of incoming edges at the vertex.
(See Fig. \ref{fig-coef} for an example.)
\end{proposition}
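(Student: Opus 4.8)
The plan is to exploit the uniqueness of the solution of the defining system \eqref{eqcoef}: since that system, together with the initial data $a_\bullet=1$ and the value $0$ on the two-vertex tree, pins down every $a_\cT$, it suffices to show that the numbers produced by the closed recursion \eqref{res-aT} satisfy \eqref{eqcoef}. Note first that \eqref{res-aT} expresses $a_\cT$ for an $n$-vertex tree through the values $a_{\cT_s(\ver)}$ on the strictly smaller trees obtained by vertex deletion, so it is a legitimate definition by induction on the number of vertices. I would begin by checking that it reproduces the base data (for a single vertex the only term has $n_\ver=0$, empty product and $\epsilon_\ver=1$; for two vertices the two endpoints carry opposite orientation signs, giving $\tfrac12(1-1)=0$), and that \eqref{res-aT} is compatible with the edge-reversal antisymmetry of $a_\cT$: reversing an edge $e=(u,w)$ flips $\epsilon_u$, flips $\epsilon_w$, and flips the value of the one component carrying $e$ in every other term, so the whole recursion changes sign, as required.

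The main argument is a strong induction on $n=n_1+n_2+n_3$, the total number of vertices of the three trees $\hcT_1,\hcT_2,\hcT_3$ built from subtrees $\cT_1,\cT_2,\cT_3$ with marked vertices $\ver_1,\ver_2,\ver_3$. I would expand each $a_{\hcT_r}$ with \eqref{res-aT} and split the vertex sum into the three junction vertices $\ver_1,\ver_2,\ver_3$ and the remaining "bulk" vertices lying in the interiors of the subtrees. For a bulk vertex $\ver$ inside $\cT_i$, its incident edges avoid the three edges $e_1,e_2,e_3$ that distinguish the $\hcT_r$, so its sign $\epsilon_\ver$ and all components of $\cT_i\setminus\ver$ not containing $\ver_i$ are the same in the three trees; only the component containing $\ver_i$ differs, and in $\hcT_1,\hcT_2,\hcT_3$ it is precisely the triple of trees built from the smaller configuration $(C,\cT_j,\cT_k)$, where $C$ is the $\ver_i$-side of $\cT_i\setminus\ver$. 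Here the inductive hypothesis applies: the alternating combination collapses to $a_C\,a_{\cT_j}a_{\cT_k}$, and reassembling the product over all components of $\cT_i\setminus\ver$ and summing over bulk $\ver$ via \eqref{res-aT} for $\cT_i$ yields, after using $n=n_1+n_2+n_3$, exactly the target $a_{\cT_1}a_{\cT_2}a_{\cT_3}$ together with three leftover terms, one per subtree, arising from the excluded vertex $\ver_i$.

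It then remains to show that the contributions of the junction vertices $\ver_1,\ver_2,\ver_3$ cancel these leftover terms. This is the delicate part, since deleting a junction vertex detaches whole subtrees and the induced orientation of $\ver_r$, hence its sign $\epsilon_{\ver_r}$, now depends on which of $e_1,e_2,e_3$ are incident in each $\hcT_r$. The computation must use the prescribed orientation $e_1=(\ver_2,\ver_3)$, $e_2=(\ver_1,\ver_3)$, $e_3=(\ver_1,\ver_2)$ of Fig.~\ref{fig-Vign3} together with the $(+,+,-)$ pattern of \eqref{eqcoef}, arranged so that the auxiliary "two-subtree" terms, in which only one of the distinguishing edges survives, cancel in the alternating sum, while the "fully detached" terms reproduce the leftover bulk contributions with opposite sign.

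I expect this orientation-sign bookkeeping to be the principal obstacle. One finds that the source-type junction (the vertex both of whose distinguishing edges point outward) cancels its leftover cleanly, and the real work is to match this at the remaining two junctions, where the incoming/outgoing counts differ; getting every sign right requires careful use of the edge-reversal rule and of the precise geometry of the three trees in Fig.~\ref{fig-Vign3}, rather than the naive symmetric pairing. Once these cancellations are in place, the net effect of \eqref{res-aT} applied to $\hcT_1,\hcT_2,\hcT_3$ is $a_{\cT_1}a_{\cT_2}a_{\cT_3}$, which is exactly \eqref{eqcoef}; by the uniqueness of its solution this establishes that \eqref{res-aT} computes the coefficients $a_\cT$.
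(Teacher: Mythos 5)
Your proposal is correct and follows essentially the same route as the paper's own proof: induction on the total number of vertices $n_1+n_2+n_3$, expansion of each $a_{\hcT_r}$ via \eqref{res-aT}, collapse of the alternating sum on the component containing the junction by the induction hypothesis, and recombination of the vertex sums via \eqref{res-aT} applied to each $\cT_i$. The junction-vertex sign bookkeeping that you single out as the delicate step does work out exactly as you predict (the two-subtree terms cancel pairwise in the alternating sum, while the fully detached terms cancel the three leftover contributions); the paper performs the same check implicitly, absorbing it into the uniform notation of \eqref{lhs-coef} by setting $\cT_{r,s_0}(\ver_r)=\emptyset$ at the marked vertices.
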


\lfig{An example of calculation of the coefficient $a_\cT$ for a tree with 7 vertices.
We indicated in red the only vertices which produce non-vanishing contributions to the sum over vertices.
For other vertices at least one of the trees $\cT_s(\ver)$ has even number of vertices and hence vanishing coefficient.}
{coef-orient}{12.6cm}{fig-coef}{-1.4cm}

Let us now return to the function \eqref{fullker-mod}, which is now fully specified given the prescription \eqref{res-aT}
for the coefficients $a_\cT$.
Similarly to \eqref{fullkeru} (which coincides with the $m=0$ contribution
in \eqref{fullker-mod}), it is a solution of Vign\'eras' equation for $\lambda=n-1$.
We claim that in the large $\bfx$ asymptotics of this function, all $\tau_2$-dependent contributions,
coming from the mutual action of derivative operators $\cD$, cancel.
More precisely, the asymptotics is given by the following

\begin{proposition}\label{prop-limitcompl}
\be
\lim_{\bfx\to\infty}\tcEPhi_n(\bfx)=\frac{1}{2^{n-1} n!}\sum_{m=0}^{[(n-1)/2]}
\sum_{\cT\in\, \IT_{n-2m,m}^\ell}
\prod_{\ver\in V_\cT}\cV_{m_\ver}(\{\gama_{\ver,s}\})
\prod_{e\in E_{\cT}}(\bfv_{s(e) t(e)},\bfx)\,\sign(\bfu_e,\bfx),
\label{asymphcEPhi}
\ee
where
\be
\cV_{m}(\{\gama_{s}\})
=\sum_{\cT\in\, \IT_{2m+1}^\ell} a_\cT\prod_{e\in E_\cT}(\bfv_{s(e)t(v)},\bfx).
\label{defcV-mw}
\ee
\end{proposition}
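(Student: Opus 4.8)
\medskip

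The plan is to compute the limit directly from \eqref{fullker-mod} via the large-$\bfx$ asymptotic expansion of the generalized error functions. Away from the walls $(\bfu_e,\bfx)=0$, the smooth kernel $\tPhi^E_{n-2m-1}(\{\bfu_e\},\{\bfv_{s(e)t(e)}\};\bfx)$ has leading behavior $\prod_{e\in E_\cT}(\bfv_{s(e)t(e)},\bfx)\,\sign(\bfu_e,\bfx)$, with an exponentially small remainder controlled by the $\tPhi^M$ functions as in \eqref{expPhiE}; crucially, in this regime every sign factor is locally constant. I would then write each elementary operator $\cD(\bfv)$ entering $\cD_{m_\ver}$ as the sum of a multiplicative leading part, which merely inserts the factor $(\bfv,\bfx)$, and a derivative part whose action is suppressed by a power of $1/\tau_2$. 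Since the signs are locally constant, this derivative part can only act on the polynomial prefactors $(\bfv_{e'},\bfx)$; each such action lowers the polynomial degree by one and produces a scalar product $(\bfv,\bfv_{e'})$ which, by the bilinear form \eqref{biform}, collapses to a triple intersection number $(p_{\ver_1}p_{\ver_2}p_{\ver_3})$.

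The terms in which every $\cD(\bfv)$ contributes only its multiplicative part turn $\cD_{m_\ver}$ into the factor $\cV_{m_\ver}$ of \eqref{defcV-mw}, after summation over the auxiliary trees $\cT'$ with the weights $a_{\cT'}$, and therefore reproduce the right-hand side of \eqref{asymphcEPhi} sector by sector in $m$. Every other term carries at least one factor of $1/\tau_2$ from a derivative part; these are precisely the $\tau_2$-dependent contributions which, for the unmodified $m=0$ object \eqref{fullkeru}, assembled into the tower \eqref{asymp-compl} of Proposition \ref{prop-limitG}. The content of the present Proposition is thus that, once all sectors $m$ are summed, each of these $\tau_2$-dependent terms cancels, and the main work lies in establishing this cancellation.

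I would organize the $\tau_2$-dependent terms according to which pair of intersecting edges is contracted. A single contraction collapses such a pair, fusing three subtrees $\cT_1,\cT_2,\cT_3$ that meet at a triple of vertices; the three inequivalent ways of reconnecting the subtrees produce the trees $\hcT_1,\hcT_2,\hcT_3$ of Fig. \ref{fig-Vign3}. A contribution arising in a higher-$m$ sector, where this configuration already appears as a collapsed (marked) vertex, carries the weight combination $a_{\hcT_1}+a_{\hcT_2}-a_{\hcT_3}$, with relative signs fixed by the edge orientations recorded in \eqref{eqcoef}, whereas the corresponding contraction within a lower-$m$ sector produces the product $a_{\cT_1}a_{\cT_2}a_{\cT_3}$. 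The defining relations \eqref{eqcoef}, which by Proposition \ref{prop-coeff} possess a unique solution, state exactly that these two expressions coincide, so they cancel locally at every single collapse; the rewriting of the collapsed products of signs that makes this matching visible is the identity \eqref{signfactorscomb}. For configurations with several collapses (occurring at $m>1$) one iterates this local identity, the consistency of the iteration being guaranteed by the recursion \eqref{res-aT} of Proposition \ref{prop-coeff}. The principal obstacle, and the point demanding the most care, is the bookkeeping of orientation-dependent signs---a reversal of an edge simultaneously flips the sign of $a_\cT$ and of the relevant scalar products, and these flips must be shown to be mutually compatible---with the base of the recursion supplied by $a_\bullet=1$ and $a_{\bullet\!\mbox{-}\!\bullet}=0$.
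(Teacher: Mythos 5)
Your setup is sound and coincides with the paper's own strategy: the purely multiplicative parts of the operators $\cD(\bfv)$ assemble $\cD_{m_\ver}$ into $\cV_{m_\ver}$ and reproduce the right-hand side of \eqref{asymphcEPhi}, and your treatment of a \emph{single} contraction is also correct — the internal mutual action inside a marked vertex carries the combination $a_{\hcT_1}+a_{\hcT_2}-a_{\hcT_3}$, the edge-edge contraction of three separate vertices carries $-a_{\cT_1}a_{\cT_2}a_{\cT_3}$ via \eqref{signfactorscomb}, and \eqref{eqcoef} makes them cancel. In the paper this single-collapse statement is exactly the relation \eqref{cond-coefaT}.

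The gap is in your final step, where you dispose of configurations with several collapses by claiming that "one iterates this local identity, the consistency of the iteration being guaranteed by the recursion \eqref{res-aT}". The recursion \eqref{res-aT} merely computes the coefficients $a_\cT$, i.e. solves \eqref{eqcoef}; it says nothing about how contributions with different numbers of "new" marks (edge contractions) and "old" marks (internal actions inside the $\cD_{m_\ver}$) recombine. For a final vertex carrying $M\ge 2$ marks, contributions arrive from \emph{all} sectors $m=M-m_0$ with $0\le m_0\le M$, and they do not cancel pairwise: each comes weighted by nontrivial rational multiplicities — the number $N_T$ of ways a given collapse hierarchy (encoded by a rooted ternary tree) can be produced by successive contractions, and the tree factorial $1/T!$ entering $\cP_{m_0}$ in \eqref{defPver-tree}. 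What the paper actually does is iterate the local identity into the closed form \eqref{resVmk} for all $\cV_{m,k}$, substitute, and reorganize the total vertex weight as an alternating sum over subtrees $T'\subseteq T$ of ternary trees with weight $(-1)^{n_{T'}}\prod_{v\in V_{T'}} n_v(T)/n_v(T')$. This sum vanishes for $k>0$ only because of Theorem \ref{theorem}, which evaluates the subtree sum at fixed $n_{T'}=m_0$ to the binomial coefficient $\binom{k}{m_0}$, so that the total is $\sum_{m_0}(-1)^{m_0}\binom{k}{m_0}=(1-1)^k=0$. This inclusion-exclusion identity — the paper's headline combinatorial result — is the ingredient your argument is missing; without it (or an equivalent statement), the cancellation for two or more collapses is not established, and the orientation bookkeeping you single out as the main obstacle is in fact the easy part.
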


Importantly the function \eqref{asymphcEPhi} is locally a homogeneous polynomial of degree $n-1$.
Therefore, when written in terms of charges, the $\tau_2$-dependence factorizes
and is cancelled after the same rescaling as in \eqref{defgf-hPhi}.
This naturally leads us to the following ansatz for the function $\gf_n$,
which we prove in appendix~\ref{ap-proofs}:
\begin{proposition}
\label{prop-solveiterg0}
The function
\be
\gf_n(\{\gama_i,c_i\})=\frac{(-1)^{n-1+\sum_{i<j} \gamma_{ij} }}{2^{n-1} n!}
\sum_{m=0}^{[(n-1)/2]}\sum_{\cT\in\, \IT_{n-2m,m}^\ell}
\prod_{\ver\in V_\cT}\tcV_{m_\ver}(\{\gama_{\ver,s}\})
\prod_{e\in E_{\cT}}\gamma_{s(e) t(e)}\,\sign(\cs_e),
\label{defDf-gen}
\ee
where
\be
\tcV_{m}(\{\gama_{s}\})=\sum_{\cT'\in\, \IT_{2m+1}^\ell} a_{\cT'}\prod_{e\in E_{\cT'}}\gamma_{s(e)t(e)},
\label{deftcV-mw}
\ee
satisfies the recursive equation \eqref{iterseed}.
\end{proposition}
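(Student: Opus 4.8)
The target recursion \eqref{iterseed} is best understood as the $\gtr\to\gf_n$ analogue of the first-split recursion already obeyed by the tree index itself. Indeed, combining \eqref{itereq} (equivalently \eqref{recurs-whd-full} with $W_n\equiv0$) with the identifications $\cs_\ell=\Im[Z_{\gamma_L}\bZ_{\gamma_R}]$, $\Gamma_{n\ell}=-\langle\gamma_L,\gamma_R\rangle$ and $\kappa(-x)=-\kappa(x)$ shows that the tree index satisfies $\gtr(\{\gama_i,c_i\})=\tfrac14\,\Sym\sum_{\ell=1}^{n-1}\bigl(\sign(\cs_\ell)-\sign(\Gamma_{n\ell})\bigr)\kappa(\Gamma_{n\ell})\,\gtri{\ell}(\{\gama_i,\cl_i\}_{i=1}^\ell)\,\gtri{n-\ell}(\{\gama_i,\cl_i\}_{i=\ell+1}^n)$, which is exactly the left-hand side of \eqref{iterseed} with $\gf$ replaced by $\gtr$. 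So the entire content of the proposition is the claim that the inhomogeneous right-hand side $\gf_n(\{\gama_i,c_i\})-\gf_n(\{\gama_i,\beta_{ni}\})$ equals the same quadratic sum built from the explicit $\gf$'s of \eqref{defDf-gen}. My plan is to verify this directly from the tree expansion, treating it as a combinatorial identity about sums over marked labelled trees.

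First I would match all \emph{geometric} data on the two sides through a tree-joining bijection. Writing $\gf_n=\tfrac{(-1)^{n-1+\sum_{i<j}\gamma_{ij}}}{2^{n-1}n!}\sum_{\cT}\prod_{\ver}\tcV_{m_\ver}\prod_e\gamma_{s(e)t(e)}\,\sign(\cs_e)$, the product $\gf_\ell\,\gf_{n-\ell}$ ranges over ordered pairs of marked trees on the groups $L=\{1,\dots,\ell\}$ and $R=\{\ell+1,\dots,n\}$. The factor $\kappa(\Gamma_{n\ell})=(-1)^{\Gamma_{n\ell}}\Gamma_{n\ell}$ supplies precisely the missing prefactor sign, since $\sum_{i<j}\gamma_{ij}$ over all charges splits into the sum over $L$, over $R$, and over cross-pairs, the last being $\equiv\Gamma_{n\ell}\ (\mathrm{mod}\ 2)$; and it supplies the missing edge charge, since $\Gamma_{n\ell}=\sum_{s\in L,\,t\in R}\gamma_{ts}$ distributes over all cross-pairs $(s,t)$, each joining the two trees by a single new edge $e^\ast$ carrying $\gamma_{st}$. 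Every marked tree on the $n$ charges equipped with a distinguished edge separating $L$ from $R$ then arises exactly once, with the two vertex weights $\tcV_{m_\ver}$ carried over unchanged. The delicate case is a split that cuts through the charges of a single marked vertex; there I would reassemble the partial vertex weights, which are built from the coefficients $a_{\cT'}$ of \eqref{deftcV-mw}, using their defining relation \eqref{eqcoef} (equivalently the tree identity of Theorem \ref{theorem}), and check that they recombine into the full $\tcV_{m_\ver}$ of the joined tree.

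After this matching the proposition reduces to a pure sign identity: for each fixed marked tree $\cT$ on the $n$ charges, the difference $\prod_{e}\sign(\cs_e^{[c]})-\prod_{e}\sign(\cs_e^{[\beta]})$ of edge-sign products evaluated at the physical parameters $c_i$ and at the DSZ parameters $\beta_{ni}$ must be reproduced by the sum over distinguished edges $e^\ast$ of the smoothing factor $\bigl(\sign(\cs_\ell)-\sign(\Gamma_{n\ell})\bigr)$ times the product of the remaining edge signs evaluated at the flowed parameters $\cl_i=c_i^{(\ell)}$ of \eqref{flowc}. On the distinguished edge itself one has $\cs_{e^\ast}^{[c]}=\cs_\ell$ and $\cs_{e^\ast}^{[\beta]}=\Gamma_{n\ell}$, which is exactly how the factor of \eqref{iterseed} is generated. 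I would prove this identity by induction on the number of edges, telescoping over the position of $e^\ast$ and using the flow property $\sum_{i\in L}c_i^{(\ell)}=\sum_{i\in R}c_i^{(\ell)}=0$ recorded below \eqref{flowc}, which ensures that on each side of $e^\ast$ the flowed parameters again sum to zero, so that the induction hypothesis applies to the two sub-trees.

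The main obstacle will be precisely this telescoping. Because the flowed parameters depend on the chosen split through the interpolation $c_i^{(\ell)}=c_i-(\cs_\ell/\Gamma_{n\ell})\,\beta_{ni}$, the internal-edge signs $\sign(\cs_e^{[\cl]})$ themselves change as $e^\ast$ sweeps across the tree, and the heart of the argument is to show that these changes cancel in pairs so that only the boundary terms $\prod\sign(\cs_e^{[c]})-\prod\sign(\cs_e^{[\beta]})$ survive. A secondary point is to confirm that no distributional (derivative-of-sign) contributions are produced, which holds because in \eqref{iterseed} each $\sign(\cs_e)$ is accompanied by the linear factor $\gamma_{s(e)t(e)}$, so the two vanish only on common measure-zero loci. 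I would close the argument with the base cases $n=1$ (both sides trivial) and $n=2$, where the single tree gives $\gf_2=\tfrac{(-1)^{1+\gamma_{12}}}{4}\,\gamma_{12}\,\sign(c_1)$ and a direct computation shows both sides of \eqref{iterseed} equal $\tfrac{(-1)^{1+\gamma_{12}}}{4}\,\gamma_{12}\bigl(\sign(c_1)-\sign(-\gamma_{12})\bigr)$, thereby also fixing the normalization $\tfrac14$ and the symmetrization weight.
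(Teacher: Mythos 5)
Your skeleton is the same as the paper's own proof: glue the marked trees coming from $\gf_\ell$ and $\gf_{n-\ell}$ along a new edge generated by the cross-terms in $\kappa(\Gamma_{n\ell})$, identify the left-hand side of \eqref{iterseed} with a sum over marked trees carrying a distinguished edge (this is exactly the paper's relations \eqref{relsums} and \eqref{relsums-marks}), and then collapse the sum over distinguished edges by a sign identity; your prefactor bookkeeping, the identification of $\cs_{e^\ast}$ with $\cs_\ell$ (resp.\ with $\Gamma_{n\ell}$ at the attractor values $\beta_{ni}$), and the $n=2$ normalization check all agree with the paper (cf.\ \eqref{lhsrecursive}). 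However, one of your steps is based on a misconception: the ``delicate case'' of a split cutting through the charges of a single marked vertex is empty. In the product $\gf_\ell\,\gf_{n-\ell}$ every marked vertex lies entirely inside $L$ or entirely inside $R$, and conversely cutting an edge of a marked tree never severs a marked vertex; the sum over cross-pairs $(i,j)$, with $i$ ranging over the $1+2m_{\ver_L}$ charges of $\ver_L$ and $j$ over those of $\ver_R$, is precisely what rebuilds the full edge factor $\gamma_{\ver_L\ver_R}$ of the glued tree. The vertex weights $\tcV_{m_\ver}$ are therefore pure spectators in this proposition, and the coefficients $a_{\cT'}$ with their relation \eqref{eqcoef} play no role here (they are needed only to make $\tcEPhi_n$ a solution of Vign\'eras' equation, i.e.\ for Propositions \ref{prop-coeff} and \ref{prop-limitcompl}; incidentally, \eqref{eqcoef} is not equivalent to Theorem \ref{theorem}). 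Hunting for ``partial vertex weights'' to recombine would either find nothing or, worse, lead you to add spurious double-counted contributions.

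The second issue is that the telescoping you defer is the entire analytic content of the proposition, and it is precisely the identity \eqref{signident3}, which the paper imports from earlier work rather than re-deriving: setting $x_\alpha=\cs_\alpha/\Gamma_\alpha$ for the edges of a fixed marked tree and multiplying the identity by $\prod_\alpha\sgn(\Gamma_\alpha)$ converts $\sgn(x_\alpha-x_\beta)$ into $\sgn\bigl(\cs_\alpha-\tfrac{\Gamma_\alpha}{\Gamma_\beta}\,\cs_\beta\bigr)\sgn(\Gamma_\alpha)$, i.e.\ exactly the flowed-parameter signs produced by \eqref{flowc}, and it then yields $\prod_e\sgn(\cs_e)-\prod_e\sgn(\Gamma_e)$ in one stroke, tree by tree and for any number of marks. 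This identity is true and provable by induction on the number of variables, so your plan is completable; but as written, the assertion that the variations of the internal-edge signs ``cancel in pairs'' is exactly the statement to be proven, not an argument for it. Without either executing that induction or invoking \eqref{signident3}, the proof is incomplete at its central step.
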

\noindent
Note that the same recursive equation \eqref{iterseed} is also obeyed by
the contribution of unmarked trees (i.e. $m=0$) to $\gf_n$, which we denote by $\gs_n$ (see \eqref{defgf0}).
However, the latter  cannot be obtained as the large $\bfx$ limit
of a solution of Vign\'eras' equation, which is why we have to consider the more complicated function \eqref{defDf-gen}.

\medskip

Given the result for $\gf_n$ and the relation \eqref{rel-gE}, one obtains an explicit
expression for the non-decreasing part of $\cE_n$,
\be
\cEf_n(\{\gama_i\})=
\frac{(-1)^{\sum_{i<j} \gamma_{ij} }}{2^{n-1} n!}
\sum_{m=0}^{[(n-1)/2]}\sum_{\cT\in\, \IT_{n-2m,m}^\ell}
\prod_{\ver\in V_\cT}\tcV_{m_\ver}(\{\gama_{\ver,s}\})
\prod_{e\in E_{\cT}}\gamma_{s(e) t(e)}\,\sign(\Gamma_e),
\label{defEn0new}
\ee
where
\be
\Gamma_e=\sum_{i\in V_{\cT_e^s}}\sum_{j\in V_{\cT_e^t}}\gamma_{ij}.
\label{defGame}
\ee
From Proposition \ref{prop-solvecond} we know that the kernel $\cEPhi_n$
corresponding to the full function $\cE_n$ must satisfy Vign\'eras' equation
and have the asymptotics captured by \eqref{defEn0new}.
But we already know a solution of Vign\'eras' equation with a very similar asymptotics,
namely the function \eqref{fullker-mod}, whose asymptotics differs only in the fact that
the vectors $\bfv_e$ are replaced by $\bfu_e$. Since this replacement does not affect the proof of Vign\'eras' equation,
we immediately arrive at the following result:
\be
\cE_n(\{\gama_i\};\tau_2)=
\frac{(-1)^{\sum_{i<j} \gamma_{ij} }}{(\sqrt{2\tau_2})^{n-1}}\, \cEPhi_n(\bfx),
\label{rescEn}
\ee
where
\be
\cEPhi_n(\bfx)=
\frac{1}{2^{n-1} n!}\sum_{m=0}^{[(n-1)/2]}
\sum_{\cT\in\, \IT_{n-2m,m}^\ell}
\[\prod_{\ver\in V_\cT}\cD_{m_\ver}(\{\gama_{\ver,s}\})\]
\tPhi^E_{n-2m-1}(\{ \bfv_e\}, \{\bfv_{s(e) t(e)}\};\bfx).
\label{fullker}
\ee
In words, the function $\cE_n$
is a sum over marked unrooted labelled trees of solutions
of Vign\'eras equation with $\lambda=n-1$, obtained from the standard
generalized error functions $\Phi^E_{n-2m-1}$ by acting with $n-1$ raising operators.

\subsection{The structure of the completion}

After substituting into the iterative ansatz \eqref{iterDn},
the two results \eqref{defDf-gen} and \eqref{fullker} completely specify the coefficients
of the expansion \eqref{multihd-full} of the generating function of DT invariants in terms of $\whh_{p,\mu}$.
The result of the iteration can in fact be written explicitly as a sum over  Schr\"oder trees.
Adopting the same notations as in Proposition \ref{prop-iterwhg},
one has
\begin{proposition}\label{prop-reswhg}
The function $\whg_n(\{\gama_i,c_i\})$ is given by the sum over Schr\"oder trees with $n$
leaves,
\be
\whg_n= \Sym\left\{\sum_{T\in\IT_n^{\rm S}}(-1)^{n_T-1} \(\gf_{v_0}-\cE_{v_0}\)\prod_{v\in V_T\setminus{\{v_0\}}}\cE_{v}\right\},
\label{soliterg}
\ee
where $v_0$ is the root of the tree.
\end{proposition}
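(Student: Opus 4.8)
The plan is to recognise the recursion \eqref{iterDn} as a statement about functional composition and to solve it by power-series reversion, the reversion being organised precisely by Schr\"oder trees. On sequences of charge functions $A=(A_n)$ (with $A_1=1$) I would introduce the ordered-partition composition, using the grouping notation of \eqref{groupindex},
\be
(A\circ B)_n=\Sym\Biggl\{\sum_{n_1+\cdots+n_m=n\atop n_k\ge1}A_m(\{\gama'_k,c'_k\})\prod_{k=1}^m B_{n_k}(\gama_{j_{k-1}+1},\dots,\gama_{j_k})\Biggr\},
\ee
with unit $\delta$ given by $\delta_1=1$, $\delta_{n>1}=0$. Since $\gf_1=\cE_1=1$, the first step is to separate the $m=n$ term in \eqref{iterDn} (for which $\prod_k\cE_1=1$, reproducing $\whg_n$ itself) and thereby rewrite the entire recursion as the single identity $\whg\circ\cE=\gf$, so that $\whg=\gf\circ\cE^{\circ-1}$.

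Next I would identify the inverse $\cE^{\circ-1}$ with the signed Schr\"oder sum $\cW_n:=\sum_{T\in\IT_n^{\rm S}}(-1)^{n_T}\prod_{v\in V_T}\cE_v$, where each vertex carries the $\cE$ factor dictated by its children and the charges are grouped along the tree. This is the classical statement that the reversion of $x+\sum_{k\ge2}\cE_k x^k$ is enumerated by Schr\"oder trees; concretely, decomposing each $T\in\IT_n^{\rm S}$ at its root (which for $n\ge2$ has $m\ge2$ children heading ordered subtrees $T_k$ on $n_k$ leaves) and reading off $(-1)^{n_T}=-\prod_k(-1)^{n_{T_k}}$ gives
\be
\cW_n=-\Sym\Biggl\{\sum_{m=2}^n\sum_{n_1+\cdots+n_m=n\atop n_k\ge1}\cE_m(\{\gama'_k\})\prod_{k=1}^m\cW_{n_k}\Biggr\}\qquad(n\ge2),
\ee
which is exactly $\cE\circ\cW=\delta$, i.e. $\cW=\cE^{\circ-1}$.

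To conclude I would show that the right-hand side of \eqref{soliterg}, call it $P_n$, equals $(\gf\circ\cW)_n=\whg_n$. Decomposing the Schr\"oder trees in $P_n$ at the root $v_0$ exactly as above collapses the internal vertices of each subtree into a factor $\cW_{n_k}$ and yields $P_n=\Sym\{\sum_{m\ge2}(\gf_m-\cE_m)\prod_k\cW_{n_k}\}$; splitting the root factor and applying $\cE\circ\cW=\delta$ to replace $\sum_{m\ge2}\cE_m\prod_k\cW_{n_k}$ by $-\cW_n$ restores precisely the missing $m=1$ term $\gf_1\cW_n=\cW_n$ of the composition, giving $P_n=\sum_{m\ge1}\gf_m\prod_k\cW_{n_k}=\whg_n$ (equivalently, one checks $P\circ\cE=\gf$ and invokes uniqueness of the solution of \eqref{iterDn}). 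The genuinely delicate point is bookkeeping rather than topology: one must verify that the reversion identity and the final recombination survive the symmetrization $\Sym$ together with the charge grouping $\gama'_k=\sum_i\gama_i$ and the induced stability parameters $c'_k$, so that the scalar power-series reasoning (associativity of $\circ$ and the two-sided inverse property of $\cW$) lifts verbatim to these charge-dependent, only partially symmetric sequences. This, together with the careful tracking of the signs $(-1)^{n_T}$ under root decomposition, is where the main effort lies.
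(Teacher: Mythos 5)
Your proof is correct, and it takes essentially the same route as the paper: the paper's own proof of this proposition consists of the single remark that it ``trivially follows from \eqref{iterDn}'', i.e.\ from iterating that recursion, and your compositional-inverse formalism is exactly that iteration made precise — rewriting \eqref{iterDn} as $\gf=\whg\circ\cE$, identifying the signed Schr\"oder-tree sum as the reversion $\cE^{\circ-1}$ via decomposition at the root, and computing $\gf\circ\cE^{\circ-1}$ the same way. The one point you defer to ``bookkeeping'' — associativity of the ordered-partition composition under $\Sym$ and charge grouping, or equivalently uniqueness of the solution of the triangular recursion \eqref{iterDn} — is indeed routine, being precisely the kind of manipulation of nested composition sums that the paper carries out explicitly in its proofs of Propositions \ref{prop-iterwhg} and \ref{prop-resR}.
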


We conclude that the properties of $\gf_n$ and $\cE_n$ ensure that the theta series appearing
in the corresponding decomposition of $\cG$ is modular so that
the modularity of $\cG$ requires the function $\whh_{p,\mu}$ to be a vector valued (non-holomorphic)
modular form of weight $(-\hf b_2-1,0)$. The functions $R_n$ entering the
non-holomorphic  completion $\whh_{p,\mu}$,
are then given by the following proposition, whose
proof can again be found in appendix~\ref{ap-proofs}:

\begin{proposition}\label{prop-resR}
Inverting the relations \eqref{defSm} and \eqref{Emrec}, one obtains\footnote{In the proof of this proposition
we obtain a similar formula \eqref{solWn} for the coefficients $W_n$ appearing in the inverse relation \eqref{exp-hwh}.}
\be
R_n= \Sym\left\{\sum_{T\in\IT_n^{\rm S}}(-1)^{n_T-1} \cEp_{v_0}\prod_{v\in V_T\setminus{\{v_0\}}}\cEf_{v}\right\}.
\label{solRn}
\ee
\end{proposition}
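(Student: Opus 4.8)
The plan is to recast the two input identities \eqref{defSm} and \eqref{Emrec} as statements about a single composition operation on sequences of functions, and then perform a short algebraic inversion. For two sequences $A=\{A_n\}$, $B=\{B_n\}$ of functions of $n$ charges, with $B_1=1$, I define the plethystic composition
\be
(A\circ B)_n(\{\gama_i\}) = \sum_{m\ge 1}\Sym\sum_{n_1+\cdots+n_m=n}A_m(\{\gama'_k\})\prod_{k=1}^m B_{n_k}(\gama_{j_{k-1}+1},\dots,\gama_{j_k}),
\ee
using the grouped charges $\gama'_k$ and indices $j_k$ of \eqref{groupindex}. This product is associative, has two-sided unit $\delta=(1,0,0,\dots)$, and is linear in its outer argument $A$. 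In this language \eqref{exp-whh} and \eqref{exp-hwh} read $\whh=\rho\circ h$ and $h=\omega\circ\whh$ with $\rho=(1,R_2,R_3,\dots)$ and $\omega=(1,W_2,W_3,\dots)$, so Proposition \ref{prop-iterwhg} (which states these are mutually inverse) gives $\omega=\rho^{-1}$; equivalently, \eqref{defSm} is precisely the signed Schr\"oder-tree expansion of the compositional inverse of $\delta+R$. Finally, retaining the $m=1$ term $\omega_1\cE_n=\cE_n$ in $\omega\circ\cE$ and using \eqref{Emrec} for the $m>1$ terms (with $W_1=1$), one finds $(\omega\circ\cE)_n=\cE_n-\cEp_n=\cEf_n$, so that \eqref{Emrec} is equivalent to $\omega\circ\cE=\cEf$.

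The heart of the argument is a one-line inversion. From $\omega\circ\cE=\cEf$ and associativity I get $\omega=\cEf\circ\cE^{-1}$, hence
\be
\rho=\omega^{-1}=\cE\circ\cEf^{-1}.
\ee
Writing $\cE=\cEf+\cEp$ and using linearity of $\circ$ in the outer slot together with $\cEf\circ\cEf^{-1}=\delta$,
\be
\rho=(\cEf+\cEp)\circ\cEf^{-1}=\delta+\cEp\circ\cEf^{-1},
\ee
so that $R=\rho-\delta=\cEp\circ\cEf^{-1}$, i.e. $R_n=(\cEp\circ\cEf^{-1})_n$ for $n\ge 2$.

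It remains to expand $\cEp\circ\cEf^{-1}$ as a sum over Schr\"oder trees. Since $\cEf=\delta+(\cEf-\delta)$ with $(\cEf-\delta)$ supported in orders $\ge 2$, its inverse has the same signed-tree form as \eqref{defSm},
\be
(\cEf^{-1})_n=\Sym\sum_{T\in\IT_n^{\rm S}}(-1)^{n_T}\prod_{v\in V_T}\cEf_v.
\ee
Composing with the outer $\cEp$ grafts a new root $v_0$, decorated by $\cEp_{v_0}$, above these trees; because $\cEp_1=0$ the root has at least two children, so the combined object is again a Schr\"oder tree with root carrying $\cEp$ and every other internal vertex carrying $\cEf$. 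For the sign, each $\cEf^{-1}$-subtree grafted below the root contributes $(-1)$ to the power of its number of internal vertices, and these internal vertices together with $v_0$ exhaust $V_T$, so the powers sum to $n_T-1$ and the total sign is $(-1)^{n_T-1}$. This is exactly \eqref{solRn}. The same manipulation applied to $\omega=\cEf\circ\cE^{-1}=(\cE-\cEp)\circ\cE^{-1}=\delta-\cEp\circ\cE^{-1}$ yields the companion expansion of $W_n$ recorded in the footnote.

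The step I expect to be the genuine obstacle is not this algebra but the careful justification of the composition calculus in the present setting: that $\circ$ really is associative and linear in the outer argument once the symmetrization $\Sym$ and the charge-grouping of \eqref{groupindex} are taken into account, and that the signed Schr\"oder-tree sum indeed computes the compositional inverse with the indicated sign convention. Much of this is already supplied by Proposition \ref{prop-iterwhg}; the residual bookkeeping---in particular matching the sign $(-1)^{n_T-1}$ and checking that the size-one parts $n_k=1$ graft correctly as bare leaves---is most safely confirmed by an induction on $n$, which would also furnish a self-contained proof independent of the formal composition language.
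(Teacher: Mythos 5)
Your proposal is correct, but it takes a genuinely different route from the paper's own proof, which verifies rather than derives: the paper substitutes the ansatz \eqref{solRn} into \eqref{defSm}, recombines the resulting sum over Schr\"oder trees decomposed into subtrees (pairing decompositions so that $\cEf_v+\cEp_v=\cE_v$ appears vertex by vertex) to obtain the companion formula \eqref{solWn} for $W_n$, and then checks that this $W_n$ obeys \eqref{Emrec} by the same tree-vs-splitting cancellation used to establish \eqref{hRh-full} in the proof of Proposition \ref{prop-iterwhg}. You instead encode \eqref{defSm} as the statement $\delta+W=(\delta+R)^{-1}$ in a composition algebra, encode \eqref{Emrec} as $(\delta+W)\circ\cE=\cEf$, and solve: $R=\cEp\circ\cEf^{-1}$, after which grafting the signed tree expansion of $\cEf^{-1}$ under a root weighted by $\cEp$ yields \eqref{solRn}, and the analogous manipulation of $\delta-\cEp\circ\cE^{-1}$ reproduces exactly \eqref{solWn}. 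What your route buys: it shows the formula is \emph{forced} by the two constraints (uniqueness of the solution), which the paper's verification leaves implicit, and it isolates the combinatorial core in a single reusable lemma. What it costs is precisely the bookkeeping you flag: associativity and outer-linearity of $\circ$ in the presence of $\Sym$ and charge grouping (the phase factors are harmless because $Q_n(\{\gama_i\})=Q_m(\{\gama'_k\})+\sum_k Q_{n_k}$, so they compose multiplicatively), and the general statement that the signed Schr\"oder-tree sum computes the compositional inverse of $\delta+S$ for an \emph{arbitrary} sequence $S$ supported in orders $\ge 2$ --- this last point is closed by observing that the paper's proof of Proposition \ref{prop-iterwhg} is purely formal in the coefficients $R_n$ and hence applies verbatim with $R$ replaced by $\cEf-\delta$ or $\cE-\delta$. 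Both arguments thus share the same underlying combinatorics; yours packages it once as an inversion principle, while the paper re-runs the cancellation argument in situ and gets \eqref{solWn} as an intermediate by-product rather than as a corollary.
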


It is important to check that the resulting functions $R_n$ are invariant under the spectral flow of the total charge.
This is in fact a simple consequence of the fact that the functions $\cE_n$ entering their definition
depend on the electric charges only through the DSZ products $\gamma_{ij}$.
As follows from \eqref{gamZZ}, all such products are invariant under the spectral
flow of the total charge, hence $R_n$ are invariant as well.


Since $\cEf_n$ are $\tau_2$-independent, all non-holomorphic dependence of $\whh_{p,\mu}$ comes from the factor $\cEp_{v_0}$
in the expression \eqref{solRn} for $R_n$. Expressing the holomorphic anomaly in terms of the modular functions $\whh_{p_i,\mu_i}$,
one obtains the following

\begin{proposition}\label{prop-derbtau}
The holomorphic anomaly of the completion is given by
\be
\p_{\bar\tau}\whh_{p,\mu}(\tau)= \sum_{n=2}^\infty
\sum_{\sum_{i=1}^n \gama_i=\gama}
\cJ_n(\{\gama_i\},\tau_2)
\, e^{\pi\I \tau Q_n(\{\gama_i\})}
\prod_{i=1}^n \whh_{p_i,\mu_i}(\tau),
\label{exp-derwh}
\ee
where
\be
\cJ_n= \frac{\I}{2}\Sym\left\{\sum_{T\in\IT_n^{\rm S}}(-1)^{n_T-1} \p_{\tau_2}\cE_{v_0}\prod_{v\in V_T\setminus{\{v_0\}}}\cE_{v}\right\}.
\label{solJn}
\ee
\end{proposition}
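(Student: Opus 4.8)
\emph{Proof proposal.} The plan is to differentiate the defining relation \eqref{exp-whh} of the completion and then re-expand the outcome in the completed functions $\whh_{p_i,\mu_i}$ by means of the inverse relation \eqref{exp-hwh}. First I would observe that in \eqref{exp-whh} the generating functions $h_{p_i,\mu_i}(\tau)$ and the phases $e^{\pi\I\tau Q_n}$ are holomorphic in $\tau$, hence annihilated by $\p_{\bar\tau}$, so that all anti-holomorphic dependence sits in the coefficients $R_n(\{\gama_i\},\tau_2)$, which depend on $\tau$ only through $\tau_2=\Im\tau$. Using $\p_{\bar\tau}\tau_2=\tfrac{\I}{2}$ one then gets
\be
\p_{\bar\tau}\whh_{p,\mu}=\frac{\I}{2}\sum_{n=2}^\infty\sum_{\sum_{i=1}^n\gama_i=\gama}\bigl(\p_{\tau_2}R_n\bigr)\,e^{\pi\I\tau Q_n(\{\gama_i\})}\prod_{i=1}^n h_{p_i,\mu_i}(\tau).
\ee

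Next I would localise the $\tau_2$-derivative using the explicit form \eqref{solRn}. By Proposition \ref{prop-solvecond} the functions $\cEf_v$ attached to the non-root vertices are independent of $\tau_2$, so $\p_{\tau_2}$ acts only on the root factor $\cEp_{v_0}$; moreover $\p_{\tau_2}\cEp_{v_0}=\p_{\tau_2}\cE_{v_0}$ since $\cE_{v_0}=\cEf_{v_0}+\cEp_{v_0}$ with $\cEf_{v_0}$ constant in $\tau_2$. Hence
\be
\p_{\tau_2}R_n=\Sym\left\{\sum_{T\in\IT_n^{\rm S}}(-1)^{n_T-1}\,\bigl(\p_{\tau_2}\cE_{v_0}\bigr)\prod_{v\in V_T\setminus\{v_0\}}\cEf_{v}\right\}.
\ee
The root factor already coincides with that of the claimed answer \eqref{solJn}, so the remaining task is to show that trading the $h_{p_i,\mu_i}$ for $\whh_{p_i,\mu_i}$ promotes every non-root factor $\cEf_v$ to the full $\cE_v=\cEf_v+\cEp_v$.

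To carry this out I would substitute \eqref{exp-hwh}, so that the right-hand side becomes the composition of the map with coefficients $\p_{\tau_2}R_n$ and the inverse map \eqref{exp-hwh} with coefficients $W_n$. The key algebraic input is the identity $\cEf=W\circ\cE$, namely
\be
\cEf_n(\{\gama_i\})=\Sym\left\{\sum_{n_1+\cdots+n_m=n\atop n_k\ge1}W_m(\{\gama'_k\})\prod_{k=1}^m\cE_{n_k}(\gama_{j_{k-1}+1},\dots,\gama_{j_k})\right\},
\ee
which follows from \eqref{Emrec} by restoring the $m=1$ term $\cE_n$ (with the convention $W_1=1$) and using $\cEf_n=\cE_n-\cEp_n$. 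It expresses the non-decreasing functions $\cEf$ as the $W$-composition of the full functions $\cE$, so that substituting it and composing with \eqref{exp-hwh} replaces every non-root factor $\cEf_v$ by $\cE_v$ while leaving the root factor $\p_{\tau_2}\cE_{v_0}$ untouched, which is exactly the content of \eqref{solJn}. In the same step I would check that the phases recombine correctly: under the regrouping of charges dictated by \eqref{exp-hwh} the quadratic form $Q_n$ \eqref{defQlr} is additive over the tree, precisely as in the passage from \eqref{multih} to \eqref{multihd-full}, so the accumulated phase is again $e^{\pi\I\tau Q_n(\{\gama_i\})}$ with $\gama=\sum_i\gama_i$.

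The hard part will be making this reorganisation rigorous: one must prove that composing the Schröder sum with coefficients $\p_{\tau_2}R_n$ with the inverse map \eqref{exp-hwh} collapses onto the single Schröder sum \eqref{solJn}, with the signs $(-1)^{n_T-1}$ correctly reproduced. I would establish this by induction on $n$, stripping off the root vertex $v_0$ (whose decoration $\p_{\tau_2}\cE_{v_0}$ is inert under the substitution) and applying the identity $\cEf=W\circ\cE$ and the induction hypothesis to the forest hanging below the root; the sign bookkeeping against \eqref{defSm}, together with the mutual-inverse property of \eqref{exp-whh} and \eqref{exp-hwh} established in Proposition \ref{prop-iterwhg}, is the only delicate point. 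The cases $n=2$ (where $R_2=\cEp_2$ gives $\cJ_2=\tfrac{\I}{2}\p_{\tau_2}\cE_2$) and $n=3$ (where $W_2=-\cEp_2$ makes the upgrade $\cEf_2\to\cE_2$ transparent) already exhibit the full mechanism.
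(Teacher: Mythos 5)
Your proposal is correct, and its first two steps are exactly those of the paper: differentiate \eqref{exp-whh} using holomorphy of the $h_{p_i,\mu_i}$ and of the phases (so only $\p_{\tau_2}R_n$ survives and, by $\tau_2$-independence of the $\cEf_v$, only the root factor is hit, with $\p_{\tau_2}\cEp_{v_0}=\p_{\tau_2}\cE_{v_0}$), then substitute the inverse relation \eqref{exp-hwh} with $W_n$ given by \eqref{solWn}. Where you genuinely differ is in the final collapse. The paper expands everything into the nested sum \eqref{inter-cJ} over Schr\"oder trees $T$ equipped with root-containing subtrees $T'\subseteq T$ (vertices of $T'$ carry $\cEf$, leaves of $T'$ carry $\cEp$, the remaining vertices carry $\cE$) and then pairs configurations vertex by vertex, using $\cEf_v+\cEp_v=\cE_v$ --- the same pairing mechanism that proves Proposition \ref{prop-resR}. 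You instead promote \eqref{Emrec} to the compositional identity $\cEf=W\circ\cE$ (your derivation, restoring the $m=1$ term with $W_1=1$, is correct) and work in the algebra of substitution maps: stripping the root writes $\p_{\tau_2}R$ as the sequence $(0,\p_{\tau_2}\cE_2,\p_{\tau_2}\cE_3,\dots)$ composed with the all-$\cEf$ Schr\"oder sum, which by the inversion formula equals the compositional inverse $\cEf^{-1}$; then $W=\cEf\circ\cE^{-1}$ gives $\cEf^{-1}\circ W=\cE^{-1}$, so $(\p_{\tau_2}R)\circ W=(0,\p_{\tau_2}\cE_2,\dots)\circ\cE^{-1}$, which is precisely \eqref{solJn}. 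This route buys conceptual economy --- Proposition \ref{prop-iterwhg} and \eqref{Emrec} are used as black boxes instead of re-running the pairing combinatorics --- but it requires one input you should state explicitly: the Schr\"oder-sum inversion formula (sign $(-1)^{n_T}$, all vertices decorated by one and the same sequence) must hold for the sequences $\cEf$ and $\cE$, not only for $R$; this is indeed true because the paper's proof of Proposition \ref{prop-iterwhg} is insensitive to what the functions $R_n$ actually are. Your $n=2,3$ checks are correct and already exhibit the full mechanism.
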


Note that this result is consistent with the fact that $\tau_2^2\p_{\bar\tau}\whh_{p,\mu}$ is a modular function of weight $(-\hf b_2-3,0)$.
Indeed, the sum over D2-brane charges forms a theta series for a lattice of dimension $(n-1)b_2$ (see footnote \ref{foot-D2specflow})
and  quadratic form $-Q_n$ of signature $(n-1,(n-1)(b_2-1))$.
Furthermore, since
\be
2\tau_2\p_{\tau_2}\cE_n=
\frac{(-1)^{\sum_{i<j} \gamma_{ij} }}{(\sqrt{2\tau_2})^{n-1}}\(\bfx\cdot\p_\bfx-(n-1)\) \cEPhi_n(\bfx)
\ee
and $V_{\lambda-2}\(\bfx\cdot\p_\bfx-\lambda\)= \(\bfx\cdot\p_\bfx-(\lambda-2)\)V_\lambda$,
the function $\tau_2\cJ_n$ is a solution of Vign\'eras' equation with $\lambda=n-3$.
Thus, after multiplying Eq. \eqref{exp-derwh} by $\tau_2^2$, the theta series generated by the sum over D2-brane charges
is modular of weight $(\frac{n-1}{2}\, b_2+n-3,0)$.
Combining it with $n$ factors of $\whh_{p_i,\mu_i}$, one recovers the correct weight
for $\tau_2^2\p_{\bar\tau}\whh_{p,\mu}$.


In appendix \ref{ap-explicit} we present explicit expressions for various elements of our construction
up to order $n=4$. Based on these results, we conjecture a general formula for the kernel $\whPhi^{{\rm tot}}_n$ of the theta series appearing
in the expansion of $\cG$ in terms of $\whh_{p,\mu}$:
\begin{conj}
The instanton generating function has the theta series decomposition
\be
\cG=\sum_{n=1}^\infty\frac{2^{-\frac{n}{2}}}{\pi\sqrt{2\tau_2}}\[\prod_{i=1}^{n}
\sum_{p_i,\mu_i}\sigma_{p_i}\whh_{p_i,\mu_i}\]
e^{-S^{\rm cl}_p}\vartheta_{\bfp,\bfmu}\bigl(\whPhi^{{\rm tot}}_{n},n-2\bigr),
\label{treeFh-flh}
\ee
where the kernels of the theta series are given by
\be
\whPhi^{{\rm tot}}_n= \intPhi_1\Sym\left\{\sum_{T\in\IT_n^{\rm S}}(-1)^{n_T-1}
\(\tcEPhi_{v_0}-\cEPhi_{v_0}\)\prod_{v\in V_T\setminus{\{v_0\}}}\cEPhi_{v}\right\}.
\label{kertotsol}
\ee
\end{conj}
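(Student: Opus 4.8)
The plan is to obtain the completed decomposition \eqref{treeFh-flh} directly from the unfactorised expansion \eqref{treeFh} of $\cG$, by inserting the expansion \eqref{multihd-full} of $h^{\rm DT}_{p,q}$ in powers of $\whh_{p_i,\mu_i}$ in place of the MSW expansion \eqref{multih}. Since $\cG_n$ is untouched and the coefficients $\whg_n$ play exactly the role previously played by the tree index $\gtr$, repeating the factorisation that led from \eqref{treeFh} to \eqref{treeFh-fl} --- now with $\gPhi_{n_k}$ everywhere replaced by the completed kernels $\whgPhi_{n_k}$ of \eqref{kerPhiinth} --- would produce \eqref{treeFh-flh} with a kernel of the convolution form \eqref{totker}, namely $\Sym\bigl\{\sum_{n_1+\cdots+n_m=n}\intPhi_m(\bfx')\prod_{k=1}^m\whgPhi_{n_k}\bigr\}$. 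The Conjecture then reduces to the kernel identity asserting that this convolution collapses to the single-Schr\"oder-tree expression \eqref{kertotsol}.

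\textbf{Factoring out the Gaussian.} My first move would be to pull out the factor $\intPhi_1$. By \eqref{Phin-final} each $\intPhi_m(\bfx')$ factorises as $\intPhi_1$ times a sum over unrooted labelled trees of $\tPhi^M_{m-1}$, and the Gaussian $\intPhi_1$ of \eqref{defPhi1-main} depends only on the projection of the charge along the total-charge direction $\bft$, which is orthogonal to all the vectors $\bfu_e$ and $\bfv_{s(e)t(e)}$ entering the remaining factors. Since the total charge is common to every ordered partition, $\intPhi_1$ factors out of the whole sum, and the claim reduces to the purely error-function statement that the same convolution, with $\intPhi_m$ replaced by its Gaussian-free part $\intPhi_m/\intPhi_1$, reproduces the Schr\"oder-tree sum $\whgPhi_n$ of Proposition \ref{prop-reswhg} built from $\tcEPhi$ and $\cEPhi$.

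\textbf{The regluing step and the main obstacle.} The heart of the argument would be to show that gluing the $\tPhi^M_{m-1}$ kernels to the $m$ completed subindices $\whgPhi_{n_k}$ --- which are themselves sums over Schr\"oder trees with building blocks $\tcEPhi$ and $\cEPhi$ drawn from the smooth Vign\'eras solutions \eqref{fullker-mod} and \eqref{fullker} --- rebuilds a single Schr\"oder tree on $n$ leaves. The bridge I would exploit is the fact, established in appendices \ref{ap-twistint} and \ref{ap-generror}, that $\tPhi^M_{m-1}$ is the fastest-decaying boundary stratum of the smooth solution $\tPhi^E_{m-1}$ out of which $\tcEPhi$ and $\cEPhi$ are assembled; expanding $\tPhi^E$ into its boundary pieces along the $m-1$ cuts should let one reabsorb each $\tPhi^M_{m-1}$ factor together with the roots of the $m$ subtrees into the vertices of an enlarged tree. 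I expect the main difficulty to lie in the combinatorics of this reorganisation --- reconciling the sum over ordered partitions of $n$, the internal sums over marked unrooted labelled trees inside each $\cEPhi$ and $\tcEPhi$, and the overall Schr\"oder structure into a single sum over $\IT_n^{\rm S}$. This is precisely where I anticipate the tree-factorial weights $1/T!$ and the subtree identity of Theorem \ref{theorem} to enter, exactly as in the proofs of Propositions \ref{prop-limitcompl} and \ref{prop-solveiterg0}; the most delicate point will be to track the $\tau_2$-dependent cross terms produced by the raising operators $\cD(\bfv_{s(e)t(e)})$ and to verify that they cancel between the glued pieces, leaving a kernel of the correct homogeneity and hence the $\tau_2$-independent Schr\"oder form.

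\textbf{An alternative route and the base of an induction.} As a cleaner, if less explicit, fallback I would argue by uniqueness. Since $\cG$ is modular of weight $(-\tfrac32,\tfrac12)$ and the $\whh_{p_i,\mu_i}$ are modular, the decomposition of $\cG$ into theta series times these generating functions is essentially unique up to its kernel, so it would suffice to verify that \eqref{kertotsol} solves Vign\'eras' equation with $\lambda=n-2$ --- which follows because $\intPhi_1$, $\tcEPhi$ and $\cEPhi$ are individually smooth solutions and $\bft$ is orthogonal to the remaining defining vectors, so that their orders combine to $\lambda=n-2$ --- and that it carries exactly the wall-of-marginal-stability discontinuities needed to cancel those of the $\intPhi_m$, as encoded in \eqref{wanted} and Proposition \ref{prop-solvecond}. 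Combined with the spectral-flow invariance guaranteed by Proposition \ref{prop-resR}, these properties would pin the kernel down. In practice I would first check the identity explicitly for $n\leq 4$ against appendix \ref{ap-explicit}, from which the Conjecture was extracted, and then try to promote this to an induction on $n$ organised by the recursion \eqref{iterDn}.
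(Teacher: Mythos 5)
The first thing to note is that this statement is a \emph{Conjecture} in the paper: the authors do not prove it for general $n$. Their entire support consists of the explicit computations of $\whPhi^{\rm tot}_n$ for $n\le 4$ in appendix \ref{ap-explicit} --- which start precisely from the convolution form $\Sym\bigl\{\sum_{n_1+\cdots+n_m=n}\intPhi_m(\bfx')\prod_{k}\whgPhi_{n_k}\bigr\}$ that you write down --- together with the closing remark that the kernel \eqref{kertotsol} manifestly solves Vign\'eras' equation, so that modularity of the theta series is automatic. Your reduction of the Conjecture to a kernel identity (convolution form equals Schr\"oder-tree form) is therefore exactly right, and your fallback route (verify $n\le 4$ against appendix \ref{ap-explicit}, observe the Vign\'eras property of \eqref{kertotsol}) reproduces, rather than falls short of, the justification the paper itself offers. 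Your more ambitious regluing strategy, which would constitute an actual proof, is left as an outline with the hard combinatorial step --- reabsorbing the $\tPhi^M_{m-1}$ strata and the discontinuous, moduli-dependent $\gfPhi$ factors into the smooth $\tcEPhi$ at the root of an enlarged Schr\"oder tree --- unexecuted; but the paper leaves exactly this step open as well.

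One point to tighten: your uniqueness fallback is phrased as uniqueness of the decomposition of $\cG$, which is not the right principle here --- the kernel of the $\whh$-expansion of $\cG$ \emph{is} the convolution form by construction (this is how section \ref{subsec-complh} defines it), so nothing about $\cG$ itself needs to be invoked, and the map from kernels to theta series plays no role. What you actually need is the principle the paper uses elsewhere (e.g.\ in the footnote comparing \eqref{fullker-alt} with \eqref{fullker}): two solutions of Vign\'eras' equation with the same large-$\bfx$ asymptotics coincide. The convolution-form kernel solves Vign\'eras' equation because the anomaly-cancellation condition \eqref{wanted} was built into $\whg_n$; the Schr\"oder-form kernel \eqref{kertotsol} solves it manifestly, with the orders combining to $\lambda=n-2$ exactly as you say; so the identity would follow once the two asymptotics are matched, which is a purely combinatorial statement about $\gf_n$, $G_n$ and the Schr\"oder sums in the spirit of Proposition \ref{prop-treeindex}. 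Recast in this form, your induction on $n$ would have a genuine chance of closing the gap that the paper leaves open.
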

\noindent
Note that this result automatically implies that the theta series
$\vartheta_{\bfp,\bfmu}\bigl(\whPhi^{{\rm tot}}_{n},n-2\bigr)$ is modular, since its kernel
is a solution of Vign\'eras' equation.

\medskip

We end the discussion of these results by observing that the formula \eqref{soliterg} allows to
obtain a new representation of the tree index $\gtr$. Indeed, as
observed at the end of Subsection \ref{subsec-complh}, $\gtr$ agrees with
$\whg_n$ in  the limit $\tau_2\to\infty$, which amounts to replacing
$\cE_{v}$ by its non-decaying part $\cEf_{v}$ in \eqref{soliterg}. In fact,
as shown in appendix \ref{ap-proofs}, all contributions due to trees with non-zero number of marks
in \eqref{defDf-gen} and \eqref{defEn0new} cancel in the resulting expression, leaving only the contributions coming from the terms with $m=0$.
Thus, one arrives at the following representation:
\begin{proposition}\label{prop-treeindex}
The tree index $\gtr$ defined in \eqref{defgtree} can be expressed as
\be
\begin{split}
\gtr
=&\, \Sym\left\{\sum_{T\in\IT_n^{\rm S}}(-1)^{n_T-1} \(\gs_{v_0}-\cEs_{v_0}\)\prod_{v\in V_T\setminus{\{v_0\}}}\cEs_{v}\right\},
\end{split}
\label{soliterg-tr}
\ee
where $\gs_n$ is defined in \eqref{defgf0} and $\cEs_n=\gs_n(\{\gama_i,\beta_{ni}\})$.
\end{proposition}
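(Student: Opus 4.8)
The plan is to derive \eqref{soliterg-tr} from the Schr\"oder-tree formula \eqref{soliterg} of Proposition \ref{prop-reswhg} by letting $\tau_2\to\infty$. As noted below \eqref{iterDn0}, comparison of the two expansions \eqref{multihd-full} and \eqref{multih} --- together with the exponential suppression of the $R_n$ in \eqref{exp-whh}, so that $\whh_{p,\mu}\to h_{p,\mu}$ --- forces $\gtr=\lim_{\tau_2\to\infty}\whg_n$. Writing $\cE_v=\cEf_v+\cEp_v$ with $\cEp_v$ exponentially suppressed, and recalling that $\gf_n$ and $\cEf_n$ are themselves $\tau_2$-independent, this limit turns each non-root factor into $\cEf_v$ and the root factor into $\gf_{v_0}-\cEf_{v_0}$, so that
\be
\gtr=\Sym\Bigl\{\sum_{T\in\IT_n^{\rm S}}(-1)^{n_T-1}\(\gf_{v_0}-\cEf_{v_0}\)\prod_{v\in V_T\setminus\{v_0\}}\cEf_v\Bigr\}.
\label{ppstep}
\ee
Comparing \eqref{ppstep} with the target \eqref{soliterg-tr}, it remains only to show that the marked ($m\ge1$) parts of $\gf_n$ \eqref{defDf-gen} and of $\cEf_n$ \eqref{defEn0new} drop out, i.e. that the Schr\"oder-tree functional is unchanged when $\gf_{v_0},\cEf_v$ are replaced by their unmarked $m=0$ pieces $\gs_{v_0}$ \eqref{defgf0} and $\cEs_v=\gs_v(\{\gama_i,\beta_{ni}\})$.

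\textbf{Reduction to a combinatorial cancellation.} I would grade by the number of marks, writing $\gf_n=\gs_n+\delta\gf_n$ and $\cEf_n=\cEs_n+\delta\cEf_n$, where $\delta\gf_n,\delta\cEf_n$ collect the $m\ge1$ terms of \eqref{defDf-gen}, \eqref{defEn0new}. Substituting into \eqref{ppstep} and expanding the product, each resulting term is indexed by a Schr\"oder tree decorated with a marking of its vertices, and the sum organizes by total mark number; the goal is to prove that everything with at least one mark cancels. The key structural fact is that a vertex carrying $m_v$ marks contributes the weight $\tcV_{m_v}$ \eqref{deftcV-mw}, which is itself a sum over labelled trees in $\IT_{2m_v+1}^\ell$ with the coefficients $a_{\cT}$ of Proposition \ref{prop-coeff}. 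I would use this to \emph{graft} those small trees in place of the marked vertices, rewriting every marked contribution as a signed sum over genuine unmarked labelled trees on the full set of $n$ vertices. Throughout I would carry the substitution $c_i\mapsto\beta_{ni}$, under which $\cs_e\mapsto\Gamma_e$, so that the root factor (built from $\gf$) and the non-root factors (built from $\cEf$) blow up by one and the same rule.

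\textbf{The cancellation and the r\^ole of Theorem \ref{theorem}.} After grafting, a given unmarked labelled tree on $n$ vertices is produced by many pairs (host marked tree, blow-up data), each corresponding to a choice of collapsible rooted sub-structure. The net multiplicity of that tree is a signed, $a_{\cT}$-weighted sum over such choices, which I expect to organize into a sum over rooted subtrees with a common root --- exactly the object evaluated by \eqref{combident}. Since the recursion \eqref{res-aT} for $a_{\cT}$, and the weights $\cP_m$ \eqref{defPver-tree}, carry tree-factorial normalizations \eqref{defcT}, Theorem \ref{theorem} is what should resum these into binomial coefficients; combined with the sign $(-1)^{n_T-1}$ of \eqref{ppstep}, which changes as vertices are blown up, these binomials should collapse so that every marked contribution cancels, leaving precisely the unmarked $\gs,\cEs$ terms and hence \eqref{soliterg-tr}. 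I would run this as an induction on $n$, so that the grafting identity needed at the top vertex can invoke the lower-order cases already established.

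\textbf{Main obstacle.} The hard part is the bookkeeping that identifies the signed $a_{\cT}$-weighted sum over blow-ups with the subtree sum of Theorem \ref{theorem}, including the precise normalization $n_v(T)/n_v(T')$. This requires a clean dictionary between the orientation-dependent recursion \eqref{res-aT} for $a_{\cT}$ (with its signs $\epsilon_\ver$), the tree factorial \eqref{defcT}, and the subtree enumeration in \eqref{combident}, and a verification that the $\Sym$ symmetrization and the edge orientations conspire to leave a clean (alternating) binomial sum. Proving that Theorem \ref{theorem} is \emph{exactly} what makes the $m\ge1$ contributions vanish --- and not merely that it is consistent with the low-order checks $n\le4$ of appendix \ref{ap-explicit} --- is the crux, and is precisely the reason the identity \eqref{combident} is isolated and proved on its own.
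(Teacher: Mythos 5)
Your first step coincides with the paper's: taking $\tau_2\to\infty$ in \eqref{soliterg}, using $\gtr=\lim_{\tau_2\to\infty}\whg_n$ and the $\tau_2$-independence of $\gf_n$ and $\cEf_n$, correctly reduces the claim to showing that all marked-tree ($m\geq 1$) contributions to \eqref{defDf-gen} and \eqref{defEn0new} cancel in the Schr\"oder-tree sum. The gap is in how you propose to get that cancellation. The paper's proof is a direct pairwise cancellation that needs no multiplicity resummation: given a Schr\"oder tree $T$ and, at one of its vertices $v$, a marked tree $\cT$ with more than one vertex containing a vertex $\ver$ with $m_\ver>0$ marks, one pairs this contribution with the one from the Schr\"oder tree obtained by adding a new child $v'$ of $v$ collecting the $2m_\ver+1$ charges of $\ver$, where $v'$ carries the trivial one-vertex tree with $m_\ver$ marks and $v$ carries the same $\cT$ with $\ver$ now unmarked. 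The two contributions agree in absolute value --- the weight $\tcV_{m_\ver}(\{\gama_{\ver,s}\})$ is literally the same factor in both, and every edge and sign factor attached to $\ver$ depends only on the total charge $\gamma_\ver$ --- while $(-1)^{n_T}$ flips because the new tree has one more vertex. The only leftovers are trivial (single-vertex) marked trees at the root, and these cancel identically inside $\gf_{v_0}-\cEf_{v_0}$, because a one-vertex tree has no edges, hence no $c_i$-dependence, so it contributes equally to $\gf_n$ and to $\cEf_n=\gf_n(\{\gama_i,\beta_{ni}\})$.

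Your route misses this and replaces it by a claim you never establish. Two concrete problems. First, expanding $\tcV_{m_\ver}$ via \eqref{deftcV-mw} and grafting its constituent labelled trees into the host tree produces edges carrying bare factors $\gamma_{s(e)t(e)}$ with no accompanying $\sign(\cs_e)$ or $\sign(\Gamma_e)$, so the grafted objects are not of the form of the unmarked contributions \eqref{defgf0}; your ``net multiplicity per unmarked labelled tree'' is not well posed without tracking which edges carry signs, and the cancellation cannot be organized that way. Second, Theorem \ref{theorem} is not the engine of this proposition: the subtree sums with ratios $n_v(T)/n_v(T')$ and tree factorials drive the proof of Proposition \ref{prop-limitcompl} (the identity $\cVt_m=\cV_m$, where an alternating binomial sum $(1-1)^k=0$ genuinely appears), whereas here the paired contributions are equal term by term, so at most a trivial alternating count over subsets of extractable marked vertices occurs --- never the weighted subtree resummation of \eqref{combident}. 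Since the step you defer (``I expect to organize'', ``should collapse'') is exactly the crux, and the tool you expect to close it with is the wrong one, the proposal does not amount to a proof.
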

\noindent
Note that this representation is more explicit than the one given in section \ref{subsec-partialindex}
since it does not require taking the limit $y\to 1$.

\medskip

In fact, the mechanism by which contributions of marked trees cancel in the sum over Schr\"oder trees,
explained in the proof of Proposition \ref{prop-treeindex}, is very general and
applies just as well to all the above equations \eqref{soliterg}-\eqref{kertotsol}.
As a result, all of them remain valid if we replace $\tcEPhi_n$ and $\cEPhi_n$, respectively, by
$\tPhi_n$ \eqref{fullkeru} and by its cousin with the vectors $\{\bfu_e\}$ replaced by $\{\bfv_e\}$,
\be
\Phi_n(\bfx)=
\frac{1}{2^{n-1} n!}\sum_{\cT\in\, \IT_n^\ell}
\tPhi^E_{n-1}(\{ \bfv_e\}, \{\bfv_{s(e) t(e)}\};\bfx).
\label{fullkerv}
\ee
Correspondingly, $\gf_n$ and $\cEf_n$ should then be replaced by their asymptotics
at large $\tau_2$, given by $G_n(\{\gama_i,c_i\})$ and $G_n(\{\gama_i,\beta_{ni}\})$.
The proof that such replacement is possible is completely analogous to the proof of Proposition \ref{prop-treeindex},
and we refrain from presenting it.
This shows that from the very beginning we could take the function $G_n$ as the ansatz for $\gf_n$
since its $\tau_2$-dependence  cancels in the formula for the tree index.
We could then avoid most of complications of section \ref{subsec-fullcompl}.
However, we cannot avoid the introduction of marked trees since they appear in the expression for $G_n$ \eqref{asymp-compl} anyway.
Moreover, we prefer to stick to the definition \eqref{defDf-gen} because of two reasons.
Firstly, due to its $\tau_2$-independence, $\gf_n$ might itself be interpretable as an index.
Secondly, as we discuss in the next subsection, it is related to other potentially interesting representations,
which might be at the basis of such interpretation.

\subsection{Alternative representations}
\label{subsec-refine}

The construction presented above provides an explicit expression for the completion $\whh_{p,\mu}$ and all other relevant quantities.
Roughly, it can be split into three levels:
\begin{enumerate}
\item
sum over Schr\"oder trees \eqref{solRn};
\item
sum over marked unrooted labelled trees \eqref{fullker};
\item
sum over unrooted labelled trees defining the weights associated to marks \eqref{defcDcT}.
\end{enumerate}
The complication due to the appearance of marks and hence the last level arises due to
the non-orthogonality of the vectors $\bfv_{s(e) t(e)}$ appearing as arguments
of the generalized error functions uplifted to solutions of Vign\'eras' equation for $\lambda=n-1$.
However, the two sums over unrooted trees are organized is such a way that all additional contributions
due to this non-orthogonality cancel. This suggests that there should exist a simpler representation
where the vectors appearing in the second argument of functions $\tPhi^E_{n-1}(\cV, \tcV;\bfx)$
are mutually orthogonal from the very beginning.
Below we present some results showing that such a representation does exist at least for
low values of $n$. We then move on to present yet another representation of the functions
$\gf_n$ which is significantly simpler than \eqref{defDf-gen}, although its equivalence
with the latter remains conjectural.

\subsubsection{Simplified representation via flow trees}

We have shown, analytically for $n=2,3,4$ (see appendix \ref{ap-explicit}) and numerically for $n=5$,
that the function \eqref{defDf-gen} can be rewritten as
\be
\gf_n(\{\gama_i,c_i\})
= \Sym\left\{\frac{d_n}{2^{n-1}}\sum_{T\in \IP_n}\kappa(T) \prod_{k=1}^{n-1}\sgn(\cs_k)\right\}
\qquad (n\le 5),
\label{gfn-upto5}
\ee
where $\kappa(T)$ is the factor defined in \eqref{kappaT0},
whereas $d_n$ and $\IP_n\subset\IT_n^{\rm af}$ are numerical coefficients and subsets of
flow trees with $n$ leaves, respectively, which can be chosen as follows:
\be
\begin{split}
d_1=&1,
\qquad
d_2=\hf\, ,
\qquad
d_3=\frac16\, ,
\qquad
d_4=\frac{1}{12}\, ,
\qquad
d_5=\frac{1}{30}\, ,
\\
 \quad
\IP_1=&\{(1)\},
\qquad
\IP_2=\{(12)\},
\qquad
\IP_3=\{((12)3),(1(23))\},
\\
\IP_4=&\{((1(23))4),((12)(34)),(1((23)4))\},
\\
\IP_5=& \{((((31)4)2)5),(((12)(34))5),((1((23)4))5),(((12)3)(45)),
\\
&
\qquad\
((12)(3(45))),(1((2(34))5)),(1((23)(45))),(1(4(2(53))))\},
\end{split}
\label{knowncases}
\ee
where we labelled ordered flow trees using 2-bracketings as explained in footnote \ref{foobracket}.

Unfortunately, the simple ansatz \eqref{gfn-upto5} fails beyond the fifth order. 
For instance, numerical experiments indicate that for $n=6$,
one should include an additional term given by a product of $n-3=3$ sign functions.
This suggests the following more general ansatz:

\begin{conj}
The function \eqref{defDf-gen} can be rewritten as
\be
\gf_n(\{\gama_i,c_i\})
= \Sym\left\{
\sum_{n_1+\cdots +n_m= n\atop n_k\ge 1,
\ n_k{\rm -odd}}
P_{n,m}\prod_{k=1}^{m-1}\sgn(\cs_{j_k})\right\},
\label{gfn-conj}
\ee
where as in \eqref{groupindex} $j_k=n_1+\cdots + n_k$
and $P_{n,m}$ are polynomials in $\kappa(\gamma_{ij})$ homogeneous of degree $n-1$ which can be represented as
sums over subsets $\IP_{n,m}$ of flow trees with $n$ leaves
\be
P_{n,m}=\frac{1}{2^{n-1}}\sum_{T\in \IP_{n,m}}d_T\,\kappa(T)
\label{PPn}
\ee
with some numerical coefficients $d_T$.
\end{conj}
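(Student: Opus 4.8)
The plan is to prove the conjecture by induction on $n$, treating \eqref{gfn-conj} as an ansatz with unknown polynomials $P_{n,m}$ and fixing them from the recursion \eqref{iterseed}, which $\gf_n$ is already known to obey by Proposition \ref{prop-solveiterg0}. The guiding principle is the close structural parallel between \eqref{iterseed} and the recursion \eqref{inductFn} satisfied by the partial tree index $\Ftr{n}$: the latter is solved by the planar-flow-tree sum and admits the partition representation \eqref{F-ansatz}, whose elementary building block is the pure sign product $\Fwl_n=\tfrac{1}{2^{n-1}}\prod_{i=1}^{n-1}\sgn(\cs_i)$ of \eqref{Fwlstar-a}. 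The conjectured form \eqref{gfn-conj} is precisely the $\gf$-analogue of \eqref{F-ansatz}, with the scalar seed $\Fwl_n$ upgraded to the polynomial-valued seed $P_{n,m}\prod_k\sgn(\cs_{j_k})$; so I would attempt to mimic, step for step, the derivation of \eqref{F-ansatz} from \eqref{inductFn}. The explicit $n\le 5$ data \eqref{gfn-upto5}--\eqref{knowncases} serve both as base cases and as a check of the bookkeeping.

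The key steps are as follows. First, substitute the inductive hypothesis --- that $\gf_\ell$ and $\gf_{n-\ell}$ already have the form \eqref{gfn-conj} with odd parts --- into the right-hand side of \eqref{iterseed}. The product $\gf_\ell\,\gf_{n-\ell}$ of two partition-sums produces a partition-sum of $n$ carrying a distinguished cut at position $\ell$, while the prefactor $\sgn(\cs_\ell)-\sgn(\Gamma_{n\ell})$ inserts a boundary sign at that cut and subtracts its attractor-point value; here one uses that at $c_i=\beta_{ni}$ the partial sum $\cs_\ell$ collapses to $\pm\Gamma_{n\ell}$, which is exactly the difference structure $\gf_n(\{\gama_i,c_i\})-\gf_n(\{\gama_i,\beta_{ni}\})$ on the left of \eqref{iterseed}. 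Reassembling the sum over $\ell$ under $\Sym$ should then fix $P_{n,m}$ recursively in terms of $\kappa(\Gamma_{n\ell})$ and lower-order $P$'s. The odd-parts restriction should emerge directly from the structure of \eqref{defDf-gen}, where a marked vertex absorbs an odd number $1+2m_\ver$ of charges, reinforced by the vanishing $a_\cT=0$ on even-vertex trees (Proposition \ref{prop-coeff}) that obstructs the assembly of even-sized blocks --- the same mechanism by which marked trees drop out of the tree index in Proposition \ref{prop-treeindex}. Finally, the flow-tree representation \eqref{PPn} comes for free, since every monomial generated in this way is a product of $n-1$ DSZ pairings distributed along a binary splitting pattern, hence lies in the span of the tree monomials $\prod_{v}\kappa(\gamma_{\Lv{v}\Rv{v}})$ appearing in \eqref{defgtree}.

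The main obstacle is that the recursion \eqref{iterseed} does \emph{not} determine $\gf_n$ uniquely --- the paper notes this redundancy --- so the induction as sketched would only exhibit \emph{some} $P_{n,m}$ for which \eqref{gfn-conj} solves \eqref{iterseed}. To identify this with the genuine $\gf_n$ of \eqref{defDf-gen} one must additionally impose that the associated attractor part $\cEf_n=\gf_n(\{\gama_i,\beta_{ni}\})$ lift to a \emph{smooth} solution of Vign\'eras' equation, as in \eqref{fullker} (via \eqref{rel-gE}), and translating this smoothness into linear constraints on the $d_T$ is the genuine difficulty. It is compounded by the non-uniqueness of the flow-tree decomposition \eqref{PPn} once more than one odd part is present --- first occurring at $n=6$, where a three-sign term built on a part of size $3$ enters and spoils the clean pattern \eqref{gfn-upto5}. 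I therefore expect that proving mere \emph{existence} of a consistent family $\{d_T,\IP_{n,m}\}$ for all $n$, rather than singling out a canonical one, is the crux keeping the statement at the level of a conjecture; a practical route is to carry out the induction order by order, using Theorem \ref{theorem} and the tree-factorial combinatorics it governs to control the overcounting of labels that arises when the marked vertices of \eqref{defDf-gen} are collapsed into odd-sized blocks.
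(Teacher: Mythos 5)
You should first be aware that the paper contains \emph{no proof} of this statement: it is stated as a conjecture, and the authors say explicitly that beyond $n=5$ they know of no systematic procedure for determining the sets $\IP_{n,m}$ and coefficients $d_T$. The evidence offered consists of the closed forms \eqref{gfn-upto5} (analytic for $n\le 4$, numerical for $n=5$), numerical experiments at $n=6$ which force an extra term with $n-3$ sign functions, and the structural analogy with the refined partition formula \eqref{defFref} --- itself tied to $\gf_n$ only through yet another conjecture, \eqref{limitgerf}. So there is no paper proof to compare your proposal against; the only question is whether your argument closes the conjecture, and, as you effectively concede in your final paragraph, it does not.

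The gaps are concrete. First, the induction step does not go through as described. If you substitute the partition ansatz \eqref{gfn-conj} into the right-hand side of \eqref{iterseed}, the lower-order factors are evaluated at the flowed parameters $c_i^{(\ell)}$ of \eqref{flowc}, whose block sums are \emph{not} block sums of the $c_i$ but nonlinear combinations, producing signs like $\sgn\bigl(\Gamma_{n\ell}\,\cs_j-\Gamma_{nj}\,\cs_\ell\bigr)$. In the scalar situation (the check that \eqref{defFref} solves \eqref{itereqFwl}, or the passage from \eqref{inductFn} to \eqref{F-ansatz}) such signs are eliminated with the identity \eqref{signident3}, because all terms with the same partition $\{n_k\}$ carry the same numerical coefficient; here each term carries a different polynomial $\kappa(T)$, and the recombination identities needed (of the type \eqref{relsigns3}, after expanding DSZ products of charge sums) must be found case by case --- precisely the step the authors could not systematize. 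Second, and more fundamentally, even if you exhibit \emph{some} solution of \eqref{iterseed} of partition form, the recursion determines $\gf_n$ only up to a moduli-independent polynomial in the $\gamma_{ij}$ at each order, and it is exactly such a term (with fewer sign factors) that breaks the naive pattern \eqref{gfn-upto5} at $n=6$. To identify your candidate with the actual function \eqref{defDf-gen} you would need to show that its attractor value $\gf_n(\{\gama_i,\beta_{ni}\})$ lifts to a smooth solution of Vign\'eras' equation with matching asymptotics --- the selection principle that singles out \eqref{defDf-gen} --- and you give no mechanism converting that requirement into solvable linear conditions on the $d_T$; note that the orthogonality argument behind \eqref{fullker-alt} is available only \emph{after} the conjecture is granted, so invoking it here is circular. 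Finally, your claim that the odd-parts restriction ``emerges directly'' from the marked-tree structure conflates two different sign systems: the signs in \eqref{defDf-gen} are $\sgn(\cs_e)$, sums of $c_i$ over the subsets cut out by tree edges, not consecutive blocks $\cs_{j_k}$, and relating the two is again the content of the conjecture rather than an input to it.
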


In fact, such a representation (if it exists) is highly ambiguous since there are many linear relations between polynomials $\kappa(T)$
induced by the identity
\be
\gamma_{12}\gamma_{1+2,3}+\gamma_{23}\gamma_{2+3,1}+\gamma_{31}\gamma_{1+3,2}=0,
\ee
and even more relations after multiplication by sign functions and symmetrization.
Eq. \eqref{gfn-upto5} suggests that there should exist a choice of subsets $\IP_{n,m}$ which leads to simple values of the coefficients $d_T$.
However, we do not know of a procedure which  would allow to determine it systematically
beyond $n=5$.

The main advantage of the representation \eqref{gfn-conj} is that the factors $\kappa(T)$ multiplying the products of sign functions
are proportional to $\prod_{v\in V_T}(\tbfv_v,\bfx)$ where the vectors $\tbfv_v$ defined in \eqref{deftv} are all mutually orthogonal.
This property allows  to immediately promote the kernel corresponding to the function \eqref{gfn-conj},
or to its value at the attractor point
\be
\cEf_n(\{\gama_i\})=\gf_n(\{\gama_i,\beta_{ni}\})=\Sym\left\{
\sum_{n_1+\cdots +n_m= n \atop n_k\ge 1,
\ n_k{\rm -odd}
}
P_{n,m}\prod_{k=1}^{m-1}\sgn(\Gamma_{nj_k})\right\},
\label{Efn-alt}
\ee
to a solution of Vign\'eras' equation,
without going through the complicated construction of subsection \ref{subsec-fullcompl}.
Indeed, both such kernels are linear combinations of terms, labelled by flow trees, of the type \eqref{asympt-tphi}
where the vectors $\tbfv_i$ coincide with the vectors $\tbfv_v$ for a given flow tree.
Due to their mutual orthogonality, the solutions of Vign\'eras' equation with $\lambda=n-1$ with such asymptotics
are given by
\be
\cEPhi_n(\bfx)=\frac{1}{2^{n-1}} \Sym\left\{\sum_{n_1+\cdots +n_m= n \atop n_k\ge 1,
\ n_k{\rm -odd}
}
\sum_{T\in \IP_{n,m}}d_T\, \tPhi_{m-1,n-1}^E(\{ \bfv_{j_k}\}, \{\tbfv_v\};\bfx)\right\},
\label{fullker-alt}
\ee
and similarly for $\tcEPhi_n$ with exchange of $\bfv_{j_k}$ by $\bfu_{j_k}$.
The vectors $\bfv_\ell$, $\bfu_\ell$ defined in \eqref{deftvl}
can be seen as vectors $\bfv_e$, $\bfu_e$ for the unrooted tree of trivial topology
$\bullet\!\mbox{---}\!\bullet\!\mbox{--}\cdots \mbox{--}\!\bullet\!\mbox{---}\!\bullet\,$.
Thus, instead of a sum over marked unrooted trees with factors themselves given
by another sum over unrooted trees as in \eqref{fullker},
we arrive at a representation involving only a sum over a suitable subset of flow trees and a single unrooted tree.\footnote{Of course,
the function \eqref{fullker-alt} must be equal to \eqref{fullker}, which is guaranteed by
the fact that they are solutions of Vign\'eras' equation with the same asymptotics.}
It becomes particularly simple in the case $n\le 5$ where, as follows from \eqref{gfn-upto5}, one can drop the sum over partitions,
take $m=n$ and equate all coefficients $d_T$ to $d_n$.

\subsubsection{Refinement}
\label{subsubsec-ref}

Finally, there is yet an alternative way of obtaining the functions $\gf_n$,
which also sheds light on the origin of the representation \eqref{gfn-conj}.
This representation is  inspired by the solution for the tree index presented in section \ref{subsec-partialindex}.
As in that case, the idea is to introduce a refinement parameter $y$,
 perform manipulations keeping $y\ne 1$ and take the limit $y\to 1$ in the end.

Let us therefore introduce a `refined' analogue of $\gf_n$, which we call $\gref_n$ and
which satisfies the refined version of the recursive relation \eqref{iterseed} where
the $\kappa$-factor is replaced by its $y$-dependent version \eqref{kappadef}.
This refined equation can be solved by the following ansatz (cf. \eqref{gtF})
\be
\gref_n(\{\gama_i,c_i\},y)
= \frac{(-1)^{n-1+\sum_{i<j} \gamma_{ij} }}{(y-y^{-1})^{n-1}}
\,\Sym\Bigl\{
\Fref_n(\{\gama_i,c_i\})\,y^{\sum_{i<j} \gamma_{ij}}\Bigr\}.
\label{whgF}
\ee
Indeed, it is easy to see that $\gref_n$ satisfies \eqref{iterseed} with the $y$-dependent $\kappa$-factor provided
$\Fref_n$ satisfies the recursive relation
\be
\begin{split}
&\hf\sum_{\ell=1}^{n-1}\bigl(\sgn(\cs_\ell)-\sgn(\Gamma_{n\ell})\bigr)\,\Fref_\ell(\{\gama_i,\cl_i\}_{i=1}^\ell)\,
\Fref_{n-\ell}(\{\gama_i,\cl_i\}_{i=\ell+1}^n)
\\
&\qquad\qquad
=\Fref_n(\{\gama_i,c_i\})-\Fref_n(\{\gama_i,\beta_{ni}\}).
\end{split}
\label{itereqFwl}
\ee
This equation is simpler than \eqref{iterseed} in  that it is $y$-independent and does not involve any $\kappa$-factors.
Furthermore, we already know  one solution  --- the functions $\Fwl_n$ \eqref{Fwlstar-a}
which can be shown to satisfy \eqref{itereqFwl} with help of the sign identity \eqref{signident3}.
However, this solution is not yet suitable because, starting from $n=3$,
its substitution into the r.h.s. of \eqref{whgF} does not produce a Laurent polynomial in $y$,
but rather a rational function with a pole at $y\to 1$.

The solution can be promoted to a Laurent polynomial by noting that the recursive equation \eqref{itereqFwl} determines $\Fref_n$
in terms of $\Fref_k$, $k<n$ only up to an additive constant $b_n$. Thus, at each order one can adjust this constant so that to ensure that
$\gref_n$ given by \eqref{whgF} is regular at $y=1$.
Then we arrive at the following solution
\be
\Fref_n(\{c_i\})=\frac{1}{2^{n-1}}\sum_{n_1+\cdots +n_m= n\atop n_k\ge1}
\prod_{k=1}^m  b_{n_k}\prod_{k=1}^{m-1}\sgn(\cs_{j_k}),
\label{defFref}
\ee
where as usual $j_k=n_1+\cdots + n_k$. Remarkably, this solution depends
only on the stability parameters $c_i$, despite the fact that the recursion \eqref{itereqFwl} also involves the DSZ products $\gamma_{ij}$.
It satisfies this recursion for arbitrary coefficients $b_n$, as can be  easily checked by extracting
contributions with the same sets of $\{n_k\}$ and applying the sign identity \eqref{signident3}.
Moreover, for arbitrary values of these coefficients, the numerator
of \eqref{whgF} evaluated at $y=1$ turns out\footnote{After this work was first released,  Don Zagier
communicated to us a proof of this assertion, and of the conjecture \eqref{coefbn} below,
based on the more elementary observation
that $\Sym \Fwl_n(\{c_i\})=2^{1-n}/n $ for $n$ even,
or 0 for $n$ odd, irrespective of the value of the $c_i$'s, where $\Fwl_n(\{c_i\})$
is defined in \eqref{Fwlstar-a}.}
 to be a constant, independent of the $c_i$'s.
The coefficients $b_n$ are then fixed uniquely by requiring that this constant vanishes, namely
\be
\Sym\Fref_n(\{c_i\})=0\, .
\label{eqB}
\ee
Choosing one particular configuration of the moduli, say $c_i>0$ for $i=1,\dots,n-1$
and $c_n=-\sum_{i=1}^{n-1} c_i<0$,   one finds numerically that all coefficients $b_n$ with $n$
even vanish,\footnote{This follows from the fact that under the permutation $c_i\mapsto c_{n-i+1}$
one has $\cs_i\mapsto -\cs_{n-i}$ and therefore $\Fref_n$ flips sign. The deeper reason for this is that
the terms which are products of even and odd number of signs cannot mix. In contrast, sign identities
such as \eqref{signprop-ap} can decrease the number of signs in a product by even number.
Thus, a constant can appear in $\Fref_n$ only for $n$ odd. The same fact ensures the
vanishing of the coefficients $a_\cT$ in \eqref{defcDcT} for trees with even number of vertices.}
while those with $n$ odd  are given by
\be
b_1=1,
\quad
b_3=-\frac13,
\quad
b_5=\frac{2}{15},
\quad
b_7=-\frac{17}{315},
\quad
b_9=\frac{62}{2835},
\quad
b_{11}=-\frac{1382}{155925},
\quad \dots
\ee
We observe that these coefficients coincide with the first coefficients in the Taylor series of
$\tanh x = x - \frac13 x^3+ \frac{2}{15} x^5 +\dots$. We therefore conjecture that
this identification continues to hold in general, so that $b_n$ is expressed in terms
of the Bernoulli number $B_{n+1}$ through
\be
b_{n-1}=\frac{2^{n} (2^{n} - 1)}{n!}\, B_{n} .
\label{coefbn}
\ee

Using the iterative equation \eqref{iterseed}, the constraint \eqref{eqB} and proceeding by induction, it is easy to see that
$\gref_n$ defined by \eqref{whgF} and its first $n-2$ derivatives with respect to $y$ vanish at $y=1$.
This ensures that $\gref_n$ is smooth and its limit $y\to 1$ is well-defined.

\begin{conj}
The function $\gref_n$, defined by \eqref{whgF}, \eqref{defFref} and \eqref{coefbn},
 reproduces the function $\gf_n$ \eqref{defDf-gen} in the unrefined limit,
\be
\lim_{y\to 1}\gref_n(\{\gama_i,c_i\},y)=\gf_n(\{\gama_i,c_i\}).
\label{limitgerf}
\ee
\end{conj}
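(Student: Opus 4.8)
The plan is to prove \eqref{limitgerf} by induction on $n$, using the recursion \eqref{iterseed} to reduce the full functional identity to a sign-free polynomial identity, which is then settled combinatorially. First I would record that $g_n:=\lim_{y\to1}\gref_n$ is well-defined: the coefficients $b_n$ of \eqref{defFref} are fixed by \eqref{eqB} precisely so that $\gref_n$ and its first $n-2$ derivatives in $y$ vanish at $y=1$, so the $(n-1)$-fold pole of the prefactor in \eqref{whgF} is cancelled. Next I would check that $g_n$ obeys the same recursion \eqref{iterseed} as $\gf_n$: this is immediate because $\gref_n$ satisfies the refined recursion with the $y$-deformed factor \eqref{kappadef} (a consequence of $\Fref_n$ solving \eqref{itereqFwl}), and that factor degenerates to $\kappa(x)=(-1)^x x$ as $y\to1$. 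Granting inductively that $g_\ell=\gf_\ell$ for all $\ell<n$, the quadratic right-hand sides of \eqref{iterseed} for $g_n$ and $\gf_n$ coincide, so the difference $\delta_n:=g_n-\gf_n$ obeys $\delta_n(\{c_i\})=\delta_n(\{\beta_{ni}\})$ for all stability parameters. Hence $\delta_n$ is independent of the $c_i$, and therefore carries no sign functions at all: it is a homogeneous degree-$(n-1)$ polynomial in the DSZ products $\gamma_{ij}$ (times the common refinement sign $(-1)^{\sum_{i<j}\gamma_{ij}}$).

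It remains to show $\delta_n\equiv0$. Because $\delta_n$ is $c_i$-independent, it can be evaluated in any single generic chamber, where all sign functions in the explicit representations of $g_n$ and $\gf_n$ reduce to definite values; the claim thus becomes a polynomial identity in the $\gamma_{ij}$. On the $\gf_n$ side the data are read off from the marked-unrooted-tree sum \eqref{defDf-gen}, with vertex weights $\tcV_{m_\ver}$ assembled from the coefficients $a_\cT$ of \eqref{res-aT}. On the $g_n$ side I would compute the limit $y\to1$ of \eqref{whgF} explicitly, mimicking the passage from the refined tree index \eqref{gtF} to its unrefined form \eqref{defgtree}: after symmetrization the ratio $y^{\sum_{i<j}\gamma_{ij}}/(y-y^{-1})^{n-1}$ collapses onto polynomials in $\kappa(\gamma_{ij})$, so that $g_n$ acquires the ordered-partition shape \eqref{gfn-conj} with sign products $\prod_k\sgn(\cs_{j_k})$ and polynomial weights controlled by the Bernoulli coefficients $b_n$ of \eqref{coefbn}. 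Matching the two sides then amounts to resumming the marked-tree weights into the partition weights.

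The core of the argument, and the step I expect to be hardest, is this combinatorial resummation. Concretely one must show that the sum over marked unrooted labelled trees in \eqref{defDf-gen}, with the ternary-tree weights $\cP_m$ and $a_\cT$ governed by the tree factorial \eqref{defcT}, reorganizes into the ordered-partition sum whose coefficients are the $\tanh$ Taylor coefficients \eqref{coefbn}. I expect Theorem \ref{theorem} — the identity converting sums over rooted subtrees (weighted by tree factorials) into binomial coefficients — to be the essential enumerative tool, and the now-available identity \eqref{coefbn} (Zagier) to supply the precise numerical input. A convenient way to organize the induction is to peel off the outermost splitting with the refined recursion \eqref{itereqFwl}, identify the weight generated at the new vertex with $\tcV_{m_\ver}$, and close the loop using the symmetrization constraint \eqref{eqB}, which is exactly the statement that the $c_i$-independent (hence sign-free) residue vanishes order by order.

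Finally, I would stress why this direct matching cannot be bypassed by abstract uniqueness. The recursion \eqref{iterseed} together with the requirement that the attractor value $\cEf_n$ lift to a smooth solution of Vign\'eras' equation does narrow $g_n-\gf_n$ down to exactly the $c_i$-independent polynomial $\delta_n$; but these constraints do not fix $\delta_n$, as shown by the fact that the shift $\gf_n\mapsto\gf_n+\delta_n$ (with the induced shift of $\cEf_n$) leaves the tree-index relation \eqref{iterDn0} invariant and is compatible with a compensating adjustment of the exponentially small part $\cEp_n$ in \eqref{Emrec}. Thus the vanishing of $\delta_n$ genuinely relies on the explicit tree data on both sides, and the whole proof hinges on the combinatorial identity of the previous paragraph.
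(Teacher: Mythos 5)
You should first be aware that the paper contains \emph{no proof} of this statement: it is explicitly presented as a conjecture, supported only by a numerical check up to $n=6$ using l'H\^opital's rule (the paper even notes that the related representation \eqref{gfn-conj}, which your plan leans on, is itself conjectural). So there is no proof of the authors' to compare against, and a complete argument would be genuinely new content. Your preliminary reductions are correct but constitute the easy part, much of which is already implicit in the paper: regularity of $\gref_n$ at $y=1$ is exactly what the normalization \eqref{eqB} is designed to ensure; since $\gref_n$ satisfies the refined version of \eqref{iterseed} and the $y$-dependent factor \eqref{kappadef} degenerates to $(-1)^x x$, the limit $g_n=\lim_{y\to 1}\gref_n$ satisfies the unrefined recursion; and by induction the difference $\delta_n=g_n-\gf_n$ is independent of the $c_i$, hence a single homogeneous polynomial of degree $n-1$ in the $\gamma_{ij}$ (both representations carry only moduli-dependent sign functions). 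The paper itself observes that \eqref{defFref} solves \eqref{itereqFwl} for \emph{arbitrary} coefficients $b_n$, the residual freedom being precisely an additive constant at each order --- which is your $\delta_n$.

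The genuine gap is that $\delta_n\equiv 0$ is the entire content of the conjecture, and your proposal does not prove it. Showing that the normalization \eqref{eqB} (equivalently the $\tanh$ coefficients \eqref{coefbn}, themselves only proved after the fact by Zagier) reproduces exactly the constant terms generated in \eqref{defDf-gen} by the marked-tree weights $\tcV_m$, the coefficients $a_\cT$ of \eqref{res-aT} and the tree factorials \eqref{defcT} is precisely the identity the authors could only verify numerically up to $n=6$; you name the expected tools (Theorem \ref{theorem}, \eqref{coefbn}) and sketch a plan (``peel off the outermost splitting\dots close the loop using \eqref{eqB}'') but execute none of the resummation, so the proposal merely trades the conjecture for an equivalent unproven combinatorial identity. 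Moreover, your closing claim --- that the Vign\'eras constraints cannot fix $\delta_n$ because the shift $\gf_n\mapsto\gf_n+\delta_n$ admits ``a compensating adjustment of the exponentially small part $\cEp_n$'' --- is unsubstantiated and appears to be false: by \eqref{rel-gE} the shift forces $\cEf_n\mapsto\cEf_n+\delta_n$, and writing $\cE_n=\cEf_n+\cEp_n$ with $\cEp_n$ exponentially small, the polynomial part of $V_{n-1}\cE_n$ must vanish by itself, which requires $\partial_\bfx^2\,\delta_n=0$ identically in the charges. Already at $n=3$ this kills all candidates, since the traces $(\bfv_{12},\bfv_{23})=-(p_1p_2p_3)$ and $(\bfv_{12},\bfv_{12})=(p_1p_2(p_1+p_2))$ are linearly independent cubics in the $p_i^a$, so no nonzero symmetric quadratic in the $\gamma_{ij}$ with universal coefficients is harmonic. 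This actually suggests a uniqueness route might close the argument --- provided one can show that $g_n$ itself lifts to a smooth solution of Vign\'eras' equation with exponentially small remainder --- but neither that lifting property nor the combinatorial matching is established in your proposal, so the central step remains open.
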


We have checked this conjecture numerically up to $n=6$ evaluating
the limit $y\to 1$
using  l'H\^opital's rule. Unfortunately, the new representation of $\gf_n$ obtained in this way
is not helpful for the purposes of this work because it does not lead to any new representation for the completion.
Nevertheless, the comparison of \eqref{defFref} and \eqref{gfn-conj} shows that the sum over partitions and the simple
form of products of sign functions characterizing the representation discussed in the previous subsection find their origin in the formula for $\Fref_n$.
Furthermore, this construction strongly suggests that the refinement may be compatible with S-duality,
and that a suitably defined generating function of refined DT invariants may also possess interesting modular properties,
which can even be simpler than those of the usual DT invariants.

\section{Discussion}
\label{sec-disc}

In this paper we studied the modular properties of the generating function $h_{p,\mu}(\tau)$ of MSW invariants encoding BPS degeneracies
of D4-D2-D0 black holes in Type IIA string theory on a Calabi-Yau threefold, with fixed
D4-charge $p^a$, D2-brane charge (up to spectral flow) $\mu_a$ and  invariant D0-brane charge
$\hat q_0$ (defined in \eqref{defqhat}) conjugate to the modular parameter $\tau$.
These properties  follow from demanding
that the vector multiplet moduli space in $D=3$ (or the hypermultiplet moduli
space in the dual type IIB picture) admits an isometric action of $SL(2,\IZ)$.
Our main result is an explicit formula for the non-holomorphic
modular completion $\whh_{p,\mu}$ \eqref{exp-whh}, where the functions
$R_n(\{\gamma_i\},\tau_2)$ are
given by Eq. \eqref{solRn}, with $\cE_v=\cE_v^{(0)}+\cE_v^{(+)}$ specified
by \eqref{defEn0new} and \eqref{fullker}.
This result applies for D4-branes wrapping
a general effective divisor which may be the sum of an arbitrary number $n$ of irreducible divisors.

The existence of such completion was not guaranteed. We arrived at this result by cancelling
the modular anomaly of an indefinite theta series of signature $(n, n b_2(\CY)-n)$ with
a complicated kernel \eqref{totker}. In particular, Vign\'eras' equation encoding this anomaly involves
K\"ahler moduli in a non-trivial way through the walls of marginal stability,
whereas the functions $R_n$ in our ansatz \eqref{exp-whh}
can only depend (non-holomorphically) on the modular parameter $\tau$
and on the charges $\gamma_i$.
Note also that the transformation \eqref{SL2phi} of the contact potential,
which was the starting point of our analysis, is only a necessary condition for the existence of
an isometric action of $SL(2,\IZ)$ on $\cM_H$. To demonstrate that this condition is sufficient,
one would have to construct suitable complex Darboux coordinates on the twistor space such that,
after expressing them in terms of the completed generating function $\whh_{p,\mu}$ found in this paper,
$SL(2,\IZ)$ acts on these coordinates by a complex contact transformation. Such coordinates were constructed
at the two-instanton level in \cite{Alexandrov:2017qhn}, and it is an interesting challenge
to extend this construction to arbitrary $n$.

The structure of the completion suggests that, for a divisor decomposable into a sum of $n$ effective divisors,
the holomorphic generating function $h_{p,\mu}$ is a mixed vector valued
mock modular form of higher depth, equal to $n-1$.
Such objects have recently appeared in various mathematical and physical contexts
\cite{2017arXiv170406891B,Manschot:2017xcr,Bringmann:2018,Bringmann:2018cov,Gupta:2018krl}
and correspond to holomorphic functions whose completion is constructed from period (or Eichler)
integrals of mock modular forms of smaller depth (with depth 0 mock modular forms being
synonymous with ordinary modular forms). For instance, in the case of standard mock modular forms
of depth one, the completion is given by an Eichler integral of a modular function \cite{Zwegers-thesis}.
In our case this structure follows from the fact that the completion $\whh_{p,\mu}$ is built from the generalized error functions
which have a representation in terms of iterated period integrals \cite{Alexandrov:2016enp,2017arXiv170406891B}.
In order to make manifest the mixed mock modular nature of $h_{p,\mu}$,
one should represent the antiholomorphic derivative of
its completion $\whh_{p,\mu}$, computed in \eqref{exp-derwh},
as a sum of  products of completions of mixed mock modular forms of lower
depth and  anti-holomorphic modular forms. This is a non-trivial task which
involves the technique of lattice factorization for indefinite theta series, which we leave
for future research.

An important caveat in our construction is that
we did not establish the convergence of the various generating functions and indefinite theta series.
Technically, we constructed a theta series
$\vartheta_{\bfp,\bfmu}\bigl(\whPhi^{{\rm tot}}_{n},n-2\bigr)$ whose kernel
satisfies Vign\'eras' equation \eqref{Vigdif}, but we did not demonstrate that it decays as required by
Vign\'eras' criterium. (For $n=2$ this is easy to show \cite{Alexandrov:2016tnf}
since the kernel is a product of an exponentially suppressed factor and
a difference of two error functions defined by two positive vectors.)
In fact, convergence issues already arise when expressing
the generating function of DT invariants in terms of MSW invariants in \eqref{multih}, and are related
to the problem of convergence of the BPS black hole partition function. The latter was proven to converge in the large volume
limit for the case involving up to $n\le 3$ centers in \cite{Manschot:2010xp}.
We hope that the results for the tree index obtained in \cite{Alexandrov:2018iao} will allow to
extend this result to arbitrary $n$.

{We note that many of the complications in our construction originate from the
occurrence of polynomial factors in the kernel of theta series, which in turn can be traced
to the factors of DSZ products $\gamma_{ij}$ in the  wall-crossing formula. Presumably, most of
these complications would disappear if one could find a generalization of
the contact potential involving the refined DT invariants, so that all these
prefactors can be traded for explicit powers of the fugacity $y$, as in section \ref{subsubsec-ref}.
It is possible that twistorial topological strings \cite{Cecotti:2014wea}
may provide the appropriate framework for finding such a
refinement, at least in the context of string vacua on non-compact Calabi-Yau threefolds.}

On the physics side, we expect that our results will have important implications for the physics of BPS black holes.
Indeed, the mock modular property of the generating function $h_{p,\mu}$
will affect the growth of its Fourier coefficients \cite{Bringmann:2010sd},
and should be taken into account when performing
precision tests of the microscopic origin of the Bekenstein-Hawking entropy.
A natural question is to understand the origin of the non-holomorphic correction terms $R_n$,
in terms of the quantum mechanics of $n$-centered black holes \cite{Alexandrov:2014wca,Pioline:2015wza}.
Another outstanding question is to understand the
physical significance of the instanton generating function $\cG$ defined in
\eqref{defcF2}. Its modular properties are exactly those expected from the elliptic genus
of a superconformal field theory, except for the fact that it is not holomorphic in $\tau$.
It is natural to expect that this non-holomorphy can be traced to the existence of a
continuum of states with a non-trivial spectral asymmetry between bosons and
fermions \cite{Troost:2010ud}.
It would be very interesting to understand the origin of this continuum in terms of the
worldvolume theory of an M5-brane wrapped on a reducible divisor. Moreover, for special $\cN=2$
theories in $D=4$ obtained by circle compactification of an $\cN=1$ theory in $D=5$,
$\cG$ is closely related to the index considered in \cite{Alexandrov:2014wca}.
It would be interesting to understand the physical origin of its modular invariance
in this context.

Finally, we expect that the structure found in the context of generalized DT-invariants
on Calabi-Yau threefolds will also arise in the study of other types of
BPS invariants  where higher depth mock modular forms are expected to occur, such as
Vafa-Witten invariants and Donaldson invariants of four-folds with $b_2^+=1$ \cite{Alexandrov:2016enp,Manschot:2017xcr}.
In particular, it would be interesting to determine the modular
completion of the generating function of Vafa--Witten invariants on the complex projective plane computed
for all ranks in \cite{Manschot:2014cca},
generalizing the rank 3 case studied in \cite{Manschot:2017xcr}, and compare with the completion found in the present paper.

\acknowledgments

The authors are grateful to Sibasish Banerjee and Jan Manschot for useful discussions and
collaboration on \cite{Alexandrov:2016tnf,Alexandrov:2016enp,Alexandrov:2017qhn}
which paved the way for the present work,  to Karen Yeats for useful communication
about the combinatorics of rooted trees, and to Don Zagier for providing a proof of
the conjecture \eqref{coefbn}.  The hospitality and financial support of
the Theoretical Physics Department of CERN, where this work was initiated,
is also gratefully acknowledged. The research of BP is supported in part by French state funds
managed by the Agence Nationale de la Recherche (ANR) in the context of the LABEX ILP (ANR-11-IDEX-0004-02, ANR-10- LABX-63).

\newpage

\appendix

\addtocontents{toc}{\vspace{-0.1cm}}
\section{Proof of the Theorem}
\label{ap-theorem}

In this appendix we prove Theorem \ref{theorem} presented in the Introduction.
The crucial ingredient is provided by the following Lemma\footnote{We were informed that
this lemma appears as an exercise in \cite[p. 70]{knuthart}.}

\begin{lemma}\label{lemma-ntrees}
The number of ways of labelling the vertices of a rooted ordered tree $T$ with increasing labels is given by
\be
N_T=n_T! \prod_{v\in V_T}\frac{1}{n_v(T)}\, ,
\label{numtrees}
\ee
where $n_T$ is the number of vertices of the tree $T$.
\end{lemma}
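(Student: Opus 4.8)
The plan is to prove \eqref{numtrees} by induction on the number of vertices $n=n_T$, peeling off the root. Recall that an increasing labelling is a bijection $\phi\colon V_T\to\{1,\dots,n\}$ with $\phi(v)<\phi(w)$ whenever $v$ is an ancestor of $w$; since the root $r$ is an ancestor of every other vertex, it must receive $\phi(r)=1$. The base case $n=1$ is immediate, since $N_T=1=1!$.

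For the inductive step, I would remove the root and consider its children $c_1,\dots,c_k$, whose subtrees $T_1,\dots,T_k$ have sizes $n_1,\dots,n_k$ with $n_1+\cdots+n_k=n-1$. Any increasing labelling of $T$ is obtained by setting $\phi(r)=1$, partitioning the remaining labels $\{2,\dots,n\}$ into blocks of sizes $n_1,\dots,n_k$ attached to $T_1,\dots,T_k$, and filling each $T_i$ with an increasing labelling using its block. Since the number of increasing fillings of a tree depends only on its size and not on which specific totally ordered label set is used, this gives
\be
N_T=\binom{n-1}{n_1,\dots,n_k}\prod_{i=1}^k N_{T_i}=\frac{(n-1)!}{\prod_{i=1}^k n_i!}\prod_{i=1}^k N_{T_i}.
\ee

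Then I would invoke the induction hypothesis $N_{T_i}=n_i!\prod_{v\in V_{T_i}}n_v(T_i)^{-1}$ together with the fact that subtree sizes are intrinsic, $n_v(T_i)=n_v(T)$ for $v\in V_{T_i}$. Substituting, the factors $n_i!$ cancel and the products over the $V_{T_i}$ combine into the product over all non-root vertices,
\be
N_T=(n-1)!\prod_{v\in V_T\setminus\{r\}}\frac{1}{n_v(T)}\,,
\ee
and using $n_r(T)=n$ to reinstate the root factor yields $N_T=n!\prod_{v\in V_T}n_v(T)^{-1}$, which is \eqref{numtrees}.

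The step that carries the real content, and which I would state carefully, is the multiplicativity in the first display: that distributing the labels among the children's subtrees is free (the multinomial coefficient) and that each subtree is then filled independently. The ``ordered'' qualifier is harmless here — it merely makes $T_1,\dots,T_k$ distinguishable, which is exactly what the multinomial coefficient already assumes, and indeed $N_T$ depends on $T$ only through its poset structure. As an alternative I could give the probabilistic proof: draw i.i.d.\ uniform weights $U_v$ on the vertices, note that $\phi$ is increasing iff each $v$ minimizes $\{U_w\}$ over its own subtree, and use the classical mutual independence of these $n_T$ events, each of probability $1/n_v(T)$; there the sole subtlety is establishing that independence, which the inductive argument sidesteps entirely.
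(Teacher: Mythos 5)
Your proof is correct and follows essentially the same route as the paper's: both establish the recursion $N_T=\frac{(n-1)!}{\prod_i n_i!}\prod_i N_{T_i}$ by distributing labels among the root's subtrees via a multinomial coefficient, and then close the induction by substituting the hypothesis for the subtrees and absorbing the root factor $n_r(T)=n$. The extra remarks (intrinsicness of subtree sizes, the probabilistic alternative) are sound but not needed.
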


\begin{proof}
First, let us find a recursive relation between the numbers $N_T$.
Namely, consider a vertex $v$ and the set of its children $\Ch(v)$.
Denote by $T_v$ the tree rooted at $v$ and with leaves coinciding with leaves of $T$, for which one has $n_{T_v}=n_v$.
Then it is clear that
\be
N_{T_v}=\frac{\(\sum_{v'\in\Ch(v)}n_{v'}\)! }{\prod_{v'\in\Ch(v)}n_{v'}!}\prod_{v'\in\Ch(v)}N_{T_{v'}}\, .
\label{numtrees-ver}
\ee
From this relation, one  easily shows by recursion that \eqref{numtrees} follows.
Indeed, assuming that it holds for the subtrees $T_{v'}$
and taking into account that $\sum_{v'\in\Ch(v)}n_{v'}=n_v-1$,
the r.h.s. can be rewritten as
\be
(n_v-1)!\prod_{v'\in\Ch(v)}\prod_{v''\in V_{T_{v'}}}\frac{1}{n_{v''}}
=n_v!\prod_{v'\in V_{T_v}}\frac{1}{n_{v'}},
\ee
which coincides with \eqref{numtrees} when $v$ is the root of $T$.
\end{proof}

Given this Lemma, we can now rewrite the l.h.s. of \eqref{combident} as
\be
\sum_{T'\subset T}\frac{N_{T'}}{m!}\prod_{v\in V_{T'}}n_v(T)
=\frac{n!}{N_{T}}\sum_{T'\subset T}\frac{N_{T'}}{m!}\prod_{T_r\subset T\setminus T'}\frac{N_{T_r}}{n_{T_r}!},
\label{combident2}
\ee
where the last product goes over the subtrees which complement $T'$ to the full tree $T$.
On the other hand, it is easy to check that the relation \eqref{numtrees-ver} is the special
case $m=1$ of a more general relation
\be
N_{T}=\frac{\(\sum_{T_r\subset T\setminus T'}n_{T_r}\)!}{\prod_{T_r\subset T\setminus T'}n_{T_r}!}
\sum_{T'\subset T}N_{T'}\prod_{T_r\subset T\setminus T'}N_{T_r}.
\ee
where the sum runs over subtrees $T'$ with $m$ vertices.
Taking into account that $\sum_{T_r\subset T\setminus T'}n_{T_r}=n-m$ and substituting
the resulting expression into \eqref{combident2}, one recovers the binomial coefficient as stated in
the Theorem.

\addtocontents{toc}{\vspace{-0.1cm}}
\section{D3-instanton contribution to the contact potential}
\label{ap-contact}

To evaluate $(e^\phi )_{\rm D3}$, we first replace in \eqref{phiinstmany} the full prepotential $F$ by its classical, cubic
part $\Fcl$ \eqref{Fcl} and take into account that the sum over $\gamma\in-\Gamma_+$ is complex conjugate to the sum over $\gamma\in \Gamma_+$.
This gives
\be
e^{\phi} \approx \frac{\tau_2^2}{12}\,((\Im u)^3)
+\frac{\I\tau_2}{8}\,\sum_{\gamma\in\Gamma_+}\Re
\int_{\ell_\gamma} \frac{\text{d}t}{t} \( t^{-1} Z_\gamma(u^a)-t\bZ_\gamma(\ub^a)\) H_\gamma.
\label{phi-lv}
\ee
Next, we substitute the quantum corrected mirror map \eqref{inst-mp},
change the integration variable $t$ to $z$, and take the combined limit $t^a\to\infty$, $z\to 0$.
Keeping only the leading contributions, one obtains
\bea
(e^\phi )_{\rm D3} &=&-\frac{\tau_2}{2}\sum_{\gamma \in\Gamma_+}\Re
\int_{\ell_{\gammap}}\de z
\[\hat q_0+\hf\, (q+b)^2
+2(pt^2)zz_\gamma-\frac{3}{2}\, z^2(pt^2)\]H_{\gamma}
\nn\\
&&
-\frac{1}{4}\sum_{\gamma_1,\gamma_2\in\Gamma_+}(tp_1p_2)
\(\Re\int_{\ell_{\gammap_1}}\!\!\de z_1\, H_{\gamma_1}\)\( \Re\int_{\ell_{\gammap_2}}\!\!\de z_2\, H_{\gamma_2}\).
\label{Phitwo-lv}
\eea

To further simplify this expression, we note that
\be
\begin{split}
0=&\, \frac{1}{4\pi}\sum_{\gamma \in\Gamma_+}\int_{\ell_{\gammap}}\de z\,
\p_z\( z\,H_\gamma\)
\\
=&\,\sum_{\gamma \in\Gamma_+}\int_{\ell_{\gammap}}\de z \[\frac{1}{4\pi}+\tau_2(pt^2) (zz_\gamma- z^2)
-\frac{\I}{4}\sum_{\gamma'\in\Gamma_+}\langle\gamma,\gamma'\rangle\int_{\ell_{\gamma'}} \frac{\de z'}{z-z'}\, H_{\gamma'}\]H_\gamma,
\end{split}
\label{totder}
\ee
where we used the integral equation \eqref{expcX}.
Multiplying this identity by 3/4 and adding its real part to \eqref{Phitwo-lv}, one finds
\bea
(e^\phi )_{\rm D3} &=&\frac{\tau_2}{2}\sum_{\gamma \in\Gamma_+}\Re
\int_{\ell_{\gammap}}\de z\, a_{\gamma,-\frac{3}{2}}(z)\,H_{\gamma}
\nn\\
&&
-\frac{1}{4}\sum_{\gamma_1,\gamma_2\in\Gamma_+}\[(tp_1p_2)
\(\Re\int_{\ell_{\gammap_1}}\!\!\de z_1\, H_{\gamma_1}\)\( \Re\int_{\ell_{\gammap_2}}\!\!\de z_2\, H_{\gamma_2}\)
\right.
\label{Phi-lv}
\\
&&\left.\qquad
+\frac{3}{4}\,\Re \(\int_{\ell_{\gammap_1}}\!\!\de z_1\, H_{\gamma_1}\int_{\ell_{\gammap_2}}\!\!\de z_2\, H_{\gamma_2} \,
\frac{\I\langle\gamma,\gamma'\rangle}{z-z'}\)\],
\nn
\eea
where we introduced
\be
a_{\gamma,\wh}(z)=-\(\hat q_0+\hf\,(q+b)^2
+\frac{1}{2}\, (pt^2)zz_\gamma+\frac{\wh}{4\pi\tau_2}\).
\label{defag}
\ee
The meaning of this function is actually very simple: it gives the action of the modular covariant derivative operator
$\cD_{\wh}$ \eqref{modcovD} on the classical part of the Darboux coordinate \eqref{clactinst},
\be
\cD_{\wh}\cXcl_{\gamma} =a_{\gamma,\wh}(z)\cXcl_{\gamma}.
\ee
Combining this fact with equation \eqref{expcX}, it is easy to check that the function \eqref{defcF2} satisfies,
for any value of the weight $\wh$,
\bea
\cD_{\wh}\cG&=& \sum_{\gamma\in\Gamma_+}\int_{\ell_{\gammap}} \de z\,a_{\gamma,\wh}(z)\, H_{\gamma}
\\
&&
+\frac{1}{8\tau_2}\sum_{\gamma_1,\gamma_2\in\Gamma_+}
\int_{\ell_{\gamma_1}}\de z_1\,\int_{\ell_{\gamma_2}} \de z_2
\[(tp_1p_2)+2\wh\((tp_1p_2)+\frac{\I\langle\gamma_1,\gamma_2\rangle}{z_1-z_2} \)\]H_{\gammap_1}(z_1)H_{\gammap_2}(z_2),
\nn
\\
\p_{\tc_a}\cG&=& 2\pi\I \sum_{\gamma\in\Gamma_+}p^a \int_{\ell_{\gamma}} \de z\, H_{\gamma},
\label{prop-tcFc}
\eea
which allows to rewrite \eqref{Phi-lv} exactly as in \eqref{Phitwo-tcF}.

\addtocontents{toc}{\vspace{-0.1cm}}
\section{Smoothness of the instanton generating function}
\label{ap-smooth}

In this appendix we prove the smoothness of the function $\cG$ across walls of marginal stability.
The starting point is the representation \eqref{treeFh-fl} where the potential discontinuities are hidden
in the kernel of the theta series \eqref{totker}.
Given that $\intPhi_{m}$ is represented as a sum over unrooted labelled trees \eqref{kerPhin},
whereas $\gPhi_{n_k}$ appear as sums over flow trees \eqref{ker-fl},
the kernel $\Phi^{{\rm tot}}_{n}$ can be viewed as a sum over `blooming trees'
which are unrooted trees with a flow tree (the `flower') growing from each vertex.
Then the idea is that the discontinuities due to flow trees (i.e. due to DT invariants) of a blooming tree with $m$ vertices
are cancelled by the discontinuities due to exchange of integration contours in $\intPhi_{m+1}$ corresponding
to blooming trees with $m+1$ vertices.

\lfig{Combination of trees showing cancellation of discontinuities across the wall of marginal stability
corresponding to the decay $\gamma_\ver\to \gamma_L+\gamma_R$. The parts corresponding to attractor flow trees are drown in blue.}
{smooth}{17.5cm}{fig-smooth}{-1.3cm}

Following this idea, let us consider a tree which has an attractor flow tree $T_\ver$ growing from a vertex $\ver$ (see Fig. \ref{fig-smooth}).
We denote by $\cT_i$, $i=1,\dots,n_\ver$, the blooming subtrees connected to this vertex and by $T_{L}$
and $T_{R}$ the two parts (which may be trivial) of $T_\ver$ with the total charges $\gamma_L$ and $\gamma_R$
so that $\gamma_\ver=\gamma_L+\gamma_R$.
Together with the contribution of this tree, we consider the contributions
of the trees obtained by splitting the vertex $\ver$ into
two vertices $\ver_L$ and $\ver_R$
connected by an edge,
 carrying charges $\gamma_L$ and $\gamma_R$
and all possible allocations of the subtrees $\cT_i$ to these two vertices.
Different allocations are accounted for by the sum over permutations,
whilethe weight $\frac{1}{\ell!(n_\ver-\ell)!}$ takes into account the fact that
permutations between subtrees connected to one vertex are redundant.
The attractor flow trees $T_{L}$ and $T_{R}$
are then connected to $\ver_L$ and $\ver_R$, respectively, as shown in Fig. \ref{fig-smooth}.

The contribution corresponding to the first blooming tree has a discontinuity at the wall of marginal stability for the bound state
$\gamma_L+\gamma_R$ and originating from the factor $\Delat{z}_{\gamma_L\gamma_R}$ assigned to the root vertex of $T_\ver$.
The other contributions have discontinuities at the same wall
due to the exchange of the contours $\ell_{\gamma_L}$ and $\ell_{\gamma_R}$ for the integrals assigned to $\ver_L$ and $\ver_R$, respectively.
They are given by the residues at the pole of the integration kernel $K_{\gamma_L\gamma_R}$.
It is clear that the structure of all jumps is very similar since different subtrees produce essentially the same weights.
Let us analyze what differences may arise.
\begin{itemize}
\item
First, the contributions of the flow trees $T_{L}$ and $T_{R}$ could differ in the two cases
because they have different starting points for the attractor flows:
for the trees on the right side of Fig. \ref{fig-smooth} this is $z^a\in\cM_K$, whereas for the tree on the left
this is the point on the wall for $\gamma_L+\gamma_R$ reached by the flow from $z^a$.
But we are evaluating the discontinuity exactly on the wall where the two points coincide.
Thus, the contributions are the same.

\item
Although the subtrees $\cT_i$ give rise to the same contributions for all blooming trees shown in Fig. \ref{fig-smooth},
the contributions of the edges connecting them to either $\ver$ or $\ver_L$, $\ver_R$ are not exactly the same.
Each of them contributes the factor given by the kernel \eqref{defkerK}.
After taking the residue, the $z$-dependence of these kernels for the trees shown on the left and the right sides of the picture is identical.
However, their charge dependence is different: for the tree on the left they depend on $\gamma_\ver$, whereas for the trees on the right
they depend on $\gamma_L$ or $\gamma_R$, depending on which vertex they are connected to.
But it is easy to see that the sums over $\ell$ and permutations
produce the standard binomial expansion of a single product of kernels which all depend on $\gamma_L+\gamma_R=\gamma_\ver$
and thus coinciding with the contribution of the tree on the left.

\item
Finally, one should take into account that the discontinuity of $\Delat{z}_{\gamma_L\gamma_R}$
in the first contribution gives the factor $\langle\gamma_L,\gamma_R\rangle$.
But exactly the same factor arises as the residue of $K_{\gamma_L\gamma_R}$ corresponding to the additional edge.
\end{itemize}

Thus, it remains only to check that all numerical factors work out correctly.
Leaving aside the factors which are common to both contributions, we have
\be
-\frac{(-1)^{\langle\gamma_L,\gamma_R\rangle}}{2}\,\frac{\sigma_{\gamma_\ver}}{(2\pi)^2}\, \frac{2m}{ m!}
+(2\pi\I)(-2\pi\I)\,\frac{\sigma_{\gamma_L}\sigma_{\gamma_R}}{(2\pi)^4}\,\frac{m(m+1)}{(m+1)!}.
\label{numweights}
\ee
Here $-\frac{(-1)^{\langle\gamma_L,\gamma_R\rangle}}{2}$ comes from the factor $-\Delat{z}_{\gamma_L\gamma_R}$ in $T_\ver$,
the factors with quadratic refinement are due to functions $H_\gamma$ \eqref{prepHnew} assigned to $\ver$ or $\ver_L$, $\ver_R$,
$(2\pi\I)$ is the standard weight of the residue, $(-2\pi\I)$ is the residue of $K_{\gamma_L\gamma_R}$ \eqref{defkerK},
and factorials are the weights of the trees in the expansion \eqref{expl-tcA}.
Finally, the factors $2m$ and $m(m+1)$ arise due to the freedom to relabel charges assigned to the marked vertices:
on the left these are vertex $\ver$ and the two children of the root in $T_\ver$, whereas on the right these are $\ver_L$ and $\ver_R$.
It is immediate to check that all these numerical weights cancel, which ensures
that the function $\cG$ is continuous across walls of marginal stability.
Moreover, in this cancellation the condition that we sit on the wall was used only in locally constant factors.
Therefore, this reasoning proves not only that $\cG$ is continuous, but that it is actually smooth around these loci.

\addtocontents{toc}{\vspace{-0.1cm}}
\section{Indefinite theta series and generalized error functions}
\label{ap-indef}

\subsection{Vign\'eras' theorem}
\label{subsec-Vign}

Let $\Lat$ be a $d$-dimensional lattice equipped with a bilinear form
$(\bfx,\bfy)\equiv \bfx\cdot\bfy$, where $\bfx,\bfy\in \Lat \otimes \IR$, such that its associated quadratic form
has signature $(n,d-n)$ and is integer valued, i.e. $\bfk^2\equiv \bfk\cdot \bfk\in\IZ$ for $\bfk\in\Lat$.
Furthermore, let $\bfp\in\Lat$  be a characteristic vector
(such that $\bfk\cdot(\bfk+ \bfp)\in 2\IZ$, $\forall \,\bfk \in \Lat$),
$\bfmu\in\Lat^*/\Lat$ a glue vector, and $\lambda$ an arbitrary integer.
We consider the following family of theta series
\be
\label{Vignerasth}
\vartheta_{\bfp,\bfmu}(\Phi,\lambda;\tau, \bfb, \bfc)=\tau_2^{-\lambda/2}
\!\!\!\!
\sum_{{\bfk}\in \Lat+\bfmu+\hf\bfp}\!\!
(-1)^{\bfk\cdot\bfp}\,\Phi(\sqrt{2\tau_2}(\bfk+\bfb))\, \expe{- \tfrac{\tau}{2}\,(\bfk+\bfb)^2+\bfc\cdot (\bfk+\haf\bfb)}
\ee
defined by a kernel $\Phi(\bfx)$ such that the function $f(\bfx)\equiv \Phi(\bfx)\, e^{\tfrac{\pi}{2}\,\bfx^2}
\in L^1(\Lat\otimes\IR)$ so that the sum is absolutely convergent.
Irrespective of the choice of this kernel and of the parameter $\lambda$, any such theta series satisfies the
following elliptic properties
\be
\begin{split}
\label{Vigell}
\vartheta_{\bfp,\bfmu}\left({\Phi} ,\lambda; \tau, \bfb+\bfk,\bfc\right) =&(-1)^{\bfk\cdot\bfp}\,
\expe{-\haf\, \bfc\cdot \bfk} \vartheta_{\bfp,\bfmu}\left({\Phi} ,\lambda; \tau, \bfb,\bfc\right),
\\
\vphantom{A^A \over A_A}
\vartheta_{\bfp,\bfmu}\left({\Phi}, \lambda; \tau, \bfb,\bfc+\bfk \right)=&(-1)^{\bfk\cdot\bfp}\,
\expe{\haf\, \bfb\cdot \bfk} \vartheta_{\bfp,\bfmu}\left({\Phi} ,\lambda; \tau, \bfb,\bfc\right).
\end{split}
\ee

Now let us require that in addition the kernel satisfies the following two conditions:
\begin{enumerate}
\item
Let $D(\bfx)$ be any differential operator of order $\leq 2$, and
$R(\bfx)$ any polynomial of degree $\leq 2$. Then $f(\bfx)$ defined above must
be such that $f(\bfx)$, $D(\bfx)f(\bfx)$ and $R(\bfx)f(\bfx)\in
L^2(\Lat\otimes\IR)\bigcap L^1(\Lat\otimes\IR)$.
\item
$\Phi(\bfx)$ must satisfy
\be
\label{Vigdif}
V_\lambda\cdot \Phi(\bfx)=0,
\qquad
V_\lambda= \partial_{\bfx}^2   + 2\pi \( \bfx\cdot \pa_{\bfx}  - \lambda\)  .
\ee
\end{enumerate}
Then in \cite{Vigneras:1977} it was proven that
the theta series \eqref{Vignerasth} transforms as a vector-valued modular form of
weight $(\lambda+d/2,0)$ (see Theorem 2.1 in \cite{Alexandrov:2016enp} for the
detailed transformation under $\tau\to-1/\tau$). We refer to $V_\lambda$ as Vign\'eras' operator.
The simplest example is the Siegel theta series for which the kernel is $\Phi(\bfx)=e^{-\pi \bfx_+^2}$ where $\bfx_+$
is the projection of $\bfx$ on a fixed positive plane of dimension $n$.
This kernel is annihilated by $V_{-n}$.

In this paper we apply the Vign\'eras' theorem to the case of $\Lat=\oplus_{i=1}^n \Lambda_i$.
Thus, the charges appearing in the description of the theta series \eqref{Vignerasth} are
of the type $\bfk=(k_1^a,\dots,k_n^a)$, whereas the vectors $\bfb$ and $\bfc$ are taken with $i$-independent components, namely,
$\bfb_i^a=b^a$, $\bfc_i^a=c^a$ for $i=1,\dots, n$.
The lattices $\Lambda_i$ carry the bilinear forms $\kappa_{i,ab}=\kappa_{abc}p_i^c$ which are all of signature $(1,b_2-1)$.
This induces a natural bilinear form on $\Lat$:
\be
\bfx\cdot\bfy=\sum_{i=1}^n (p_ix_iy_i).
\label{biform}
\ee

Note also that the sign factor $(-1)^{\bfk\cdot\bfp}$ in \eqref{Vignerasth}
can be identified with the quadratic refinement provided we choose the latter as
\be
\sigma_\gamma=\sigma_{p,q}\equiv \expe{\hf\, p^a q_a}\sigma_p,
\qquad
\sigma_p=\expe{\hf\, A_{ab}p^ap^b}.
\label{qf}
\ee
The matrix $A_{ab}$, satisfying
\be
\label{ALS}
A_{ab} p^p - \frac12\, \kappa_{abc} p^b p^c\in \IZ \quad \text{for}\ \forall p^a\in\IZ\, ,
\ee
appears due to the non-trivial quantization of charges on the type IIB side \eqref{fractionalshiftsD5}
and can be used to perform a symplectic rotation to identify them
with mirror dual integer charges on the type IIA side \cite{Alexandrov:2010ca}.
It is easy to check that the quadratic refinement \eqref{qf} satisfies \eqref{defqf}.

\subsection{Generalized error functions}
\label{ap-generror}

An important class of solutions of Vign\'eras' equation is given by the error function and its generalizations constructed
in \cite{Alexandrov:2016enp} and further elaborated in \cite{Nazaroglu:2016lmr}.
Let us take
\bea
\label{defM1int}
M_1(u)&=&-\sgn(u)\, \Erfc(|u|\sqrt{\pi}) = \frac{\I}{\pi} \int_{\ell}\frac{\de z}{z}\,
e^{-\pi z^2 -2\pi \I z u},
\\
E_1(u)&=&\sgn(u)+M_1(u)
\nn\\
&=& \Erf(u\sqrt{\pi})= \int_{\IR} \de u' \, e^{-\pi(u-u')^2} \sign(u'),
\label{defE1int}
\eea
where the contour $\ell=\IR-\I u$ runs parallel to the real axis through the saddle point at $z=-\I u$.
Then, given a vector with a positive norm $\bfv^2>0$ so that $|\bfv|=\sqrt{\bfv^2}$, we define
\be
\Phi_1^E(\bfv;\bfx)=E_1\(\frac{\bfv\cdot\bfx}{|\bfv|}\),
\qquad
\Phi_1^M(\bfv;\bfx)=M_1\(\frac{\bfv\cdot\bfx}{|\bfv|}\).
\ee
It is easy to check that the first function is a smooth solution of \eqref{Vigdif} with $\lambda=0$,
whereas the second is exponentially suppressed at large $\bfx$ and also solves the same equation,
but only away from the locus $\bfv\cdot\bfx=0$ where it has a discontinuity.

Generalizing the integral representations \eqref{defM1int} and \eqref{defE1int}, we define
the generalized (complementary) error functions
\bea
M_n(\cM;\vu)&=&\(\frac{\I}{\pi}\)^n |\det\cM|^{-1} \int_{\IR^n-\I \vu}\de^n z\,
\frac{e^{-\pi \vz^{\rm tr} \vz -2\pi \I \vz^{\rm tr} \vu}}{\prod(\cM^{-1}\vz)},
\label{generr-M}
\\
E_n(\cM;\vu)&=& \int_{\IR^n} \de \vu' \, e^{-\pi(\vu-\vu')^{\rm tr}(\vu-\vu')} \sign(\cM^{\rm tr} \vu'),
\label{generr-E}
\eea
where $\vz=(z_1,\dots,z_n)$ and $\vu=(u_1,\dots,u_n)$ are $n$-dimensional vectors, $\cM$ is $n\times n$ matrix of parameters,
and we used the shorthand notations $\prod \vz=\prod_{i=1}^n z_i$ and $\sign(\vu)=\prod_{i=1}^n \sign(u_i)$.
The detailed properties of these functions can be found in \cite{Nazaroglu:2016lmr}.
Here we mention only a few:
\begin{itemize}
\item
$M_n$ are exponentially suppressed for large $\vu$ as $M_n\sim \frac{(-1)^n}{\pi^n}|\det\cM|^{-1}\,
\frac{e^{-\pi \vu^{\rm tr} \vu}}{\prod(\cM^{-1}\vu)}$,
whereas $E_n$ are locally constant for large $\vu$ as $E_n\sim \sign(\cM^{\rm tr} \vu)$.
\item
More generally, $E_n$ can be expressed as a linear combination of $M_k$, $k=0,\dots,n$, multiplied by $n-k$ sign functions,
generalizing the first relation in \eqref{defE1int} (see Eq. \eqref{expPhiE} below for a precise statement).
\item
From \eqref{generr-E} it follows that every identity between products of sign functions
implies an identity between generalized error functions $E_n$.
Moreover, expanding the $E_n$ functions in terms of $M_k$'s and sign functions,
one obtains similar identities for functions $M_n$.
For instance, the identity
\be
(\sgn(x_1)+\sgn(x_2))\,\sgn(x_1+x_2)=1+\sgn(x_1)\,\sgn(x_2)
\label{signprop-ap}
\ee
implies
\be
\begin{split}
E_2((\vv_1,\vv_1+\vv_2);\vu)+E_2((\vv_2,\vv_1+\vv_2);\vu)=&\, 1+E_2((\vv_1,\vv_2);\vu),
\\
M_2((\vv_1,\vv_1+\vv_2);\vu)+M_2((\vv_2,\vv_1+\vv_2);\vu)=&\, M_2((\vv_1,\vv_2);\vu),
\end{split}
\label{genEprop-ap}
\ee
where $\vv_1,\vv_2$ are two-dimensional vectors used to encode the $2\times 2$ matrix of parameters.
\end{itemize}

The main reason to introduce these functions is that, similarly to the usual error
and complementary error functions, they can be used to produce solutions of
Vign\'eras' equation on $\IR^{n,d-n}$.
To write them down, let us consider $d\times n$ matrix $\cV$ which can be viewed as a collection of $n$ vectors,
$\cV=(\bfv_1,\dots,\bfv_n)$. We assume that these vectors span a positive definite subspace,
i.e. $\cV^{\rm tr}\cdot\cV$ is a positive definite matrix.
Let $\cB$ be $n\times d$ matrix whose rows define an orthonormal basis for this subspace.
Then we define the {\it boosted} generalized error functions
\be
\Phi_n^M(\cV;\bfx)=M_n(\cB\cdot \cV;\cB\cdot \bfx),
\qquad
\Phi_n^E(\cV;\bfx)=E_n(\cB\cdot \cV;\cB\cdot \bfx).
\label{generrPhiME}
\ee
It can be shown that both these functions satisfy Vign\'eras' equation
(for $\Phi_n^M$ one should stay away from its discontinuities, i.e. loci where $\sign((\cB\cdot \cV)^{-1}\cB\cdot\bfx)=0$).
Moreover, they are symmetric under permutation of the vectors $\bfv_i$.
Since at large $\bfx$ one has $\Phi_n^E\sim \sign(\cV^{\rm tr}\cdot\bfx)=\prod_{i=1}^n \sign(\bfv_i\cdot\bfx)$,
one can think about this function as providing the modular completion for (indefinite) theta series with kernel given by a product of signs.

The relation between functions $E_n$ and $M_n$ mentioned above implies a similar relation between the functions \eqref{generrPhiME}.
For generic $n$, it takes the following form\footnote{This relation reduces
to Eq. (3.53) and (6.18) in \cite{Alexandrov:2016enp} for $n=2,3$, and follows  from
Eq. (63) in \cite{Nazaroglu:2016lmr} for any $n$.}
\be
\Phi_n^E(\cV;\bfx)=\sum_{\cI\subseteq \Zv_n}\Phi_{|\cI|}^M(\{\bfv_i\}_{i\in\cI};\bfx)
\prod_{j\in \Zv_n\setminus \cI}\sign(\bfv_{j\perp \cI},\bfx),
\label{expPhiE}
\ee
where the sum goes over all possible subsets (including the empty set) of the set $\Zv_{n}=\{1,\dots,n\}$, $|\cI|$ is the cardinality of $\cI$,
and $\bfv_{j\perp \cI}$ denotes the projection of $\bfv_j$ orthogonal to the subspace spanned by $\{\bfv_i\}_{i\in\cI}$.
The cardinality $|\cI|$ can also be interpreted as the number of directions in $\IR^d$
along which the corresponding contribution has an exponential fall off.

Finally, note that a solution of Vign\'eras' equation  can be uplifted to a solution
of the same equation with $\lambda$ shifted to $\lambda+1$
by acting with the differential operator
\be
\cD(\bfv)=\bfv\cdot\(\bfx+\frac{1}{2\pi}\,\p_\bfx\),
\label{defcDif}
\ee
which realizes the action of the covariant derivative raising the holomorphic weight by 1.
In particular, we can construct solutions with $\lambda=m$ which behave for large $\bfx$ as products of $n$ sign functions.
To this end, it is enough to act on $\Phi_n^E$ by this operator $m$ times. Thus, we define
the {\it uplifted boosted} error function
\be
\tPhi_{n,m}^E(\cV,\tcV;\bfx)=\[\prod_{i=1}^m \cD(\tbfv_i)\]\Phi_n^E(\cV;\bfx),
\label{deftPhigen}
\ee
where $\tcV=(\tbfv_1,\dots,\tbfv_m)$ encodes the vectors contracted with the covariant derivatives.
Since the operators $\cD(\tbfv_i)$ commute, \eqref{deftPhigen} is invariant under independent
permutations of the vectors $\bfv_i$ and $\tbfv_i$.
In the case where all $\tbfv_i$ are mutually orthogonal,
one finds the following asymptotics at large $\bfx$
\be
\lim_{\bfx\to\infty}\tPhi_{n,m}^E(\cV,\tcV;\bfx)= \prod_{i=1}^m (\tbfv_i,\bfx)\prod_{j=1}^n  \sign(\bfv_j,\bfx).
\label{asympt-tphi}
\ee
Note that the derivative $\p_\bfx$ in \eqref{deftPhigen} does not act on sign functions since $\Phi_n^E$ is smooth
and all discontinuities due to signs are guaranteed to cancel.
Similarly to \eqref{deftPhigen}, we can also define $\tPhi_{n,m}^M(\cV,\tcV;\bfx)$ where the action of derivatives on the discontinuities
of $\Phi_n^M$ is ignored as well. In the particular case $m=n$, we will omit the second label and simply write $\tPhi_{n}^E$ or $\tPhi_{n}^M$.

\addtocontents{toc}{\vspace{-0.1cm}}
\section{Twistorial integrals and generalized error functions}
\label{ap-twistint}

In this appendix we evaluate the kernels $\intPhi_n$ \eqref{kerPhin} and show that they can be expressed through
the generalized error functions introduced in appendix \ref{ap-generror}.
To this end, let us note the following identity
\be
\I\,\cD(\bfv_{ij})\,
\frac{W_{p_i}(x_i,z_i)\,W_{p_j}(x_j,z_j)}{z_i-z_j}
= \hK_{ij}\, W_{p_i}(x_i,z_i)\,W_{p_j}(x_j,z_j).
\ee
By virtue of this relation, the kernel can be represented as
\be
\intPhi_n(\bfx)=
\frac{1}{n!}\sum_{\cT\in\, \IT_n^\ell}\[\prod_{e\in E_\cT}
\cD(\bfv_{s(e) t(e)}) \]
\Phi_\cT(\bfx),
\label{PhiPhiT}
\ee
where
\be
\Phi_\cT(\bfx) =\frac{\I^{n-1}}{(2\pi)^n}\[\prod_{i=1}^n\int_{\ell_{\gamma_i}}\de z_i \, W_{p_i}(x_i,z_i)\]
\frac{1}{\prod_{e\in E_\cT}\(z_{s(e)}-z_{t(e)}\)}\, .
\label{PhiT}
\ee

One may think that the representation \eqref{PhiPhiT} misses
contributions from the covariant derivatives acting on each other.
However, such contributions are proportional to the scalar products of two vectors $\bfv_{s(e) t(e)}$ and
are non-vanishing provided the two edges have a common vertex.
If this is the case, the two edges generate the following factor
\be
\frac{(p_{\ver_1}p_{\ver_2}p_{\ver_{3}})}{(z_{\ver_1}-z_{\ver_3})(z_{\ver_2}-z_{\ver_{3}})}\,  ,
\label{factorder}
\ee
where $\ver_1,\ver_2,\ver_3$ are the tree vertices joint by the edges and $\ver_{3}=e_1\cap e_2$.
The crucial observation is that if we pick up 3 subtrees, each with a marked vertex,
there are exactly 3 ways to form a labelled tree out of them by joining the marked vertices as shown in Fig. \ref{fig-Vign3} on page \pageref{fig-Vign3}.
Each subtree contributes the same factor in all 3 cases, whereas the joining edges and the sum over trees
give rise to the vanishing factor
\be
\frac{(p_1p_2p_3)}{(z_1-z_3)(z_2-z_3)}+\frac{(p_1p_2p_3)}{(z_1-z_2)(z_1-z_3)}-\frac{(p_1p_2p_3)}{(z_2-z_3)(z_1-z_2)}=0.
\label{case3}
\ee
This ensures that no additional contributions arise and thereby proves \eqref{PhiPhiT}.
Note that for this proof it was crucial that inequivalent labelled trees enter the sum with the same weight.


Then let us do the change of variables
\be
z_i=z'-\frac{\I(pxt)}{\sqrt{2\tau_2}(pt^2)}+\sum_{\alpha=1}^{n-1} e_i^\alpha z'_\alpha,
\label{chvar-e}
\ee
where $x^a=\kappa^{ab}\sum\kappa_{i,bc} x^c_i$ (cf. \eqref{defprimevar}), and
$e_i^\alpha$ are such that
\be
\sum_{i=1}^n \ptt_i e_i^\alpha=0,
\qquad
\sum_{i=1}^n \ptt_i e_i^\alpha e_i^\beta=0, \quad \alpha\ne \beta
\label{defchangebas}
\ee
and we introduced the convenient notation $\ptt_i=(p_it^2)$.
Labeling the $n-1$ edges of the tree by the same index $\alpha$,
one can rewrite the function \eqref{PhiT} in the new variables as
\be
\Phi_\cT(\bfx) =\frac{\I^{n-1} \sqrt{\Delta}}{(2\pi)^n}\,
e^{-\frac{\pi(pxt)^2}{(pt^2)}}\int \de z' \, e^{-2\pi\tau_2(pt^2)z'^2}\prod_{\alpha=1}^{n-1}\int
\frac{\de z'_\alpha \,e^{-2\pi\tau_2\, \Delta_\alpha (z'_\alpha)^2 -2\pi\I\sqrt{2\tau_2}\, w^\alpha z'_\alpha}}
{\sum_{\beta=1}^{n-1}\(e^\beta_{s(\alpha)}-e^\beta_{t(\alpha)}\)z'_\beta}\, ,
\ee
where
\be
w^\alpha=\sum_{i=1}^n (p_ix_i t) e_i^\alpha,
\qquad
\Delta_\alpha=\sum_{i=1}^n \ptt_i (e_i^\alpha)^2,
\qquad
\Delta=\frac{\ptt\prod_{\alpha=1}^{n-1} \Delta_\alpha}{\prod_{i=1}^{n}\ptt_i}\, .
\label{defDeltaa}
\ee
The integral over $z'$ is Gaussian and is easily evaluated.
In the remaining integrals, rescaling the integration variables by $\sqrt{2\tau_2\Delta_\alpha}$,
one recognizes the generalized error functions \eqref{generr-M}.
Thus, one obtains
\be
\Phi_\cT(\bfx) =\frac{1}{2^{n-1}}\,\frac{\sqrt{\Delta}|\det\cM|}{\prod_{\alpha=1}^{n-1} \Delta_\alpha}\,
\intPhi_1(x)\, M_{n-1}\(\cM;\left\{\frac{w^\alpha}{\sqrt{\Delta_\alpha}}\right\} \).
\label{resPhiT}
\ee
where we used the function $\intPhi_1$ \eqref{defPhi1-main}
and introduced the matrix $\cM$ such that
\be
\label{Mabinv}
\cM^{-1}_{\alpha\beta}=(\Delta_\alpha\Delta_\beta)^{-1/2}\(e^\beta_{s(\alpha)}-e^\beta_{t(\alpha)}\).
\ee

A simple solution to the conditions \eqref{defchangebas} may be constructed as follows.
Let $T$ be a rooted ordered binary tree with $n$ leaves decorated by $\gamma_i$, $i=1\dots n$. As usual for such trees,
other vertices $v$ carry charges given by the sum of
charges of their children, i.e. $\gamma_v=\sum_{i\in \cI_v}\gamma_i$ where $\cI_v$ is the set of leaves which are descendants of $v$.
There are $n-1$ such vertices which we label by index $\alpha$. Then we can choose
\be
\label{eaichoose}
e^\alpha_i=\sum_{j\in\cI_{\Lv{v_\alpha}}}\sum_{k\in\cI_{\Rv{v_\alpha}}} \(\delta_{ij}\,\ptt_k-\delta_{ik}\,\ptt_j\),
\ee
which satisfy \eqref{defchangebas} as can be easily checked.
For this choice (cf. \eqref{deftv})
\be
\begin{split}
w^\alpha=&\, (\tbfu_{\alpha},\bfx),
\quad \mbox{where}\quad
\tbfu_\alpha=\sum_{i\in\cI_{\Lv{v_\alpha}}}\sum_{j\in\cI_{\Rv{v_\alpha}}}\bfu_{ij},
\\
\Delta_\alpha=&\,\tbfu_\alpha^2=\ptt_{v_\alpha} \ptt_{\Lv{v_\alpha}}\ptt_{\Rv{v_\alpha}}.
\end{split}
\label{dataechoice}
\ee
Note that the vectors $\tbfu_\alpha$ are mutually orthogonal.

In principle, any rooted binary tree $T$ is suitable for the above construction.
However, given the unrooted tree $\cT$, there is a simple (but non-unique) choice of $T$ which
simplifies  the resulting matrix $\cM$. To define it, let us construct a partially increasing family
of subtrees of $\cT$, such that two members in this family are either disjoint,
or contained in one another. Moreover, we require that the largest subtree is $\cT$ itself,
while each subtree containing more than one vertex is obtained by joining
two smaller subtrees along an edge of $\cT$.
Any such family contains $2n-1$ subtrees $\cT_{\hat\alpha}$ labelled by $\hat\alpha=1,\dots,2n-1$.
Among them, $n-1$ subtrees, which we label by $\alpha=\hat\alpha=1,\dots,n-1$, contain several vertices,
while the remaining  $n$ subtrees with label $\hat\alpha=n,\dots,2n-1$ have only one vertex.
For each subtree $T_\alpha$, we denote by $e_\alpha$ the edge of $\cT$
which is used to reconstruct $T_\alpha$ from two smaller subtrees $\cT_{\alpha_L}, \cT_{\alpha_R}$.
From this data, we construct a rooted binary tree $T$ with $n-1$ vertices in one-to-one correspondence with the
subtrees $\cT_\alpha$ and $n$ leaves in one-to-one correspondence with
the one-vertex subtrees $\cT_{\hat\alpha}$ with $\hat\alpha\ge n$.
In this correspondence, the two children of a vertex associated to $\cT_{\alpha}$ are the vertices associated
to the two subtrees $\cT_{\alpha_L}, \cT_{\alpha_R}$.
The ordering at each vertex is defined to be such that the subtree containing the source/target vertex of the corresponding edge $e_\alpha$
is on the left/right.\footnote{The orientation of the edges of $\cT$ is fixed already in \eqref{PhiPhiT},
but the full kernel $\intPhi_n$ does not depend on its choice.}
Of course, this construction is not unique since there are many ways to decompose $\cT$ into such a set of subtrees (see Fig. \ref{fig-basis}).

\lfig{An example of an unrooted labelled tree with 4 vertices and two choices of decompositions into subtrees with
the corresponding rooted binary trees. The edges  are labelled $e_1,e_2,e_3$ from left to right
along $\cT$.}
{basis}{15.cm}{fig-basis}{-0.6cm}

Applying the above construction to this particular choice of rooted tree, one finds
\be
\sqrt{\Delta_\alpha\Delta_\beta}\,\cM^{-1}_{\alpha\beta}=\left\{
\begin{array}{ll}
\ptt_{v_\alpha}, \quad & \alpha=\beta,
\\
\eps_{\alpha\beta} \ptt_{\Lv{v_\beta}}, \quad
& e_\alpha\cap \cT_{\beta_R} \ne \emptyset,\ e_\alpha\nsubseteq \cT_\beta,
\\
\eps_{\alpha\beta} \ptt_{\Rv{v_\beta}}, \quad
& e_\alpha\cap \cT_{\beta_L} \ne \emptyset,\ e_\alpha\nsubseteq \cT_\beta,
\\
0, & e_\alpha\cap \cT_\beta = \emptyset \ \mbox{ or }\ e_\alpha\subset \cT_\beta,
\end{array}
\right.
\label{invmatM}
\ee
where $\eps_{\alpha\beta}=-1$ if the orientations of $e_\alpha$ and $e_\beta$ on the path joining them are the same and $+1$ otherwise.
This result shows that the matrix $\cM^{-1}$ turns out to be triangular which makes it much simpler to find its inverse.
On the basis of \eqref{invmatM}, below we will prove the following

\begin{lemma}
One has $\cB\cdot \cV=\cM$ and $\cB\cdot \bfx=\left\{\frac{w^\alpha}{\sqrt{\Delta_\alpha}}\right\}$ provided
\be
\begin{split}
\cB=&\, \(\frac{\tbfu_1}{\sqrt{\Delta_1}},\dots, \frac{\tbfu_{n-1}}{\sqrt{\Delta_{n-1}}}\)^{\rm tr},
\\
\cV=&\, \ptt^{-1}\Bigl(\sqrt{\Delta_1}\bfu_1,\dots, \sqrt{\Delta_{n-1}}\bfu_{n-1}\Bigr),
\end{split}
\label{matrix-gener}
\ee
where the vectors $\bfu_\alpha$ are defined in \eqref{defue}.
Moreover, the vectors $\tbfu_\alpha$ form an orthogonal basis in the subspace
spanned by $\bfu_\alpha$.
\end{lemma}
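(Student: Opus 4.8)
The plan is to reduce everything to the orthogonal frame $\{\bft_k\}_{k=1}^n$, where $\bft_k$ carries $t^a$ in the $k$-th copy $\Lambda_k$ and vanishes elsewhere. For the bilinear form \eqref{biform} one has $(\bft_k,\bft_l)=\delta_{kl}\,\ptt_k$ with $\ptt_k=(p_kt^2)>0$ and $\bft=\sum_k\bft_k$. In this frame the elementary vectors \eqref{defvij} become linear, $\bfu_{ij}=\ptt_j\bft_i-\ptt_i\bft_j$, and substituting this into the definition of $\tbfu_\alpha$ in \eqref{dataechoice} and comparing with the numbers $e^\alpha_i$ of \eqref{eaichoose} gives the clean identification
\be
\tbfu_\alpha=\sum_{i}e^\alpha_i\,\bft_i .
\ee
Two of the three claims are then immediate. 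Since $(\bft_k,\bft_l)=\delta_{kl}\ptt_k$ yields $(\tbfu_\alpha,\tbfu_\beta)=\delta_{\alpha\beta}\Delta_\alpha$ (orthogonality being \eqref{dataechoice}), the rows $\tbfu_\alpha/\sqrt{\Delta_\alpha}$ of $\cB$ are orthonormal, and $(\cB\cdot\bfx)_\alpha=(\tbfu_\alpha,\bfx)/\sqrt{\Delta_\alpha}=w^\alpha/\sqrt{\Delta_\alpha}$ by \eqref{dataechoice}.

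For the identity $\cB\cdot\cV=\cM$ I would verify $\cM^{-1}\cdot(\cB\cdot\cV)=\unit$ with $\cM^{-1}$ taken directly from its definition \eqref{Mabinv}, so that the case-by-case evaluation \eqref{invmatM} is not even needed. Using the frame identity, the single-edge vector $\bfu_{s(\alpha)t(\alpha)}$ obeys $(\tbfu_\gamma,\bfu_{s(\alpha)t(\alpha)})=\ptt_{s(\alpha)}\ptt_{t(\alpha)}\bigl(e^\gamma_{s(\alpha)}-e^\gamma_{t(\alpha)}\bigr)$, which turns the entries of $\cM^{-1}$ into frame pairings. Substituting and collecting the $\gamma$-sum produces the orthogonal projector $\Pi=\sum_\gamma\Delta_\gamma^{-1}\tbfu_\gamma(\tbfu_\gamma,\cdot\,)$ onto $\mathrm{span}\{\tbfu_\gamma\}$, so that
\be
\bigl[\cM^{-1}(\cB\cdot\cV)\bigr]_{\alpha\beta}
=\frac{\sqrt{\Delta_\beta}}{\ptt\sqrt{\Delta_\alpha}\,\ptt_{s(\alpha)}\ptt_{t(\alpha)}}\,
\bigl(\Pi\,\bfu_{s(\alpha)t(\alpha)},\bfu_\beta\bigr).
\ee
The key simplification is that both $\bfu_{s(\alpha)t(\alpha)}$ and $\bfu_\beta$ lie in the hyperplane $W=\bft^{\perp}\cap\mathrm{span}\{\bft_k\}$ (each is a combination of the $\bfu_{ij}$, all of which are orthogonal to $\bft$), and, by a dimension count, $W=\mathrm{span}\{\tbfu_\gamma\}$ since the form restricted to $\mathrm{span}\{\bft_k\}$ is positive definite; hence $\Pi$ acts as the identity and drops out.

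It then remains to evaluate the single pairing $(\bfu_{s(\alpha)t(\alpha)},\bfu_\beta)$, which is the crux but is now entirely elementary. Writing $\bfu_\beta=P_{B_\beta}\bft_{A_\beta}-P_{A_\beta}\bft_{B_\beta}$ for the two sides $A_\beta,B_\beta$ of $\cT$ cut by $e_\beta$ (with $P_S=\sum_{k\in S}\ptt_k$ and $\bft_S=\sum_{k\in S}\bft_k$), and using that both endpoints of the edge $e_\alpha$ lie on the \emph{same} side of any other cut $e_\beta$ while lying on \emph{opposite} sides of its own cut, one finds
\be
\bigl(\bfu_{s(\alpha)t(\alpha)},\bfu_\beta\bigr)=\ptt\,\ptt_{s(\alpha)}\ptt_{t(\alpha)}\,\delta_{\alpha\beta},
\ee
the diagonal value following from $P_{A_\alpha}+P_{B_\alpha}=\ptt$. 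Feeding this back gives $\cM^{-1}(\cB\cdot\cV)=\unit$, hence $\cB\cdot\cV=\cM$. The only genuine subtlety, which I would state carefully, is justifying $W=\mathrm{span}\{\tbfu_\gamma\}$ independently of the $\bfu_\beta$, so the projector argument is not circular. Finally, the orthogonal-basis assertion follows at once: the $\tbfu_\alpha$ are $n-1$ mutually orthogonal nonzero vectors in the $(n-1)$-dimensional $W$, hence an orthogonal basis of it, and the edge vectors $\bfu_\alpha$, being $n-1$ vectors of $W$ related to the $\tbfu_\alpha$ by the invertible matrix $\cM$, span the same $W$.
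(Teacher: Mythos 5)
Your proposal is correct, and it takes a genuinely different route from the paper's proof. The paper establishes the spanning claim by showing that the Gram determinant of $\{\tbfu_\alpha\}_{\alpha=1}^{n-1}\cup\{\bfu_\beta\}$ vanishes, using the explicit pairings \eqref{res-scpr} and a recursive reduction of \eqref{detGram}; it then verifies $\cB\cdot\cV=\cM$ in the equivalent form $\sum_{\gamma}(\tbfu_\alpha, \bfu_\gamma) \sqrt{\Delta_\gamma\Delta_\beta}\,\cM_{\gamma\beta}^{-1}=\ptt\,\Delta_\alpha\delta_{\alpha\beta}$, which relies on the triangular form \eqref{invmatM} of $\cM^{-1}$ (specific to the nested-subtree construction of the rooted binary tree) and a case-by-case analysis of the relative position of $\cT_\alpha$ and $\cT_\beta$. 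You instead pass to the orthogonal frame $\bft_k$, in which $\bfu_{ij}=\ptt_j\bft_i-\ptt_i\bft_j$ and $\tbfu_\alpha=\sum_i e_i^\alpha\bft_i$ (both identities check out). Then (i) the spanning claim becomes a dimension count: the $n-1$ mutually orthogonal nonzero vectors $\tbfu_\alpha$ --- orthogonality being \eqref{defchangebas}, which \eqref{eaichoose} satisfies by construction, so there is no circularity --- must fill the $(n-1)$-dimensional positive-definite space $W=\bft^\perp\cap\mathrm{span}\{\bft_k\}$ containing all the $\bfu_{ij}$; this replaces the Gram-determinant computation. And (ii) the verification that $\cM^{-1}\cdot(\cB\cdot\cV)$ is the identity matrix collapses, since the orthogonal projector onto $W$ acts trivially on vectors of $W$, to the single elementary pairing $(\bfu_{s(\alpha)t(\alpha)},\bfu_\beta)=\ptt\,\ptt_{s(\alpha)}\ptt_{t(\alpha)}\,\delta_{\alpha\beta}$, which encodes the fact that any edge of a tree lies entirely on one side of the cut defined by any other edge (together with the fact that the weights $\ptt_k$ on the two sides of $e_\alpha$ sum to $\ptt$); this replaces the paper's case analysis. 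Your route buys uniformity and generality: it never invokes \eqref{invmatM}, uses $\cM^{-1}$ directly from its definition \eqref{Mabinv} (invertibility of $\cM^{-1}$ comes for free from the right-inverse identity you establish), and hence proves the Lemma for any admissible choice of rooted binary tree, not just the nested-subtree one. What the paper's route buys in exchange is the explicit list \eqref{res-scpr} of all pairings $(\tbfu_\alpha,\bfu_\beta)$ --- effectively the entries of $\cM$ itself --- which your argument never needs to determine, though the Lemma as stated does not require them.
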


This lemma allows to reexpress the kernel $\Phi_\cT$ \eqref{resPhiT}
in terms of the boosted generalized error function $\Phi^M_{n-1}$ \eqref{generrPhiME}.
It is important that its argument $\cV$ does not depend on the choice of
the binary rooted tree $T$, but only on the unrooted tree $\cT$.
In addition, the function actually does not
depend on the normalization of the vectors composing $\cV$.
Given also that the determinant of $\cM$ is found to be
\be
|\det\cM|=
\prod_{\alpha=1}^{n-1}\frac{ \Delta_\alpha}{\ptt_{v_\alpha}}
=\prod_{i=1}^{n}\ptt_i\prod_{v\in V_{T}\setminus{\{v_0\}}}\ptt_v
=\sqrt{\ptt^{-1}\,\prod_{i=1}^{n}\ptt_i\,\prod_{\alpha=1}^{n-1} \Delta_\alpha },
\ee
so that the prefactor in \eqref{resPhiT} cancels, one arrives at
\be
\Phi_\cT(\bfx) =\frac{1}{2^{n-1}}\,
\intPhi_1(x)\, \Phi^M_{n-1}(\{ \bfu_e\};\bfx).
\label{resPhiTend}
\ee
Finally, since the differential operator in \eqref{PhiPhiT} commutes with functions of $x$ due to the orthogonality of
$\bfv_{ij}$ and $\bft$, one can write the kernel $\intPhi_n$ as
\be
\intPhi_n(\bfx)=
\frac{\intPhi_1(x)}{2^{n-1} n!}\sum_{\cT\in\, \IT_n^\ell}\[\prod_{e\in E_\cT}\cD( \bfv_{s(e) t(e)})\]
\Phi^M_{n-1}(\{ \bfu_\alpha\};\bfx).
\label{Phin-prefinal}
\ee
which is the same as \eqref{Phin-final}.

\subsection*{Proof of Lemma 2}

We start by proving that the vectors $\tbfu_\alpha$ form an orthogonal basis in the subspace
spanned by $\bfu_\alpha$. Since the orthogonality is ensured by construction based on a rooted binary tree,
it remains to show that any vector $\bfu_\alpha$ can be decomposed as a linear combination of $\tbfu_\alpha$.
To this end, we show that the determinant of the Gram matrix constructed from the set of vectors
$\{\tbfu_\alpha\}_{\alpha=1}^{n-1}\cup\{\bfu_\beta\}$ vanishes.
This requires to calculate the scalar product $(\tbfu_\alpha, \bfu_\beta)$
which can be done using
\be
(\bfu_{ij},\bfu_{kl})=\left\{\begin{array}{ll}
\ptt_i\ptt_j\ptt_{i+j},\qquad &  i=k,\ j=l,
\\
\ptt_i\ptt_j\ptt_{l},\qquad &  i=k,\ j\ne l,
\\
0,\qquad &  i,j\ne k,l.
\end{array}\right.
\ee
Summing $i,j,k,l$ over appropriate subsets, it is immediate to see that $(\tbfu_\alpha, \bfu_\beta)=0$
if $\cT_{\alpha}\subset \cT_\beta^s$ or $\cT_\beta^t$,
which are the two trees obtained by
dividing the tree $\cT$ into two parts by cutting the edge $e_\alpha$.
In other words, it is non-vanishing only if $e_\beta\subseteq \cT_\alpha$.
Then there are two cases which give
\be
(\tbfu_\alpha, \bfu_\beta)=\left\{
\begin{array}{ll}
\ptt_\alpha^L\,\ptt_\alpha^R\,\ptt, \quad &
\alpha=\beta,
\\
\ptt_{\alpha\beta}^{Ls}\,\ptt_\alpha^R\,\ptt_\beta^t-\ptt_{\alpha\beta}^{Lt}\,\ptt_\alpha^R\,\ptt_\beta^s
+\ptt_{\alpha}^{L}\,\ptt_\alpha^R\,\ptt_\beta^s=\ptt_{\alpha\beta}^{Ls}\,\ptt_\alpha^R\,\ptt,\quad & e_\beta\subset \cT_\alpha,
\end{array}\right.
\label{res-scpr}
\ee
where we introduced
\be
\ptt_\alpha^L=\sum_{i\in\cI_{\Lv{v_\alpha}}}\ptt_i,
\qquad
\ptt_{\beta}^{s}=\sum_{i\in V_{\cT_\beta^s}}\ptt_i,
\qquad
\ptt_{\alpha\beta}^{Ls}=\sum_{i\in\cI_{\Lv{v_\alpha}}\cap V_{\cT_\beta^s}}\ptt_i,
\qquad
\ptt_{\alpha\beta}^{st}=\sum_{i\in V_{\cT_\alpha^s}\cap V_{\cT_\beta^t}}\ptt_i,
\ee
and similarly for variables with labels $R$ and $t$. In \eqref{res-scpr} in the second case we assumed that the orientation of edges is
such that $e_\beta\subset \cT_\alpha^s$ and $e_\alpha \subset \cT_\beta^t$.
If this is not the case, one should replace $s$ by $t$, $L$ by $R$ and flip the sign for each change of orientation.
Below we use the same assumption, but the computation can easily be generalized to a more general situation.

Given the result \eqref{res-scpr}, $(\tbfu_\alpha, \tbfu_\beta)=\Delta_\alpha\delta_{\alpha\beta}$ and
$(\bfu_\beta, \bfu_\beta)=\ptt_\beta^s\,\ptt_\beta^t\,\ptt$,
the determinant of the Gram matrix is easily found to be
\be
\det {\rm Gram}(\tbfu_1,\cdots, \tbfu_{n-1},\bfu_\beta)=
\ptt\prod_{\alpha=1}^{n-1}\Delta_\alpha
\[\ptt_\beta^s\,\ptt_\beta^t-\ptt \sum_{T_\alpha\supseteq e_\beta}\frac{(\ptt_{\alpha\beta}^{Ls})^2\,\ptt_\alpha^R}{\ptt_{v_\alpha}\ptt_\alpha^L } \].
\label{detGram}
\ee
Note that the subtrees $\cT_\alpha$ containing the edge $e_\beta$ form an ordered set so that the sum in the square brackets goes over
$\alpha_\ell$, $\ell=1,\dots, m$, such that $\cT_{\alpha_\ell}\subset \cT_{\alpha_{\ell+1}}$.
The first element of this set $\alpha_1=\beta$, whereas the last corresponds to the total tree, $\cT_{\alpha_m}=\cT$.
Due to $\ptt_{\alpha_m}^L=\ptt_{\alpha_m}^s$, $\ptt_{\alpha_m}^R=\ptt_{\alpha_m}^t$ and $\ptt_{\alpha_m\beta}^{Ls}=\ptt_\beta^s$,
the first term in the square brackets together with the term in the sum corresponding to $\alpha_m$ gives
\be
\ptt_\beta^s\(\ptt_\beta^t -\frac{\ptt_\beta^s\,\ptt_{\alpha_m}^t}{\ptt_{\alpha_m}^s}\)
=\frac{\ptt\,\ptt_\beta^s\,\ptt_{\alpha_m\beta}^{st}}{\ptt_{\alpha_m}^s},
\ee
where we have used that $\ptt_\beta^t=\ptt_{\alpha_m\beta}^{st}+\ptt_{\alpha_m}^t$ and $\ptt_{\alpha_m}^s=\ptt_{\alpha_m\beta}^{st}+\ptt_\beta^s$.
Thus, the expression in the square brackets in \eqref{detGram} becomes
\be
\frac{\ptt}{\ptt_{\alpha_m}^s}\[\ptt_\beta^s\,\ptt_{\alpha_m\beta}^{st}
-\ptt_{\alpha_m}^s\sum_{\ell=1}^{m-1}\frac{(\ptt_{\alpha_\ell\beta}^{Ls})^2\,\ptt_{\alpha_\ell}^R}{\ptt_{v_{\alpha_\ell}}\ptt_{\alpha_\ell}^L }\].
\ee
The new expression in the square brackets is exactly the same as in \eqref{detGram} where tree $\cT$ was replaced by subtree
$\cT_{\alpha_m}^s=\cT_{\alpha_{m-1}}$. Thus, one can repeat the above manipulation until one exhausts all terms in the sum.
As a result, the determinant of the Gram matrix turns out to be proportional to $\ptt_{\alpha_1\beta}^{st}$.
But since $\alpha_1=\beta$, this quantity, and hence the whole determinant, trivially vanish.

Next, we prove that \eqref{matrix-gener} is consistent with
$\cB\cdot \bfx=\left\{\frac{w^\alpha}{\sqrt{\Delta_\alpha}}\right\}$ and $\cB\cdot \cV=\cM$.
The first relation is a direct consequence of \eqref{dataechoice}. The second relation
requires to show that
$(\tbfu_\alpha, \bfu_\beta)=\ptt\,\Delta_\alpha^{1/2}\Delta_\beta^{-1/2}\cM_{\alpha\beta}$
or equivalently
$\sum_{\gamma}(\tbfu_\alpha, \bfu_\gamma) \sqrt{\Delta_\gamma\Delta_\beta}
\cM_{\gamma\beta}^{-1}=\ptt\,\Delta_\alpha\delta_{\alpha\beta}$.
Using the result \eqref{invmatM} for the matrix $\cM_{\gamma\beta}^{-1}$,
this relation can be written explicitly as
\be
\ptt_{v_\beta}(\tbfu_\alpha, \bfu_\beta)
+\(\sum_{e_\gamma \in E_\beta^R}
\eps_{\gamma\beta} \ptt_{\Lv{v_\beta}}
+\sum_{e_\gamma\in E_\beta^L}
\eps_{\gamma\beta} \ptt_{\Rv{v_\beta}}\)(\tbfu_\alpha, \bfu_\gamma)
=\ptt\,\Delta_\alpha\delta_{\alpha\beta},
\label{explicitmatrix}
\ee
where $E_{\beta}^{L}=\{e\in E_\cT:\ e\cap \cT_{\beta_L}\ne \emptyset, \ e\nsubseteq \cT_\beta\}$
and similarly for $E_{\beta}^{R}$.

Consider first the case $\alpha=\beta$.
From \eqref{res-scpr}, it immediately follows that the first term gives
$\ptt\,\Delta_\alpha$.
On the other hand, the second contribution sums over edges for which $\cT_\alpha\subseteq \cT_\gamma^s$ or $\cT_\gamma^t$,
and as noted above \eqref{res-scpr} this leads to vanishing of the scalar product.
Thus, in this case the relation \eqref{explicitmatrix} indeed holds.

Let us now show that it holds as well for $\alpha\ne \beta$.
To this end, one should consider several options. If $\cT_\alpha\cap \cT_\beta=\emptyset$, then $\cT_\alpha\subset \cT_\beta^s$ or $\cT_\beta^t$
which implies vanishing of the first term.
But the second term vanishes as well because the conditions $e_\gamma\subseteq \cT_\alpha$ and $e_\gamma \cap \cT_\beta\ne \emptyset$
are inconsistent with $\cT_\alpha\cap \cT_\beta=\emptyset$.

Similarly, if $\cT_\alpha\subset \cT_\beta$, one has $\cT_\alpha\subset \cT_\beta^s$ or $T_\beta^t$ which again leads to the vanishing of the first term,
whereas the vanishing of the second is a consequence of that $e_\gamma\subseteq \cT_\alpha$ implies $e_\gamma\subset \cT_\beta$
so that the sum over $e_\gamma$ is empty.

It remains to consider the case $\cT_\beta\subset \cT_\alpha$. It is clear that $\cT_\beta\subset \cT_\alpha^s$ or $\cT_\alpha^t$.
Without loss of generality, let us assume that $\cT_\beta\subset \cT_\alpha^s$ and $e_\alpha \subset \cT_\beta^t$.
Then according to \eqref{res-scpr}, the first term gives
$\ptt_{v_\beta}\,\ptt_{\alpha\beta}^{Ls}\,\ptt_\alpha^R\,\ptt$.
If instead we have chosen the orientation such that $e_\alpha \subset \cT_\beta^s$, then we would find
$-\ptt_{v_\beta}\,\ptt_{\alpha\beta}^{Lt}\,\ptt_\alpha^R\,\ptt$.
Similar results are obtained for each term in the sum of the second contribution.
Again without loss of generality we assume that for all relevant edges $e_\gamma$ one has $e_\alpha \subset \cT_\gamma^t$,
otherwise one flips their orientation.
Then the l.h.s. of \eqref{explicitmatrix} is proportional to
\be
\ptt_{v_\beta}\,\ptt_{\alpha\beta}^{Ls}
+\sum_{e_\gamma\in E_{\beta}^{R}}
\eps_{\gamma\beta} \ptt_\beta^L\,\ptt_{\alpha\gamma}^{Ls}
+\sum_{e_\gamma\in E_{\beta}^{L}}
\eps_{\gamma\beta} \ptt_\beta^R \,\ptt_{\alpha\gamma}^{Ls}.
\ee
Let $e_\alpha^\star\in E_{\beta}^{R}$ is the edge belonging to the path from $\cT_\beta$ to $e_\alpha$ (which may coincide with $e_\alpha$).
Then our choice of orientation implies that $\eps_{\gamma\beta}=-1$ for $e_\gamma\in E_\beta^L\cup\{e_\alpha^\star\}$ and
$\eps_{\gamma\beta}=1$ for $e_\gamma\in E_\beta^R\setminus\{e_\alpha^\star\}$.
Furthermore, one has
\be
\ptt_{\alpha\gamma_\alpha^\star}^{Ls}=\ptt_{v_\beta}+\sum_{e_\gamma \in E_\beta^L\cup E_\beta^R\setminus\{e_\alpha^\star\}}\ptt_{\alpha\gamma}^{Ls},
\qquad
\ptt_{\alpha\beta}^{Ls}=\ptt_\beta^L+\sum_{e_\gamma \in E_\beta^L}\ptt_{\alpha\gamma}^{Ls}.
\ee
As a result, one finds
\be
(\ptt_\beta^L+\ptt_\beta^R)\(\ptt_\beta^L+\sum_{e_\gamma \in E_\beta^L}\ptt_{\alpha\gamma}^{Ls}\)
-\ptt_\beta^L \,\ptt_{\alpha\gamma_\alpha^\star}^{Ls}
+\sum_{e_\gamma\in E_{\beta}^{R}\setminus\{e_\alpha^\star\}} \ptt_\beta^L\,\ptt_{\alpha\gamma}^{Ls}
-\sum_{e_\gamma\in E_{\beta}^{L}}
\ptt_\beta^R \,\ptt_{\alpha\gamma}^{Ls}=0.
\ee
This completes the proof of the required statement \eqref{explicitmatrix}.


\addtocontents{toc}{\vspace{-0.1cm}}
\section{Proofs of propositions}
\label{ap-proofs}

In this appendix we prove several propositions which we stated in the main text.

\subsection*{Proposition \ref{prop-iterwhg} \label{ap-prop1}}

To prove the recursive equation \eqref{recurs-whd-full}, let us note that the tree index $\gtr$ satisfies a very similar equation
(cf. \eqref{itereq} or \eqref{F-ansatz}) which can be seen as the origin of its representation \eqref{defgtree} in terms of attractor flow trees.
The only difference is the absence of the last term in \eqref{recurs-whd-full}.
Therefore, it is easy to see that this equation implies a simple relation between $\whg_n$ and $\gtr$
\be
\whg_n(\{\gama_i\},z^a)=\sum_{n_1+\cdots +n_m= n\atop n_k\ge 1}
\gtri{m}(\{\gama'_k\},z^a)
\prod_{k=1}^m W_{n_k}(\gama_{j_{k-1}+1},\dots,\gama_{j_{k}}),
\label{relwhgg}
\ee
where as usual we use notations from \eqref{groupindex}.

Next, we substitute this relation into the expansion \eqref{multihd-full}. The result can be represented in the following form
\be
h^{\rm DT}_{p,q}=\sum_{\sum_{i=1}^n \gama_i=\gama}
\gtr(\{\gama_i\},z^a)\,e^{\pi\I \tau Q_n(\{\gama_i\})}
\prod_{i=1}^n h^R_{p_i,\mu_i}(\tau),
\label{contr-dn-full}
\ee
where we introduced
\be
\begin{split}
h^{R}_{p,\mu}=&\, \sum_{\sum_{i=1}^n \gama_i=\gama}W_n(\{\gama_i\})\,e^{\pi\I \tau Q_n(\{\gama_i\})}
\prod_{i=1}^n \whh_{p_i,\mu_i}(\tau)
\\
=&\, \sum_{\sum_{i=1}^n \gama_i=\gama}\[\sum_{T\in\IT_n^{\rm S}}(-1)^{n_T}\prod_{v\in V_T} R_{v}\]\,e^{\pi\I \tau Q_n(\{\gama_i\})}
\\
&\, \times
\prod_{i=1}^n \(h_{p_i,\mu_i}
+\sum_{n_i=2}^\infty
\sum_{\sum_{j=1}^{n_i} \gama'_j=\gama}
R_{n_i}(\{\gama'_j\})
\, e^{\pi\I \tau Q_{n_i}(\{\gama'_j\})}
\prod_{j_i=1}^{n_i} h_{p'_{j_i},\mu'_{j_i}}\)
\end{split}
\label{contr-R-full}
\ee
and in the last relation we used the definition of $W_n$ \eqref{defSm} and the expansion of $\whh_{p,\mu}$ \eqref{exp-whh}.
The crucial observation is that if one picks up a factor $R_{n_i}$ from the second line of \eqref{contr-R-full},
appearing due to the expansion of $\whh_{p_i,\mu_i}$,
and combines it with the contribution of a tree $T$ from the first line, one finds the opposite of
the contribution of another tree obtained from $T$ by adding $n_i$ children to its $i$th leaf.
As a result, all such contributions cancel and the function \eqref{contr-R-full} reduces to the trivial term
\be
h^{R}_{p,\mu}=h_{p,\mu}.
\label{hRh-full}
\ee
Substituting this into \eqref{contr-dn-full}, it gives back the original expansion \eqref{multih}
of the generating function of DT invariants, which proves the recursive equation \eqref{recurs-whd-full}.

Finally, let us evaluate \eqref{multihd-full} at the attractor point $z^a_\infty(\gamma)$.
At this point the DT invariants coincide with
MSW invariants so that the l.h.s. becomes the generating function $h_{p,\mu}$.
Meanwhile,  all factors $\Delta_{\gamma_L\gamma_R}^z$ vanish at the attractor point
and $\whg_n$ reduces to $W_n$.
As a result, the relation \eqref{multihd-full} reproduces \eqref{exp-hwh}, which completes the proof of the proposition.

\subsection*{Proposition \ref{prop-solvecond} \label{ap-prop2}}

We prove Proposition \ref{prop-solvecond} by induction.
For $n=2$ the recursive relation \eqref{iterseed} reads
\be
\gf_2(\gama_1,\gama_2;c_1)-\gf_2(\gama_1,\gama_2;\beta_{21})=-\frac14\, \bigl(\sgn(c_1)-\sgn(\beta_{21}) \bigr)\, \kappa(\gamma_{12}),
\ee
where we took into account that $\cs_1=c_1$ and $\Gamma_{21}=\beta_{21}=-\gamma_{12}$.
Since $\gf_2$ is supposed to have discontinuities only at walls of marginal stability, it must not involve signs of DSZ products.
Therefore, we are led to take
\be
\gf_2(\gama_1,\gama_2;c_1)=-\frac14\, \sgn(c_1) \kappa(\gamma_{12}).
\ee
Then \eqref{rel-gE} and \eqref{Emrec} imply
\be
\cE_2=\frac14\,\sgn(\gamma_{12})\,\kappa(\gamma_{12})+R_2,
\ee
so that
the ansatz \eqref{iterDn} reads
\be
\whg_2(\gama_1,\gama_2;c_1)=\gf_2(\gama_1,\gama_2;c_1)-\cE_2(\gama_1,\gama_2)
=-\frac14\,\Bigl[\sgn(c_1)+\sgn(\gamma_{12})\Bigr]\,\kappa(\gamma_{12})-R_2(\gama_1,\gama_2),
\ee
which reproduces the recursive equation \eqref{recurs-whd-full}.
Furthermore, in appendix \ref{ap-generror} it is shown that there is a smooth solution of Vign\'eras' equation
which asymptotes the function $(\bfv,\bfx)\sign(\bfv,\bfx)$, coinciding with the (rescaled) first term in $\cE_2$
for $\bfv=\bfv_{12}$ \eqref{defvij}. It is given by
\be
\tPhi_1^E(\bfv,\bfv;\bfx)=\bfv\cdot\(\bfx+\frac{1}{2\pi}\,\p_\bfx\)\Erf\(\frac{\sqrt{\pi}\bfv\cdot\bfx}{|\bfv|}\),
\ee
which corresponds to the following choice of $R_2$ \cite{Alexandrov:2016tnf}
\be
R_2=\frac{(-1)^{\gamma_{12}}}{8\pi}\, |\gamma_{12}|
\,\beta_{\frac{3}{2}}\!\({\frac{2\tau_2\gamma_{12}^2 }{(pp_1p_2)}}\),
\ee
where $\beta_{\frac{3}{2}}(x^2)=2|x|^{-1}e^{-\pi x^2}-2\pi  \Erfc(\sqrt{\pi} |x|)$.
Note that the resulting $\cE_2$ depends on the electric charges only through the DSZ product $\gamma_{12}$.
Finally, it is immediate to see that the kernel $\whgPhi_2$ \eqref{kerPhiinth} satisfies \eqref{wanted}.

Now we assume that \eqref{recurs-whd-full} is consistent with the ansatz \eqref{iterDn} for all orders up to $n-1$
and check it at order $n$.
Denoting the second term in \eqref{iterDn} by $\whgp_n$ and substituting this ansatz into the r.h.s. of  \eqref{recurs-whd-full}, one finds
\be
\qquad\qquad
\Sym\left\{\sum_{\ell=1}^{n-1}g_2^\ell\,
\Bigl[\gf_\ell\,\gf_{n-\ell}-\whgp_\ell\,\whg_{n-\ell}-\whg_\ell\,\whgp_{n-\ell}-\whgp_\ell\,\whgp_{n-\ell}\Bigr]_{c_i\to c_i^{(\ell)}}
\right\}
+W_n,
\label{recurs-subs}
\ee
where
\be
g_2^\ell=-\hf\,\Delta_{\gamma_L^\ell\gamma_R^\ell}^z\,\kappa(\gamma_{LR}^\ell)
=\frac14\, \bigl(\sgn(\cs_\ell)-\sgn(\Gamma_{n\ell})\bigr)\,\kappa(\Gamma_{n\ell}).
\ee
In the first term proportional to $\gf_\ell\,\gf_{n-\ell}$, one can apply the relation \eqref{iterseed},
which together with \eqref{rel-gE} gives $\gf_n-\cEf_n$.
The other terms in the sum over $\ell$ can be reorganized as follows
\be
\begin{split}
&\,
-\Sym\left\{\sum_{\ell=1}^{n-1} g_2^\ell
\sum_{n_1+\cdots +n_m= n\atop n_k\ge 1, \ m<n, \ \ell\in \{j_k\}} \whg_{k_0}(c_i^{(\ell)})\,
\whg_{m-k_0}(c_i^{(\ell)})\prod_{k=1}^m \cE_{n_k}\right\}
\\
=&\,
-\Sym\left\{\sum_{n_1+\cdots +n_m= n\atop n_k\ge 1, \ 1<m<n}
\[\,\sum_{k_0=1}^{m-1} g_2^{j_{k_0}}\,
\whg_{k_0}(c_i^{(\ell)})\, \whg_{m-k_0}(c_i^{(\ell)})\,\]
\prod_{k=1}^m \cE_{n_k}\right\}.
\end{split}
\label{inductF2}
\ee
Here we first combined three contributions into one sum over splittings
by adding the condition $\ell\in \{j_k\}$, with $k_0$ being the index for which $j_{k_0}=\ell$, and then
interchanged the two sums which allows to drop the condition $\ell\in \{j_k\}$, but adds the requirement $m>1$
(following from $\ell\in \{j_k\}$ in the previous representation).
In square brackets one recognizes the first contribution from the r.h.s. of \eqref{recurs-whd-full} with $n$ replaced by $m<n$.
Hence, it is subject to the induction hypothesis which allows to replace this expression by $\whg_{m}(\{\gamma'_k\},z^a)-W_m(\{\gamma'_k\})$.
Combining all contributions together, one concludes that \eqref{recurs-subs} is equal to
\be
\gf_n-\cEf_n
-\Sym\left\{\sum_{n_1+\cdots +n_m= n\atop n_k\ge 1, \ 1<m<n}
\bigl(\whg_{m}-W_m\bigr)\prod_{k=1}^m \cE_{n_k}\right\}+W_n.
\label{rhsofreceq}
\ee
The contributions involving $W$'s can be combined into one sum by dropping the condition $m<n$.
The resulting sum coincides with the r.h.s. of \eqref{Emrec} so that these contributions can be replaced by $-\cEp_n$.
Combined with $-\cEf_n$, this gives $-\cE_n$ and allows to drop the condition $m>1$ in the remaining sum with $\whg_{m}$.
As a result, \eqref{rhsofreceq} becomes equivalent to the r.h.s. of \eqref{iterDn}, which proves the consistency of this ansatz
with the recursive equation.

Finally, let us show that the ansatz satisfies the modularity constraint \eqref{wanted}.
The crucial observation is that the vectors $\bfv_{ij}$ and $\bfu_{ij}$ \eqref{defvij} satisfy
\be
(\bfv_{i+j,k},\bfv_{ij})=(\bfu_{i+j,k},\bfv_{ij})=0,
\label{orthrel}
\ee
where we abused notation and denoted $\bfv_{i+j,k}=\bfv_{ik}+\bfv_{jk}$, etc.
These orthogonality relations, together with the assumption that $\cE_n$ depend on electric charges only through
the DSZ products $\gamma_{ij}\sim (\bfv_{ij},\bfx)$,
imply factorization of the action of Vign\'eras' operator on the kernel corresponding to the second term in \eqref{iterDn}.
Indeed, all contributions of $\p_\bfx^2$ where two derivatives act on different factors vanish and the action reduces to the sum of terms
where Vign\'eras' operator acts on one of the factors. But since it is supposed to vanish on $\cEPhi_n$, one obtains
the simple result
\be
V_{n-1} \cdot \whgPhi_{n}=V_{n-1} \cdot \gfPhi_{n}
-\Sym\left\{\sum_{n_1+\cdots +n_m= n\atop n_k\ge 1, \ m<n}
(V_{m-1}\cdot \whgPhi_m)
\prod_{k=1}^m \cEPhi_{n_k}\right\}.
\label{actVans}
\ee
Since for $n=2$ the constraint was already shown to hold, one can proceed by induction. Then in the second term
one can substitute the r.h.s. of \eqref{wanted}, whereas the first term can be evaluated using the recursive relation \eqref{iterseed}.
First of all, by the same reasoning as above, away from discontinuities, the action of Vign\'eras' operator is factorized and actually vanishes.
Furthermore, since $\gf_n$ have discontinuities only at walls of marginal stability, to obtain the complete action, it is enough
to consider it only on $\sgn(\cs_\ell)$. Since at $\cs_\ell=0$ one has $\cl_i=c_i$ (see \eqref{flowc}), one finds that $\gfPhi_n$ satisfy
exactly the same constraint as \eqref{wanted}. Thus, one can rewrite \eqref{actVans} as
\bea
V_{n-1} \cdot \whgPhi_{n}&=&
\Sym\sum_{\ell=1}^{n-1}
\Bigl(\bfu_\ell^2\,\Delta_{n,\ell}^{\gf} \,\delta'(\bfu_\ell\cdot\bfx)
+ 2\bfu_\ell\cdot\p_\bfx \Delta_{n,\ell}^{\gf} \,\delta(\bfu_\ell\cdot\bfx)\Bigr)
\label{actVans2}\\
&&
-\Sym\left\{\sum_{n_1+\cdots +n_m= n\atop n_k\ge 1, \ 1<m<n}\[
\sum_{\ell=1}^{n-1}
\Bigl(\bfu_\ell^2\,\Delta_{m,\ell}^{\whg} \,\delta'(\bfu_\ell\cdot\bfx)
+2\bfu_\ell\cdot\p_\bfx \Delta_{m,\ell}^{\whg} \,\delta(\bfu_\ell\cdot\bfx)\Bigr)\]
\prod_{k=1}^m \cEPhi_{n_k}\right\}.
\nn
\eea
Note that the orthogonality relation allows to include $\cEPhi_{n_k}$ under the derivative in the last term.
Then one can perform the same manipulations with the sum over splittings
as in \eqref{inductF2} but in the inverse direction, which directly leads to the constraint \eqref{wanted}.

\subsection*{Proposition \ref{prop-limitG}}

Our goal is to find an explicit expression for $\tPhif_n$. Using the fact that the limit of large $\bfx$ of the generalized error function $\Phi_{n}^E$
is the product of $n$ sign functions, one immediately obtains
\be
\tPhif_n(\bfx)=
\frac{1}{2^{n-1} n!}\sum_{\cT\in\, \IT_n^\ell}\[\prod_{e\in E_\cT}\cD( \bfv_{s(e) t(e)})\]
\[\prod_{e\in E_{\cT}}\sign(\bfu_e,\bfx)\],
\label{limitwhPhi}
\ee
where $\cD(\bfv)$ are the covariant derivative operators \eqref{defcDif}.
The action of the derivatives $\p_\bfx$ on the sign functions can be ignored (since the original function is smooth),
however there are additional contributions due to the mutual action of the operators $\cD$.
Similar contributions were discussed in a similar context in appendix \ref{ap-twistint},
where they were shown to cancel, but here they turn out to leave a finite remainder.
The contribution generated by the mutual action of two operators $\cD( \bfv_{s(e) t(e)})$
is non-vanishing only if the two edges $e_1,e_2$ have a common vertex. In this case it contributes the factor
\be
\frac{(p_{\ver_1}p_{\ver_2}p_{\ver_{3}})}{2\pi}\, \sign(\bfu_{e_1},\bfx)\, \sign(\bfu_{e_2},\bfx) ,
\label{factordersign}
\ee
where $\ver_1,\ver_2,\ver_3$ are the tree vertices joint by the edges $e_1=(\ver_2,\ver_3)$, $e_2=(\ver_1,\ver_3)$. \label{edges}
Again, considering the three trees shown in Fig. \ref{fig-Vign3}, one
can note that the vectors $\bfu_e$ defined by these trees satisfy the following relations:
$\bfu_{e_1}^{(1)}=-\bfu_{e_3}^{(2)}$, $\bfu_{e_2}^{(1)}=\bfu_{e_3}^{(3)}$, $\bfu_{e_2}^{(2)}=\bfu_{e_1}^{(3)}$
and $\bfu_{e_2}^{(2)}=\bfu_{e_1}^{(1)}+\bfu_{e_2}^{(1)}$,
where $e_3=(\ver_1,\ver_2)$ and we indicated by upper index the tree with respect to which the vector is defined.
Therefore, the contributions generated by these trees combine into
\bea
&& \frac{(p_{\ver_1}p_{\ver_2}p_{\ver_{3}})}{2\pi}\Bigl[ \sign(\bfu_{e_1}^{(1)},\bfx)\, \sign(\bfu_{e_2}^{(1)},\bfx)
+\sign(\bfu_{e_2}^{(2)},\bfx)\, \sign(\bfu_{e_3}^{(2)},\bfx)-\sign(\bfu_{e_1}^{(3)},\bfx)\, \sign(\bfu_{e_3}^{(3)},\bfx)\Bigr]
\nn\\
&=& -\frac{(p_{\ver_1}p_{\ver_2}p_{\ver_{3}})}{2\pi},
\label{signfactorscomb}
\eea
where we used the above relations between the vectors and the sign identity \eqref{signprop-ap} for $x_s=(\bfu_{e_s}^{(1)},\bfx)$, $s=1,2$,
Thus, unlike in  \eqref{case3}, we now get a non-vanishing result, due to  the mutual action of derivative operators.

Each pair of intersecting edges leads to a contribution which recombines the contributions of the edges from
the three trees into a single factor \eqref{signfactorscomb}.
Furthermore, the sum over trees implies that we have to sum over all possible subtrees $\cT_1,\cT_2,\cT_3$,
in particular, over all possible allocations of different subtrees to the vertices $\ver_1,\ver_2,\ver_3$.
The sign factors $\sign(\bfu_e,\bfx)$ for edges of these subtrees do not depend on this allocation, but the factors
$\cD( \bfv_{s(e) t(e)})$ do depend for the edges connecting to one of these vertices.
It is easy to see that the sum over allocations effectively replaces the three vertices by
a single one labelled
by the total charge $p_{\ver_1}+p_{\ver_2}+p_{\ver_{3}}$.
As a result, one obtains that the function \eqref{limitwhPhi} can be represented as a sum over marked trees
where a mark corresponds to a collapse of a pair of intersecting edges and contributes the factor \eqref{signfactorscomb}.
More precisely, one has
\be
\tPhif_n(\bfx)=\frac{1}{2^{n-1} n!}\sum_{m=0}^{[(n-1)/2]}\frac{(-1)^m}{(2\pi)^m}
\sum_{\cT\in\, \IT_{n-2m,m}^\ell}
\[\prod_{\ver\in V_\cT}\cP_{m_\ver}(\{p_{\ver,s}\})\]
\prod_{e\in E_{\cT}}(\bfv_{s(e) t(e)},\bfx)\,\sign(\bfu_e,\bfx),
\label{asympwhPhi}
\ee
where
\be
\cP_{m}(\{p_{s}\})=\sum_{\cI_1\cup\cdots\cup\cI_{m}=\Zv_{2m+1}}\, \prod_{j=1}^{m}(p_{j_1}p_{j_2}p_{j_3}).
\label{defPver}
\ee
This factor collects the weights \eqref{signfactorscomb} assigned to a vertex due to collapse of $m$ pairs of edges.
It is represented as a sum over all possible splittings of the set $\Zv_{2m+1}=\{1,\dots,2m+1\}$ into union of triples $\cI_j=\{j_1,j_2,j_3\}$
such that the labels in one triple $\cI_j$ are all different, two different triples can have at most one common label,
and there are no closed cycles in the sense that there are no subsets $\{{j_k}\}_{k=1}^r$ such that
$\cI_{j_k}\cap\cI_{j_{k+1}}\ne \emptyset$ where $j_{r+1}\equiv j_1$.
This sum simply counts all possible splittings of a collection of $2m$ joint edges into $m$ intersecting pairs, suppressing
for each pair the distinction between the three configurations of Fig. \ref{fig-Vign3}.

However, the representation \eqref{defPver} is not very convenient for our purposes.
An alternative representation can be obtained by noting that, instead of collapsing all edges at once,
one can collapse first one pair, sum over all configurations (i.e. allocations of subtrees), then collapse another pair, and so on.
In this approach at each step the sum over different configurations ensures that all factors assigned to the elements of
the surviving tree depend only on the sum of collapsing charges.
As a result, one obtains a hierarchical structure described by a rooted ternary tree $T$,
where the leaves correspond to the vertices of the original unrooted tree which have collapsed into one vertex with $m_\ver$ marks corresponding
to the root of $T$. The other vertices of $T$ are then in one-to-one correspondence with marked vertices appearing at
intermediate stages of the above process.

This procedure gives rise to the representation \eqref{defPver-tree} of the weight factor $\cP_m$
in terms of a sum over rooted ternary trees.
A non-trivial point which must be taken into account is that the procedure leading
to this representation overcounts different configurations.
As a result, each tree is weighted by the rational coefficient $N_{\hT}/m!$
where $\hT$ is the rooted tree obtained from $T$ by dropping all leaves and $N_{\hT}$
is the number of ways of labelling the vertices of $\hT$ with increasing labels, which already appeared in appendix \ref{ap-theorem}.
Here the numerator takes into account that the tree $T$ is generated $N_{\hT}$ times in the collapsing process, whereas the denominator
removes the overcounting produced by specifying the order in which the $m$ pairs of edges are collapsed.
Finally, we apply Lemma \ref{lemma-ntrees} from appendix \ref{ap-theorem}.
Since $n_{\hT}=m$, the coefficients coincides with the inverse of the tree factorial \eqref{defcT}.
Taking into account that substitution of \eqref{asympwhPhi} into \eqref{defgf-hPhi}
gives exactly \eqref{asymp-compl}, this completes the proof of the proposition.

\subsection*{Proposition \ref{prop-coeff}}

Our aim is to show that the recursive formula \eqref{res-aT} solves the equations \eqref{eqcoef}.
We will proceed by induction,  starting with the case $n=3$ where there is a single unrooted tree
and the system \eqref{eqcoef} contains a single equation corresponding to the trivial trees $\cT_r$ consisting of one vertex.
Thus, it is solved by $a_{\bullet\!\mbox{-}\!\bullet\!\mbox{-}\!\bullet}=\frac13$ consistently with \eqref{res-aT}.

Let us now consider trees with $n$ vertices assuming that for all trees with less number of vertices \eqref{res-aT} holds.
We start by noting that for every vertex $\ver\in V_{\cT_r}$, among the subtrees $\cT_{r,s}(\ver)\subset \cT_r$ obtained by removing the vertex $\ver$,
there is one which contains $\ver_r$, which we denote by $\cT_{r,s_0}(\ver)$.
(If $\ver=\ver_r$, we take $\cT_{r,s_0}(\ver)=\emptyset$.)
Since every tree $\hcT_r$ is a union of the three trees $\cT_r$, substituting \eqref{res-aT}
into the l.h.s. of \eqref{eqcoef}, one  obtains that the sum over vertices of $\hcT_r$ can be represented as
\be
\frac{1}{n}\sum_{r=1}^3 \sum_{\ver\in V_{\cT_r}} \epsilon_\ver\(a_{\hcT_{1,r}(\ver)} +a_{\hcT_{2,r}(\ver)} -a_{\hcT_{3,r}(\ver)}\)
\prod_{s=1\atop s\ne s_0}^{n_\ver} a_{\cT_{r,s}(\ver)},
\label{lhs-coef}
\ee
where $\hcT_{r',r}(\ver)$ is obtained from $\hcT_{r'}$ by replacing $\cT_r$ by $\cT_{r,s_0}(\ver)$.
Applying the equations \eqref{eqcoef}, this expression gives
\be
\frac{1}{n}\[ a_{\cT_2} a_{\cT_3}\sum_{\ver\in V_{\cT_1}}\epsilon_\ver\prod_{s=1}^{n_\ver} a_{\cT_{1,s}(\ver)}
+a_{\cT_1} a_{\cT_3}\sum_{\ver\in V_{\cT_2}}\epsilon_\ver\prod_{s=1}^{n_\ver} a_{\cT_{2,s}(\ver)}
+a_{\cT_1} a_{\cT_2}\sum_{\ver\in V_{\cT_3}}\epsilon_\ver\prod_{s=1}^{n_\ver} a_{\cT_{3,s}(\ver)}
\].
\label{lhs-coef2}
\ee
Since the trees $\cT_{r}$ have less than $n$ vertices, they are subject to the induction hypothesis which allows to replace
the sums over vertices by $n_r a_{\cT_r}$ where $n_r$ is the number of vertices in $\cT_r$.
Taking into account that $n_1+n_2+n_3=n$, the expression \eqref{lhs-coef2} reduces to $a_{\cT_1} a_{\cT_2}a_{\cT_3}$,
which proves that the equations \eqref{eqcoef} are indeed satisfied.

\subsection*{Proposition \ref{prop-limitcompl}}

The evaluation of the large $\bfx$ limit of the function $\tcEPhi_n(\bfx)$ is very similar to the calculation
done in the proof of Proposition \ref{prop-limitG}.
First, using the asymptotics of the generalized error functions, we can write
\be
\lim_{\bfx\to\infty}\tcEPhi_n(\bfx)=\frac{1}{2^{n-1} n!}
\!\!
\sum_{m=0}^{[(n-1)/2]}
\!\!\!
\sum_{\cT\in\, \IT_{n-2m,m}^\ell}
\[\prod_{\ver\in V_\cT}\cD_{m_\ver}(\{\gama_{\ver,s}\})\]
\[\prod_{e\in E_\cT}\cD( \bfv_{s(e) t(e)})\]
\[\prod_{e\in E_{\cT}}\sign(\bfu_e,\bfx)\]\! .
\label{limithEPhi}
\ee
We then observe that the two last factors depend only
on sums of charges appearing in the operators $\cD_{m_\ver}$ in the first factor.
This implies that the vectors on which these factors depend are orthogonal to the vectors determining the operators
in \eqref{defcDcT}. Therefore, these operators effectively act on a constant and can be expanded as
\be
\cD_{m}\cdot 1 = \sum_{k=0}^m\cV_{m,k},
\label{exp-cD}
\ee
where $\cV_{m,k}$ are homogeneous polynomials in $\bfx$ of degree $2(m-k)$. In particular,
the highest degree term coincides with the function defined in \eqref{defcV-mw},
$\cV_{m,0}(\{\gama_{s}\})= \cV_{m}(\{\gama_{s}\})$.

Next, the mutual action of the derivative operators from the product over edges in the second factor
generates contributions described by trees with pairs of collapsed edges
replaced by marks. The difference here is that the original trees were also marked.
This fact does not change the structure of the result, which is again given by a sum over marked trees,
but it affects the weight associated with marks. Denoting this weight for a vertex with total $m$ marks (old and new) by $\cVt_m$,
we recover the equation \eqref{asymphcEPhi} in the statement of the proposition, provided that $\cVt_m=\cV_m$.
We now proceed to prove the latter identity.

The weight factor $\cVt_m$ coming from the above procedure is given by
\be
\small
\cVt_m(\{\gama_{s}\})=\Sym\!\! \left\{\sum_{m_0=0}^{m}\frac{(-1)^{m_0}}{(2\pi)^{m_0}}
\!\!\!\!
\sum_{\sum\limits_{r=1}^{2m_0+1}m_r=m-m_0}
\!\!\!\!\!\! \!\!\!\!\!\!
C(\{m_r\})
\,\cP_{m_0}(\{p'_{r}\})
\!\!
\prod_{r=1}^{2m_0+1}\sum_{k_r=0}^{m_r}\cV_{m_r,k_r}(\gama_{j_{r-1}+1},\dots,\gama_{j_{r}})
\right\},
\label{defVver-tree}
\ee
where the second sum goes over all ordered decompositions of $m-m_0$ into non-negative integers,
$C(\{m_r\})=\frac{(2m+1)!}{\prod_r (2m_r+1)!}$, and we used notations similar to \eqref{groupindex},
\be
j_0=0,
\qquad
j_r=m_1+\cdots + m_r,
\qquad
\gamma'_r=\gamma_{j_{r-1}+1}+\cdots +\gamma_{j_{r}}.
\label{groupindex-ap}
\ee
Here the first factor $\cP_{m_0}$ arises due to collapse of $m_0$ pairs of edges in the marked trees one sums over in \eqref{limithEPhi},
whereas the factors given by the sums over $k_r$ are the ones corresponding to the ``old" marks assigned to that trees.

Now let us use the equations \eqref{eqcoef} determining the coefficients $a_\cT$.
It is easy to see that they imply the following constraint on the next-to-highest degree term in the expansion \eqref{exp-cD}
\be
\cV_{m,1}(\{\gama_{s}\})=\frac{1}{6\cdot 2\pi} \sum_{m_1+m_2+m_3=m-1\atop m_r\ge 0}C(\{m_r\})
\Sym\left\{
(p'_1 p'_2 p'_3)\prod_{r=1}^3 \cV_{m_r}(\{\gama_i\}_{i=j_{r-1}+1}^{j_r})
\right\},
\label{cond-coefaT}
\ee
where $j_r=\sum_{s=1}^r (2m_s+1)$ and $p'_r=\sum_{i=1}^{2m_r+1}p_{j_{r-1}+i}$.
Applying this constraint recursively, one can express all $\cV_{m,k}$ for $k>0$ through $\cV_{m}$.
The idea is to replace the factors $\cV_{m_r}$ by the operators $\cD_{m_r}$.
Then one can realize that, extracting from the resulting function the terms homogeneous in $\bfx$ of order $2(m-2)$
(which have two factors of $(p^3)$), one obtains the result for $2\cV_{m,2}$. Proceeding in this way, one arrives at
the representation very similar to \eqref{defVver-tree},
\be
\cV_{m,k}(\{\gama_{s}\})=\frac{1}{(2\pi)^k}\Sym \left\{\sum_{\sum\limits_{r=1}^{2k+1}m_r=m-k}
\!\!\!\! C(\{m_r\})\,
\cP_{k}(\{p'_{r}\})
\prod_{r=1}^{2k+1}\cV_{m_r}(\gama_{j_{r-1}+1},\dots,\gama_{j_{r}})
\right\}.
\label{resVmk}
\ee
Substituting it into \eqref{defVver-tree} and using the expression \eqref{defPver-tree} for the factors $\cP_{m}$ through
the sum over rooted ternary trees, one can recombine all these sums in the following way
\be
\begin{split}
\cVt_m=&\,\Sym \Biggl\{\sum_{k=0}^{m}\sum_{\sum\limits_{r=1}^{2k+1}m_r=m-k}
\!\!\!\!\!\! C(\{m_r\})
\sum_{T\in\, \IT_{2k+1}^{(3)}(\{\gama'_r\})}
\frac{1}{T!}
\sum_{T'\subseteq T} (-1)^{n_{T'}}
\prod_{v\in V_{T'}}\frac{n_v(T)}{n_v(T')}
\\
&\,
\times\prod_{v\in V_{T}}\frac{(p_{d_1(v)}p_{d_2(v)}p_{d_3(v)})}{2\pi}
\prod_{r=1}^{2k+1}\cV_{m_r}(\gama_{j_{r-1}+1},\dots,\gama_{j_{r}})
\Biggr\},
\end{split}
\label{defVver-tree2}
\ee
where the sum over $T'$ is the sum over subtrees of $T$ having the same root.
In terms of the variables appearing in \eqref{defVver-tree}, one can identify $k=m_0+\sum_r k_r$ and $n_{T'}=m_0$.
Thus, $T'$ is the tree appearing in the decomposition of $\cP_{m_0}$, whereas $T$ is its union with $2m_0+1$ trees $T_r$
appearing in the decomposition of $\cP_{k_r}$, i.e.
$T=T'\cup \(\cup_r T_r\)$ where $T_r$ are the trees rooted at leaves of $T'$,
Finally, we took into account that for such trees one has
\be
T!=T'!\,\prod_r T_r!\,\prod_{v\in V_{T'}}\frac{n_v(T)}{n_v(T')}\,.
\ee

Remarkably, the sum over subtrees in \eqref{defVver-tree2} factorizes and for a fixed number of vertices $n_{T'}=m_0$ is subject to
Theorem \ref{theorem} where the r\^ole of the trees is played by rooted ternary trees $T$ and $T'$ after stripping out their leaves.
As a result, one finds for $k>0$
\be
\sum_{T'\subseteq T} (-1)^{n_{T'}}
\prod_{v\in V_{T'}}\frac{n_v(T)}{n_v(T')}
=\sum_{m_0=0}^k\frac{(-1)^{m_0}k!}{m_0!(k-m_0)!}=(1-1)^k = 0.
\ee
Thus, the only non-vanishing contribution is the one with $k=0$ which coincides with $\cV_m$.
This is what we had to show and therefore completes the proof of the proposition.

\subsection*{Proposition \ref{prop-solveiterg0} \label{ap-prop3}}

For simplicity, let us first show that the recursive equation \eqref{iterseed}
is satisfied by the contribution to $\gf_n(\{\gama_i,c_i\})$ given by the trees without any marks,
i.e. by the function
\be
\gs_n(\{\gama_i,c_i\})=\frac{(-1)^{n-1+\sum_{i<j} \gamma_{ij} }}{2^{n-1} n!}
\sum_{\cT\in\, \IT_{n}^\ell}
\prod_{e\in E_{\cT}}\gamma_{s(e) t(e)}\,\sign(\cs_e).
\label{defgf0}
\ee
Then the inclusion of marks will be straightforward because the corresponding contributions can be dealt with essentially in the same way
as the contribution \eqref{defgf0}.

To start with, we substitute $\gs_n$ into the l.h.s. of the recursive equation and decompose $\Gamma_{n\ell}=-\sum_{i=1}^\ell\sum_{j=\ell+1}^n \gamma_{ij}$.
Then the crucial observation is that this double sum, the sum over $\ell$ and the two sums over trees
(over $\IT_\ell^\ell$ and $\IT_{n-\ell}^\ell$) are  all equivalent
to a single sum over trees with $n$ vertices, i.e. over $\IT_n^\ell$, supplemented by the sum over edges.
Namely, one can do the following replacement
\be
\hf \Sym
\sum_{\ell=1}^{n-1}\frac{1}{\ell!(n-\ell)!}\sum_{\cT_L\in\, \IT_\ell^\ell}\sum_{\cT_R\in\, \IT_{n-\ell}^\ell}\sum_{i=1}^\ell\sum_{j=\ell+1}^n
=\frac{1}{n!}\sum_{\cT\in\, \IT_n^\ell}\sum_{e\in E_\cT}.
\label{relsums}
\ee
The idea is that on the l.h.s. one sums over all possible splittings of unrooted labelled trees with $n$ vertices
into two trees with $\ell$ and $n-\ell$ vertices. Such splitting can be done by cutting an edge and then
$i,j$ correspond to the labels of the vertices joined by the cutting edge. The binomial
coefficient $\frac{n!}{\ell!(n-\ell)!}$
takes into account that after splitting the vertices of the first tree can have arbitrary labels from the set $\{1,\dots,n\}$
and not necessarily $\{1,\dots,\ell\}$, whereas $\hf$ avoids doubling due to the symmetry between $\cT_L$ and $\cT_R$.

It is easy to check that all factors in \eqref{iterseed} fit this interpretation and
the l.h.s. takes the following form
\be
\frac{(-1)^{n-1+\sum_{i<j} \gamma_{ij} }}{2^{n-1} n!}\sum_{\cT\in\, \IT_n^\ell}
\prod_{e\in E_{\cT}}\gamma_{s(e) t(e)}\sum_{e\in E_\cT} \bigl( \sign(\cs_e)-\sign (\Gamma_{e})\bigr)
\prod_{e'\in E_{\cT}\setminus\{e\}}\sign\(\cs_{e'}-\frac{\Gamma_{e'}}{\Gamma_{e}}\, \cs_e\),
\label{lhsrecursive}
\ee
where $\Gamma_e$ was defined in \eqref{defGame}.
Finally, we apply the following sign identity, established in \cite[Eq.(A.7)]{Alexandrov:2018iao},
\be
\sum_{\beta=1}^m \(\sgn(x_\beta)-1\) \prod_{\alpha=1\atop \alpha\ne \beta}^m \sgn(x_\alpha-x_\beta)= \prod_{\alpha=1}^m \sgn(x_\alpha)-1,
\label{signident3}
\ee
where one should take the label $\alpha$ to run over $m=n-1$ edges of a tree $\cT$,
identify $x_\alpha=\cs_\alpha/\Gamma_{\alpha}$, and multiply it by $\prod_{\alpha=1}^{n-1}\sgn(\Gamma_\alpha)$.
Then the expression \eqref{lhsrecursive} reduces to $\gs_n(\{\gama_i,c_i\})-\gs_n(\{\gama_i,\beta_{ni}\})$,
i.e. the r.h.s. of \eqref{iterseed} evaluated for function \eqref{defgf0}.
This proves that this function solves the recursive equation.

The generalization of this proof to the full ansatz \eqref{defDf-gen} is elementary.
Instead of the relation \eqref{relsums}, one now has
\be
\hf\Sym
\sum_{\ell=1}^{n-1}\frac{1}{\ell!(n-\ell)!}\sum_{k_L=0}^{[(\ell-1)/2]}
\!\!
\sum_{\cT_L\in\, \IT_{\ell-2k_L,k_L}^\ell}
\!\!\!\!\!
\sum_{k_R=0}^{[(n-\ell-1)/2]}
\!\!\!\!
\sum_{\cT_R\in\, \IT_{n-\ell-2k_R,k_R}^\ell}
\!\!
\sum_{i=1}^\ell\sum_{j=\ell+1}^n
\!=\frac{1}{n!}\sum_{k=0}^{[(n-1)/2]}
\!\!\!\!
\sum_{\cT\in\, \IT_{n-2k,k}^\ell}\sum_{e\in E_\cT},
\label{relsums-marks}
\ee
which again reflects the fact the sum over marked unrooted labelled trees can be represented as a sum over all possible splittings
into two such trees by cutting them along an edge. If the cutting edge joins a marked vertex, it counts $1+2m_\ver$ times,
producing the right factor $\gamma_{\ver_L\ver_R}$, which is reflected
by the fact that the sums over $i,j$ run over $\ell$ and $n-\ell$ values, respectively.
Other numerical factors work in the same way as before.
Thus, the l.h.s. of \eqref{iterseed} can again be rewritten as in \eqref{lhsrecursive} with the only difference that $\sum_{\cT\in\, \IT_n^\ell}$
should now be replaced by
$\sum_{k=0}^{[(n-1)/2]}\sum_{\cT\in\, \IT_{n-2k,k}^\ell}\prod_{\ver\in V_\cT}\tcV_\ver$.
Applying the same sign identity \eqref{signident3} for $m=n-1-2k$ and the same identification for $x_\alpha$,
one recovers the r.h.s. of \eqref{iterseed}.
This completes the proof of the proposition.

\subsection*{Proposition  \ref{prop-reswhg}}

This proposition trivially follows from \eqref{iterDn}.

\subsection*{Proposition  \ref{prop-resR} \label{ap-prop4}}

The easiest way to prove \eqref{solRn} is to substitute it into \eqref{defSm} and then check
that the result is consistent with the constraint \eqref{Emrec}.
The substitution generates a sum over trees which resemble the blooming trees
of appendix \ref{ap-smooth}: these are trees with vertices from which other trees grows.
But now the two types of trees, representing the `base' and the `flowers', are actually the same --- both of them are Schr\"oder trees.
The only difference is that vertices of the `base' carry weights $\cEp_v$,
whereas the vertices of `flowers' have weights $\cEf_v$.
Since the leaves of a `flower' are in one-to-one correspondence with the children of the vertex of the `base' tree from
which this flower grows, such blooming trees can be equivalently represented by the usual Schr\"oder trees
obtained by replacing the vertices of the `base' by their `flowers'.
In this way, we obtain
\be
W_n= \Sym\left\{\sum_{T\in\IT_n^{\rm S}}(-1)^{n_T} \sum_{\cup\, T'=T}
\prod_{T'}\[\cEp_{v_0(T')}\prod_{v\in V_{T'}\setminus \{v_0(T')\}}\cEf_{v}\]\right\},
\label{solW-sub}
\ee
where the second sum goes over decompositions of $T$ into subtrees such that the root $v_0(T')$ of a subtree $T'$
is a leaf of another subtree (except, of course, the root of the total tree).

Let us now consider a vertex $v$ whose only children are leaves of $T$.
Then all decompositions into subtrees $T=\cup_i\, T'_i$ can be split into pairs such that two decompositions
differ only by whether $v$ (together with its leaves) represents a separate subtree or it is a part of a bigger subtree.
The contributions of two such decompositions into \eqref{solW-sub} differ only by the factors assigned to the vertex $v$
and therefore they combine into the factor $\cE_v$ assigned to this vertex.
As a result, $\cE_v$ appears as a common factor and the vertex $v$ can be excluded from the following consideration.
Proceeding in the same way with the tree obtained by removing this vertex, one finds that
the sum over decompositions can be evaluated explicitly and gives 
\be
W_n= \Sym\left\{\sum_{T\in\IT_n^{\rm S}}(-1)^{n_T} \cEp_{v_0}\prod_{v\in V_T\setminus{\{v_0\}}}\cE_{v}\right\}.
\label{solWn}
\ee

Given this result, the proof of the constraint
\eqref{Emrec} is analogous to the proof of the relation \eqref{hRh-full}:
the contribution of each tree $T$ (from the sum in \eqref{solWn}) and a splitting with $n_k>1$ (from the sum in \eqref{Emrec})
is cancelled by the contribution of another tree obtained from $T$ by adding $n_k$ children to its $k$th leaf and the same splitting
but with $n_k$ replaced by $1+\cdots +1$ (repeated $n_k$ times).
The only contribution which survives is the one generated by the tree with a single
vertex and $n$ leaves
and the splitting with all $n_k=1$. It is given by $\cEp_n$, which verifies the constraint and proves the proposition.

\subsection*{Proposition \ref{prop-derbtau}}

Our starting point to prove the proposition is the formula
\be
\p_{\bar\tau}\whh_{p,\mu}(\tau)=
\frac{\I}{2}\sum_{n=2}^\infty
\sum_{\sum_{i=1}^n \gama_i=\gama}
\p_{\tau_2}R_n(\{\gama_i\},\tau_2)
\, e^{\pi\I \tau Q_n(\{\gama_i\})}
\prod_{i=1}^n h_{p_i,\mu_i}(\tau).
\label{derwhh}
\ee
Substituting $\p_{\tau_2}R_n$ following from \eqref{solRn} and the inverse formula \eqref{exp-hwh} expressing $h_{p,\mu}$
in terms of the completion, with the functions $W_n$ found in \eqref{solWn},
one arrives at the result \eqref{exp-derwh} where the functions $\cJ_n$ are given by
\be
\cJ_n= \frac{\I}{2}\Sym\left\{\sum_{T\in\IT_n^{\rm S}}(-1)^{n_T-1} \p_{\tau_2}\cE_{v_0}
\sum_{T'\subseteq T}\prod_{v\in V_{T'}\setminus{\{v_0\}}}\cEf_{v}
\prod_{v\in L_{T'}}\cEp_{v}\prod_{v\in V_T\setminus (V_{T'}\cup L_{T'})}\cE_{v}
\right\},
\label{inter-cJ}
\ee
where the second sum goes over all subtrees $T'$ of $T$ containing its root and $L_{T'}$ is the set of their leaves.
Here the subtree $T'$ corresponds to the tree in the formula \eqref{solRn} for $R_n$, whereas the subtrees starting from its leaves
correspond to the trees in the expression \eqref{solWn} for $W_n$.
The sum over subtrees can be evaluated in the same way as the sum over decompositions into subtrees in \eqref{solW-sub}:
the two contributions differing only by whether the vertex $v$ belongs to $V_{T'}$ or $L_{T'}$
combine into the factor $\cE_v$ assigned to this vertex. Performing this recombination for all vertices of the tree $T$,
one obtains the expression \eqref{solJn}, which proves the proposition.

\subsection*{Proposition  \ref{prop-treeindex}}

Our goal is to prove that in the expression
\be
\gtr
= \Sym\left\{\sum_{T\in\IT_n^{\rm S}}(-1)^{n_T-1} \(\gf_{v_0}-\cEf_{v_0}\)\prod_{v\in V_T\setminus{\{v_0\}}}\cEf_{v}\right\},
\label{soliterg-tr-ap}
\ee
following from \eqref{soliterg},
all contributions due to marked trees cancel leaving only contributions of trees without marks, i.e with $m=0$.
There are several possible situations which we need to analyze.

\lfig{Combination of two Schr\"oder trees ensuring the cancellation of contributions generated by marked trees.}
{Mcancel}{17.cm}{fig-Mcancel}{-1.5cm}

First, let us consider the contributions generated by non-trivial marked trees, i.e. trees having more than one vertex and at least one mark.
Let us focus on the contribution corresponding to a vertex $\ver$ with $m_\ver>0$ marks of a tree $\cT$,
which appears in the sum over marked unrooted trees living at a vertex $v$ of a Schr\"oder tree $T$. Let $k=n_v$ be the number
of children of the vertex $v$ and $\gamma_i$ ($i=1,\dots,k$) their charges so that $\gamma_s$ ($s=1,\dots,2m_\ver+1$) are the charges
labelling the marked vertex $\ver$. Note that $k\ge 2m_\ver+2$ because the tree $\cT$ has at least one additional vertex except $\ver$.
Then the contribution we described is cancelled by the contribution coming from another Schr\"oder tree, which is obtained from $T$ by adding an edge
connecting the vertex $v$ to a new vertex $v'$, whose children are the $2m_\ver+1$ children of $v$ in $T$ carrying charges $\gamma_s$
(see Fig. \ref{fig-Mcancel}).\footnote{The new tree is of Schr\"oder type because its vertex $v$ has $k-2m_{\ver}\ge 2$ children
and vertex $v'$ has $2m_\ver+1\ge 3$ children.}
Indeed, choosing the same tree $\cT$ as before in the sum over marked trees at vertex $v$, but now with $m_\ver=0$, and
in the sum at vertex $v'$ the trivial tree having one vertex and $m_\ver$ marks, one gets exactly the same contribution as before,
but now with an opposite sign due to the presence of an additional vertex in the Schr\"oder tree.
Thus, all contributions from non-trivial marked trees are cancelled.

As a result, we remain only with the contributions generated by trivial marked trees, i.e. having only one vertex and $m_\ver$ marks.
One has to distinguish two cases: either the corresponding vertex $v$ of the Schr\"oder tree is the root or not.
In the former case, this contribution is trivially cancelled in the difference $\gf_{v_0}-\cEf_{v_0}$ in \eqref{soliterg-tr-ap}.
In the latter case, this is precisely the contribution used above to cancel the contributions
from non-trivial marked trees. This exhausts all possibilities and we arrive at the formula \eqref{soliterg-tr}.

\addtocontents{toc}{\vspace{-0.1cm}}
\addtocontents{toc}{\protect\nopagebreak}
\section{Explicit results up to 4th order}
\label{ap-explicit}

In this appendix we provide explicit expressions for various functions appearing in our construction up to the forth order.
To write them down, we will use the shorthand notation $\gamma_{i+j}=\gamma_i+\gamma_j$, $c_{i+j}=c_i+c_j$, etc. as well as
indicate the arguments of functions through their indices, for instance, $\cE_{i_1\cdots i_n}=\cE_n(\gama_{i_1},\dots,\gama_{i_n})$.
These expressions are obtained by simple substitutions using the results found in the main text and the sets of trees
shown in Fig. \ref{fig-table}. For $n=2$ they all agree with the results of \cite{Alexandrov:2016tnf}.

The results \eqref{soliterg} and \eqref{solRn} generate the following expansions
\bea
h^{\rm DT}_{p,q}&=&\whh_{p,\mu}+
\sum_{\gama_1+\gama_2=\gama}
\[\gf_{12}-\cE_{12}\]e^{\pi\I \tau Q_2(\{\gama_i\})}
\whh_{p_1,\mu_1}\whh_{p_2,\mu_2}
\nn\\
&& +\sum_{\sum_{i=1}^3 \gama_i=\gama}
\[\gf_{123}-\cE_{123}-2\(\gf_{1+2,3}-\cE_{1+2,3}\)\cE_{12}\] e^{\pi\I \tau Q_3(\{\gama_i\})}
\prod_{i=1}^3 \whh_{p_i,\mu_i}
\label{exphDT}\\
&&
+\sum_{\sum_{i=1}^4 \gama_i=\gama}
\[ \gf_{1234}-\cE_{1234}-2\(\gf_{1+2+3,4}-\cE_{1+2+3,4}\)\(\cE_{123} -2\cE_{1+2,3}\cE_{12}\)
\right.
\nn\\
&& \left.\qquad
-3\(\gf_{1+2,34}-\cE_{1+2,34}\)\cE_{12}
+\(\gf_{1+2,3+4}-\cE_{1+2,3+4}\)\cE_{12}\cE_{34}
\]e^{\pi\I \tau Q_4(\{\gama_i\})}
\prod_{i=1}^4 \whh_{p_i,\mu_i}+\cdots,
\nn
\eea
\bea
\whh_{p,q}&=&h_{p,\mu}+
\sum_{\gama_1+\gama_2=\gama}
\cEp_{12}\,e^{\pi\I \tau Q_2(\{\gama_i\})}
h_{p_1,\mu_1}h_{p_2,\mu_2}
\nn\\
&& +\sum_{\sum_{i=1}^3 \gama_i=\gama}
\Bigl[\cEp_{123}-2\cEp_{1+2,3}\cEf_{12}\Bigr] \,e^{\pi\I \tau Q_3(\{\gama_i\})}
\prod_{i=1}^3 h_{p_i,\mu_i}
\label{exphwhh}\\
&&
+\sum_{\sum_{i=1}^4 \gama_i=\gama}
\Bigl[\cEp_{1234}-2\cEp_{1+2+3,4}\(\cEf_{123} -2\cEf_{1+2,3}\cEf_{12}\)
-3\cEp_{1+2,34}\cEf_{12}
\Bigr.
\nn\\
&& \Bigl. \qquad
+\cEp_{1+2,3+4}\cEf_{12}\cEf_{34}
\Bigr]\, e^{\pi\I \tau Q_4(\{\gama_i\})}
\prod_{i=1}^4 h_{p_i,\mu_i}+\cdots,
\nn
\eea
where the functions $\gf_n$ and $\cE_n$ can be read off from \eqref{defDf-gen}, \eqref{fullker} and \eqref{rescEn},
\be
\begin{split}
\gf_2=&\, \frac{(-1)^{1+\gamma_{12}}}{4}\, \gamma_{12}\,\sgn(c_1),
\\
\gf_3=&\, \frac{(-1)^{1+\gamma_{12}+\gamma_{1+2,3}}}{8}\,\Sym\biggl\{ \gamma_{12}\,\gamma_{23}\,
\sgn(c_1)\, \sgn(c_3)+\frac13\,\gamma_{12}\,\gamma_{23} \biggr\},
\\
\gf_4=&\, \frac{(-1)^{\gamma_{12}+\gamma_{1+2,3}+\gamma_{1+2+3,4}}}{16}\,
\Sym\biggl\{\gamma_{12}\gamma_{23}\gamma_{34}\,\sgn(c_1)\sgn(c_{1+2})\sgn(c_4)
\\
&\, \qquad
-\frac13\,\gamma_{12}\gamma_{23}\gamma_{24}\,\sgn(c_1)\sgn(c_3)\sgn(c_4)
-\frac13\, \gamma_{12}\gamma_{23} \gamma_{1+2+3,4}\sgn(c_4)
\biggr\},
\end{split}
\label{resgf234}
\ee
\bea
\cE_{12} &=&\frac{(-1)^{\gamma_{12}}}{4\sqrt{2\tau_2}}\,
\tPhi_{1}^E\bigl(\bfv_{12},\bfv_{12}\bigr)
\nn\\
&=&\frac{(-1)^{\gamma_{12}}}{4}\[\gamma_{12}\, E_1\(\frac{\sqrt{2\tau_2}\gamma_{12} }{\sqrt{(pp_1p_2)}}\)
+\frac{\sqrt{(pp_1p_2)}}{\pi\sqrt{2\tau_2}}\, e^{-\frac{2\pi\tau_2\gamma_{12}^2 }{(pp_1p_2)}} \] ,
\nn\\
\cE_{123} &=&\frac{(-1)^{\gamma_{12}+\gamma_{1+2,3}}}{8}\,
\Sym\left\{\frac{1}{2\tau_2}\,\tPhi_{2}^E\bigl((\bfv_{1,2+3},\bfv_{1+2,3}),(\bfv_{12},\bfv_{23})\bigr)
\right.
\nn\\
&&\left. \qquad
-\frac13\(\gamma_{12}\gamma_{23}-\frac{(p_1p_2p_3)}{4\pi\tau_2} \)
\right\},
\label{rescE4}
\\
\cE_{1234} &=&
\frac{(-1)^{\gamma_{12}+\gamma_{1+2,3}+\gamma_{1+2+3,4}}}{16\sqrt{2\tau_2}}\,
\Sym\biggl\{\frac{1}{2\tau_2}\biggl(\tPhi_{3}^E\bigl((\bfv_{1,2+3+4},\bfv_{1+2,3+4},\bfv_{1+2+3,4}),(\bfv_{12},\bfv_{23},\bfv_{34})\bigr)
\nn\\
&&
+\frac13\,\tPhi_{3}^E\bigl((\bfv_{1,2+3+4},\bfv_{1+2+4,3},\bfv_{1+2+3,4}),(\bfv_{12},\bfv_{23},\bfv_{24})\bigr)\biggr)
\nn\\
&& -\frac13\(\gamma_{12}\gamma_{23}-\frac{(p_1p_2p_3)}{4\pi\tau_2}\)\tPhi_{1}^E\bigl(\bfv_{1+2+3,4},\bfv_{1+2+3,4}\bigr)
\biggr\},
\nn
\eea
where all generalized error functions are evaluated at $\bfx=\sqrt{2\tau_2}(\bfq+\bfb)$.
The last terms appearing in the above quantities for $n=3$ and $n=4$ correspond to contributions of trees with one mark ($m=1$).
In fact, at these orders these results can be rewritten in a simpler form, which coincides with the representations \eqref{gfn-upto5}
and \eqref{fullker-alt} (in the latter formula one should drop the sum over partitions and take $d_T=d_n$):
\bea
\gf_3&=& \frac{(-1)^{1+\gamma_{12}+\gamma_{1+2,3}}}{12}\,\Sym\Bigl\{ \gamma_{12}\,\gamma_{1+2,3}\,
\sgn(c_1)\, \sgn(c_3) \Bigr\},
\label{resgf34simple}\\
\gf_4&=&\textstyle{\frac{(-1)^{\gamma_{12}+\gamma_{1+2,3}+\gamma_{1+2+3,4}}}{96}}
\Sym\!\Bigl\{\! \Bigl(2\,\gamma_{23}\gamma_{1,2+3}\gamma_{1+2+3,4}+\gamma_{12}\gamma_{34}\gamma_{1+2,3+4}\Bigr)\sgn(c_1)\sgn(c_{1+2})\sgn(c_4)
\Bigr\},
\nn
\eea
\bea
\cE_{123} &=&\frac{(-1)^{\gamma_{12}+\gamma_{1+2,3}}}{24\tau_2}\,
\Sym\left\{\tPhi_{2}^E\bigl((\bfv_{1,2+3},\bfv_{1+2,3}),(\bfv_{12},\bfv_{1+2,3})\bigr)
\right\},
\label{rescE4alt}\\
\cE_{1234} &=&
\frac{(-1)^{\gamma_{12}+\gamma_{1+2,3}+\gamma_{1+2+3,4}}}{96(2\tau_2)^{3/2}}\,
\Sym\biggl\{2\,\tPhi_{3}^E\bigl((\bfv_{1,2+3+4},\bfv_{1+2,3+4},\bfv_{1+2+3,4}),(\bfv_{23},\bfv_{1,2+3},\bfv_{1+2+3,4})\bigr)
\nn\\
&&
+\tPhi_{3}^E\bigl((\bfv_{1,2+3+4},\bfv_{1+2,3+4},\bfv_{1+2+3,4}),(\bfv_{12},\bfv_{34},\bfv_{1+2,3+4})\bigr)\biggr)
\biggr\}.
\nn
\eea
The simplest way to prove the equality of the two representations of $\gf_n$ is to expand the DSZ products appearing in \eqref{resgf34simple}
into elementary $\gamma_{ij}$'s and then, using symmetrization, bring all their products to the form appearing in \eqref{resgf234}.
These products are multiplied by combinations of sign functions which can be recombined with the
help of the identity \eqref{signprop-ap}.
As an example, let us perform these manipulations  for $n=3$:
\be
\begin{split}
&\,
\Sym\Bigl\{ \gamma_{12}\,\gamma_{1+2,3}\,\sgn(c_1)\, \sgn(c_3) \Bigr\}
=\Sym\Bigl\{ \gamma_{12}\,\gamma_{23}\bigl(\sgn(c_1)-\sgn(c_2) \bigr)\,\sgn(c_3)\Bigr\}
\\
=&\,
\Sym\Bigl\{ \gamma_{12}\,\gamma_{23}\Bigl(\sgn(c_1)\,\sgn(c_3)+\hf\, \sgn(c_{1+3})\bigl(\sgn(c_1)+\sgn(c_3) \bigr)\Bigr)\Bigr\}
\\
=&\,
\hf\,\Sym\Bigl\{ \gamma_{12}\,\gamma_{23}\Bigl(3\, \sgn(c_1)\,\sgn(c_3)+1\Bigr)\Bigr\},
\end{split}
\label{relsigns3}
\ee
where we used that $c_2=-(c_1+c_3)$. This identity then shows the equality of the two forms of $\gf_3$ given above.
For $\gf_4$ the manipulations are very similar, but a bit more cumbersome.
The equality of the two forms of $\cE_n$ follows from the equality of their asymptotics $\cEf_n$,
which is in turn ensured by the same identities as for $\gf_n$.

Finally, we provide expressions for the kernels $\whPhi^{{\rm tot}}_n$
of the indefinite theta series appearing in the expansion \eqref{treeFh-flh}
of $\cG$ in powers of $\whh_{p,\mu}$. For the first two orders, one has
\be
\begin{split}
\whPhi^{{\rm tot}}_1=&\,\intPhi_1,
\\
\whPhi^{{\rm tot}}_2=&\,
\intPhi_2+\intPhi_1 \whgPhi_2
=\frac{1}{4}\, \intPhi_1\(\tPhi_1^E(\bfu_{12},\bfv_{12})-\tPhi_1^E(\bfv_{12},\bfv_{12})\),
\end{split}
\ee
where $\intPhi_1(x)$ is defined in \eqref{defPhi1-main}.
At the next order,
\be
\whPhi^{{\rm tot}}_3=\intPhi_3+2\Sym\Bigl\{\intPhi_2(x_{1+2},x_3)\whgPhi_2(x_1,x_2) \Bigr\}+\intPhi_1 \whgPhi_3.
\ee
To get an explicit expression in terms of smooth solutions of Vign\'eras' equation,
one should use the relation \eqref{expPhiE}. Applying it to the case $n=2$ with
$\cV=(\bfu_{1,2+3},\bfu_{1+2,3})$, using the orthogonality properties
\be
\bfu_{(1,2+3)\perp(1+2,3)}=\bfu_{12},
\qquad
\bfu_{(1+2,3)\perp(1,2+3)}=\bfu_{23},
\label{orth-vec}
\ee
and acting by the operator $\cD(\bfv_{12})\cD(\bfv_{23})$,
one can show that
\bea
&&\Sym\Bigl\{\tPhi^E_{2}\bigl((\bfu_{1,2+3},\bfu_{1+2,3}), (\bfv_{12},\bfv_{23})\bigr)\Bigr\}=
\Sym\biggl\{\tPhi^M_{2}\bigl((\bfu_{1,2+3},\bfu_{1+2,3}), (\bfv_{12},\bfv_{23})\bigr)
\nn\\
&& \qquad
+(\bfv_{12},\bfx)\,\sgn(\bfu_{12},\bfx)\, \tPhi^M_1\bigl(\bfu_{1+2,3},\bfv_{1+2,3}\bigr)
+(\bfv_{12},\bfx)\,(\bfv_{23},\bfx)\,\sgn(\bfu_{1,2+3},\bfx)\,\sgn(\bfu_{1+2,3},\bfx)
\nn\\
&& \qquad
-\frac{(p_1p_2p_3)}{6\pi}
\biggr\}.
\label{relPhiEM3}
\eea
This result allows to obtain the following representation for the kernel
\bea
\whPhi^{{\rm tot}}_3&=&\frac18\,\intPhi_1 \Sym \biggl\{\tPhi^E_{2}\bigl((\bfu_{1,2+3},\bfu_{1+2,3}), (\bfv_{12},\bfv_{23})\bigr)
- \tPhi^E_{2}\bigl((\bfv_{1,2+3},\bfv_{1+2,3}), (\bfv_{12},\bfv_{23})\bigr)
\nn\\
&&
-\(\tPhi_1^E(\bfu_{1+2,3},\bfv_{1+2,3})-\tPhi_1^E(\bfv_{1+2,3},\bfv_{1+2,3})\)\tPhi^E_1(\bfv_{12},\bfv_{12})
\biggr\}.
\eea
Using the identity between functions $\tPhi^E_{2}$ implied by the identity \eqref{relsigns3}, the kernel
can also be rewritten as
\bea
\whPhi^{{\rm tot}}_3&=&\frac18\,\intPhi_1 \Sym \biggl\{\frac23\, \Bigl(\tPhi^E_{2}\bigl((\bfu_{1,2+3},\bfu_{1+2,3}), (\bfv_{12},\bfv_{1+2,3})\bigr)
- \tPhi^E_{2}\bigl((\bfv_{1,2+3},\bfv_{1+2,3}), (\bfv_{12},\bfv_{1+2,3})\bigr)\Bigr)
\nn\\
&&
-\(\tPhi_1^E(\bfu_{1+2,3},\bfv_{1+2,3})-\tPhi_1^E(\bfv_{1+2,3},\bfv_{1+2,3})\)\tPhi^E_1(\bfv_{12},\bfv_{12})
\biggr\},
\label{reshPhi3alt}
\eea
so that the vectors appearing in the second argument of $\tPhi_2^E$ are now mutually orthogonal.

For $n=4$, the kernel is given by
\be
\begin{split}
\whPhi^{{\rm tot}}_4=&\,  \intPhi_4+\Sym\Bigl\{3\intPhi_3(x_{1+2},x_3,x_4)\whgPhi_2(x_1,x_2)+\intPhi_2(x_{1+2},x_{3+4})\whgPhi_2(x_1,x_2)\whgPhi_2(x_3,x_4)
\\
&\,
+2\intPhi_2(x_{1+2+3},x_4)\whgPhi_3(x_1,x_2,x_3)\Bigr\}+\intPhi_1 \whgPhi_4.
\end{split}
\ee
Proceeding in the same way as for $n=3$, obtaining a generalization of \eqref{relPhiEM3} to $n=4$,
one arrives at
\bea
\whPhi^{{\rm tot}}_4&=&\frac{1}{16}\, \intPhi_1
\Sym \biggl\{\tPhi^E_{3}\bigl((\bfu_{1,2+3+4},\bfu_{1+2,3+4},\bfu_{1+2+3,4}), (\bfv_{12},\bfv_{23},\bfv_{34})\bigr)
\nn\\
&&\qquad
-\tPhi^E_{3}\bigl((\bfv_{1,2+3+4},\bfv_{1+2,3+4},\bfv_{1+2+3,4}), (\bfv_{12},\bfv_{23},\bfv_{34})\bigr)
\nn\\
&&\quad
+\frac13\(\tPhi^E_{3}\bigl((\bfu_{1,2+3+4},\bfu_{1+2+4,3},\bfu_{1+2+3,4}), (\bfv_{12},\bfv_{24},\bfv_{34})\bigr)
\right.
\nn\\
&&\qquad
\left.
-\tPhi^E_{3}\bigl((\bfv_{1,2+3+4},\bfv_{1+2+4,3},\bfv_{1+2+3,4}), (\bfv_{12},\bfv_{24},\bfv_{34})\bigr)\)
\nn\\
&&
-\(\tPhi^E_{2}\bigl((\bfu_{1+2,3+4},\bfu_{1+2+3,4}), (\bfv_{1+2,3},\bfv_{34})\bigr)
-\tPhi^E_{2}\bigl((\bfv_{1+2,3+4},\bfv_{1+2+3,4}), (\bfv_{1+2,3},\bfv_{34})\bigr)
\right.
\nn\\
&&\quad
+\hf\(\tPhi^E_{2}\bigl((\bfu_{1+2+4,3},\bfu_{1+2+3,4}), (\bfv_{1+2,3},\bfv_{1+2,4})\bigr)
\right.
\\
&&\quad\qquad
\left.\left.
-\tPhi^E_{2}\bigl((\bfv_{1+2+4,3},\bfv_{1+2+3,4}), (\bfv_{1+2,3},\bfv_{1+2,4})\bigr)
\)\)\tPhi^E_1(\bfv_{12},\bfv_{12})
\nn\\
&&
-\(\tPhi_1^E(\bfu_{1+2+3,4},\bfv_{1+2+3,4})-\tPhi_1^E(\bfv_{1+2+3,4},\bfv_{1+2+3,4})\)
\tPhi^E_{2}\bigl((\bfv_{1,2+3},\bfv_{1+2,3}), (\bfv_{12},\bfv_{23})\bigr)
\nn\\
&&
+\(\tPhi_1^E(\bfu_{1+2+3,4},\bfv_{1+2+3,4})-\tPhi_1^E(\bfv_{1+2+3,4},\bfv_{1+2+3,4})\)
\tPhi^E_1(\bfv_{12},\bfv_{12})\tPhi^E_1(\bfv_{1+2,3},\bfv_{1+2,3})
\nn\\
&&
+\frac14\(\tPhi_1^E(\bfu_{1+2,3+4},\bfv_{1+2,3+4})-\tPhi_1^E(\bfv_{1+2,3+4},\bfv_{1+2,3+4})\)
\tPhi^E_1(\bfv_{12},\bfv_{12})\tPhi^E_1(\bfv_{34},\bfv_{34})
\biggr\}.
\nn
\eea
It is possible also to rewrite this expression in terms of generalized error functions $\tPhi^E_n(\cV,\tcV)$
where the vectors entering the second argument are mutually orthogonal, as in \eqref{reshPhi3alt}.
The reader can easily guess the result by comparing \eqref{rescE4} and \eqref{rescE4alt}.

The explicit results for $\whPhi^{{\rm tot}}_n$, $n\leq 4$, presented above are the basis
for the conjectural formula \eqref{kertotsol}.
Note that all terms in these expressions have the sum of ranks of the generalized error functions equal to $n-1$.
This shows that all contributions due to trees with non-zero number of marks cancel in the sum over Schr\"oder trees.

\newpage

\addtocontents{toc}{\vspace{-0.1cm}}
\section{Index of notations}
\label{sec_index}

\begin{longtable}{lp{10cm}l}
Symbol & Description & Appears/defined in \\
\hline
\rule{0pt}{20pt}
$\cA(\cT) $ & contribution of tree $\cT$ to the integrand of the multi-instanton expansion of $H_\gamma$ and $\cG$ & \eqref{amplit} \\
$a_\cT$ & coefficient of unrooted labelled tree $\cT$ in $\cD_{m}(\{\gama_s\})$ & \eqref{defcDcT}, \eqref{res-aT} \\
$\beta_{k\ell}$ &  DSZ product $\langle \gamma_1+\dots+\gamma_k,\gamma_\ell\rangle$ & \eqref{notaion-c} \\
$b_2=b_2(\CY)$ & second Betti number of $\CY$ &p.\pageref{sec-twistinst} \\
$b^a=\Re(z^a)$ & periods of the Kalb-Ramond field & p.\pageref{subsec-MSW} \\
$b_n$ & rational coefficients in the expansion of $\Fref_n$ & \eqref{defFref}, \eqref{coefbn} \\
$c_i$ & stability parameters & \eqref{fiparam} \\
$c_i^{(\ell)}$ & stability parameters after attractor flow & \eqref{flowc}\\
$c_{2,a}$ & components of the second Chern class of $\CY$ & \eqref{fractionalshiftsD5}\\
$d=n b_2$ & dimension of the lattice $\Lat=\oplus_{i=1}^n \Lambda_i$ & p.\pageref{biform}\\
$d_n,d_T$ & rational weights in the representation of $\gf_n$ via flow tree & \eqref{gfn-upto5},\eqref{PPn} \\
$\Delta(T)$, $\Delta_{\gamma_L\gamma_R}^z$ &sign factors assigned to attractor flow tree $T$ & \eqref{kappaT} \\
$\cD_{\wh}$ & Maass raising operator & \eqref{modcovD}\\
$\cD(\bfv)$ & modular-covariant derivative contracted with vector $\bfv$ &\eqref{defcDif}\\
$\cD_{m}(\{\gama_s\}) $ & derivative operator assigning weight to vertices with $m$ marks & \eqref{defcDcT} \\
$E_n(\cM;\vu)$ & generalized error function on $\IR^n$ &\eqref{generr-E} \\
$\cE_n=\cEf_n+\cEp_n$ & function encoding the modular completion & \eqref{rescEn}, \eqref{defEn0new} \\
$\phi$ & (logarithm of) contact potential on $\cM_H$ & \eqref{phiinstmany} \\
$\Phi_n^E$, $\Phi_n^M$  & boosted (complementary) error functions & \eqref{generrPhiME}\\
$\tPhi_{n,m}^E$, $\tPhi_{n,m}^M$ & uplifted boosted error functions in the kernel of $\cV_ m$
(denoted by $\tPhi^E_n$, $\tPhi^M_n$ when $n=m$) & \eqref{deftPhigen} \\
$\Phi_\cT$ & contribution of tree $\cT$ to $\intPhi_n$ & \eqref{PhiPhiT}, \eqref{resPhiTend} \\
$\intPhi_n$ & kernel defined by twistorial integrals & \eqref{totker}, \eqref{Phin-final} \\
$\cEPhi_n$ & kernel corresponding to function $\cE_n$ & \eqref{fullker} \\
$\tcEPhi_n$ & kernel promoting $\gf_n$ to a solution of Vign\'eras' equation & \eqref{fullker-mod} \\
$\gPhi_{n}$ & kernel corresponding to the tree index & \eqref{totker}, \eqref{kerPhiint} \\
$\whgPhi_{n}$ & kernel corresponding to completed tree index $\whg_n$ &\eqref{kerPhiinth} \\
$\Phi^{{\rm tot}}_n$ & total kernel in the expansion of $\cG$ in terms of $h_{p,\mu}$  & \eqref{totker} \\
$\whPhi^{{\rm tot}}_n$ & total kernel in the expansion of $\cG$ in terms of $\whh_{p,\mu}$  & \eqref{kertotsol} \\
$F(X)$ & holomorphic prepotential & p.\pageref{Fcl} \\
$\Ftr{n} (\{\gamma_{i}\},z^a)$ & partial tree index & \eqref{defFpl} \\
$\Fref_{n} (\{c_i\})$ &  partial contribution in $\gref_n$  & \eqref{whgF} \\
$\cF$ & image of $\cG$ under Euler operator & \eqref{Fexpand} \\
$\Gamma$ & charge lattice inside $H_{\rm even}(\CY,\IQ)$ & \eqref{fractionalshiftsD5}\\
$\Gamma_+$ & positive cone in the charge lattice & \eqref{khcone} \\
$\Gamma_{k\ell}, \Gamma_e$ & sums of DSZ products &  \eqref{notaion-c}, \eqref{defGame} \\
$\gamma= (0,p^a,q_a,q_0)$
&charge vector of a generic D4 (or D3) brane & \eqref{fractionalshiftsD5}\\
$\gama=(p^a,q_a)$ & projection of $\gamma$ on $H_4(\CY,\mathbb{Q})\oplus H_2(\CY,\mathbb{Q})$
& \eqref{multih}\\
$\gamma_{ij}=\langle \gamma_i,\gamma_j\rangle$
& Dirac-Schwinger-Zwanziger product, or Euler pairing & \eqref{gamZZ}\\
$\gtr(\{\gamma_i\},z^a)$ & tree index,  also denoted by $\gtr(\{\gamma_i,c_i\})$  & \eqref{defgtree}\\
$\whg_n(\{\gama_i\},z^a,\tau_2)$ & completed tree index, also denoted by $\whg_n(\{\gama_i,c_i\})$
& \eqref{multihd-full}, \eqref{soliterg}\\
$\gf_n(\{\gama_i,c_i\})$ & seed term in recursion for $\whg_n$ & \eqref{iterDn}\\
$\gref_n(\{\gama_i,c_i\},y)$ & refined version of $\gf_n$ & \eqref{whgF} \\
$\cG$ &instanton generating function & \eqref{defcF2} \\
$\cG_n(\{\gamma_i,z_i\})$ & integrand in the $n$-instanton contribution to $\cG$ & \eqref{expl-tcA}\\
$G_n(\{\gama_i,c_i\};\tau_2)$ & large $\bfx$ limit of $\tPhi_n(\bfx)$ & \eqref{defgf-hPhi}, \eqref{asymp-compl}\\
$h^{\rm DT}_{p,q}(\tau,z^a)$ & generating function of DT invariants & \eqref{defchimu} \\
$h_{p,\mu}(\tau) $ & generating function of MSW invariants & \eqref{defhDT} \\
$\whh_{p,\mu}(\tau)$ & modular completion of $h_{p,\mu}(\tau)$ & \eqref{exp-whh} \\
$H_\gamma(z)$ & generator of contact transformation across $\ell_\gamma$ & \eqref{prepHnew}
\\
$\Hcl_\gamma(z)$ & classical, large volume limit of $H_\gamma(z)$  & \eqref{expcX}\\
$\kappa(T)$ & weight of attractor flow tree $T$ & \eqref{kappaT0}\\
$\kappa(x)$ & BPS index of two-centered solutions & \eqref{kappadef}\\
$\kappa_{abc}$ & intersection numbers  on a fixed basis of $H_4(\CY)$ & p.\pageref{fractionalshiftsD5}\\
$K_{\gamma_1\gamma_2}(z_1,z_2)$ &  integration kernel in large volume limit & \eqref{defkerK}\\
$\hK_{ij}(z_i,z_j)$ & rescaled integration kernel in large volume limit & \eqref{defkerKr}\\
$\cJ_n(\{\gama_i\},\tau_2)$ &  coefficient in the formula for the shadow of  $\whh_{p,\mu}$ & \eqref{exp-derwh} \\
$\Lambda=H_4(\CY,\IZ)$ & lattice equipped with quadratic form $\kappa_{ab}=\kappa_{abc} p^a$ & p.\pageref{khcone}\\
$\Lambda_i=H_4(\CY,\IZ)$ & lattice equipped with quadratic form $\kappa_{i,ab}=\kappa_{abc} p_i^a$ & p.\pageref{biform}\\
$\Lat$ & lattice $\Lat=\oplus_{i=1}^n \Lambda_i$ spanned by $n$ D1-brane charges & p.\pageref{biform}\\
$\lambda$ & eigenvalue under Vign\'eras' operator & \eqref{Vigdif} \\
$\ell_\gamma$ & BPS ray on the twistor fiber, or its large volume limit & \eqref{defBPSay}, p.\pageref{defkerK} \\
$\mu_a$ & residue class of $q_a$ modulo spectral flow & \eqref{defmu} \\
$M_n(\cM;\vu)$ & generalized complementary error function on $\IR^n$&\eqref{generr-M} \\
$\cM_{\alpha\beta}$ & matrix of parameters in the generalized error functions & \eqref{generr-M}, \eqref{generr-E} \\
$\cM_{\cK}(\CY)$ & complexified K\"ahler moduli space of $\CY$ & p.\pageref{fractionalshiftsD5}\\
$\cM_H$ & hypermultiplet moduli space in IIB/$\CY$, or vector multiplet moduli space in
IIA/$(\CY\times S^1)=$ M/$(\CY\times T^2)$ & p.\pageref{sec-twist}\\
$n_T$ & number of vertices of rooted tree $T$ excluding the leaves & \eqref{soliterg} \\
$n_v(T)$ & number of descendants of the vertex $v$ in $T$ plus one & \eqref{combident} \\
$n_\ver$ & valency of vertex $\ver$ of an unrooted tree & \eqref{res-aT} \\
$p^a$ & homology class of the divisor wrapped by the D3-brane  & \eqref{khcone} \\
$\cP_{m}(\{p_s\})$ & weight of a vertex with $m$ marks in $G_n$ & \eqref{defPver-tree}\\
$\hat q_0$ & invariant D0-brane charge & \eqref{defqhat} \\
$Q_n(\{\gama_i\})$ & difference of quadratic forms for constituents and the total charge & \eqref{defQlr} \\
$R_n(\{\gama_i\},\tau_2)$ & non-holomorphic correction in the completion $\whh_{p,\mu}(\tau)$  & \eqref{multihd-full}, \eqref{solRn} \\
$\sigma_\gamma$ & quadratic refinement & \eqref{defqf}, \eqref{qf} \\
$\cs_k, \cs_e$ & sums of stability parameters & \eqref{notaion-c}, \eqref{defce}\\
$S^{\rm cl}_p$ & classical action of a D3-instanton in large volume limit & \eqref{Xtheta}\\
$\vartheta_{\bfp,\bfmu}\bigl(\Phi,\lambda)$ & indefinite theta series with kernel $\Phi$ & \eqref{Vignerasth} \\
$\tau=\tau_1+\I\tau_2$ & 4D axio-dilaton in IIA, or torus modulus in M theory & p.\pageref{sec-twistinst}\\
$t$ & complex coordinate on the twistor fiber & p.\pageref{defXsf} \\
$t^a=\Im(z^a)$ & K\"ahler moduli on $\CY$ & p.\pageref{subsec-MSW} \\
$T$ & rooted tree with charges assigned to the leaves & footnote \ref{foot-trees} \\
$\cT$ &  tree with charges assigned to vertices & footnote \ref{foot-trees} \\
$\IT_n, \IT_n^\ell$ & set of unrooted (labelled) trees with $n$ vertices
& \eqref{expl-tcA}\\
$\IT_{n,m}, \IT_{n,m}^\ell$ & set of unrooted (labelled) trees with $n$ vertices and $m$ marks &
\eqref{asymp-compl}\\
$\IT_n^{\rm r}$ & set of rooted trees with $n$ vertices & \eqref{Hexpand} \\
$\IT_n^{\rm af}$&set of attractor flow trees with $n$ leaves  & \eqref{defgtree}\\
$\IT_n^{\rm S}$ & set of Schr\"oder trees with $n$ leaves & \eqref{defSm} \\
$\IT_{2m+1}^{(3)}$ & set of rooted ternary trees with $n$ leaves & \eqref{defPver-tree}\\
$u^\Lambda=(1,u^a)$ &  complex structure moduli of mirror threefold $\CYm$  & p.\pageref{defXsf} \\
$\bfu_{ij}$, $\bfu_e$, $\bfu_\ell$&  vectors in $\IR^d$ associated to $-2\Im[Z_{\gamma_i}\bZ_{\gamma_j}]$, $-\cs_e$ and $-\cs_\ell$
& \eqref{defvij}, \eqref{defue}, \eqref{deftvl} \\
$\bfv_{ij}$, $\bfv_e$, $\bfv_\ell$
& vectors in $\IR^d$ associated to $\langle\gamma_i,\gamma_j\rangle$, $\Gamma_e$ and $-\Gamma_{n\ell}$
& \eqref{defvij}, \eqref{defue}, \eqref{deftvl} \\
$V_T$ & set of vertices of rooted tree $T$ excluding leaves & p.\pageref{kappaT} \\
$V_\lambda$ & Vign\'eras' operator &  \eqref{Vigdif} \\
$\cV_m$ & weight of a vertex with $m$ marks in large $\bfx$ limit of $\tcEPhi_n$ & \eqref{defcV-mw} \\
$\tcV_m$ & weight of a vertex with $m$ marks in $\gf_n$ & \eqref{deftcV-mw} \\
$W_n(\{\gama_i\},\tau_2)$ & coefficient in the formula for $h_{p,\mu}$ in terms of $\whh_{p_i,\mu_i}$ & \eqref{exp-hwh} \\
$\cX_\gamma$ & holomorphic Fourier modes on the twistor space of $\cM_H$ & \eqref{eqTBA}\\
$\cXsf_\gamma$ & semi-flat limit of $\cX_\gamma$ &\eqref{defXsf} \\
$\cXcl_\gamma$ & classical limit of $\cX_\gamma$ & \eqref{clactinst}\\
$\cXt_{p,q}$ & $\hat q_0$-independent part of $\cXcl_\gamma$ & \eqref{Xtheta} \\
$\bfx$ & $d$-dimensional vector, argument of kernels of theta series & p.\pageref{Vignerasth} \\
$z$ & coordinate on the twistor fiber, after Cayley transf. & \eqref{Cayley} \\
$z_\gamma$ & saddle point on twistor fiber & \eqref{saddle} \\
$z^a=b^a+\I t^a$ & complexified K\"ahler moduli  of $\CY$ & p.\pageref{sec-twistinst}\\
$z^a_*(\gamma)$ & attractor moduli for charge $\gamma$ & p.\pageref{defntilde}\\
$z^a_\infty(\gamma)$  & large volume attractor point for charge $\gamma$ & \eqref{lvap} \\
$Z_\gamma(z^a)$ & central charge & \eqref{defZg}\\
$\Omega(\gamma,z^a)$ & generalized Donaldson-Thomas invariant & p.\pageref{flow}\\
$\bar\Omega(\gamma,z^a)$ & rational DT invariant & \eqref{defntilde} \\
$\bOm_*(\gamma)$ & attractor index & p.\pageref{defntilde}\\
$\bOmMSW(\gamma)$ & MSW invariant, also denoted by $\bOm_{p,\mu}( \hat q_0)$
 & p.\pageref{defmu}
\end{longtable}

\newpage

\addtocontents{toc}{\vspace{-0.5cm}}

\providecommand{\href}[2]{#2}\begingroup\raggedright\endgroup


\begin{thebibliography}{10}

\bibitem{Strominger:1996sh}
A.~Strominger and C.~Vafa, ``Microscopic origin of the {B}ekenstein-{H}awking
  entropy,'' {\em Phys. Lett.} {\bf B379} (1996) 99--104,
\href{http://www.arXiv.org/abs/hep-th/9601029}{{\tt hep-th/9601029}}.

\bibitem{Maldacena:1997de}
J.~M. Maldacena, A.~Strominger, and E.~Witten, ``{B}lack hole entropy in
  {M}-theory,'' {\em JHEP} {\bf 12} (1997) 002,
\href{http://www.arXiv.org/abs/hep-th/9711053}{{\tt hep-th/9711053}}.

\bibitem{Gaiotto:2006wm}
D.~Gaiotto, A.~Strominger, and X.~Yin, ``{The M5-brane elliptic genus:
  Modularity and BPS states},'' {\em JHEP} {\bf 08} (2007) 070,
\href{http://www.arXiv.org/abs/hep-th/0607010}{{\tt hep-th/0607010}}.

\bibitem{deBoer:2006vg}
J.~de~Boer, M.~C.~N. Cheng, R.~Dijkgraaf, J.~Manschot, and E.~Verlinde, ``{A
  farey tail for attractor black holes},'' {\em JHEP} {\bf 11} (2006) 024,
\href{http://www.arXiv.org/abs/hep-th/0608059}{{\tt hep-th/0608059}}.

\bibitem{Dijkgraaf:2000fq}
R.~Dijkgraaf, J.~M. Maldacena, G.~W. Moore, and E.~P. Verlinde, ``{A black hole
  {Farey} tail},''
\href{http://www.arXiv.org/abs/hep-th/0005003}{{\tt hep-th/0005003}}.

\bibitem{Denef:2007vg}
F.~Denef and G.~W. Moore, ``{Split states, entropy enigmas, holes and halos},''
  {\em JHEP} {\bf 1111} (2011) 129,
\href{http://www.arXiv.org/abs/hep-th/0702146}{{\tt hep-th/0702146}}.

\bibitem{Diaconescu:2007bf}
E.~Diaconescu and G.~W. Moore, ``{Crossing the wall: Branes versus bundles},''
  {\em Adv. Theor. Math. Phys.} {\bf 14} (2010), no.~6, 1621--1650,
\href{http://www.arXiv.org/abs/0706.3193}{{\tt 0706.3193}}.

\bibitem{Manschot:2010qz}
J.~Manschot, B.~Pioline, and A.~Sen, ``{Wall Crossing from Boltzmann Black Hole
  Halos},'' {\em JHEP} {\bf 1107} (2011) 059,
\href{http://www.arXiv.org/abs/1011.1258}{{\tt 1011.1258}}.

\bibitem{Andriyash:2010qv}
E.~Andriyash, F.~Denef, D.~L. Jafferis, and G.~W. Moore, ``{Wall-crossing from
  supersymmetric galaxies},'' {\em JHEP} {\bf 1201} (2012) 115,
\href{http://www.arXiv.org/abs/1008.0030}{{\tt 1008.0030}}.

\bibitem{Dijkgraaf:1996it}
R.~Dijkgraaf, H.~L. Verlinde, and E.~P. Verlinde, ``Counting dyons in {$N = 4$}
  string theory,'' {\em Nucl. Phys.} {\bf B484} (1997) 543--561,
\href{http://www.arXiv.org/abs/hep-th/9607026}{{\tt hep-th/9607026}}.

\bibitem{Cheng:2007ch}
M.~C.~N. Cheng and E.~Verlinde, ``{Dying Dyons Don't Count},'' {\em JHEP} {\bf
  09} (2007) 070,
\href{http://www.arXiv.org/abs/0706.2363}{{\tt 0706.2363}}.

\bibitem{Banerjee:2008yu}
S.~Banerjee, A.~Sen, and Y.~K. Srivastava, ``{Genus Two Surface and Quarter BPS
  Dyons: The Contour Prescription},'' {\em JHEP} {\bf 03} (2009) 151,
\href{http://www.arXiv.org/abs/0808.1746}{{\tt 0808.1746}}.

\bibitem{Dabholkar:2012nd}
A.~Dabholkar, S.~Murthy, and D.~Zagier, ``{Quantum Black Holes, Wall Crossing,
  and Mock Modular Forms},''
\href{http://www.arXiv.org/abs/1208.4074}{{\tt 1208.4074}}.

\bibitem{Alexandrov:2016tnf}
S.~Alexandrov, S.~Banerjee, J.~Manschot, and B.~Pioline, ``{Multiple
  D3-instantons and mock modular forms I},'' {\em Commun. Math. Phys.} {\bf
  353} (2017), no.~1, 379--411,
\href{http://www.arXiv.org/abs/1605.05945}{{\tt 1605.05945}}.

\bibitem{Gholampour:2013hfa}
A.~Gholampour and A.~Sheshmani, ``{Donaldson–Thomas invariants of
  2-dimensional sheaves inside threefolds and modular forms},'' {\em Adv.
  Math.} {\bf 326} (2018) 79--107,
\href{http://www.arXiv.org/abs/1309.0050}{{\tt 1309.0050}}.

\bibitem{Diaconescu:2015koa}
D.-E. Diaconescu, ``{Vertical sheaves and Fourier-Mukai transform on elliptic
  Calabi-Yau threefolds},'' {\em Commun. Num. Theor. Phys.} {\bf 10} (2016)
  373--431,
\href{http://www.arXiv.org/abs/1509.07749}{{\tt 1509.07749}}.

\bibitem{Bouchard:2016lfg}
V.~Bouchard, T.~Creutzig, D.-E. Diaconescu, C.~Doran, C.~Quigley, and
  A.~Sheshmani, ``{Vertical D4–D2–D0 Bound States on K3 Fibrations and
  Modularity},'' {\em Commun. Math. Phys.} {\bf 350} (2017), no.~3, 1069--1121,
\href{http://www.arXiv.org/abs/1601.04030}{{\tt 1601.04030}}.

\bibitem{deBoer:2008fk}
J.~de~Boer, F.~Denef, S.~El-Showk, I.~Messamah, and D.~Van~den Bleeken,
  ``{Black hole bound states in $AdS_3 \times S^2$},'' {\em JHEP} {\bf 0811}
  (2008) 050,
\href{http://www.arXiv.org/abs/0802.2257}{{\tt 0802.2257}}.

\bibitem{Andriyash:2008it}
E.~Andriyash and G.~W. Moore, ``{Ample D4-D2-D0 Decay},''
\href{http://www.arXiv.org/abs/0806.4960}{{\tt 0806.4960}}.

\bibitem{Manschot:2009ia}
J.~Manschot, ``{Stability and duality in N=2 supergravity},'' {\em
  Commun.Math.Phys.} {\bf 299} (2010) 651--676,
  \href{http://www.arXiv.org/abs/0906.1767}{{\tt 0906.1767}}.

\bibitem{Alexandrov:2012au}
S.~Alexandrov, J.~Manschot, and B.~Pioline, ``{D3-instantons, Mock Theta Series
  and Twistors},'' {\em JHEP} {\bf 1304} (2013) 002,
\href{http://www.arXiv.org/abs/1207.1109}{{\tt 1207.1109}}.

\bibitem{Denef:2001xn}
F.~Denef, B.~R. Greene, and M.~Raugas, ``{Split attractor flows and the
  spectrum of BPS D-branes on the quintic},'' {\em JHEP} {\bf 05} (2001) 012,
\href{http://www.arXiv.org/abs/hep-th/0101135}{{\tt hep-th/0101135}}.

\bibitem{Alexandrov:2018iao}
S.~Alexandrov and B.~Pioline, ``{Attractor flow trees, BPS indices and
  quivers},''
\href{http://www.arXiv.org/abs/1804.06928}{{\tt 1804.06928}}.

\bibitem{Alim:2010cf}
M.~Alim, B.~Haghighat, M.~Hecht, A.~Klemm, M.~Rauch, and T.~Wotschke,
  ``{Wall-crossing holomorphic anomaly and mock modularity of multiple
  M5-branes},'' {\em Commun. Math. Phys.} {\bf 339} (2015), no.~3, 773--814,
\href{http://www.arXiv.org/abs/1012.1608}{{\tt 1012.1608}}.

\bibitem{Klemm:2012sx}
A.~Klemm, J.~Manschot, and T.~Wotschke, ``{Quantum geometry of elliptic
  Calabi-Yau manifolds},''
\href{http://www.arXiv.org/abs/1205.1795}{{\tt 1205.1795}}.

\bibitem{Alexandrov:2017qhn}
S.~Alexandrov, S.~Banerjee, J.~Manschot, and B.~Pioline, ``{Multiple
  D3-instantons and mock modular forms II},'' {\em Commun. Math. Phys.} {\bf
  359} (2018), no.~1, 297--346,
\href{http://www.arXiv.org/abs/1702.05497}{{\tt 1702.05497}}.

\bibitem{Zwegers-thesis}
S.~Zwegers, ``Mock theta functions.'' PhD dissertation, Utrecht University,
  2002.

\bibitem{Alexandrov:2016enp}
S.~Alexandrov, S.~Banerjee, J.~Manschot, and B.~Pioline, ``{Indefinite theta
  series and generalized error functions},''
\href{http://www.arXiv.org/abs/1606.05495}{{\tt 1606.05495}}.

\bibitem{Nazaroglu:2016lmr}
C.~Nazaroglu, ``{$r$-Tuple Error Functions and Indefinite Theta Series of
  Higher-Depth},''
\href{http://www.arXiv.org/abs/1609.01224}{{\tt 1609.01224}}.

\bibitem{MR2605321}
D.~Zagier, ``Ramanujan's mock theta functions and their applications (after
  {Z}wegers and {O}no-{B}ringmann),'' {\em Ast\'erisque} (2009), no.~326, Exp.
  No. 986, vii--viii, 143--164 (2010). S{\'e}minaire Bourbaki. Vol. 2007/2008.

\bibitem{Gupta:2018krl}
R.~K. Gupta, S.~Murthy, and C.~Nazaroglu, ``{Squashed Toric Manifolds and
  Higher Depth Mock Modular Forms},''
\href{http://www.arXiv.org/abs/1808.00012}{{\tt 1808.00012}}.

\bibitem{Alexandrov:2008gh}
S.~Alexandrov, B.~Pioline, F.~Saueressig, and S.~Vandoren, ``{D-instantons and
  twistors},'' {\em JHEP} {\bf 03} (2009) 044,
\href{http://www.arXiv.org/abs/0812.4219}{{\tt 0812.4219}}.

\bibitem{MR1327157}
C.~LeBrun, ``Fano manifolds, contact structures, and quaternionic geometry,''
  {\em Internat. J. Math.} {\bf 6} (1995), no.~3, 419--437,
\href{http://www.arXiv.org/abs/dg-ga/9409001}{{\tt dg-ga/9409001}}.

\bibitem{Alexandrov:2008nk}
S.~Alexandrov, B.~Pioline, F.~Saueressig, and S.~Vandoren, ``{Linear
  perturbations of quaternionic metrics},'' {\em Commun. Math. Phys.} {\bf 296}
  (2010) 353--403,
\href{http://www.arXiv.org/abs/0810.1675}{{\tt 0810.1675}}.

\bibitem{Alexandrov:2009zh}
S.~Alexandrov, ``{D-instantons and twistors: some exact results},'' {\em J.
  Phys.} {\bf A42} (2009) 335402,
\href{http://www.arXiv.org/abs/0902.2761}{{\tt 0902.2761}}.

\bibitem{Gaiotto:2008cd}
D.~Gaiotto, G.~W. Moore, and A.~Neitzke, ``{Four-dimensional wall-crossing via
  three-dimensional field theory},'' {\em Commun.Math.Phys.} {\bf 299} (2010)
  163--224, \href{http://www.arXiv.org/abs/0807.4723}{{\tt 0807.4723}}.

\bibitem{Alexandrov:2014wca}
S.~Alexandrov, G.~W. Moore, A.~Neitzke, and B.~Pioline, ``{$\mathbb R^3$ Index
  for Four-Dimensional $N=2$ Field Theories},'' {\em Phys. Rev. Lett.} {\bf
  114} (2015) 121601,
\href{http://www.arXiv.org/abs/1406.2360}{{\tt 1406.2360}}.

\bibitem{Vigneras:1977}
M.-F. Vign\'eras, ``{S\'eries th\^eta des formes quadratiques ind\'efinies},''
  {\em Springer Lecture Notes} {\bf 627} (1977) 227 -- 239.

\bibitem{2017arXiv170406891B}
K.~Bringmann, J.~Kaszian, and A.~Milas, ``{Higher depth quantum modular forms,
  multiple Eichler integrals, and $\frak{sl}\_3$ false theta functions},'' {\em
  ArXiv e-prints} (Apr., 2017) \href{http://www.arXiv.org/abs/1704.06891}{{\tt
  1704.06891}}.

\bibitem{kreimer1999chen}
D.~Kreimer, ``Chen's iterated integral represents the operator product
  expansion,'' \href{http://www.arXiv.org/abs/hep-th/9901099}{{\tt
  hep-th/9901099}}.

\bibitem{ks}
M.~Kontsevich and Y.~Soibelman, ``{Stability structures, motivic
  Donaldson-Thomas invariants and cluster transformations},''
  \href{http://www.arXiv.org/abs/0811.2435}{{\tt 0811.2435}}.

\bibitem{Joyce:2008pc}
D.~Joyce and Y.~Song, ``{A theory of generalized Donaldson-Thomas
  invariants},'' {\em {Memoirs of the Am. Math. Soc.}} {\bf 217} (2012),
  no.~1020,
\href{http://www.arXiv.org/abs/0810.5645}{{\tt 0810.5645}}.

\bibitem{Joyce:2009xv}
D.~Joyce, ``{Generalized Donaldson-Thomas invariants},'' {\em Surveys in
  differential geometry} {\bf 16} (2011), no.~1, 125--160,
\href{http://www.arXiv.org/abs/0910.0105}{{\tt 0910.0105}}.

\bibitem{Manschot:2011xc}
J.~Manschot, B.~Pioline, and A.~Sen, ``{A Fixed point formula for the index of
  multi-centered N=2 black holes},'' {\em JHEP} {\bf 1105} (2011) 057,
\href{http://www.arXiv.org/abs/1103.1887}{{\tt 1103.1887}}.

\bibitem{Manschot:2012rx}
J.~Manschot, B.~Pioline, and A.~Sen, ``{From Black Holes to Quivers},'' {\em
  JHEP} {\bf 1211} (2012) 023,
\href{http://www.arXiv.org/abs/1207.2230}{{\tt 1207.2230}}.

\bibitem{Manschot:2013sya}
J.~Manschot, B.~Pioline, and A.~Sen, ``{On the Coulomb and Higgs branch
  formulae for multi-centered black holes and quiver invariants},'' {\em JHEP}
  {\bf 05} (2013) 166,
\href{http://www.arXiv.org/abs/1302.5498}{{\tt 1302.5498}}.

\bibitem{Manschot:2014fua}
J.~Manschot, B.~Pioline, and A.~Sen, ``{The Coulomb Branch Formula for Quiver
  Moduli Spaces},'' {\em Confluentes Mathematici} {\bf 2} (2017) 49--69,
\href{http://www.arXiv.org/abs/1404.7154}{{\tt 1404.7154}}.

\bibitem{Bena:2012hf}
I.~Bena, M.~Berkooz, J.~de~Boer, S.~El-Showk, and D.~Van~den Bleeken,
  ``{Scaling BPS Solutions and pure-Higgs States},'' {\em JHEP} {\bf 1211}
  (2012) 171,
\href{http://www.arXiv.org/abs/1205.5023}{{\tt 1205.5023}}.

\bibitem{Ferrara:1995ih}
S.~Ferrara, R.~Kallosh, and A.~Strominger, ``{$N=2$} extremal black holes,''
  {\em Phys. Rev.} {\bf D52} (1995) 5412--5416,
\href{http://www.arXiv.org/abs/hep-th/9508072}{{\tt hep-th/9508072}}.

\bibitem{Manschot:2010xp}
J.~Manschot, ``{Wall-crossing of D4-branes using flow trees},'' {\em
  Adv.Theor.Math.Phys.} {\bf 15} (2011) 1--42,
\href{http://www.arXiv.org/abs/1003.1570}{{\tt 1003.1570}}.

\bibitem{Alexandrov:2010ca}
S.~Alexandrov, D.~Persson, and B.~Pioline, ``{Fivebrane instantons, topological
  wave functions and hypermultiplet moduli spaces},'' {\em JHEP} {\bf 1103}
  (2011) 111, \href{http://www.arXiv.org/abs/1010.5792}{{\tt 1010.5792}}.

\bibitem{Cecotti:1988qn}
S.~Cecotti, S.~Ferrara, and L.~Girardello, ``{G}eometry of type {II}
  superstrings and the moduli of superconformal field theories,'' {\em Int. J.
  Mod. Phys.} {\bf A4} (1989)
2475.

\bibitem{Ferrara:1989ik}
S.~Ferrara and S.~Sabharwal, ``{Q}uaternionic manifolds for type {II}
  superstring vacua of {C}alabi-{Y}au spaces,'' {\em Nucl. Phys.} {\bf B332}
  (1990)
317.

\bibitem{Alexandrov:2014sya}
S.~Alexandrov and S.~Banerjee, ``{Hypermultiplet metric and D-instantons},''
  {\em JHEP} {\bf 02} (2015) 176,
\href{http://www.arXiv.org/abs/1412.8182}{{\tt 1412.8182}}.

\bibitem{Alexandrov:2017mgi}
S.~Alexandrov, S.~Banerjee, and P.~Longhi, ``{Rigid limit for hypermultiplets
  and five-dimensional gauge theories},'' {\em JHEP} {\bf 01} (2018) 156,
\href{http://www.arXiv.org/abs/1710.10665}{{\tt 1710.10665}}.

\bibitem{Alexandrov:2011va}
S.~Alexandrov, ``{Twistor Approach to String Compactifications: a Review},''
  {\em Phys.Rept.} {\bf 522} (2013) 1--57,
\href{http://www.arXiv.org/abs/1111.2892}{{\tt 1111.2892}}.

\bibitem{Alexandrov:2013yva}
S.~Alexandrov, J.~Manschot, D.~Persson, and B.~Pioline, ``{Quantum
  hypermultiplet moduli spaces in N=2 string vacua: a review},'' in {\em
  {Proceedings, String-Math 2012, Bonn, Germany, July 16-21, 2012}},
  pp.~181--212.
\newblock 2013.
\newblock
\href{http://www.arXiv.org/abs/1304.0766}{{\tt 1304.0766}}.
\newblock

\bibitem{stoppa2014}
J.~Stoppa, ``{Joyce-Song} wall crossing as an asymptotic expansion,'' {\em
  Kyoto J. Math.} {\bf 54} (2014), no.~1, 103--156,
  \href{http://www.arXiv.org/abs/1112.2174}{{\tt 1112.2174}}.

\bibitem{Bohm:1999uk}
R.~B{\"o}hm, H.~G{\"u}nther, C.~Herrmann, and J.~Louis, ``{Compactification of
  type IIB string theory on Calabi-Yau threefolds},'' {\em Nucl. Phys.} {\bf
  B569} (2000) 229--246,
\href{http://www.arXiv.org/abs/hep-th/9908007}{{\tt hep-th/9908007}}.

\bibitem{Alexandrov:2013mha}
S.~Alexandrov and S.~Banerjee, ``{Modularity, quaternion-K\"ahler spaces, and
  mirror symmetry},'' {\em J.Math.Phys.} {\bf 54} (2013) 102301,
\href{http://www.arXiv.org/abs/1306.1837}{{\tt 1306.1837}}.

\bibitem{Alexandrov:2009qq}
S.~Alexandrov and F.~Saueressig, ``{Quantum mirror symmetry and twistors},''
  {\em JHEP} {\bf 09} (2009) 108,
\href{http://www.arXiv.org/abs/0906.3743}{{\tt 0906.3743}}.

\bibitem{Alexandrov:2012bu}
S.~Alexandrov and B.~Pioline, ``{S-duality in Twistor Space},'' {\em JHEP} {\bf
  1208} (2012) 112,
\href{http://www.arXiv.org/abs/1206.1341}{{\tt 1206.1341}}.

\bibitem{Alexandrov:2014mfa}
S.~Alexandrov and S.~Banerjee, ``{Fivebrane instantons in Calabi-Yau
  compactifications},'' {\em Phys.Rev.} {\bf D90} (2014) 041902,
\href{http://www.arXiv.org/abs/1403.1265}{{\tt 1403.1265}}.

\bibitem{Manschot:2017xcr}
J.~Manschot, ``{Vafa-Witten theory and iterated integrals of modular forms},''
\href{http://www.arXiv.org/abs/1709.10098}{{\tt 1709.10098}}.

\bibitem{Bringmann:2018}
K.~Bringmann, J.~Kaszian, and A.~Milas, ``{Vector-valued higher depth quantum
  modular forms and higher mordell integrals},''
\href{http://www.arXiv.org/abs/1803.06261}{{\tt 1803.06261}}.

\bibitem{Bringmann:2018cov}
K.~Bringmann and C.~Nazaroglu, ``{An exact formula for $\mathrm{U}(3)$
  Vafa-Witten invariants on $\mathbb{P}^2$},''
\href{http://www.arXiv.org/abs/1803.09270}{{\tt 1803.09270}}.

\bibitem{Cecotti:2014wea}
S.~Cecotti, A.~Neitzke, and C.~Vafa, ``{Twistorial topological strings and a
  $\mathrm{tt}^*$ geometry for $\mathcal{N} = 2$ theories in $4d$},'' {\em Adv.
  Theor. Math. Phys.} {\bf 20} (2016) 193--312,
\href{http://www.arXiv.org/abs/1412.4793}{{\tt 1412.4793}}.

\bibitem{Bringmann:2010sd}
K.~Bringmann and J.~Manschot, ``{From sheaves on $\mathbb{P}^2$ to a
  generalization of the Rademacher expansion},'' {\em Am. J. of Math.} {\bf
  135} (2013) 1039--1065,
\href{http://www.arXiv.org/abs/1006.0915}{{\tt 1006.0915}}.

\bibitem{Pioline:2015wza}
B.~Pioline, ``{Wall-crossing made smooth},'' {\em JHEP} {\bf 04} (2015) 092,
\href{http://www.arXiv.org/abs/1501.01643}{{\tt 1501.01643}}.

\bibitem{Troost:2010ud}
J.~Troost, ``{The non-compact elliptic genus: mock or modular},'' {\em JHEP}
  {\bf 1006} (2010) 104,
\href{http://www.arXiv.org/abs/1004.3649}{{\tt 1004.3649}}.

\bibitem{Manschot:2014cca}
J.~Manschot, ``{Sheaves on $\mathbb{P}^2$ and generalized Appell functions},''
\href{http://www.arXiv.org/abs/1407.7785}{{\tt 1407.7785}}.

\bibitem{knuthart}
D.~E. Knuth, ``{The Art of Computer Programming, Vol. 3: Sorting and
  Searching},''.

\end{thebibliography}

\end{document}